\documentclass[12pt]{article}
\usepackage{amsmath,amsthm,amssymb,amsfonts,bm,mathtools}
\usepackage{geometry,setspace,natbib,graphicx,booktabs,enumitem,microtype}
\usepackage{xcolor,hyperref}
\usepackage{xr-hyper}
\usepackage{appendix,longtable,array}
\usepackage{tikz}
\usepackage{float}
\usetikzlibrary{arrows.meta,positioning,shapes.geometric}
\hypersetup{colorlinks=true,citecolor=blue,linkcolor=blue,urlcolor=blue,hypertexnames=false,pdfauthor={Ulrich Hounyo},pdftitle={Identification and Information after Nuisance Projection}}

\newtheorem{assumption}{Assumption}
\newtheorem{theorem}{Theorem}
\newtheorem{proposition}{Proposition}
\newtheorem{corollary}{Corollary}
\newtheorem{lemma}{Lemma}
\newtheorem{remark}{Remark}
\newtheorem{example}{Example}
\newtheorem{definition}{Definition}
\newtheorem*{principle}{Projected Information Principle}
\newcommand{\E}{\mathbb E}

\newcommand{\Cov}{\operatorname{Cov}}

\newcommand{\rank}{\operatorname{rank}}

\newcommand{\Var}{\operatorname{Var}}

\title{Identification and Information after Nuisance Projection}
\author{
Ulrich Hounyo\thanks{%
Department of Economics, University at Albany -- State University of New
York, Albany, NY 12222, USA. E-mail: \texttt{khounyo@albany.edu}. }\\
Department of Economics\\
University at Albany, SUNY
}
\date{August 2, 2026}

\begin{document}
\maketitle

\begin{abstract}
Empirical work often removes fixed effects, latent factors, or high-dimensional controls
before estimating structural relationships. These transformations reduce confounding but
may also remove identifying variation. We study linear panel IV after one
equation-compatible nuisance projection under two-way dependence. The projected Jacobian
determines which structural directions remain visible; the projected-score law determines
their precision; and, on Gaussian fixed-rank strata, they combine in a Projected
Information Matrix. We derive weak-identification limits with dimension-specific
information accumulation, feasible factor-transfer conditions, identification-robust
tests, bootstrap procedures for non-Gaussian interaction limits, and inference for the
projected spectrum, rank, subspaces, and information matrix. Simulations show that a raw
first-stage statistic above 500 can support the wrong sign while projected diagnostics
reveal weak valid information. In an international monetary application, common
projection substantially attenuates apparent foreign-output persistence, while
Gaussian-reference Anderson--Rubin sets remain unbounded. Identification should therefore
be assessed after nuisance removal.
\end{abstract}

\noindent\textbf{Keywords:} projected information; weak identification; interactive effects; panel instrumental variables; two-way dependence; robust inference.\\
\textbf{JEL codes:} C12, C13, C23, C26.

\section{Introduction}\label{sec:intro}

Empirical credibility often requires removing variation. Fixed effects eliminate persistent
heterogeneity, latent-factor methods absorb common shocks, and high-dimensional or
orthogonalized procedures partial out rich nuisance components
\citep{bai2009,pesaran2006,cattaneojanssonma2019,chernozhukov2018,chernozhukovetal2022}.
The same transformations can also remove the variation that distinguishes competing
economic mechanisms. A specification may therefore become less confounded but also less
informative. Standard practice is well equipped to estimate after nuisance removal, yet
offers much less guidance on whether enough identifying information remains.

A two-equation supply--demand example gives the basic intuition. Suppose price is
instrumented by a cost shifter, but both variables contain a common aggregate shock.
Removing that shock can restore the exclusion restriction; it can also remove most of
the cost variation that moves price. The raw first stage may therefore look strong even
though the final, credible specification contains little identifying variation. The same
problem arises when global factors absorb monetary-policy exposure, aggregate credit
conditions absorb regulatory variation, or rich industry--time controls nearly absorb a
firm-level instrument. A large nominal sample can then contain only a few economically
distinguishable directions.

This paper studies the resulting credibility-information trade-off through one central
question: \emph{what information remains after nuisance projection?} The relevant issue is
not whether an instrument was strong in the raw data, but whether the moments used in the
final structural analysis remain sensitive to the parameter after the same nuisance
transformation and under the same dependence structure. This distinction can reverse an
estimated sign, eliminate an apparently persistent policy response, or turn a seemingly
precise conclusion into an unbounded identification-robust confidence set. The answer determines which
policy or behavioral mechanisms the final specification can distinguish and, therefore,
which counterfactual conclusions the evidence can support. A robust covariance estimator
can measure uncertainty in surviving variation, but it cannot recreate variation that the
specification has removed.

\begin{center}
\fbox{\begin{minipage}{0.90\textwidth}
\begin{principle}
Identification and statistical information must be evaluated after the final nuisance
transformation, using the moments and dependence structure that remain.
\end{principle}
\end{minipage}}
\end{center}

Projection changes the moment map, score law, identified directions, and information
rates. The workflow is to apply one equation-compatible projection, compute the projected
Jacobian and score law, and report the PIM, effective rank, and robust inference; see
Online Appendix Figure~\ref{fig:conceptual-workflow}.

Existing methods address different pieces of this problem. Interactive-effects methods
remove latent common components \citep{bai2003,bai2009,pesaran2006,baili2014,moonweidner2017};
weak-IV and weak-GMM methods characterize local-to-zero asymptotics and develop
identification-robust inference
\citep{dufour1997,staigerstock1997,stockwright2000,kleibergen2002,moreira2003,andrewsmikusheva2022};
and multiway covariance and bootstrap methods approximate uncertainty under common-shock
dependence \citep{cameron2011,davezies2021,menzel2021,chiang2024,hounyolin2026}. The joint
question remains: after the nuisance transformation used in the final specification,
which economic directions are still distinguishable, how quickly does their information
accumulate, and how uncertain is that surviving information?

The failure mechanism is exact. The nuisance transformation changes both the derivative
of the moment map and the law of its score. A conventional first-stage statistic may be
computed before that transformation, while a robust covariance estimator changes only
the estimated uncertainty. Neither object measures the sensitivity and precision of the
moments actually used in the final specification, and neither can regularize an IV
denominator that remains random at first order. Identification diagnostics must therefore
be constructed from the same projected moments and dependence structure as the
structural analysis.

This paper develops such a framework for linear panel IV. Let $i=1,\ldots,N$ index units
and $t=1,\ldots,T$ index periods. After a common oracle nuisance projection, let
$y_{it}^{o}$ denote the scalar outcome, $x_{it}^{o}\in\mathbb R^p$ the endogenous
regressors, and $z_{it}^{o}\in\mathbb R^q$ the instruments. The structural coefficient is
$\beta_0\in\mathbb R^p$. The projected population moment map and its Jacobian are
\[
 \mathcal Q_{NT}(\beta)
 =\frac{1}{NT}\sum_{i=1}^{N}\sum_{t=1}^{T}
 \E\!\left[z_{it}^{o}
 \bigl(y_{it}^{o}-x_{it}^{o\prime}\beta\bigr)\right],
 \qquad
 J_{NT}=\frac{1}{NT}\sum_{i,t}\E[z_{it}^{o}x_{it}^{o\prime}].
\]
For the linear model,
\[
 \mathcal Q_{NT}(\beta)=-J_{NT}(\beta-\beta_0).
\]
Thus the rank and right singular subspaces of $J_{NT}$ determine which combinations of
structural coefficients remain visible after nuisance removal. The law of the projected
score determines how precisely those combinations are observed. Projection changes the
source of identification; dependence changes the precision with which the surviving
source is measured.

The first structural contribution is to make the nuisance transformation
\emph{equation compatible}. Applying unrelated final factor projections to the outcome,
regressors, and instruments generally destroys the common transformed equation. We give a
transparent condition under which one joint two-sided projection preserves both the
structural coefficient and the projected reduced form. This distinguishes the paper from
factor-adjusted IV procedures that establish consistency but do not study whether the
same nuisance removal changes the strength or dimensionality of identification.

The second contribution is a joint projected-score experiment under asymmetric
information accumulation. For a local displacement $h\in\mathbb R^p$, a normalized
projected moment statistic converges to
\[
 Y(h)=Z+\mathcal Hh,
\]
where $Z\in\mathbb R^q$ is the limiting projected score and
$\mathcal H\in\mathbb R^{q\times p}$ is the limiting normalized projected Jacobian. This is a finite-dimensional
shift representation for the projected moments; it is not a claim of local asymptotic
normality for the full-data likelihood. When $Z$ is Gaussian with covariance $\Omega$, the covariance-adjusted local
sensitivity
\[
 \mathcal I_P=\mathcal H'\Omega^\dagger\mathcal H
\]
is the Projected Information Matrix, where $\Omega^\dagger$ is the Moore--Penrose inverse.
It plays for the projected moment experiment the role that Fisher information plays in a
regular likelihood experiment and that $G'\Omega^{-1}G$ plays in regular GMM
\citep{hansen1982,chamberlain1987,neweymcfadden1994}. This is an analogy at the level of
the finite-dimensional projected experiment, not a claim of full-likelihood LAN. The
novelty is the integrated theory for this object after an information-reducing
projection, with heterogeneous normalizations, estimated latent factors, two-way
dependence, weak identification, and potentially non-Gaussian score limits.

Economically, $\mathcal I_P$ describes the information that remains after the controls
and common components used to make the specification credible have been removed. Its
null space contains structural directions that the projected moments cannot distinguish.
Its eigenvalues measure local information along economically meaningful combinations of
parameters, and $h'\mathcal I_Ph$ is the noncentrality parameter governing Gaussian local
power in direction $h$. A small eigenvalue can therefore reflect little surviving policy
variation even in a large panel, while a large condition number can reveal that some
economic margins are much more informative than others.

If $h_{NT}'\mathcal I_{P,NT}h_{NT}\to0$, Gaussian local power converges to size
because that direction has been projected out; see Appendix
Proposition~\ref{prop:near-orthogonal-power}.

\paragraph{Reader's guide.}
$J_{NT}$ and $\mathcal H$ measure surviving moment sensitivity; $\Omega$ measures score
noise; and, on Gaussian fixed-rank strata,
$\mathcal I_P=\mathcal H'\Omega^\dagger\mathcal H$ combines the two. Their ranks and
spectra determine which structural directions remain distinguishable. The paper develops
four corresponding components: projected identification, information, inference, and
feasible implementation. A schematic is in Online Appendix
Figure~\ref{fig:economic-motivation}.

The third contribution concerns how information accumulates across panel dimensions.
Classical weak-IV asymptotics balance deterministic first-stage drift against score noise
\citep{staigerstock1997,stockwright2000}. The former supplies the canonical local-to-zero
IV normalization, while the latter extends the logic to weak GMM. After projection, that
balance need not involve
$\sqrt{NT}$. To see why, consider the scalar projected first stage
\[
 x_{it}^{o}=\pi_{NT}z_{it}^{o}+v_{it}^{o},
\]
where $\pi_{NT}$ is the projected first-stage coefficient and $v_{it}^{o}$ is its
projected residual. Define
\[
 \zeta_{i,NT}=T^{-1}\sum_{t=1}^{T}z_{it}^{o}v_{it}^{o},
 \qquad
 \Omega_{v,NT}=\Var\!\left(N^{-1/2}\sum_{i=1}^{N}\zeta_{i,NT}\right).
\]
If $T^\delta\Omega_{v,NT}\to\Omega_v\in(0,\infty)$ for an
information-accumulation index $\delta\in[0,1]$, the local balance gives
\[
 \pi_{NT}=O\!\left((NT^\delta)^{-1/2}\right).
\]
When $\delta=1$, both dimensions contribute in the familiar way. When $\delta=0$, more
time periods can improve estimation of common components without adding first-order
independent first-stage information. The relevant rate is determined by where the
instrument varies and by which movements survive projection, not mechanically by the
number of cells.

This dimension-specific logic extends to multiple endogenous regressors. Identification
is directional: one combination of policy coefficients can be well supported while
another is weakly or only partially identified. Singular values quantify surviving
projected relevance, invariant singular subspaces describe the combinations of
coefficients supported by the data, and spectral gaps determine whether those
combinations are stable enough to interpret. We develop confidence intervals for
separated singular values, rank-consistent thresholding on separated strata, subspace
stability measures, and inference for $\mathcal I_P$ itself. These procedures quantify
uncertainty about identification rather than conditioning silently on identification
being strong.

Feasibility creates an additional requirement. Factor estimates that are accurate for
prediction need not be accurate enough for weak-identification inference. We prove a
sample-split strong-factor benchmark and state general transfer conditions under which
estimated nuisance spaces preserve the structural score, first-stage score, Jacobian, and
relevant spectral gaps at their own normalizations. The operative comparison is therefore
between factor-estimation error and the scale of the surviving information.

Inference separates identification robustness from dependence robustness. Under the
structural null, the Anderson--Rubin score removes the weak first-stage denominator and
remains centered regardless of projected identification strength. Its critical values
must nevertheless approximate the post-projection score law. Gaussian fixed-rank limits
justify chi-square critical values, but unit--time interaction components can generate
non-Gaussian limits. We provide a self-contained bootstrap for a primitive finite-range
benchmark and establish feasible transfer of the broader PWB-H procedure of
\citet{hounyolin2026} under its exact oracle conditions. The bootstrap therefore targets
the uncertainty generated by the empirical design after projection, rather than only a
covariance matrix.

The paper makes three contributions. First, it develops an equation-compatible projected
moment framework and a primitive joint structural- and first-stage-score limit under
asymmetric two-way dependence. Second, it derives scalar and multivariate
weak-identification limits with dimension-specific information accumulation and unifies
them through a local projected-moment experiment. Third, it provides feasible factor
transfer, identification- and dependence-robust inference, and uncertainty measures for
the projected spectrum, rank, subspaces, and PIM. Standard rank algebra, matrix
perturbation, regular GMM, and Gaussian quadratic-form arguments serve as supporting
tools.

The evidence changes substantive conclusions. In the weak-information simulation, a confounding common factor produces a median raw
first-stage statistic above 500 and a median IV estimate with the wrong sign;
equation-compatible projection moves the estimate toward the truth while the projected
Jacobian and PIM reveal how the surviving information differs from the raw association. In the international monetary application, conventional fixed-effects estimates
imply persistent foreign output contractions, whereas the projected estimates move toward
zero and become positive at five years. The projected Jacobian falls even as the
first-stage statistic and PIM rise because score variability falls more sharply, and
Gaussian-reference Anderson--Rubin sets remain unbounded. Coefficient paths, relevance,
moment sensitivity, and covariance-adjusted information therefore answer different
questions.

\paragraph{Implications for empirical practice.}
Researchers should evaluate relevance after the final nuisance transformation and under
the dependence structure used for inference. Applications should report surviving
instrument variation, the projected Jacobian spectrum, score uncertainty, the PIM or
full score law where appropriate, effective rank, and identification-robust confidence
sets. These diagnostics complement rather than establish instrument validity.

\subsection*{Literature positioning}

The paper connects factor methods
\citep{bai2003,bai2009,pesaran2006,baili2014,moonweidner2017}, dynamic-panel IV
\citep{arellanobond1991,blundellbond1998}, weak-identification and robust-inference theory
\citep{andersonrubin1949,dufour1997,staigerstock1997,stockwright2000,kleibergen2002,moreira2003,andrewsmikusheva2022},
uniform-size analysis \citep{andrewschengguggenberger2020}, orthogonal-score methods
\citep{cattaneojanssonma2019,chernozhukov2018,chernozhukovetal2022}, and multiway
dependence \citep{cameron2011,davezies2021,menzel2021,chiang2024,juodis2025,hounyolin2026}.
Its distinct question is how identification and score information change after the final
common nuisance projection. Regular GMM geometry
\citep{hansen1982,chamberlain1987,neweymcfadden1994} is used within that experiment.

\section{Primitive framework and the projected identification map}\label{sec:model}

The purpose of this section is to define the final empirical specification on which
identification must be judged. We first separate observed-control residualization from
latent-factor removal and then state the compatibility condition that allows one common
projection to preserve the structural coefficient. The econometric requirement is:
removing confounding variation is useful only if the resulting moments still refer to
the same structural equation.

We observe a balanced triangular array
\[
 \mathcal W_{NT}=\{(y_{it},x_{it},z_{it},w_{it}):1\le i\le N,\ 1\le t\le T\},
\]
where $y_{it}$ is scalar, $x_{it}\in\mathbb R^p$ is endogenous, $z_{it}\in\mathbb R^q$ is excluded from the structural equation, $q\ge p$, and $w_{it}$ contains included exogenous controls. Both panel dimensions diverge jointly, but no proportionality restriction is imposed. The amount of identifying information is not assumed to be proportional to $NT$; it is derived from the projected reduced-form moments generated by the primitive array.

\paragraph{Notation, topology, and projection terminology.}
The dimensions $p$, $q$, the observed-control dimension $d_w$, and all maintained
factor ranks are fixed as $N,T\to\infty$. Structural and local parameters are finite
dimensional and use the Euclidean topology; matrices use the spectral norm unless
$\|\cdot\|_{\mathsf F}$ is displayed. For a matrix
$A$, $A'$ is transpose, $A^\dagger$ the Moore--Penrose inverse, and
$\mathcal R(A)$ and $\mathcal N(A)$ its range and null space. The operators
$\operatorname{vec}$ and $\operatorname{vech}$ stack all entries and the lower-triangular
entries of a symmetric matrix. We write $\mathbb S^m$ and $\mathbb S_+^m$ for symmetric
and positive-semidefinite $m\times m$ matrices.

Three meanings of ``projection'' are kept distinct. The nuisance residual operator
$\mathcal M_0:\mathbb R^{N\times T}\to\mathbb R^{N\times T}$ is the
Frobenius-orthogonal projection onto
\[
 \mathcal V_{NT}
 =\{B:\mathcal R(B)\subseteq\mathcal L_0^\perp,\
       \mathcal R(B')\subseteq\mathcal F_0^\perp\},
\]
so
$\mathcal M_0(A)=\arg\min_{B\in\mathcal V_{NT}}\|A-B\|_{\mathsf F}^2
=M_{\mathcal L_0}AM_{\mathcal F_0}$.
Spectral projectors are orthogonal projectors onto singular or eigenspaces; projection of
a confidence region onto a subvector is a set-theoretic image. The maintained projection is a linear residual operator, not a profiled criterion.

All asymptotics concern the triangular array $N,T\to\infty$ jointly. Local paths are
$\beta_{NT}(h)=\beta_0+D_{NT}^{-1}h$, where $D_{NT}$ is deterministic and nonsingular and
$h$ ranges over a fixed compact subset of $\mathbb R^p$. Expectations and probabilities
are under the triangular-array law; $\Rightarrow$, $\to_p$, $O_p$, and $o_p$ have their
usual meanings, and $O_{p^*}$ and $o_{p^*}$ are conditional bootstrap orders in outer
probability. A superscript $E$ denotes a population-average moment.

\subsection{A single primitive triangular-array representation}\label{subsec:unifieddgp}

For each $(N,T)$, all random elements are defined on a probability space $(\Omega_{NT},\mathcal F_{NT},\mathbb P_{NT})$. Let $A_i$ denote a unit-indexed latent state, $D_t$ a time-indexed latent state, and $E_{it}$ a cell innovation, and let
\[
 \mathcal G_{it,NT}=\sigma(A_i,D_t,E_{it})\subseteq\mathcal F_{NT}.
\]
This is a local information $\sigma$-field, not a time filtration; no martingale ordering of the two panel indices is imposed. Each observable at cell $(i,t)$ is $\mathcal G_{it,NT}$-measurable. For measurable
state spaces $\mathsf A_{NT}$, $\mathsf D_{NT}$, and $\mathsf E_{NT}$,
\begin{equation}
 (y_{it},x_{it}',z_{it}',w_{it}')'=h_{NT}(A_i,D_t,E_{it}),\qquad
 h_{NT}:\mathsf A_{NT}\times\mathsf D_{NT}\times\mathsf E_{NT}
 \to\mathbb R^{1+p+q+d_w},
 \label{eq:primitive-map}
\end{equation}
where $h_{NT}$ is measurable. The sigma-fields generated by the three collections $\{A_i\}_i$, $\{D_t\}_t$, and $\{E_{it}\}_{i,t}$ are mutually independent. Dependence within $\{A_i\}_i$ and within $\{D_t\}_t$ is permitted; conditional on the complete unit- and time-state arrays, the cell innovations are independent across $(i,t)$. This representation simultaneously accommodates spatially dependent unit heterogeneity, serially dependent aggregate shocks, and idiosyncratic cell variation.

Let $Y=(y_{it})$ be the $N\times T$ outcome matrix, and let $X_j=(x_{it,j})$
and $Z_k=(z_{it,k})$ denote the corresponding matrices for regressor $j\le p$ and
instrument $k\le q$. The structural parameter $\beta_0\in\mathbb R^p$ is defined by
the projected moment restriction developed below. Observed controls are first partialled
out from $Y$, every $X_j$, and every $Z_k$ by the same orthogonal residual operator onto
the complement of the vectorized control span; a Moore--Penrose inverse is used if the
control design is not of full column rank. This step is conceptually separate from
low-rank factor removal. To avoid an unnecessary layer of notation, the
control-residualized variables are again denoted by $(y_{it},x_{it},z_{it})$. The
maintained linear-IV specialization is
\begin{align}
 y_{it}&=x_{it}'\beta_0+\alpha_i+\tau_t+\lambda_i'f_t+u_{it}, \label{eq:y}\\
 x_{it}&=\Pi_{NT}'z_{it}+c_i+s_t+\Gamma_i'g_t+v_{it}. \label{eq:x}
\end{align}
The coefficient matrix $\Pi_{NT}\in\mathbb R^{q\times p}$ may drift with
$(N,T)$. Here $u_{it}$ is scalar; $c_i$, $s_t$, and $v_{it}$ are $p$-vectors;
$\lambda_i,f_t\in\mathbb R^{r_y}$; and, for a fixed first-stage factor rank $r_x$,
$\Gamma_i\in\mathbb R^{r_x\times p}$ and $g_t\in\mathbb R^{r_x}$, so that
$\Gamma_i'g_t$ is a $p$-vector. The ranks $r_y$ and $r_x$ are fixed in the oracle
formulation. Equations \eqref{eq:y}--\eqref{eq:x} are maintained restrictions; they do
not follow from the measurable representation \eqref{eq:primitive-map}. The primitive
representation organizes dependence, while the two equations impose the structural and
reduced-form content used for identification.
For each instrument component $k\le q$, the corresponding $N\times T$ matrix is also assumed to admit
\[
 Z_k=N_{Z_k}+Z_k^{\sharp},
\]
where $N_{Z_k}$ is an additive or interactive nuisance matrix and $Z_k^{\sharp}$ is the excluded component retained after oracle projection. This decomposition does not impose exogeneity by itself; exclusion is stated separately below. Its role is to make explicit which nuisance components of the instruments must belong to the common loading and factor spans.

Let $\mathcal L_0\subseteq\mathbb R^N$ and $\mathcal F_0\subseteq\mathbb R^T$ be the smallest linear subspaces containing, respectively, every loading vector and every factor vector appearing in the additive or interactive nuisance matrices of the maintained outcome, first-stage, and instrument equations. In particular, $\mathbf 1_N\in\mathcal L_0$ and $\mathbf 1_T\in\mathcal F_0$. Let $P_{\mathcal L_0}$ and $P_{\mathcal F_0}$ be their orthogonal projectors and define the common two-sided oracle residual operator
\[
 \mathcal M_0(A)=M_{\mathcal L_0}AM_{\mathcal F_0},\qquad
 M_{\mathcal L_0}=I_N-P_{\mathcal L_0},\quad
 M_{\mathcal F_0}=I_T-P_{\mathcal F_0}.
\]
The maintained nuisance specification requires each nuisance matrix to admit a finite decomposition
$N=\sum_r L_rF_r'$ such that, term by term, either
$\operatorname{col}(L_r)\subseteq\mathcal L_0$ or
$\operatorname{col}(F_r)\subseteq\mathcal F_0$. Requiring both inclusions is sufficient but stronger than necessary. This one-sided-or-two-sided span condition, rather than an informal ``union'' of factor spaces, is what makes the common annihilation exact.

\paragraph{Purpose.}
The first result establishes the algebra needed for the entire framework. It gives a
transparent sufficient condition under which one common two-sided projection removes the
maintained nuisance components without changing the structural coefficient or the
projected reduced form.

\begin{lemma}[Projection compatibility]\label{lem:projectioncompat}
If $N=\sum_{r=1}^{R}L_rF_r'$ and, for every $r$, either $\operatorname{col}(L_r)\subseteq\mathcal L_0$ or $\operatorname{col}(F_r)\subseteq\mathcal F_0$, then $\mathcal M_0(N)=0$. Consequently, if every nuisance term in the outcome, first-stage, and instrument equations satisfies this condition, one common two-sided projection preserves the structural and reduced-form equations exactly.
\end{lemma}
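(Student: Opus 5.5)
The proof will rest on linearity of $\mathcal M_0$ together with the annihilation identities $M_{\mathcal L_0}L=0$ whenever $\operatorname{col}(L)\subseteq\mathcal L_0$, and $FM_{\mathcal F_0}'=(M_{\mathcal F_0}F)'=0$ whenever $\operatorname{col}(F)\subseteq\mathcal F_0$. Both projectors are symmetric, so $M_{\mathcal F_0}=M_{\mathcal F_0}'$.

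\emph{Nuisance annihilation.} Write
\[
\mathcal M_0(N)=\sum_{r=1}^R M_{\mathcal L_0}L_rF_r'M_{\mathcal F_0}
\]
and treat the terms one at a time. If $\operatorname{col}(L_r)\subseteq\mathcal L_0$, then $P_{\mathcal L_0}L_r=L_r$, so $M_{\mathcal L_0}L_r=0$ and the term vanishes. Otherwise $\operatorname{col}(F_r)\subseteq\mathcal F_0$ by hypothesis, so $M_{\mathcal F_0}F_r=0$, hence $F_r'M_{\mathcal F_0}=0$ and the term again vanishes. Summing over $r$ gives $\mathcal M_0(N)=0$. Only one of the two inclusions is needed for each term, which is why the one-sided-or-two-sided condition is sufficient.

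\emph{Equation preservation.} Recast \eqref{eq:y}--\eqref{eq:x} in $N\times T$ matrix form. The outcome equation is
\[
Y=\sum_j\beta_{0,j}X_j+N_Y+U,\qquad N_Y=\alpha\mathbf 1_T'+\mathbf 1_N\tau'+\Lambda F',
\]
where $\Lambda$ stacks the $\lambda_i'$ and $F$ stacks the $f_t'$. The $j$-th first-stage equation is
\[
X_j=\sum_k\Pi_{NT,kj}Z_k+N_{X_j}+V_j,\qquad N_{X_j}=c_{\cdot j}\mathbf 1_T'+\mathbf 1_N s_{\cdot j}'+\textstyle\sum_{\ell}\Gamma_{\cdot\ell j}g_{\cdot\ell}',
\]
which is a finite sum of rank-one terms $L_rF_r'$. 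By the definition of $\mathcal L_0$ and $\mathcal F_0$ as the smallest subspaces containing every nuisance loading and factor vector, including $\mathbf 1_N$ and $\mathbf 1_T$, each such term has its loading in $\mathcal L_0$ (and its factor in $\mathcal F_0$). So the first part applies. The same holds for $N_{Z_k}$ under the maintained decomposition.

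Now apply $\mathcal M_0$ to every equation. Because the operator is linear, it commutes with the scalar coefficients $\beta_{0,j}$ and $\Pi_{NT,kj}$, which gives
\[
Y^o=\sum_j\beta_{0,j}X_j^o+U^o,\qquad X_j^o=\sum_k\Pi_{NT,kj}Z_k^o+V_j^o,\qquad Z_k^o=\mathcal M_0(Z_k^\sharp).
\]
The same $\mathcal M_0$ is applied to every matrix, so the coefficients $\beta_0$ and $\Pi_{NT}$ are unchanged. This is exactly the claim that the structural and reduced-form equations are preserved.

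\emph{Main obstacle.} The only delicate point is this bookkeeping step. One must check that every nuisance component of the model, including the additive $\alpha_i,\tau_t,c_i,s_t$ and each column of the matrix-valued loadings $\Gamma_i$, is written as a rank-one product whose loading or factor lies in the chosen spans. One must also note why a common operator is essential: separate projections $M_{\mathcal L_Y}\cdot M_{\mathcal F_Y}$ and $M_{\mathcal L_X}\cdot M_{\mathcal F_X}$ would not commute with the linear combination $\sum_j\beta_{0,j}X_j$, so the transformed equation would break. The matrix algebra itself is routine.
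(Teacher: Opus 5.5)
Your proposal is correct and follows essentially the same route as the paper's proof: you annihilate each term via $M_{\mathcal L_0}L_r=0$ or $F_r'M_{\mathcal F_0}=0$, sum by linearity, and then apply the single operator $\mathcal M_0$ to every term of each equation so that $\beta_0$ and $\Pi_{NT}$ pass through unchanged, with your explicit rank-one bookkeeping of the additive and interactive terms spelling out the paper's one-line equation-preservation step. One small slip: in your opening sentence the identity should read $F'M_{\mathcal F_0}=(M_{\mathcal F_0}F)'=0$, which is the form you correctly use later in the argument.
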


\noindent\emph{Proof.} See Online Appendix~\ref{app:proof:lem-projectioncompat}.

\begin{remark}[Projection stability and structural rank]\label{rem:projection-stability}
The oracle projector depends on the nuisance spaces, not on $J_{NT}$; structural rank may
therefore vanish while $\mathcal M_0$ remains well defined. Uniform feasible projection
requires fixed nuisance ranks and separating eigengaps. Complete qualifications are in
Online Appendix~\ref{app:projection-stability}.
\end{remark}

\paragraph{Interpretation.}
A projection is not innocuous merely because it removes fitted nuisance components.
Using one final operator across the outcome, regressors, and instruments is what preserves
the common structural coefficient. Variable-specific residualization can change the
equation that the moments are intended to identify.

\paragraph{Why a common projection is essential.}
The structural equation is a relation among several matrices with the same coefficient $\beta_0$. Applying different final residual operators to $Y$ and to the columns of $X$ generally changes that relation. The common operator avoids this problem because linearity gives
$\mathcal M_0(X\beta_0)=\{\mathcal M_0(X)\}\beta_0$.

For the outcome matrix, each regressor matrix, and each instrument matrix, write
\[
 y_{it}^{o}=(\mathcal M_0Y)_{it},\qquad
 x_{it,j}^{o}=(\mathcal M_0X_j)_{it},\qquad
 z_{it,k}^{o}=(\mathcal M_0Z_k)_{it}.
\]
Linearity of $\mathcal M_0$, its common use across all terms, and the stated span condition imply the exact oracle equations
\begin{equation}
 y_{it}^{o}=x_{it}^{o\prime}\beta_0+u_{it}^{o},\qquad
 x_{it}^{o}=\Pi_{NT}'z_{it}^{o}+v_{it}^{o}.                     \label{eq:oracle}
\end{equation}
This algebra would generally fail under variable-specific final projections. Such projections may be used to estimate the union spaces, but the moments analyzed below use one common final operator. If observed controls are not partialled out first, they must instead be included in an explicit common vector-space residual maker; a two-sided factor projection alone does not annihilate unrestricted controls. The projected moment restriction is therefore the maintained identifying restriction of the transformed model. It should not be inferred mechanically from an unprojected exclusion condition when the projection is estimated from the same outcomes or disturbances; cross-fitting, conditional exclusion, or a direct projected exclusion assumption is needed.

\subsection{Maintained conditions and probabilistic scope}

\begin{assumption}[Projected population orthogonality]\label{ass:score}
At the true parameter,
\[
 \frac1{NT}\sum_{i,t}\E[z_{it}^{o}u_{it}^{o}]=0,
 \qquad
 \frac1{NT}\sum_{i,t}\E[z_{it}^{o}v_{it}^{o\prime}]=0
\]
for every $(N,T)$. The first restriction is projected structural exclusion. When $Q_{zz,NT}^{E}$ is nonsingular, the second restriction makes
$\Pi_{NT}$ the coefficient of the population-average linear projection of
$x_{it}^{o}$ on $z_{it}^{o}$. Cellwise orthogonality is a stronger sufficient condition, not a maintained requirement. Neither restriction excludes correlation between $u_{it}^{o}$ and $v_{it}^{o}$. Because the final projection may be data dependent, primitive pre-projection exclusion implies these restrictions only under additional conditions, such as deterministic nuisance spaces, suitable conditional exclusion, or cross-fitting.
\end{assumption}

\begin{remark}[Population-average versus cellwise exclusion]
Assumption~\ref{ass:score} is sufficient for the population-average moment map used in
the paper. It does not assert cellwise conditional exogeneity. Consequently, primitive
CLTs and bootstrap results must be verified for the aggregated projected score rather
than inferred from a stronger unmaintained cellwise restriction.
\end{remark}

\begin{assumption}[Projected instrument second-moment stability]\label{ass:moments}
With
$Q_{zz,NT}^{E}=(NT)^{-1}\sum_{i,t}\E[z_{it}^{o}z_{it}^{o\prime}]$,
\[
 \widehat Q_{zz,NT}=\frac1{NT}\sum_{i,t}z_{it}^{o}z_{it}^{o\prime},\qquad
 \|\widehat Q_{zz,NT}-Q_{zz,NT}^{E}\|\to_p0,\qquad
 Q_{zz,NT}^{E}\to Q_{zz},
\]
where the eigenvalues of $Q_{zz}$ are bounded away from zero and infinity. Moment,
mixing, and tail conditions sufficient for a particular central limit theorem, factor
estimator, or bootstrap procedure are imposed only in the corresponding result.
\end{assumption}

\paragraph{Probabilistic scope.}
The measurable representation \eqref{eq:primitive-map} is an organizing device for the
primitive dependence benchmark, not a global requirement for the algebraic or high-level
results. The algebraic identification results below require no mixing, independence, or
central limit theorem. Distributional results may use either the high-level joint convergence in Assumption~\ref{ass:primitiveclt} or the primitive finite-range benchmark in Theorem~\ref{thm:primitive-benchmark}. The benchmark covers fixed-range dependence in the unit and time states, a finite-rank interaction projection, and conditionally independent cell innovations. More general spatial mixing, temporal mixing, and infinite-rank interactions require additional summability exponents, variance orders, and truncation arguments; the Online Appendix records those ingredients.

\paragraph{Asymptotic conventions.}
Both $N$ and $T$ diverge jointly. Each theorem states the deterministic scalar or
matrix normalization it uses. Moore--Penrose inverses enter inferential results only on
fixed-rank strata separated from rank boundaries.
On such strata the pseudoinverse is continuous. Ridge regularization changes the target
quadratic form and is not used for inference.

\paragraph{How the assumptions are used.}
Assumption~\ref{ass:score} defines projected population validity, while
Assumption~\ref{ass:moments} supplies only the instrument second-moment law needed by
results that use $\widehat Q_{zz,NT}$.  Theorem~\ref{thm:identification-map} is algebraic and does not use a central limit theorem. Assumption~\ref{ass:primitiveclt}, Theorem~\ref{thm:primitive-benchmark}, or the more specialized Assumption~\ref{ass:joint} supplies the distributional score experiment used in the weak-IV limits. Theorem~\ref{thm:pcbenchmark} verifies factor transfer for the sample-split strong-factor benchmark; Assumption~\ref{ass:factororth} retains a more general high-level route. Assumption~\ref{ass:regular-eff} is confined to the regular-GMM section, and Assumption~\ref{ass:boot} is confined to the bootstrap transfer. This separation prevents a high-level probabilistic condition from being mistaken for a primitive theorem.

\paragraph{A useful diagnostic distinction.}
Assumption~\ref{ass:score} concerns population orthogonality,
Assumption~\ref{ass:moments} concerns the instrument second-moment law, and
Assumption~\ref{ass:primitiveclt} concerns the shape of the limiting score experiment. None of these conditions implies the others. In particular, a valid instrument can generate a highly non-Gaussian score limit, and a well-behaved score covariance does not imply that the projected Jacobian is well conditioned.

\subsection{Dimension-specific endogeneity and score dependence}

Endogeneity concerns the dependence between the reduced-form disturbance and the structural disturbance, whereas sampling uncertainty concerns the dependence of the IV score. These are different objects.

\begin{example}[Motivating asymmetric-endogeneity specialization]\label{ex:endog}
Define the unit-state and time-state covariance projections
\[
 \Delta^{A}_{i;ts}=\mathbb E(v_{it}^{o}u_{is}^{o}\mid A_i)-\mathbb E(v_{it}^{o}u_{is}^{o}),
 \qquad
 \Delta^{D}_{t;ij}=\mathbb E(v_{it}^{o}u_{jt}^{o}\mid D_t)-\mathbb E(v_{it}^{o}u_{jt}^{o}).
\]
For infinitely many array sizes, there is at least one pair $(t,s)$ in the maintained time-index set for which
$\Pr(\Delta^{A}_{i;ts}\neq0)>0$ for some $i$, whereas
$\Delta^{D}_{t;ij}=0$ almost surely for every $(i,j,t)$ in the corresponding array. Thus the conditional regressor--error covariance is transmitted through the unit state, although both unit and time states may contribute to the covariance of the IV score. This specialization does not rule out spatial dependence across the unit states or serial dependence across the time states. The time-asymmetric case follows by interchanging $N$ and $T$. It motivates the asymmetric examples and simulations; no theorem below uses it as a maintained condition.
\end{example}

For the null score
\[
 g_{it}(\beta)=z_{it}^{o}(y_{it}^{o}-x_{it}^{o\prime}\beta),
\]
let $\mu_{it,NT}^g=\mathbb E[g_{it}(\beta_0)]$ and define the exact projection components
\begin{align*}
 a_{i,t,NT}^g&=\mathbb E[g_{it}(\beta_0)\mid A_i]-\mu_{it,NT}^g,\\
 d_{i,t,NT}^g&=\mathbb E[g_{it}(\beta_0)\mid D_t]-\mu_{it,NT}^g,\\
 w_{it,NT}^g&=\mathbb E[g_{it}(\beta_0)\mid A_i,D_t]
              -\mathbb E[g_{it}(\beta_0)\mid A_i]
              -\mathbb E[g_{it}(\beta_0)\mid D_t]+\mu_{it,NT}^g,\\
 e_{it,NT}^g&=g_{it}(\beta_0)-\mathbb E[g_{it}(\beta_0)\mid A_i,D_t].
\end{align*}
Under Assumption~\ref{ass:score}, the average centering term satisfies $(NT)^{-1}\sum_{i,t}\mu_{it,NT}^g=0$; individual cell means need not vanish. With
$a_{i,NT}^g=T^{-1}\sum_ta_{i,t,NT}^g$ and
$d_{t,NT}^g=N^{-1}\sum_id_{i,t,NT}^g$, the sample average has the exact decomposition
\begin{equation}
 \bar g_{NT}(\beta_0)
 =N^{-1}\sum_i a_{i,NT}^g+T^{-1}\sum_t d_{t,NT}^g
 +(NT)^{-1}\sum_{i,t}(w_{it,NT}^g+e_{it,NT}^g).             \label{eq:decomp}
\end{equation}
The four terms are the unit, time, interaction, and conditionally centered cell projections. The identity does not require independence within the unit- or time-state sequences; those dependence restrictions enter only when deriving probability limits.
Because the oracle projection mixes cells, $g_{it}(\beta_0)$ need not be measurable with respect to the local field $\mathcal G_{it,NT}$. The decomposition above is nevertheless a valid Hoeffding-type conditional projection with respect to $(A_i,D_t)$ in the full array probability space. It is an algebraic decomposition, not a claim that the final remainder is conditionally independent across cells.

\begin{assumption}[High-level joint normalized score limit]\label{ass:primitiveclt}
There exist deterministic nonsingular matrices
$A_{NT}\in\mathbb R^{q\times q}$ and
$B_{NT}\in\mathbb R^{pq\times pq}$ such that
\[
 \left(A_{NT}^{-1}\sum_{i,t}z_{it}^{o}u_{it}^{o},\;
 B_{NT}^{-1}\sum_{i,t}\operatorname{vec}(z_{it}^{o}v_{it}^{o\prime})\right)
 \Rightarrow (Z_u,Z_v),
\]
where the joint limit is nondegenerate on a maintained, fixed covariance-rank stratum.
The assumption does not impose Gaussianity. A primitive application must derive the
normalizers, the complete joint law, all cross-covariances, the relevant Lindeberg or
mixing conditions, and the negligibility of any interaction-truncation remainder.
\end{assumption}

\subsection{A primitive finite-range benchmark}\label{subsec:primitivebenchmark}

The high-level score limit permits non-Gaussian laws. The Online Appendix verifies this
possibility for an exact decomposition into unit, time, unit--time-interaction, and cell
components. Because these components accumulate at different rates, they must occupy
orthogonal score blocks:
\[
 \mathbb R^d=V_A\oplus V_D\oplus V_0,\qquad
 K_{NT}=T\sqrt N\,P_A+N\sqrt T\,P_D+\sqrt{NT}\,P_0.
\]
Unit effects lie in $V_A$, time effects in $V_D$, and interaction and cell effects in
$V_0$. This separation is necessary: one score coordinate cannot carry both a
nondegenerate main effect and a nondegenerate interaction component under the three
distinct rates.

\begin{assumption}[Block-separated primitive benchmark]\label{ass:primitivebenchmark}
The centered stacked score has an exact unit, time, bilinear-interaction, and cell
decomposition. The unit and time state arrays satisfy fixed-range triangular-array CLTs;
the cell array satisfies a Lindeberg CLT and is independent of the two state arrays; the
bilinear coefficient matrices converge; and the four components obey the orthogonal
block restrictions described above. The Online Appendix states the complete moment and
long-run covariance conditions.
\end{assumption}

\begin{theorem}[Primitive joint projected-score limit]\label{thm:primitive-benchmark}
Under Assumption~\ref{ass:primitivebenchmark} and the complete covariance-limit
conditions stated in the Online Appendix,
\[
 K_{NT}^{-1}\sum_{i=1}^N\sum_{t=1}^T\xi_{it,NT}
 \Rightarrow
 P_A Z_a+P_D Z_d+P_0\{\mathfrak h(Z_\phi,Z_\psi)+Z_\varepsilon\}.
\]
The Gaussian blocks $(Z_a,Z_\phi)$, $(Z_d,Z_\psi)$, and $Z_\varepsilon$ are mutually
independent. The limit is Gaussian on $V_A\oplus V_D$. A nonzero bilinear interaction
contributes a second Gaussian-chaos component on $V_0$, so covariance information alone
generally does not determine the limiting law.
\end{theorem}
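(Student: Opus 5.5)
The proof will use the exact four-term decomposition of the stacked score, in the spirit of \eqref{eq:decomp}. Under Assumption~\ref{ass:primitivebenchmark} we write
\[
 \sum_{i,t}\xi_{it,NT}=T\sum_i a_{i}+N\sum_t d_{t}+\sum_{i,t}\phi_{i}'\,C_{NT}\,\psi_{t}+\sum_{i,t}\varepsilon_{it},
\]
where:
\begin{itemize}
\item $a_i$ is the $T$-averaged unit component and takes values in $V_A$;
\item $d_t$ is the $N$-averaged time component and takes values in $V_D$;
\item the interaction is bilinear in finite-dimensional unit and time features $(\phi_i,\psi_t)$, with converging coefficient matrices $C_{NT}\to C$, and takes values in $V_0$;
\item $\varepsilon_{it}$ is the cell innovation, conditionally centered given the state arrays, and takes values in $V_0$.
\end{itemize}
Because $K_{NT}$ is block diagonal with respect to $V_A\oplus V_D\oplus V_0$, we have $K_{NT}^{-1}=(T\sqrt N)^{-1}P_A+(N\sqrt T)^{-1}P_D+(NT)^{-1/2}P_0$. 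Each component is then scaled by its own rate, and no cross-rate contamination occurs. This is exactly where the orthogonal block restriction is used.

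The first step is the state-array limits. We apply the fixed-range triangular-array CLT jointly to the unit vectors $(a_i,\phi_i)$. This gives $N^{-1/2}\sum_i(a_i,\phi_i)\Rightarrow(Z_a,Z_\phi)$, with long-run covariance taken from the Online Appendix conditions. The same argument for $(d_t,\psi_t)$ gives $(Z_d,Z_\psi)$. Since the $\sigma$-fields of $\{A_i\}$ and $\{D_t\}$ are independent, the two convergences hold jointly with independent limits. The bilinear block is then
\[
 (NT)^{-1/2}\sum_{i,t}\phi_i'C_{NT}\psi_t=\Bigl(N^{-1/2}\sum_i\phi_i\Bigr)'C_{NT}\Bigl(T^{-1/2}\sum_t\psi_t\Bigr).
\]
This is a continuous map of the two partial sums, so the continuous mapping theorem yields $\mathfrak h(Z_\phi,Z_\psi)=Z_\phi'CZ_\psi$. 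If the benchmark writes the interaction as a finite sum of bilinear forms rather than a single one, the same argument applies term by term. The result is a product of independent Gaussians, i.e.\ a second-order Gaussian chaos, and it is non-Gaussian whenever $C\neq0$.

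The second step handles the cell block and the joint convergence. Conditional on the complete state arrays, the $\varepsilon_{it}$ are independent and centered. We apply the conditional Lindeberg CLT to $(NT)^{-1/2}\sum_{i,t}\varepsilon_{it}$, where the conditional covariance converges in probability to a deterministic limit. We then use a characteristic-function argument:
\[
 \E\bigl[e^{is'S^{\mathrm{state}}_{NT}}\,\E\{e^{is_0'S^{\varepsilon}_{NT}}\mid\{A_i\},\{D_t\}\}\bigr],
\]
where $S^{\mathrm{state}}_{NT}$ collects the normalized state sums and $S^{\varepsilon}_{NT}$ is the normalized cell sum. The inner conditional characteristic function converges in probability to a Gaussian characteristic function that does not depend on the states. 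Dominated convergence then factorizes the limit, so $Z_\varepsilon$ is independent of $(Z_a,Z_\phi,Z_d,Z_\psi)$. The Cram\'er--Wold device combined with continuous mapping assembles $P_AZ_a+P_DZ_d+P_0\{\mathfrak h(Z_\phi,Z_\psi)+Z_\varepsilon\}$. Gaussianity on $V_A\oplus V_D$ is immediate. The final claim that covariance does not determine the law follows by exhibiting two limits with matching second moments, e.g.\ a product $Z_\phi Z_\psi$ of independent standard normals versus a $N(0,1)$, which differ in their fourth cumulant.

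I expect the main obstacle to be the conditional Lindeberg step together with the joint convergence. We must show that the conditional covariance of the cell block converges to a constant rather than to a state-dependent random matrix. A state-dependent limit would produce a variance mixture and break the claimed independence. I would first try to obtain this from the Online Appendix's long-run covariance condition plus a law of large numbers over the fixed-range state arrays. A second issue is that the oracle projection mixes cells, so negligibility of the remainders must be checked at the rate $\sqrt{NT}$. That check rests on the exactness of the assumed decomposition.
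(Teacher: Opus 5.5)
Your proposal is correct and follows essentially the paper's own route: the summed exact identity, separate fixed-range CLTs for $(a_i,\phi_i)$ and $(d_t,\psi_t)$, and writing the interaction sum as a bilinear form in the $\sqrt N$- and $\sqrt T$-normalized partial sums followed by (extended) continuous mapping, then Lindeberg--Feller for the cells and blockwise scaling by $K_{NT}$. There are two small differences. First, the benchmark assumes the cell array is fully independent of the state arrays, so the unconditional Lindeberg--Feller theorem delivers joint convergence and the independence of $Z_\varepsilon$ directly, and the state-dependent conditional covariance you flag as the main obstacle cannot arise. Second, non-Gaussianity requires the bilinear form to be nonzero on the supports of $Z_\phi$ and $Z_\psi$ (the paper's nondegeneracy qualification), not merely $C\neq0$.
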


\noindent\emph{Complete assumptions, proof, and non-vacuity example.} See the Online
Appendix.

\section{Identification through the projected moment map}\label{sec:identification}

This section asks which combinations of structural effects remain distinguishable after
nuisance removal. The projected population Jacobian provides the answer. Its rank
determines whether the transformed moments identify the full parameter, while its
singular values and singular subspaces describe the strength and direction of the
surviving economic variation.

Define the population projected moment map
\begin{equation}
 \mathcal Q_{NT}(\beta)
 =\frac1{NT}\sum_{i=1}^{N}\sum_{t=1}^{T}
 \mathbb E\!\left[z_{it}^{o}(y_{it}^{o}-x_{it}^{o\prime}\beta)\right].
 \label{eq:idmap}
\end{equation}
By instrument validity, $\mathcal Q_{NT}(\beta_0)=0$. Its Jacobian is
\begin{equation}
 J_{NT}
 =-\frac{\partial\mathcal Q_{NT}(\beta)}{\partial\beta'}
 =\frac1{NT}\sum_{i,t}\mathbb E[z_{it}^{o}x_{it}^{o\prime}]
 =Q_{zz,NT}^{E}\Pi_{NT}+Q_{zv,NT}^{E},                  \label{eq:jacobian}
\end{equation}
where
\[
 Q_{zv,NT}^{E}=\frac1{NT}\sum_{i,t}\E[z_{it}^{o}v_{it}^{o\prime}]
\]
and superscript $E$ denotes a population average. By the reduced-form orthogonality restriction in Assumption~\ref{ass:score}, $Q_{zv,NT}^{E}=0$, so identification is governed by $Q_{zz,NT}^{E}\Pi_{NT}$. This restriction is logically distinct from the population-average structural exclusion $(NT)^{-1}\sum_{i,t}\E[z_{it}^{o}u_{it}^{o}]=0$. The spectrum of $J_{NT}$ therefore provides a coordinate-free measure of identification in the endogenous-regressor space.

\begin{definition}[Identification regimes]\label{def:kappa}
Let $s_{1,NT}\ge\cdots\ge s_{p,NT}\ge0$ be the singular values of $J_{NT}$.
\begin{enumerate}
 \item Identification is strong when $\liminf s_{p,NT}>0$.
 \item Identification is locally weak on a separated singular-value cluster when the singular values in that cluster converge to zero at specified rates while remaining positive. Individual directions are interpreted only for simple separated singular values; repeated values are interpreted through their invariant cluster subspaces.
 \item Identification is partial when the population Jacobian is rank deficient on the maintained model, so that the moment equation has a non-singleton identified set. A full-rank sequence converging to a rank-deficient limit is instead a weak-identification sequence, not exact partial identification at each finite $(N,T)$.
\end{enumerate}
For an unrestricted local parameter space, the identified linear functionals have coefficient vectors in the row space of the relevant population Jacobian.
\end{definition}

\begin{definition}[Spectral identification sequences]\label{def:spectral-sequence}
Let
\[
 J_{NT}=L_{NT}\operatorname{diag}(s_{1,NT},\ldots,s_{p,NT})R_{NT}',
 \qquad s_{1,NT}\ge\cdots\ge s_{p,NT}\ge0,
\]
be a thin singular-value decomposition, where $L_{NT}\in\mathbb R^{q\times p}$ and
$R_{NT}\in\mathbb R^{p\times p}$ have orthonormal columns. For each singular-value
cluster whose interpretation is used below, its spectral projector converges and the
cluster is separated from the remaining singular values by a stated deterministic gap.
Within a simple cluster, there is a deterministic sequence $d_{j,NT}\ge1$ and a constant
$\kappa_j\ge0$ such that $d_{j,NT}s_{j,NT}\to\kappa_j$. Strong clusters have bounded
$d_{j,NT}$ and positive limits; locally weak clusters have $d_{j,NT}\to\infty$ and positive
scaled limits; limiting unidentified clusters have zero unscaled limits. Individual
singular vectors are not interpreted inside a repeated singular-value cluster.
\end{definition}

This convention is a local expansion of the primitive moment Jacobian, not an independent weak-instrument model. In the linear reduced form it is implied by a corresponding expansion of $\Pi_{NT}$, but the formulation in terms of $J_{NT}$ also applies when the projected first-stage relationship is nonlinear.

\paragraph{Purpose.}
The next theorem characterizes identification before any central limit theorem or
bootstrap argument is introduced. It identifies the exact population object that
determines which structural directions remain visible after nuisance projection.

\begin{theorem}[Projected identification map]\label{thm:identification-map}
Under the projected structural equation \eqref{eq:oracle} and the population-average instrument validity in Assumption~\ref{ass:score}, the map $\mathcal Q_{NT}(\beta)$ is affine in $\beta$ with Jacobian $-J_{NT}$. Hence:
\begin{enumerate}
 \item on an unrestricted affine parameter space, the population moment equation has a unique solution if and only if $\rank(J_{NT})=p$; on a restricted parameter set $\mathcal B$, the root $\beta_0$ is unique if and only if
 $\mathcal N(J_{NT})\cap(\mathcal B-\beta_0)=\{0\}$, while
 $\mathcal N(J_{NT})\cap(\mathcal B-\mathcal B)=\{0\}$ is the stronger condition that the moment map is injective on all of $\mathcal B$;
 \item the right-singular subspaces of $J_{NT}$ determine the structural subspaces identified by the moments, while the singular values measure deterministic relevance; individual singular vectors are identified only for simple, separated singular values;
 \item if $J_{NT}\to J_0$ with $\rank(J_0)=r<p$, the limiting affine equation identifies the set $\beta_0+\mathcal N(J_0)$ (intersected with the parameter space). When the parameter space is locally unrestricted, the row-space component $P_{\mathcal R(J_0')}\beta_0$ is identified, while perturbations in $\mathcal N(J_0)$ leave the limiting moment unchanged;
 \item the theorem is algebraic: the stochastic classification of a direction as strong, local, or unidentified additionally requires comparing its population singular value with the sampling and factor-estimation error of an estimated Jacobian.
\end{enumerate}
\end{theorem}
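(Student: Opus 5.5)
The argument is linear algebra applied to the oracle equations \eqref{eq:oracle}. The first step establishes the affine form of the moment map. Substituting $y_{it}^{o}=x_{it}^{o\prime}\beta_0+u_{it}^{o}$ into \eqref{eq:idmap} gives
\[
\mathcal Q_{NT}(\beta)=\frac1{NT}\sum_{i,t}\E[z_{it}^{o}u_{it}^{o}]-\frac1{NT}\sum_{i,t}\E[z_{it}^{o}x_{it}^{o\prime}](\beta-\beta_0).
\]
Because expectation and averaging are linear and $\beta$ is deterministic, no interchange or regularity issues arise beyond finiteness of the moments, which is implicit in the definition of $J_{NT}$. The first term vanishes by the structural half of Assumption~\ref{ass:score}. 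Hence $\mathcal Q_{NT}(\beta)=-J_{NT}(\beta-\beta_0)$, which is affine with Jacobian $-J_{NT}$. The representation $J_{NT}=Q^{E}_{zz,NT}\Pi_{NT}$ follows from the first-stage equation together with the reduced-form half of the assumption; it is recorded but not needed for the algebra.

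Part 1 follows directly. $\mathcal Q_{NT}(\beta)=0$ holds if and only if $\beta-\beta_0\in\mathcal N(J_{NT})$, so the solution set is $\beta_0+\mathcal N(J_{NT})$. By rank--nullity this set is a singleton on $\mathbb R^p$ if and only if $\rank(J_{NT})=p$. On a restricted set $\mathcal B$, the roots are $\mathcal B\cap(\beta_0+\mathcal N(J_{NT}))$, so $\beta_0$ is the unique root exactly when $\mathcal N(J_{NT})\cap(\mathcal B-\beta_0)=\{0\}$. For injectivity, $\mathcal Q_{NT}(\beta_1)=\mathcal Q_{NT}(\beta_2)$ holds if and only if $J_{NT}(\beta_1-\beta_2)=0$, which gives the $\mathcal B-\mathcal B$ condition. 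Since $\beta_0\in\mathcal B$, we have $\mathcal B-\beta_0\subseteq\mathcal B-\mathcal B$, so this condition is indeed the stronger one.

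For part 2, I would use the thin SVD of Definition~\ref{def:spectral-sequence}. Writing $\mathcal Q_{NT}(\beta)=-L_{NT}\operatorname{diag}(s_{j,NT})R_{NT}'(\beta-\beta_0)$ and using the orthonormal columns of $L_{NT}$, the moment map determines exactly the coordinates $s_{j,NT}\,r_j'(\beta-\beta_0)$. The functionals $r_j'\beta$ with $s_{j,NT}>0$ are therefore recovered, and each one is scaled by its singular value. This is the sense in which the singular values measure deterministic relevance. The identified functionals $c'\beta$ are exactly those with $c\in\mathcal R(J_{NT}')$, the span of the right-singular vectors with positive singular values. The main point to handle carefully is non-uniqueness. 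Within a repeated singular value the singular vectors are determined only up to an orthogonal rotation, and only the spectral projector onto the cluster's invariant subspace is well defined. I would cite the standard SVD uniqueness fact: simple nonzero singular values determine their singular vectors up to sign, while clusters determine their invariant subspaces. The separation requirement in the statement is only needed for the continuity interpretation, and that is deferred to the perturbation results.

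For part 3, pass to the limit $J_0$. The limiting equation $-J_0(\beta-\beta_0)=0$ has solution set $\beta_0+\mathcal N(J_0)$, intersected with the parameter space. Decompose $\beta=P_{\mathcal R(J_0')}\beta+P_{\mathcal N(J_0)}\beta$, using $\mathcal R(J_0')=\mathcal N(J_0)^\perp$. Every element of the solution set has the same row-space component $P_{\mathcal R(J_0')}\beta_0$, so that component is identified. Adding any $n\in\mathcal N(J_0)$ leaves $J_0(\beta-\beta_0)$ unchanged, so null-space perturbations are invisible to the limiting moment. Part 4 is interpretive rather than a mathematical claim. I would remark that nothing above used sampling, and that classifying a direction as strong, local, or unidentified requires comparing $s_{j,NT}$ with the estimation error of $\widehat J_{NT}$, which is handled by later results. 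I expect no real obstacle; the only delicate point is the phrasing in part 2 about clusters versus individual singular vectors.
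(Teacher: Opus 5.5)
Your proposal is correct and follows essentially the same route as the paper's proof: substitute the projected structural equation, use population-average exclusion to get $\mathcal Q_{NT}(\beta)=-J_{NT}(\beta-\beta_0)$, read uniqueness and injectivity off $\mathcal N(J_{NT})$, interpret the thin SVD, and use $\mathbb R^p=\mathcal R(J_0')\oplus\mathcal N(J_0)$ for the limiting equation. Your extra details fill in points the paper leaves implicit, namely the inclusion $\mathcal B-\beta_0\subseteq\mathcal B-\mathcal B$ and the SVD uniqueness fact for simple versus repeated singular values.
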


\noindent\emph{Proof.} See Online Appendix~\ref{app:proof:thm-identification-map}.

\paragraph{Strong-identification set recovery.}
When the smallest singular value of $J_{NT}$ is bounded away from zero, a feasible affine
sample moment map with $O_p(r_{NT})$ score error has a unique root
$\widehat\beta_{NT}$ satisfying
$d_H(\{\widehat\beta_{NT}\},\{\beta_0\})=O_p(r_{NT})$; see Online Appendix
Proposition~\ref{prop:strong-set-recovery}. The bound deteriorates as the projected
Jacobian approaches singularity, which is precisely why weak-identification inference is
needed.

\paragraph{Interpretation.}
The theorem separates the existence of economic identifying variation from the ability
to detect it in a finite sample. Population rank asks whether the projected moments can
in principle distinguish structural effects. Statistical diagnosis asks whether the
surviving variation is strong enough to separate those effects empirically after factor
and sampling uncertainty are taken into account.

The next subsection moves from this population question to the recoverability of the
same geometry from feasible sample moments.

\subsection{Spectral stability of feasible identification diagnostics}\label{subsec:stability}

Population geometry is useful only if it can be recovered from the sample. This
subsection quantifies when sampling and factor-estimation errors are small relative to
the signal and spectral gaps that define economically interpretable directions.

Define the oracle sample Jacobian and its feasible counterpart by
\[
 \widetilde J_{NT}=\frac1{NT}\sum_{i,t}z_{it}^{o}x_{it}^{o\prime},
 \qquad
 \widehat J_{NT}=\frac1{NT}\sum_{i,t}\widehat z_{it}^{o}\widehat x_{it}^{o\prime}.
\]
Write
\[
 \tau_{NT}=\|\widetilde J_{NT}-J_{NT}\|,
 \qquad
 \rho_{NT}=\|\widehat J_{NT}-\widetilde J_{NT}\|,
 \qquad
 e_{NT}=\tau_{NT}+\rho_{NT},
\]
where $\|\cdot\|$ is the spectral norm. The term $\tau_{NT}$ is sampling error and $\rho_{NT}$ is the additional error due to feasible nuisance removal. Their separation is important: consistency of factor spaces alone does not determine the sampling precision of an identification diagnostic.

\paragraph{Purpose.}
Population rank is not observed. We use standard singular-value perturbation tools \citep{wedin1972,stewartsun1990}. The next proposition quantifies how sampling error and
feasible nuisance removal perturb the singular values and singular subspaces used to
diagnose projected identification.

\begin{proposition}[Singular-value and subspace stability]\label{prop:stability}
For every $j\le p$,
\[
 |\sigma_j(\widehat J_{NT})-\sigma_j(J_{NT})|\le e_{NT}.
\]
Suppose a singular-value cluster of $J_{NT}$ is separated from the remainder of its
spectrum by a deterministic gap $\gamma_{NT}>0$. On the event
$\{e_{NT}<\gamma_{NT}/2\}$, the corresponding feasible right-singular projector satisfies
\[
 \|\widehat P_R-P_R\|\le \frac{2e_{NT}}{\gamma_{NT}}.
\]
In particular, if $e_{NT}=O_p(r_{NT})$ for a deterministic sequence
$r_{NT}=o(\gamma_{NT})$, then
$\|\widehat P_R-P_R\|=O_p(r_{NT}/\gamma_{NT})$.
Consequently:
\begin{enumerate}[leftmargin=2em]
 \item if $e_{NT}=o_p(c_{NT})$ and
 $c_{NT}=o\{\sigma_p(J_{NT})\}$, thresholding at $c_{NT}$ detects full-column-rank
 identification with probability approaching one;
 \item if $\sigma_j(J_{NT})\le C d_{NT}$ and
 $e_{NT}=O_p(d_{NT})$ for a deterministic $d_{NT}\downarrow0$, the perturbation bound
 alone does not place $\sigma_j(J_{NT})$ outside an $O_p(d_{NT})$ neighborhood of zero;
 \item numerical rank and singular subspaces are consistently recovered only for
 clusters whose nonzero singular values and separating gaps dominate $e_{NT}$.
\end{enumerate}
\end{proposition}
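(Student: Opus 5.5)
The argument rests on two deterministic matrix-perturbation facts, Weyl's inequality for singular values and the Wedin $\sin\Theta$ theorem, both applied to the single perturbation $E=\widehat J_{NT}-J_{NT}$. By the triangle inequality,
\[
 \|E\|\le\|\widetilde J_{NT}-J_{NT}\|+\|\widehat J_{NT}-\widetilde J_{NT}\|=\tau_{NT}+\rho_{NT}=e_{NT}.
\]
Everything below is therefore stated on the level of $e_{NT}$, and the probabilistic conclusions follow by composing these deterministic bounds with the assumed stochastic orders.

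\emph{Singular-value bound.} I use the Courant--Fischer (min--max) characterization
\[
 \sigma_j(A)=\max_{\dim S=j}\ \min_{x\in S,\ \|x\|=1}\|Ax\|.
\]
Since $\bigl|\|(J+E)x\|-\|Jx\|\bigr|\le\|E\|$ for every unit vector $x$, we obtain $|\sigma_j(\widehat J_{NT})-\sigma_j(J_{NT})|\le\|E\|\le e_{NT}$ for every $j\le p$. This requires no gap condition.

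\emph{Subspace bound.} Let the cluster have index set $\mathcal C$, and let $\gamma_{NT}$ be its gap from the remaining singular values of $J_{NT}$. On the event $\{e_{NT}<\gamma_{NT}/2\}$, Weyl's bound keeps the perturbed cluster $\{\sigma_j(\widehat J_{NT}):j\in\mathcal C\}$ separated from the perturbed remainder by at least $\gamma_{NT}-2e_{NT}>0$. The feasible projector $\widehat P_R$ is then well defined as the spectral projector onto the right-singular space of the cluster. Wedin's theorem gives
\[
 \|\sin\Theta\|\le\frac{\max(\|E R\|,\|E' L\|)}{\delta},
\]
where $R$ and $L$ are the population right- and left-singular bases of the cluster and $\delta$ is the admissible separation between the unperturbed cluster and the perturbed complement (or the reverse). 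Using the identity $\|\widehat P_R-P_R\|=\|\sin\Theta\|$ then yields the stated bound.

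\emph{Main obstacle.} The delicate step is obtaining the constant $2$ over $\gamma_{NT}$. If $\delta$ is taken as the gap between the population cluster and the perturbed remainder, Weyl's bound gives $\delta\ge\gamma_{NT}-e_{NT}>\gamma_{NT}/2$ on the event. Hence $\|\widehat P_R-P_R\|\le e_{NT}/(\gamma_{NT}/2)=2e_{NT}/\gamma_{NT}$. I would check Wedin's one-sided formulation carefully to confirm that only the population-cluster versus perturbed-complement separation is needed, since that choice is what produces exactly this constant. A cluster containing zero singular values, or with $q>p$, is handled by working with the full SVD, including the left null space, as in \citet{wedin1972}.

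\emph{Probabilistic statements.} If $e_{NT}=O_p(r_{NT})$ with $r_{NT}=o(\gamma_{NT})$, then $\Pr(e_{NT}<\gamma_{NT}/2)\to1$, and on that event the deterministic bound gives $O_p(r_{NT}/\gamma_{NT})$.

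For item 1, with probability approaching one we have $e_{NT}<c_{NT}/2$ and $\sigma_p(J_{NT})>2c_{NT}$. Then
\[
 \sigma_p(\widehat J_{NT})\ge\sigma_p(J_{NT})-e_{NT}>c_{NT},
\]
so thresholding at $c_{NT}$ detects full rank.

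For item 2, the bound gives only $\sigma_j(\widehat J_{NT})\le Cd_{NT}+O_p(d_{NT})$, which is compatible with $\sigma_j(J_{NT})=0$. A concrete pair $J_{NT}=0$ and $J_{NT}$ with $\sigma_j=Cd_{NT}$, both perturbed within $O_p(d_{NT})$, shows that the bound alone cannot separate them.

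Item 3 restates the two preceding arguments applied clusterwise: rank counts and projectors are recovered precisely when both the nonzero singular values and the separating gaps dominate $e_{NT}$.
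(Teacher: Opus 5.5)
Your proposal is correct and follows the paper's proof: the triangle inequality $\|\widehat J_{NT}-J_{NT}\|\le\tau_{NT}+\rho_{NT}=e_{NT}$, then Weyl's inequality for the singular values, then Wedin's $\sin\Theta$ theorem on the event $\{e_{NT}<\gamma_{NT}/2\}$, and finally direct substitution of the rate conditions for the probabilistic statement and the three consequences. Your explicit separation $\delta\ge\gamma_{NT}-e_{NT}>\gamma_{NT}/2$ (population cluster against perturbed complement, matching the population-basis residuals bounded by $e_{NT}$) spells out where the constant $2$ comes from, which the paper leaves implicit in its citation of Wedin.
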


\noindent\emph{Proof.} See Online Appendix~\ref{app:proof:prop-stability}.

\paragraph{Interpretation.}
A reported singular value or direction is credible only relative to its estimation error.
The proposition therefore distinguishes the existence of population identification from
the empirical ability to diagnose it. Spectral gaps are essential for interpreting
directions, not merely technical devices used in the proof.

\begin{remark}[Practical interpretation]\label{rem:stability-practice}
A reported singular value is economically interpretable only together with an uncertainty scale. Empirical work should therefore accompany $\sigma_j(\widehat J_{NT})$ with a bootstrap or subsampling distribution for the Jacobian, sensitivity to the factor dimension, and several numerical-rank tolerances. Proposition~\ref{prop:stability} also explains why identification diagnostics should not be used as a discontinuous pretest for choosing conventional rather than robust inference.
\end{remark}

\subsection{Statistical inference for the projected Jacobian spectrum}

The perturbation bounds distinguish stable from unstable spectral objects. We now attach
sampling uncertainty to the stable ones. The resulting intervals ask whether a projected
identifying direction is measurably separated from zero; they are not uniform rank tests
at a boundary.

\label{subsec:diagnostic-inference}

Perturbation bounds determine whether a diagnostic is stable; statistical inference
requires a first-order law for the estimated Jacobian. Throughout the bootstrap results,
$\mathcal L^*(\cdot)$ denotes conditional law given the data and $d_{BL}$ denotes
bounded-Lipschitz distance. Let $s_{NT}\downarrow0$ be a deterministic rate and write
\[
 \Delta_{NT}=\widehat J_{NT}-J_{NT}.
\]

\begin{assumption}[Asymptotic linearity of the feasible Jacobian]
\label{ass:jacobian-clt}
There is a mean-zero random matrix $\mathbb Z_J\in\mathbb R^{q\times p}$ such that
\[
 s_{NT}^{-1}\Delta_{NT}\Rightarrow\mathbb Z_J.
 \label{eq:jacobian-clt}
\]
For bootstrap inference, there is a feasible replicate $\widehat J_{NT}^*$ satisfying
\[
 d_{BL}\!\left[
 \mathcal L^*\{s_{NT}^{-1}(\widehat J_{NT}^*-\widehat J_{NT})\},
 \mathcal L(\mathbb Z_J)
 \right]\to_p0.
 \label{eq:jacobian-bootstrap}
\]
The assumption may be verified by combining a sample-Jacobian central limit theorem with
the factor-transfer expansion. It is imposed only for the diagnostics in this subsection.
\end{assumption}

For a simple singular value $\sigma_j(J_{NT})$, let $u_{j,NT}$ and $v_{j,NT}$ denote
unit left and right singular vectors, with signs chosen by any deterministic local
orientation rule. Define the separation of the positive dilation eigenvalue $\sigma_j(J_{NT})$ by
\[
 \delta_{j,NT}
 =\min\!\left\{
 \sigma_j(J_{NT}),
 \min_{k\ne j}|\sigma_j(J_{NT})-\sigma_k(J_{NT})|
 \right\},
\]
with the inner minimum interpreted as $+\infty$ when there is no $k\ne j$. The first
term is the distance to the zero eigenspace of the symmetric dilation; distances to its
negative eigenvalues are no smaller. This is the gap used in the perturbation expansion
below.

\paragraph{Purpose.}
The perturbation bounds above establish stability but not statistical uncertainty. The
next theorem provides the first-order distribution and bootstrap approximation needed to
construct confidence intervals for simple, separated singular values.

\begin{theorem}[First-order inference for separated singular values]
\label{thm:singular-inference}
Suppose Assumption~\ref{ass:jacobian-clt} holds, the singular value is simple,
$\delta_{j,NT}>0$, and
\[
 \frac{s_{NT}}{\delta_{j,NT}}\to0.
\]
If $u_{j,NT}\to u_j$ and $v_{j,NT}\to v_j$, then
\[
 s_{NT}^{-1}\{\sigma_j(\widehat J_{NT})-\sigma_j(J_{NT})\}
 \Rightarrow u_j'\mathbb Z_Jv_j.
 \label{eq:singular-delta}
\]
Moreover,
\[
 \sigma_j(\widehat J_{NT})-\sigma_j(J_{NT})
 =u_{j,NT}'\Delta_{NT}v_{j,NT}
 +O_p\!\left(\frac{s_{NT}^2}{\delta_{j,NT}}\right).
 \label{eq:singular-expansion}
\]
The bootstrap consistently estimates the first-order law:

\[
 d_{BL}\!\left[
 \mathcal L^*\!\left\{
 s_{NT}^{-1}\bigl(
 \sigma_j(\widehat J_{NT}^*)-\sigma_j(\widehat J_{NT})
 \bigr)\right\},
 \mathcal L(u_j'\mathbb Z_Jv_j)
 \right]\to_p0.
 \label{eq:singular-bootstrap}
\]
\end{theorem}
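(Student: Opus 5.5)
The plan is to work with the symmetric Jordan--Wielandt dilation
\[
 \mathcal D(A)=\begin{pmatrix}0&A\\ A'&0\end{pmatrix},
\]
whose nonzero eigenvalues are $\pm\sigma_k(A)$, padded by $q-p$ structural zeros. For the simple singular value $\sigma_j(J_{NT})$, the vector $w_{j,NT}=2^{-1/2}(u_{j,NT}',v_{j,NT}')'$ is a unit eigenvector of $\mathcal D(J_{NT})$. By the definition of $\delta_{j,NT}$, the eigenvalue $\sigma_j(J_{NT})$ is separated from every other dilation eigenvalue by at least $\delta_{j,NT}$. The remark after the definition of $\delta_{j,NT}$ states exactly this: distances to the negative eigenvalues and to zero are no smaller. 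Since $\mathcal D(\widehat J_{NT})-\mathcal D(J_{NT})=\mathcal D(\Delta_{NT})$ and $\|\mathcal D(\Delta_{NT})\|=\|\Delta_{NT}\|$, the problem reduces to first-order perturbation of an isolated simple eigenvalue of a symmetric matrix.

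\emph{Step 1: the second-order expansion.} I would use the standard resolvent or Kato bound for a simple isolated eigenvalue. If $\|E\|\le\delta/4$, where $\delta$ is the isolation gap, then the perturbed eigenvalue $\widehat\lambda$ satisfies
\[
|\widehat\lambda-\lambda-w'Ew|\le C\|E\|^2/\delta .
\]
One clean route writes $\widehat\lambda-\lambda-w'Ew=w'E(\widehat w-w)$ up to normalization, and then applies Davis--Kahan in the form $\|\widehat w\widehat w'-ww'\|\le 2\|E\|/\delta$, as in Proposition~\ref{prop:stability}. Taking $E=\mathcal D(\Delta_{NT})$ gives $w'Ew=u_{j,NT}'\Delta_{NT}v_{j,NT}$. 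Assumption~\ref{ass:jacobian-clt} gives $\|\Delta_{NT}\|=O_p(s_{NT})$, and $s_{NT}/\delta_{j,NT}\to0$, so the event $\{\|\Delta_{NT}\|\le\delta_{j,NT}/4\}$ has probability tending to one. One point needs checking: by Weyl's inequality the perturbed eigenvalue near $\sigma_j$ stays positive and remains the $j$th largest, so it equals $\sigma_j(\widehat J_{NT})$. This yields the displayed expansion with remainder $O_p(s_{NT}^2/\delta_{j,NT})$.

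\emph{Step 2: the limit law.} I would divide the expansion by $s_{NT}$. The remainder becomes $O_p(s_{NT}/\delta_{j,NT})=o_p(1)$. The leading term is
\[
u_{j,NT}'(s_{NT}^{-1}\Delta_{NT})v_{j,NT}.
\]
Because $(u_{j,NT},v_{j,NT})\to(u_j,v_j)$ deterministically and $s_{NT}^{-1}\Delta_{NT}\Rightarrow\mathbb Z_J$, the continuous mapping theorem applied to the bilinear map $(M,u,v)\mapsto u'Mv$, combined with Slutsky's lemma, gives the limit $u_j'\mathbb Z_Jv_j$.

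\emph{Step 3: the bootstrap.} This is the main obstacle. The bootstrap expansion is centered at $\widehat J_{NT}$, not at $J_{NT}$, so I would repeat Step 1 conditionally with base matrix $\widehat J_{NT}$. Three ingredients are needed.
\begin{enumerate}[leftmargin=2em]
 \item By Weyl's inequality, the gap of $\sigma_j(\widehat J_{NT})$ is at least $\delta_{j,NT}-2\|\Delta_{NT}\|\ge\delta_{j,NT}/2$ with probability approaching one.
 \item By Proposition~\ref{prop:stability}'s projector bound, the singular vectors $\widehat u_j,\widehat v_j$ of $\widehat J_{NT}$, with signs aligned by the orientation rule, satisfy $\|\widehat u_j-u_{j,NT}\|+\|\widehat v_j-v_{j,NT}\|=O_p(s_{NT}/\delta_{j,NT})=o_p(1)$. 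Hence $\widehat u_j\to_p u_j$ and $\widehat v_j\to_p v_j$. Sign ambiguity is harmless because $u'Mv$ is invariant under a joint sign flip.
 \item The bootstrap consistency in Assumption~\ref{ass:jacobian-clt} implies $\|\widehat J_{NT}^*-\widehat J_{NT}\|=O_{p^*}(s_{NT})$ in probability.
\end{enumerate}
With these, the conditional expansion is
\[
 s_{NT}^{-1}\{\sigma_j(\widehat J^*_{NT})-\sigma_j(\widehat J_{NT})\}
 =\widehat u_j'\,s_{NT}^{-1}(\widehat J^*_{NT}-\widehat J_{NT})\,\widehat v_j+o_{p^*}(1).
\]
Finally I would transfer bounded-Lipschitz convergence. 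The map $M\mapsto\widehat u_j'M\widehat v_j$ is $1$-Lipschitz in operator norm, and $\widehat u_j,\widehat v_j$ converge in probability to $u_j,v_j$. Together with the $o_{p^*}(1)$ remainder, which perturbs $d_{BL}$ by at most the conditional expectation of $\min(1,|\text{remainder}|)$, this yields
\[
d_{BL}\bigl[\mathcal L^*\{\cdot\},\mathcal L(u_j'\mathbb Z_Jv_j)\bigr]\to_p0 .
\]
The delicate bookkeeping is in handling conditional orders in outer probability, and in ensuring that the event on which the conditional expansion is valid has probability approaching one. I would manage both with a subsequence argument (almost-sure representation along subsequences).
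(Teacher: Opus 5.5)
Your proposal is correct and follows essentially the same route as the paper's proof. Both arguments pass to the symmetric dilation and use the simple-eigenvalue expansion with gap $\delta_{j,NT}$, then handle the bootstrap through Weyl's inequality to keep $\sigma_j(\widehat J_{NT})$ separated, Wedin/Davis--Kahan for convergence of the empirical singular vectors, and a conditional expansion around $\widehat J_{NT}$ with conditional Slutsky. You additionally spell out the $O(\|\Delta_{NT}\|^2/\delta_{j,NT})$ remainder and the bounded-Lipschitz transfer, both of which the paper leaves implicit.
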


Continuity of the scalar limit is required only for quantile-based confidence intervals,
not for bounded-Lipschitz bootstrap consistency.

\noindent\emph{Proof.} See Online Appendix~\ref{app:proof:thm-singular-inference}.

\paragraph{Interpretation.}
A singular value measures how strongly the data distinguish one combination of
structural effects after nuisance removal. The theorem permits confidence intervals for
that surviving relevance. A small lower confidence bound means that the corresponding
economic mechanism may not be distinguishable even when the point estimate appears
reassuring.

\begin{corollary}[Confidence intervals for a simple singular value]
\label{cor:singular-ci}
Under Theorem~\ref{thm:singular-inference}, let
$\widehat c_{j,1-\alpha/2}^*$ and $\widehat c_{j,\alpha/2}^*$ be the conditional bootstrap
quantiles of
$s_{NT}^{-1}\{\sigma_j(\widehat J_{NT}^*)-\sigma_j(\widehat J_{NT})\}$.
Then
\[
 \left[
 \sigma_j(\widehat J_{NT})-s_{NT}\widehat c_{j,1-\alpha/2}^*,
 \;
 \sigma_j(\widehat J_{NT})-s_{NT}\widehat c_{j,\alpha/2}^*
 \right]
 \label{eq:singular-ci}
\]
has asymptotic coverage $1-\alpha$, provided the limiting distribution is continuous at
the relevant quantiles. The deterministic scale $s_{NT}$ may be replaced by a feasible
$\widehat s_{NT}$ satisfying $\widehat s_{NT}/s_{NT}\to_p1$. A one-sided lower bound is
a pointwise procedure for a fixed positive, simple, separated singular value; it is not a
uniform rank test at a local boundary. Intersecting the interval with $[0,\infty)$ is
conservative and respects the parameter space.
\end{corollary}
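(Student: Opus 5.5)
The plan is to derive coverage directly from Theorem~\ref{thm:singular-inference}, using the standard argument for the basic (percentile-difference) bootstrap interval. Write $\widehat\sigma_j=\sigma_j(\widehat J_{NT})$, $\sigma_j=\sigma_j(J_{NT})$, $S_{NT}=s_{NT}^{-1}(\widehat\sigma_j-\sigma_j)$, and let $F$ be the distribution function of the limit $W=u_j'\mathbb Z_Jv_j$. Rearranging, the event that $\sigma_j$ lies in the interval is exactly
\[
 \widehat c_{j,\alpha/2}^*\le S_{NT}\le \widehat c_{j,1-\alpha/2}^* .
\]
So it suffices to prove three things: $S_{NT}\Rightarrow W$, which is the first display of Theorem~\ref{thm:singular-inference}; $\widehat c_{j,\gamma}^*\to_p F^{-1}(\gamma)$ for $\gamma\in\{\alpha/2,1-\alpha/2\}$; and continuity of $F$ at these quantiles.

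\textbf{Quantile step.} The bootstrap conclusion of Theorem~\ref{thm:singular-inference} gives bounded-Lipschitz convergence in probability of the conditional law to $\mathcal L(W)$. Fix $\epsilon>0$ and sandwich the indicators $\mathbf 1\{x\le F^{-1}(\gamma)\pm\epsilon\}$ between Lipschitz functions. Then, with probability tending to one,
\[
 F^*\{F^{-1}(\gamma)-\epsilon\}<\gamma<F^*\{F^{-1}(\gamma)+\epsilon\}.
\]
This needs $F$ to be strictly increasing near $F^{-1}(\gamma)$, which is the intended reading of ``continuous at the relevant quantiles''. If only continuity is assumed, the interval limit uses quantiles of $F$ at the uniquely defined levels. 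Either way, $\widehat c^*_{j,\gamma}\to_p F^{-1}(\gamma)$. By Slutsky's lemma, $(S_{NT},\widehat c^*_{j,\alpha/2},\widehat c^*_{j,1-\alpha/2})$ converges jointly to $(W,F^{-1}(\alpha/2),F^{-1}(1-\alpha/2))$, since the last two limits are constants. The portmanteau theorem, with the boundary having probability zero by continuity, then yields coverage $F(F^{-1}(1-\alpha/2))-F(F^{-1}(\alpha/2))=1-\alpha$. I expect the quantile-convergence step to be the main technical point: it converts $d_{BL}$ convergence in probability into quantile consistency and requires care with the outer-probability conventions. It is nevertheless a routine lemma.

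\textbf{Remaining claims.} The feasible scale $\widehat s_{NT}$ enters in two places. First, the interval uses $\widehat s_{NT}\widehat c^*$ in place of $s_{NT}\widehat c^*$. Second, the bootstrap statistic is rescaled by $s_{NT}/\widehat s_{NT}\to_p1$. The resulting changes are $o_p(s_{NT})$ after multiplying by the $O_p(1)$ quantiles, so Slutsky's lemma preserves the limit. Intersecting with $[0,\infty)$ removes only negative values, which cannot be $\sigma_j\ge0$, so coverage does not decrease; this gives conservativeness. The one-sided statement follows from the same argument with a single quantile. Its pointwise, non-uniform character follows from the hypotheses $s_{NT}/\delta_{j,NT}\to0$ and simplicity of the singular value: both fail along local sequences where $\sigma_j$ is of order $s_{NT}$, so no uniformity over such boundary sequences is claimed.
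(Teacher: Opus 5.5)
Your proposal is correct and follows essentially the paper's route: you rewrite coverage as $\widehat c^*_{j,\alpha/2}\le s_{NT}^{-1}\{\sigma_j(\widehat J_{NT})-\sigma_j(J_{NT})\}\le\widehat c^*_{j,1-\alpha/2}$, obtain bootstrap quantile consistency from the two limits in Theorem~\ref{thm:singular-inference}, conclude by portmanteau, and absorb $\widehat s_{NT}$ by Slutsky. Your remark that quantile consistency needs the limit cdf to be strictly increasing at the target quantiles, not merely continuous, is right and sharper than the paper's wording; you should therefore drop the ``either way'' fallback, because under continuity alone $\widehat c^*$ need not converge, and note that at a local boundary it is the gap condition $s_{NT}/\delta_{j,NT}\to0$ that fails (since $\delta_{j,NT}\le\sigma_j(J_{NT})$), whereas simplicity need not fail.
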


\noindent\emph{Proof.} See Online Appendix~\ref{app:proof:cor-singular-ci}.

\paragraph{Rank recovery.}
The next result states when thresholding consistently recovers population rank and why
local rank boundaries must be excluded.

\begin{theorem}[Rank consistency on separated strata]\label{thm:rank-consistency}
Suppose Assumption~\ref{ass:jacobian-clt} holds,
$\rank(J_{NT})=r$ eventually, and
$c_{NT}\to0$, $s_{NT}/c_{NT}\to0$.
If $r\ge1$, also suppose
\[
 \frac{\sigma_r(J_{NT})}{c_{NT}}\to\infty.
\]
For $r=0$, this last condition is omitted.
Because the maintained IV system has $q\ge p$, define
\[
 \widehat r_{NT}
 =\sum_{j=1}^{p}\mathbf 1\{\sigma_j(\widehat J_{NT})>c_{NT}\}.
 \label{eq:rank-estimator}
\]
Then $\Pr(\widehat r_{NT}=r)\to1$. This conclusion is pointwise on separated rank strata
and, when $r\ge1$, does not apply to local sequences with $\sigma_r(J_{NT})=O(s_{NT})$.
\end{theorem}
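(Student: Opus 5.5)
The argument combines Weyl's inequality for singular values with the tightness supplied by Assumption~\ref{ass:jacobian-clt}. Since $s_{NT}^{-1}\Delta_{NT}\Rightarrow\mathbb Z_J$, the sequence $s_{NT}^{-1}\|\Delta_{NT}\|$ is tight, so $\|\Delta_{NT}\|=O_p(s_{NT})$. Because $s_{NT}/c_{NT}\to0$, this gives $\|\Delta_{NT}\|=o_p(c_{NT})$. Weyl's inequality for singular values, which is the first display of Proposition~\ref{prop:stability} applied with $e_{NT}$ replaced by $\|\Delta_{NT}\|$, then yields
\[
 \max_{j\le p}|\sigma_j(\widehat J_{NT})-\sigma_j(J_{NT})|\le\|\Delta_{NT}\|=o_p(c_{NT}).
\]
In particular, the event $E_{NT}=\{\|\Delta_{NT}\|\le c_{NT}/2\}$ has probability tending to one. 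All remaining steps take place on $E_{NT}$, for $(N,T)$ large enough that $\rank(J_{NT})=r$.

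\textbf{Step 1: the null singular values fall below the threshold.} For $j>r$ we have $\sigma_j(J_{NT})=0$, so on $E_{NT}$ it follows that $\sigma_j(\widehat J_{NT})\le c_{NT}/2<c_{NT}$. Each such indicator is therefore zero. This step alone settles the case $r=0$, since then every term in $\widehat r_{NT}$ vanishes and $\widehat r_{NT}=0$. Note that this case uses only $c_{NT}>0$ and $s_{NT}/c_{NT}\to0$; the condition $c_{NT}\to0$ is not needed here.

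\textbf{Step 2: the signal singular values stay above the threshold.} Suppose $r\ge1$ and $j\le r$. Then $\sigma_j(J_{NT})\ge\sigma_r(J_{NT})$, and since $\sigma_r(J_{NT})/c_{NT}\to\infty$, eventually $\sigma_r(J_{NT})\ge 2c_{NT}$. On $E_{NT}$ this gives
\[
 \sigma_j(\widehat J_{NT})\ge\sigma_r(J_{NT})-c_{NT}/2\ge\tfrac32c_{NT}>c_{NT}.
\]
So all $r$ indicators equal one, and on $E_{NT}$ we get $\widehat r_{NT}=r$ exactly. Hence $\Pr(\widehat r_{NT}=r)\ge\Pr(E_{NT})\to1$.

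\textbf{Step 3: the excluded local sequences.} The negative remark follows from the same comparison. If $\sigma_r(J_{NT})=O(s_{NT})$, then $\sigma_r(J_{NT})=o(c_{NT})$, so the signal singular value is asymptotically below the threshold. In that case the $r$-th indicator tends to zero in probability and $\widehat r_{NT}$ converges to a smaller rank instead of $r$. So the theorem cannot extend to such sequences, and it is pointwise on separated strata.

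I do not expect a real obstacle. The only point needing care is deriving $\|\Delta_{NT}\|=O_p(s_{NT})$ from weak convergence; this follows from Prohorov tightness together with continuity of the norm. Everything else is deterministic once we are on $E_{NT}$.
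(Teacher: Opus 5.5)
Your proof is correct and is essentially the paper's own argument: you get $\|\Delta_{NT}\|=O_p(s_{NT})$ from the weak limit and apply Weyl's inequality, then $s_{NT}/c_{NT}\to0$ pushes the zero singular values below the threshold while $\sigma_r(J_{NT})/c_{NT}\to\infty$ keeps the nonzero ones above it, which you make explicit through the event $\{\|\Delta_{NT}\|\le c_{NT}/2\}$. Your Step~3 goes slightly beyond the paper's scope remark by showing $\Pr(\widehat r_{NT}=r)\to0$ for such local sequences; more precisely, you get $\widehat r_{NT}<r$ with probability approaching one rather than convergence to a particular smaller rank. Note also that $c_{NT}\to0$ is not used in either case, not only when $r=0$.
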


\noindent\emph{Proof.} See Online Appendix~\ref{app:proof:thm-rank-consistency}.

\paragraph{Interpretation.}
Rank recovery is possible only when the smallest retained singular value dominates both
the threshold and the sampling error. Near a local rank boundary, rank should be reported
as uncertain rather than converted into a definitive pretest decision.

\paragraph{Subspace recovery.}
Singular values can be well estimated even when their directions are unstable. The next
result gives the uncertainty-to-gap condition for interpreting an estimated subspace.

\begin{theorem}[Subspace stability and inferential reliability]
\label{thm:subspace-reliability}
Let a right-singular cluster of $J_{NT}$ be separated from the rest of the spectrum by
$\gamma_{NT}>0$, and suppose $\|\Delta_{NT}\|=O_p(s_{NT})$. Then
\[
 \|\widehat P_R-P_R\|
 =O_p\!\left(\min\left\{1,\frac{s_{NT}}{\gamma_{NT}}\right\}\right).
 \label{eq:subspace-rate}
\]
If $s_{NT}/\gamma_{NT}\to0$, the cluster subspace is consistently estimable. If the ratio
does not vanish, the theorem does not establish stability of individual directions inside
or adjacent to the cluster, even when the associated singular values are precisely
estimated.
\end{theorem}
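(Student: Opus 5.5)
The proof reduces to the deterministic perturbation bound of Proposition~\ref{prop:stability}, applied with the total error $\|\Delta_{NT}\|=\|\widehat J_{NT}-J_{NT}\|$ in place of $e_{NT}$. By the triangle inequality, $\|\Delta_{NT}\|\le e_{NT}$. The Wedin-type argument behind that proposition only uses the spectral norm of the perturbation, so we may apply it directly with $\|\Delta_{NT}\|$.

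First, the trivial bound. Both $\widehat P_R$ and $P_R$ are orthogonal projectors, so $\|\widehat P_R-P_R\|\le 1$ always. This gives the $\min\{1,\cdot\}$ part of the rate for free. It also handles the case where $s_{NT}/\gamma_{NT}$ does not vanish, since the claimed bound is then $O_p(1)$.

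Second, the main bound. Fix $\varepsilon>0$. By the hypothesis $\|\Delta_{NT}\|=O_p(s_{NT})$, choose $M$ such that $\Pr(\|\Delta_{NT}\|>Ms_{NT})<\varepsilon$ eventually. On the event $E_{NT}=\{\|\Delta_{NT}\|<\gamma_{NT}/2\}$, the right-singular-projector bound from Proposition~\ref{prop:stability} gives
\[
\|\widehat P_R-P_R\|\le 2\|\Delta_{NT}\|/\gamma_{NT}.
\]
Off $E_{NT}$, we have $\|\Delta_{NT}\|\ge\gamma_{NT}/2$, so
\[
\|\widehat P_R-P_R\|\le1\le 2\|\Delta_{NT}\|/\gamma_{NT}.
\]
So the inequality $\|\widehat P_R-P_R\|\le\min\{1,2\|\Delta_{NT}\|/\gamma_{NT}\}$ holds on every event, with no case restriction needed. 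On $\{\|\Delta_{NT}\|\le Ms_{NT}\}$ it follows that
\[
\|\widehat P_R-P_R\|\le\max(1,2M)\min\{1,s_{NT}/\gamma_{NT}\}.
\]
This yields the rate in \eqref{eq:subspace-rate}.

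Third, the two qualitative conclusions. If $s_{NT}/\gamma_{NT}\to0$, the rate is $o_p(1)$, so the cluster projector is consistent. If the ratio does not vanish, the bound is only $O_p(1)$, and the statement merely records that no stability claim follows. I would add a short remark on why no stronger claim is possible. For a $2\times2$ diagonal $J_{NT}$ with gap $\gamma_{NT}\asymp s_{NT}$, a perturbation of size $s_{NT}$ that rotates the singular vectors by a fixed angle leaves the singular values nearly unchanged, consistent with the first part of Proposition~\ref{prop:stability}, yet moves $P_R$ by order one.

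The main obstacle is making sure the projector bound in Proposition~\ref{prop:stability} really applies with $\gamma_{NT}$ defined as the separation within the spectrum of $J_{NT}$, including the zero singular values and the $q-p$ extra left dimensions. I would handle this as in Theorem~\ref{thm:singular-inference}, by passing to the symmetric dilation $\begin{pmatrix}0&J\\J'&0\end{pmatrix}$ and applying the Davis--Kahan $\sin\Theta$ theorem. That route delivers the constant $2$ once the perturbation is below half the gap.
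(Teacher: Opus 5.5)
Your proposal is correct and is essentially the paper's argument. The paper invokes a global $\sin\Theta$ bound $\|\widehat P_R-P_R\|\le C\min\{1,\|\Delta_{NT}\|/\gamma_{NT}\}$ and then uses $\|\Delta_{NT}\|=O_p(s_{NT})$ with monotonicity of $x\mapsto\min\{1,x\}$; you construct that global bound explicitly from the eventwise Wedin bound of Proposition~\ref{prop:stability} together with the trivial bound $\|\widehat P_R-P_R\|\le1$ off the event.

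The dilation detour in your last paragraph is unnecessary and would be counterproductive. The right projector of any cluster is a complement or a difference of top-cluster right projectors, and the Wedin bounds for those involve only gaps among the $p$ singular values. The dilation, by contrast, would also require the cluster to be separated from zero, which fails precisely for the weak or null-space clusters the theorem is meant to cover.
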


\noindent\emph{Proof.} See Online Appendix~\ref{app:proof:thm-subspace-reliability}.

\paragraph{Interpretation.}
The ratio $s_{NT}/\gamma_{NT}$ measures whether the combinations of policy or behavioral
parameters supported by the data are stable. A small ratio permits those combinations to
be interpreted. A nonvanishing ratio means that the available
bound cannot rule out economically important rotations among mechanisms, even when the
associated singular values are estimated precisely.

\paragraph{Practical reporting.}
For every retained singular value, report its estimate, a one- or two-sided confidence
interval, the nearest spectral gap, and the ratio of the estimated uncertainty scale to
that gap. Report rank over a range of theoretically admissible thresholds rather than as
an unconditional fact. Bootstrap re-estimation must repeat the same projection,
factor-dimension choice, and Jacobian construction used in the original sample unless a
fixed-projection bootstrap is justified by the relevant oracle-transfer theorem.

\subsection{Scalar concentration and asymmetric normalization}

For one endogenous regressor and one instrument, let
\[
 \Omega_{v,NT}
 =\Var\!\left(N^{-1/2}\sum_i T^{-1}\sum_t z_{it}^{o}v_{it}^{o}\right),
 \qquad T^\delta\Omega_{v,NT}\to\Omega_v\in(0,\infty).
\]
The concentration sequence and the first-stage score scale are
\begin{equation}
 \kappa_{N\mid T}
 =N\pi_{NT}^{2}\frac{(Q_{zz,NT}^{E})^{2}}{\Omega_{v,NT}},
 \qquad
 b_{NT}=T\sqrt{N\Omega_{v,NT}}.
 \label{eq:kappa}
\end{equation}
Thus a nondegenerate local experiment uses
$\pi_{NT}=c/\sqrt{NT^\delta}$, while the stochastic score normalization is
$b_{NT}\asymp\sqrt N\,T^{1-\delta/2}$; the rates coincide only when $\delta=1$.

\begin{proposition}[Weak-first-stage balance]\label{prop:balance}
Suppose $T^\delta\Omega_{v,NT}\to\Omega_v\in(0,\infty)$,
$\pi_{NT}=c/\sqrt{NT^\delta}$,
$b_{NT}^{-1}\sum_{i,t}z_{it}^{o}v_{it}^{o}\Rightarrow Z_v$, and
$\widehat Q_{zz,NT}=(NT)^{-1}\sum_{i,t}(z_{it}^{o})^2\to_pQ_{zz}>0$.
Define
\[
 \mu_c=\frac{cQ_{zz}}{\sqrt{\Omega_v}}.
 \label{eq:mu-c}
\]
Then
\[
 b_{NT}^{-1}\sum_{i,t}z_{it}^{o}x_{it}^{o}
 \Rightarrow \mu_c+Z_v.
\]
\end{proposition}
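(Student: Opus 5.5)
The plan is to substitute the projected first stage from \eqref{eq:oracle} into the normalized cross-moment and treat the resulting signal and noise pieces separately. In the scalar case $x_{it}^{o}=\pi_{NT}z_{it}^{o}+v_{it}^{o}$, so
\[
 b_{NT}^{-1}\sum_{i,t}z_{it}^{o}x_{it}^{o}
 =\pi_{NT}\,\frac{NT}{b_{NT}}\,\widehat Q_{zz,NT}
 +b_{NT}^{-1}\sum_{i,t}z_{it}^{o}v_{it}^{o}.
\]
The second term converges to $Z_v$ by hypothesis. It therefore suffices to show that the first term converges in probability to the constant $\mu_c$. Slutsky's lemma then gives the joint limit $\mu_c+Z_v$, with no independence condition needed because one summand is asymptotically degenerate.

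For the deterministic scale, recall $b_{NT}=T\sqrt{N\Omega_{v,NT}}$, so
\[
 \pi_{NT}\frac{NT}{b_{NT}}
 =\frac{c}{\sqrt{NT^\delta}}\cdot\frac{NT}{T\sqrt{N\Omega_{v,NT}}}
 =\frac{c}{\sqrt{T^\delta\Omega_{v,NT}}}.
\]
Since $T^\delta\Omega_{v,NT}\to\Omega_v\in(0,\infty)$, this converges to $c/\sqrt{\Omega_v}$ by continuity of $x\mapsto x^{-1/2}$ on $(0,\infty)$. Combined with $\widehat Q_{zz,NT}\to_pQ_{zz}$ and the continuous mapping theorem, the first term converges in probability to $cQ_{zz}/\sqrt{\Omega_v}=\mu_c$.

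The only delicate point is bookkeeping in this step. The $N$ and $T$ powers must cancel exactly, so that the drift $\pi_{NT}=c/\sqrt{NT^\delta}$ is precisely balanced against the score scale $b_{NT}\asymp\sqrt N\,T^{1-\delta/2}$ rather than against $\sqrt{NT}$. No further obstacle arises, because the orthogonality in Assumption~\ref{ass:score} is not needed for this sample identity. The reduced-form identity in \eqref{eq:oracle} holds exactly under Lemma~\ref{lem:projectioncompat}.
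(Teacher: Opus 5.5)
Your proposal is correct and follows the paper's proof: substitute the projected reduced form, show the deterministic scale $NT\pi_{NT}\widehat Q_{zz,NT}/b_{NT}=c\,\widehat Q_{zz,NT}/\sqrt{T^\delta\Omega_{v,NT}}\to_p\mu_c$, and conclude by the assumed score convergence and Slutsky's theorem. Your remark that no joint-limit or independence condition is needed because the signal term is asymptotically degenerate is exactly why Slutsky suffices.
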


\noindent\emph{Proof and additional discussion.} See Online Appendix~\ref{app:proof:prop-balance}. The distinction is
that coefficient localization, first-stage-score normalization, and structural-score
normalization are distinct objects under asymmetric panel dependence.

\section{The local projected-moment experiment}\label{sec:localexperiment}

Population rank alone does not reveal whether the surviving variation is statistically
informative. The purpose of this section is to combine moment sensitivity with the
sampling law of the projected score. This local representation becomes the common
language for power, weak identification, covariance-optimal weighting, and the Projected
Information Matrix.

This section provides the common asymptotic experiment behind regular projected GMM,
weak-IV ratio limits, and Anderson--Rubin local power. The object is the experiment
generated by the projected moments, not necessarily the likelihood experiment generated
by the full panel.

Let $A_{NT}\in\mathbb R^{q\times q}$ and $D_{NT}\in\mathbb R^{p\times p}$ be deterministic
nonsingular normalizing matrices. For $h\in\mathbb R^p$, define
\[
 \beta_{NT}(h)=\beta_0+D_{NT}^{-1}h.
\]
The local family changes the structural location parameter while holding the joint law
of $(x_{it}^{o},z_{it}^{o},u_{it}^{o})_{i,t}$ fixed. Under the law indexed by $h$,
\[
 y_{it}^{o}=x_{it}^{o\prime}\beta_{NT}(h)+u_{it}^{o}.
\]
Define the normalized score evaluated at the null,
\[
 \mathcal S_{NT}^{0}(h)
 =A_{NT}^{-1}\sum_{i,t}z_{it}^{o}
 \bigl(y_{it}^{o}-x_{it}^{o\prime}\beta_0\bigr)
\]
and, for a candidate local parameter $k$, define
\[
 \mathcal S_{NT}(k;h)
 =A_{NT}^{-1}\sum_{i,t}z_{it}^{o}
 \bigl(y_{it}^{o}-x_{it}^{o\prime}\beta_{NT}(k)\bigr).
\]

\begin{assumption}[Local projected-moment regularity]\label{ass:localmoment}
The following conditions hold. The local indices $h$ and $k$ are restricted to fixed
compact subsets of $\mathbb R^p$ when the statistics above are evaluated.
\begin{enumerate}[label=(\roman*),leftmargin=2.6em]
\item
\[
 A_{NT}^{-1}\sum_{i,t}z_{it}^{o}u_{it}^{o}\Rightarrow Z,
\]
for a tight $q$-vector $Z$;
\item with
\[
 \widehat H_{NT}
 =A_{NT}^{-1}\left(\sum_{i,t}z_{it}^{o}x_{it}^{o\prime}\right)D_{NT}^{-1},
 \qquad
 H_{NT}
 =A_{NT}^{-1}(NTJ_{NT})D_{NT}^{-1},
\]
we have $\|\widehat H_{NT}-H_{NT}\|=o_p(1)$;
\item $H_{NT}\to\mathcal H\in\mathbb R^{q\times p}$.
\end{enumerate}
\end{assumption}

\paragraph{Roadmap.}
The projected score changes under a local alternative only through the projected
Jacobian. The next theorem formalizes that shift and then, on Gaussian fixed-rank strata,
combines sensitivity and score precision in the Projected Information Matrix. The power,
weighting, and weak-identification results that follow all build on this representation.

\begin{theorem}[Local projected-moment shift representation]\label{thm:localmoment}
Under Assumption~\ref{ass:localmoment},
\[
 \mathcal S_{NT}^{0}(h)\Rightarrow Z+\mathcal Hh,
 \qquad
 \mathcal S_{NT}(k;h)\Rightarrow Z+\mathcal H(h-k),
 \label{eq:local-moment-experiment}
\]
jointly over every finite collection of $(h,k)$ values contained in the maintained
compact local-index sets. Taking the union over arbitrary fixed compact local-index sets gives the moment-shift
family
\[
 \mathcal E_P=\{\mathcal L(Z+\mathcal Hh):h\in\mathbb R^p\}.
\]

If $Z\sim N(0,\Omega)$, $\rank(\Omega)=r_\Omega$ is fixed,
$\mathcal R(\mathcal H)\subseteq\mathcal R(\Omega)$, and the positive eigenvalues of
$\Omega$ are bounded away from zero, define the \emph{projected information matrix}
\begin{equation}
 \mathcal I_P=\mathcal H'\Omega^\dagger\mathcal H. \label{eq:projected-information}
\end{equation}
Then $\mathcal I_P$ is positive semidefinite and
\[
 \rank(\mathcal I_P)=\rank(\mathcal H).
\]
It is positive definite if and only if $\mathcal H$ has full column rank. Within the
Gaussian projected-moment experiment, the log density ratio relative to $h=0$ is
\[
 \ell_h(Y)
 =h'\mathcal H'\Omega^\dagger Y-\frac12h'\mathcal I_Ph,
 \qquad Y\in\mathcal R(\Omega).
 \label{eq:gaussian-shift-lr}
\]
\end{theorem}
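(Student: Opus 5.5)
The plan has three parts: an algebraic decomposition of the normalized scores, a Gaussian-shift computation, and rank linear algebra for $\mathcal I_P$.

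\textbf{Score decomposition.} Under the law indexed by $h$, the identity $y_{it}^{o}-x_{it}^{o\prime}\beta_0=u_{it}^{o}+x_{it}^{o\prime}D_{NT}^{-1}h$ gives
\[
 \mathcal S_{NT}^{0}(h)=A_{NT}^{-1}\sum_{i,t}z_{it}^{o}u_{it}^{o}+\widehat H_{NT}h .
\]
In the same way,
\[
 \mathcal S_{NT}(k;h)=A_{NT}^{-1}\sum_{i,t}z_{it}^{o}u_{it}^{o}+\widehat H_{NT}(h-k).
\]
The joint law of $(x^o,z^o,u^o)$ does not depend on $h$. Hence the first term has the same law for every $h$ and converges to $Z$ by Assumption~\ref{ass:localmoment}(i). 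Parts (ii)--(iii) give $\widehat H_{NT}\to_p\mathcal H$, a deterministic limit.

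\textbf{Joint convergence.} For any finite collection of pairs $(h_m,k_m)$, the stacked vector is a fixed linear map of $\bigl(A_{NT}^{-1}\sum z^ou^o,\ \operatorname{vec}\widehat H_{NT}\bigr)$. Slutsky's lemma gives joint convergence of this pair to $(Z,\operatorname{vec}\mathcal H)$, because the second component converges in probability to a constant. The continuous mapping theorem then yields joint convergence to $\bigl(Z+\mathcal H h_m,\ Z+\mathcal H(h_m-k_m)\bigr)_m$. Compactness of the index sets is only used to keep the $h,k$ bounded. No uniformity is claimed, since only finite collections are asserted. The experiment $\mathcal E_P$ is just the collection of these limit laws.

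\textbf{Information matrix and likelihood ratio.} $\Omega^\dagger$ is positive semidefinite, so $\mathcal I_P$ is too. For the rank, I would show $\mathcal N(\mathcal I_P)=\mathcal N(\mathcal H)$. If $\mathcal I_P h=0$, then
\[
 0=h'\mathcal I_Ph=\|(\Omega^\dagger)^{1/2}\mathcal Hh\|^2 ,
\]
so $\mathcal Hh\in\mathcal N(\Omega^\dagger)=\mathcal R(\Omega)^\perp$. Since $\mathcal R(\mathcal H)\subseteq\mathcal R(\Omega)$ by hypothesis, this forces $\mathcal Hh=0$. The reverse inclusion is immediate. Thus the two ranks agree, and $\mathcal I_P$ is positive definite if and only if $\mathcal H$ has full column rank.

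For the density ratio, I would write $\Omega=U\Lambda U'$ with $U\in\mathbb R^{q\times r_\Omega}$ orthonormal and $\Lambda$ positive diagonal; the eigenvalue bound keeps $\Lambda$ invertible. Then $Y=U W$ on $\mathcal R(\Omega)$, where $W\sim N(\Lambda^{-1/2}\cdot 0\text{-shifted},\Lambda)$ has mean $U'\mathcal Hh$ under $h$ and mean $0$ under $h=0$. Here we use $\mathcal Hh\in\mathcal R(\Omega)$, so the shifted law stays supported on the same affine subspace and the two laws are mutually absolutely continuous. The ratio of the nondegenerate Gaussian densities of $W$ is
\[
 \exp\Bigl\{h'\mathcal H'U\Lambda^{-1}W-\tfrac12 h'\mathcal H'U\Lambda^{-1}U'\mathcal Hh\Bigr\}.
\]
Substituting $U\Lambda^{-1}U'=\Omega^\dagger$ and $UW=Y$ gives the stated $\ell_h(Y)$.

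\textbf{Main obstacle.} The only delicate step is this singular-$\Omega$ case. It needs the range condition $\mathcal R(\mathcal H)\subseteq\mathcal R(\Omega)$ to avoid mutually singular laws, and it needs the restriction to $Y\in\mathcal R(\Omega)$. I would treat it carefully through the eigen-coordinates $U$ rather than through densities on $\mathbb R^q$.
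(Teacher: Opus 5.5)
Your proposal is correct and follows essentially the paper's route: the exact score decomposition with Slutsky and continuous mapping on the common pair $\bigl(A_{NT}^{-1}\sum z_{it}^{o}u_{it}^{o},\widehat H_{NT}\bigr)$, then eigen-coordinates of $\Omega$ for the rank and likelihood-ratio claims. Your null-space argument also spells out the use of $\mathcal R(\mathcal H)\subseteq\mathcal R(\Omega)$ that the paper's $\widetilde{\mathcal H}'\widetilde{\mathcal H}$ factorization leaves implicit; the only fix needed is to replace the garbled description of $W$ with $W=U'Y\sim N(U'\mathcal Hh,\Lambda)$.
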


\noindent\emph{Proof.} See Online Appendix~\ref{app:proof:thm-localmoment}.

\paragraph{Interpretation.}
The local experiment measures how much policy-relevant variation survives after common
shocks and latent heterogeneity have been removed. Different coefficients in the same
specification can therefore be learned with very different precision. Small eigenvalues
of $\mathcal I_P$ identify economic directions that the projected data can barely
distinguish; a zero eigenvalue corresponds to a mechanism absent from the local
experiment. When the score law is non-Gaussian, the full experiment rather than a
covariance-only summary is needed for inference.

\begin{remark}[Parameter geometry and scope]\label{rem:geometry-scope}
The structural parameter enters the projected moment condition linearly, and the joint
null is evaluated directly. For regular Wald and GMM results, $\beta_0$ is an interior
point of the maintained Euclidean parameter set. Identification-robust score inversion
can instead be formed in $\mathbb R^p$ and intersected with any fixed closed economic
restriction set $\mathcal B$, preserving joint null coverage. Set projection onto a
subvector remains valid but can be conservative or disconnected. Extensions to curved
parameter manifolds or inequality-constrained nuisance problems require local charts,
tangent cones, and separate uniform approximations and are outside the maintained model.
\end{remark}

\begin{corollary}[Gaussian local power of the quadratic AR statistic]\label{cor:ar-local-power}
Under the Gaussian conditions of Theorem~\ref{thm:localmoment}, suppose that, under
each fixed local law indexed by $h$,
$\widehat\Omega_{NT}\to_p\Omega$ and
$\rank(\widehat\Omega_{NT})=\rank(\Omega)$ with probability approaching one. Then,
under $\beta_{NT}(h)$,
\[
 \mathcal S_{NT}^{0}(h)'\widehat\Omega_{NT}^{\dagger}\mathcal S_{NT}^{0}(h)
 \Rightarrow\chi^2_{r_\Omega}\!\left(h'\mathcal I_Ph\right).
\]
Hence the noncentrality parameter of the quadratic Anderson--Rubin statistic equals the
projected information in direction $h$.
\end{corollary}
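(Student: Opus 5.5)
We plan to combine the shift representation of Theorem~\ref{thm:localmoment} with consistency of $\widehat\Omega_{NT}^\dagger$ and then apply the continuous mapping theorem. First, under the fixed local law indexed by $h$, Theorem~\ref{thm:localmoment} gives $\mathcal S_{NT}^{0}(h)\Rightarrow Y:=Z+\mathcal Hh$ with $Z\sim N(0,\Omega)$. Second, we would show $\widehat\Omega_{NT}^\dagger\to_p\Omega^\dagger$. Pseudoinversion is not continuous in general, but it is continuous along sequences of constant rank. The required rank condition holds with probability approaching one by hypothesis. On that event we would use the standard bound $\|A^\dagger-B^\dagger\|\le C\max\{\|A^\dagger\|^2,\|B^\dagger\|^2\}\|A-B\|$ for equal-rank matrices (Wedin/Stewart). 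Together with the lower bound on the positive eigenvalues of $\Omega$ and Weyl's inequality, which bound $\|\widehat\Omega_{NT}^\dagger\|$ in probability, this gives $\|\widehat\Omega_{NT}^\dagger-\Omega^\dagger\|\to_p0$. Slutsky's lemma and the continuous mapping theorem applied to $(s,W)\mapsto s'Ws$ then yield $\mathcal S_{NT}^0(h)'\widehat\Omega_{NT}^\dagger\mathcal S_{NT}^0(h)\Rightarrow Y'\Omega^\dagger Y$.

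The remaining step is to identify the law of $Y'\Omega^\dagger Y$. Write the spectral decomposition $\Omega=U\Lambda U'$, where $U\in\mathbb R^{q\times r_\Omega}$ has orthonormal columns and $\Lambda$ is positive diagonal, so that $\Omega^\dagger=U\Lambda^{-1}U'$. Since $Z\in\mathcal R(\Omega)$ almost surely and $\mathcal R(\mathcal H)\subseteq\mathcal R(\Omega)$ by hypothesis, $Y=UU'Y$ almost surely. Setting $W=\Lambda^{-1/2}U'Y$ gives $W\sim N(\Lambda^{-1/2}U'\mathcal Hh,\,I_{r_\Omega})$ and $Y'\Omega^\dagger Y=W'W$. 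Hence the limit is $\chi^2_{r_\Omega}(m)$ with noncentrality
\[
 m=h'\mathcal H'U\Lambda^{-1}U'\mathcal Hh=h'\mathcal H'\Omega^\dagger\mathcal Hh=h'\mathcal I_Ph .
\]

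The main obstacle we expect is the pseudoinverse continuity step. There we must take care that the bound on $\|\widehat\Omega_{NT}^\dagger\|$ comes from Weyl's inequality together with the eigenvalue floor: the smallest positive eigenvalue of $\widehat\Omega_{NT}$ is at least half the floor with probability approaching one. We must also take care that the rank-equality event has probability tending to one, so that restricting to it does not affect weak convergence. Everything else is routine Gaussian quadratic-form algebra. The range condition $\mathcal R(\mathcal H)\subseteq\mathcal R(\Omega)$ is exactly what makes the noncentrality equal $h'\mathcal I_Ph$ rather than losing the component of $\mathcal Hh$ orthogonal to $\mathcal R(\Omega)$.
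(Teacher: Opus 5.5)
Your proposal is correct and follows the paper's own argument. Theorem~\ref{thm:localmoment} together with rank-stable consistency of $\widehat\Omega_{NT}^{\dagger}$ and continuous mapping gives the limit $(Z+\mathcal Hh)'\Omega^\dagger(Z+\mathcal Hh)$, and the whitening from that theorem's proof identifies this limit as $\chi^2_{r_\Omega}(h'\mathcal I_Ph)$. Your Wedin--Weyl step just makes explicit what the paper folds into ``covariance consistency.''

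One small correction concerns your closing remark. Because $\Omega^\dagger=U\Lambda^{-1}U'$ appears in both the statistic and $\mathcal I_P$, the identity $Y'\Omega^\dagger Y=W'W$ and the noncentrality $h'\mathcal H'\Omega^\dagger\mathcal Hh$ hold without $\mathcal R(\mathcal H)\subseteq\mathcal R(\Omega)$. The component of $\mathcal Hh$ outside $\mathcal R(\Omega)$ is dropped from both. The range condition is therefore what makes $\mathcal I_P$ the appropriate information measure, not what makes the two quantities agree.
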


\noindent\emph{Proof.} See Online Appendix~\ref{app:proof:cor-ar-local-power}.

\begin{corollary}[Optimal linear unbiased rule in the Gaussian moment shift family]\label{cor:gaussian-blue}
If $\mathcal I_P$ is positive definite, the estimator
\[
 \widehat h_P
 =\mathcal I_P^{-1}\mathcal H'\Omega^\dagger Y
\]
satisfies $\widehat h_P\sim N(h,\mathcal I_P^{-1})$ in the Gaussian projected-moment
shift family. Among linear unbiased rules $KY$ satisfying $K\mathcal H=I_p$, where two rules are
identified when they agree on $\mathcal R(\Omega)$, it has the smallest covariance matrix
in the Loewner order.
\end{corollary}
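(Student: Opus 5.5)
The argument is a Gauss--Markov calculation inside the Gaussian shift family $Y=Z+\mathcal Hh$ with $Z\sim N(0,\Omega)$, under the conditions of Theorem~\ref{thm:localmoment}: $\mathcal R(\mathcal H)\subseteq\mathcal R(\Omega)$ and $\mathcal I_P=\mathcal H'\Omega^\dagger\mathcal H$ positive definite. I would proceed in three steps.

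\emph{Step 1: distribution of $\widehat h_P$.} Put $K_P=\mathcal I_P^{-1}\mathcal H'\Omega^\dagger$. Then $K_P\mathcal H=\mathcal I_P^{-1}\mathcal I_P=I_p$, so $\widehat h_P=K_PY=h+K_PZ$ is Gaussian with mean $h$. Its covariance is
\[
 K_P\Omega K_P'=\mathcal I_P^{-1}\mathcal H'\Omega^\dagger\Omega\Omega^\dagger\mathcal H\mathcal I_P^{-1}=\mathcal I_P^{-1},
\]
using the Moore--Penrose identity $\Omega^\dagger\Omega\Omega^\dagger=\Omega^\dagger$.

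\emph{Step 2: the equivalence convention.} The convention that two rules are identified when they agree on $\mathcal R(\Omega)$ is harmless. Almost surely $Y\in\mathcal R(\Omega)$, because $Z\in\mathcal R(\Omega)$ a.s.\ and $\mathcal Hh\in\mathcal R(\Omega)$. Hence $KY$ depends only on the restriction of $K$ to $\mathcal R(\Omega)$. The constraint $K\mathcal H=I_p$ also involves only that restriction, since the columns of $\mathcal H$ lie in $\mathcal R(\Omega)$. Finally, any linear rule $KY$ that is unbiased for every $h$ satisfies $K\mathcal H=I_p$, so the stated class is exactly the class of linear unbiased rules.

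\emph{Step 3: Loewner optimality.} Take any $K$ with $K\mathcal H=I_p$ and write $K=K_P+\Delta$ with $\Delta\mathcal H=0$. Then
\[
 \Var(KY)=K\Omega K'=K_P\Omega K_P'+\Delta\Omega K_P'+K_P\Omega\Delta'+\Delta\Omega\Delta'.
\]
The cross term vanishes:
\[
 \Delta\Omega K_P'=\Delta\Omega\Omega^\dagger\mathcal H\mathcal I_P^{-1}=\Delta P_{\mathcal R(\Omega)}\mathcal H\mathcal I_P^{-1}=\Delta\mathcal H\mathcal I_P^{-1}=0 .
\]
Here I use $\Omega\Omega^\dagger=P_{\mathcal R(\Omega)}$, valid because $\Omega$ is symmetric, together with $\mathcal R(\mathcal H)\subseteq\mathcal R(\Omega)$. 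It follows that
\[
 \Var(KY)-\mathcal I_P^{-1}=\Delta\Omega\Delta'\succeq0 .
\]
Equality holds iff $\Delta\Omega^{1/2}=0$, that is, iff $\Delta$ vanishes on $\mathcal R(\Omega)$. So the minimizer is unique in the identified sense.

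The main point needing care is the cross-term cancellation. It depends on the range inclusion $\mathcal R(\mathcal H)\subseteq\mathcal R(\Omega)$ and on working modulo the null space of $\Omega$. Without that inclusion, directions $\mathcal Hh\notin\mathcal R(\Omega)$ would be observed without noise and the pseudoinverse formula would no longer be optimal. I would therefore state explicitly where the inclusion from Theorem~\ref{thm:localmoment} is invoked.
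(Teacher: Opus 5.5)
Your proposal is correct and follows essentially the same route as the paper: verify $K_*\mathcal H=I_p$ and $\Var(K_*Y)=\mathcal I_P^{-1}$, then write any unbiased rule as $K_*+\Delta$ with $\Delta\mathcal H=0$ and use $\Omega\Omega^\dagger\mathcal H=\mathcal H$ to eliminate the cross term, so that the covariance excess is $\Delta\Omega\Delta'\succeq0$. Your treatment of the equivalence convention and of uniqueness modulo the null space of $\Omega$ is also correct; the paper leaves both implicit.
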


\noindent\emph{Proof.} See Online Appendix~\ref{app:proof:cor-gaussian-blue}.

\begin{remark}[Scope of the local shift result]
Equation~\eqref{eq:gaussian-shift-lr} is the likelihood ratio inside the limiting Gaussian
moment experiment. It does not, by itself, prove local asymptotic normality of the full
panel-data likelihood. Such a result would additionally require a dominated statistical
model, quadratic-mean differentiability, and asymptotic sufficiency or equivalence of the
projected moments.
\end{remark}

\subsection{Inference for the projected information matrix}

The Jacobian records sensitivity, but economic information also depends on score
precision. This subsection develops inference for their covariance-adjusted combination
and separates uncertainty from estimating the derivative from uncertainty about the
score covariance.

\label{subsec:pim-inference}

Let
\[
 \mathcal I_P=\mathcal H'\Omega^\dagger\mathcal H,
 \qquad
 \widehat{\mathcal I}_{P,NT}
 =\widehat H_{NT}'
 \widehat\Omega_{NT}^{\dagger}
 \widehat H_{NT}.
\]
Inference for $\mathcal I_P$ combines uncertainty in the normalized Jacobian
$\mathcal H$ and in the score covariance $\Omega$. Because the generalized inverse is
not differentiable across rank changes, all results in this subsection are stated on a
fixed covariance-rank stratum.

\begin{assumption}[Joint first-order law for projected information inputs]
\label{ass:pim-clt}
There is a deterministic rate $\ell_{NT}\downarrow0$ and random limits
$\mathbb Z_H\in\mathbb R^{q\times p}$ and
$\mathbb Z_\Omega\in\mathbb S^q$, where $\mathbb S^q$ denotes the space of
symmetric $q\times q$ matrices, such that
\[
 \ell_{NT}^{-1}
 \begin{pmatrix}
 \operatorname{vec}(\widehat H_{NT}-\mathcal H)\\
 \operatorname{vech}(\widehat\Omega_{NT}-\Omega)
 \end{pmatrix}
 \Rightarrow
 \begin{pmatrix}
 \operatorname{vec}(\mathbb Z_H)\\
 \operatorname{vech}(\mathbb Z_\Omega)
 \end{pmatrix}.
 \label{eq:pim-joint-clt}
\]
This high-level condition includes the requirement that any deterministic drift
$H_{NT}-\mathcal H$ is $O(\ell_{NT})$ and is represented in the stated limit.
Moreover, $\widehat\Omega_{NT}$ denotes a rank-stabilized covariance estimator
(for example, a spectral truncation at a consistently selected rank) satisfying
$\rank(\widehat\Omega_{NT})=\rank(\Omega)=r_\Omega$ with probability approaching one.
The positive eigenvalues of $\Omega$ are bounded away from zero, and
$\mathcal R(\mathcal H)\subseteq\mathcal R(\Omega)$, as in the Gaussian local shift
representation. The range condition ensures that every local mean shift lies on the
support of the singular Gaussian score law; without it, some alternatives are mutually
singular and the quadratic PIM representation is not the appropriate finite local
information object. Let $P_0=I-\Omega\Omega^\dagger$ denote the projector onto the null
space of $\Omega$. The covariance perturbation limit is tangent to the symmetric
rank-$r_\Omega$ manifold at $\Omega$:
\[
 P_0\mathbb Z_\Omega P_0=0\quad\text{almost surely}.
\]
This tangency is implied by a first-order limit of estimators that remain on the same
rank stratum; it is stated explicitly because the Moore--Penrose derivative below is a
derivative along that manifold. Let
$(\widehat H_{NT}^*,\widehat\Omega_{NT}^*)$ be bootstrap replicates of
$(\widehat H_{NT},\widehat\Omega_{NT})$. The bootstrap consistently estimates the
joint law in \eqref{eq:pim-joint-clt} in the sense that
\[
 d_{BL}\!\left[
 \mathcal L^*\!\left\{
 \ell_{NT}^{-1}
 \begin{pmatrix}
 \operatorname{vec}(\widehat H_{NT}^*-\widehat H_{NT})\\
 \operatorname{vech}(\widehat\Omega_{NT}^*-\widehat\Omega_{NT})
 \end{pmatrix}
 \right\},
 \mathcal L\!\left\{
 \begin{pmatrix}
 \operatorname{vec}(\mathbb Z_H)\\
 \operatorname{vech}(\mathbb Z_\Omega)
 \end{pmatrix}
 \right\}
 \right]\to_p0.
\]
In addition,
\[
 \Pr^*\{\rank(\widehat\Omega_{NT}^*)=r_\Omega\}\to_p1
\]
in outer probability, and the original and bootstrap covariance estimators use the same
rank-stabilization rule. With
$\widehat P_{0,NT}=I-\widehat\Omega_{NT}\widehat\Omega_{NT}^{\dagger}$, assume also
\[
 \ell_{NT}^{-1}
 \widehat P_{0,NT}
 (\widehat\Omega_{NT}^*-\widehat\Omega_{NT})
 \widehat P_{0,NT}
 =o_{p^*}(1)
\]
in outer probability. This first-order tangency condition is automatic for a sufficiently
smooth bootstrap path on the same fixed-rank manifold, but equality of ranks alone does
not control a scaled secant remainder. An untruncated sample covariance that is
generically full rank does not satisfy the rank-stability condition when $\Omega$ is
singular.
\end{assumption}

Define the fixed-rank derivative of the generalized inverse by
\[
 \dot\Omega^\dagger[\mathbb Z_\Omega]
 =
 -\Omega^\dagger\mathbb Z_\Omega\Omega^\dagger
 +\Omega^\dagger\Omega^\dagger
  \mathbb Z_\Omega(I-\Omega\Omega^\dagger)
 +(I-\Omega^\dagger\Omega)
  \mathbb Z_\Omega\Omega^\dagger\Omega^\dagger.
 \label{eq:pinv-derivative}
\]
For symmetric $\Omega$, this is the Fr\'echet derivative of the Moore--Penrose inverse restricted to the fixed-rank manifold; see, for example, \citet{benisraelgreville2003}.

\paragraph{Roadmap.}
Estimating projected information introduces two distinct errors: uncertainty in the
moment derivative and uncertainty in the covariance geometry used to weight it. The next
theorem linearizes both components on a fixed-rank stratum. Its corollaries translate the
matrix limit into inference for economic directions, eigenvalues, and conditioning.

\begin{theorem}[Delta method for the projected information matrix]
\label{thm:pim-delta}
Under Assumption~\ref{ass:pim-clt},
\[
 \ell_{NT}^{-1}
 \{\widehat{\mathcal I}_{P,NT}-\mathcal I_P\}
 \Rightarrow
 \mathbb Z_{\mathcal I},
 \label{eq:pim-delta}
\]
where
\[
 \mathbb Z_{\mathcal I}
 =
 \mathbb Z_H'\Omega^\dagger\mathcal H
 +\mathcal H'\Omega^\dagger\mathbb Z_H
 +\mathcal H'\dot\Omega^\dagger[\mathbb Z_\Omega]\mathcal H.
 \label{eq:pim-linear-map}
\]
Equivalently,
\[
 \widehat{\mathcal I}_{P,NT}-\mathcal I_P
 =
 (\widehat H_{NT}-\mathcal H)'\Omega^\dagger\mathcal H
 +\mathcal H'\Omega^\dagger
  (\widehat H_{NT}-\mathcal H)
 +\mathcal H'
  \{\widehat\Omega_{NT}^{\dagger}-\Omega^\dagger\}
  \mathcal H
 +o_p(\ell_{NT}).
 \label{eq:pim-expansion}
\]
Moreover,
\[
 d_{BL}\!\left[
 \mathcal L^*\!\left\{
 \ell_{NT}^{-1}
 (\widehat{\mathcal I}_{P,NT}^*-\widehat{\mathcal I}_{P,NT})
 \right\},
 \mathcal L(\mathbb Z_{\mathcal I})
 \right]\to_p0.
 \label{eq:pim-bootstrap-law}
\]
\end{theorem}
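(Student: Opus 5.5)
The argument is a delta method for the smooth map $\Phi(H,\Omega)=H'\Omega^\dagger H$, restricted to the manifold $\mathcal M_{r_\Omega}$ of symmetric matrices of rank $r_\Omega$, followed by a functional delta method for the bootstrap. The plan has three steps.

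\textbf{Step 1 (exact algebraic decomposition).} Write $\widehat H=\mathcal H+E_H$ and $\widehat\Omega^\dagger=\Omega^\dagger+E_\Omega$. Expanding the product gives
\[
 \widehat{\mathcal I}_{P,NT}-\mathcal I_P
 =E_H'\Omega^\dagger\mathcal H+\mathcal H'\Omega^\dagger E_H+\mathcal H'E_\Omega\mathcal H
 +E_H'\Omega^\dagger E_H+E_H'E_\Omega\widehat H+\widehat H'E_\Omega E_H-E_H'E_\Omega E_H .
\]
By Assumption~\ref{ass:pim-clt}, $E_H=O_p(\ell_{NT})$. Once Step 2 shows $E_\Omega=O_p(\ell_{NT})$, every term beyond the first three is $O_p(\ell_{NT}^2)=o_p(\ell_{NT})$. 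This already yields the expansion \eqref{eq:pim-expansion}.

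\textbf{Step 2 (linearising the pseudoinverse).} The task is to show
\[
 \widehat\Omega^\dagger-\Omega^\dagger=\dot\Omega^\dagger[\widehat\Omega-\Omega]+o_p(\ell_{NT}).
\]
I would work on the event $\{\rank\widehat\Omega=r_\Omega\}$, whose probability tends to one. On that event the Moore--Penrose inverse is smooth on $\mathcal M_{r_\Omega}$ near $\Omega$, because the positive eigenvalues of $\Omega$ are bounded away from zero. The expansion itself can be obtained in two ways. One is through the eigenprojector: $P=\Omega\Omega^\dagger$ is the spectral projector of the top $r_\Omega$ eigenvalues, and the gap is bounded below. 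The other is to write $\widehat\Omega^\dagger=\widehat P(\widehat P\widehat\Omega\widehat P)^{-1}\widehat P$ and apply a Neumann series. Either way one obtains $\dot\Omega^\dagger$ as displayed (Ben-Israel--Greville), with a remainder of order $O(\|\widehat\Omega-\Omega\|^2)$. The term that is not controlled by smoothness is $P_0(\widehat\Omega-\Omega)P_0$. Its contribution is sandwiched by $\mathcal H'$, and $\mathcal H'P_0=0$ because $\mathcal R(\mathcal H)\subseteq\mathcal R(\Omega)$. In the population limit the same term vanishes by the tangency condition $P_0\mathbb Z_\Omega P_0=0$. Then
\[
 \ell_{NT}^{-1}\bigl(\mathcal H'E_\Omega\mathcal H\bigr)=\mathcal H'\dot\Omega^\dagger[\ell_{NT}^{-1}(\widehat\Omega-\Omega)]\mathcal H+o_p(1).
\]
Combined with the joint convergence and the continuous mapping theorem applied to the linear map $(\mathbb Z_H,\mathbb Z_\Omega)\mapsto\mathbb Z_{\mathcal I}$, this gives \eqref{eq:pim-delta}.

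\textbf{Step 3 (bootstrap).} I would repeat Steps 1 and 2 with $(\widehat H,\widehat\Omega)$ as the centre and the starred quantities as the perturbation. On the event $\{\rank\widehat\Omega^*=r_\Omega\}$, which has conditional probability tending to one, the bootstrap analogue of Step 2 is centred at $\widehat\Omega$ and $\widehat H$. Since $\widehat\Omega\to_p\Omega$ and $\widehat H\to_p\mathcal H$ within the stratum, the derivative map $\dot{\widehat\Omega}^\dagger$ converges to $\dot\Omega^\dagger$ uniformly on bounded sets. The normal-direction term is now $\widehat P_0(\widehat\Omega^*-\widehat\Omega)\widehat P_0$, which is $o_{p^*}(\ell_{NT})$ by the assumed bootstrap tangency condition. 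A residual term arises because $\widehat H'\widehat P_0\neq0$ exactly. It is controlled by
\[
 \widehat H'\widehat P_0=(\widehat H-\mathcal H)'\widehat P_0+\mathcal H'(\widehat P_0-P_0)=O_p(\ell_{NT}),
\]
so it contributes at second order. The bootstrap linear part is then a map, converging in probability, applied to a conditional law that is consistent in $d_{BL}$. Because the maps are Lipschitz uniformly on the stratum and the remainders are $o_{p^*}(1)$, bounded-Lipschitz consistency transfers to \eqref{eq:pim-bootstrap-law}.

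\textbf{Main obstacle.} The hardest part is Step 2 and its bootstrap analogue: showing that the pseudoinverse secant equals the manifold derivative up to $o_p(\ell_{NT})$. Rank equality alone does not deliver this, so the argument must use the bounded-below eigengap, the range condition that kills the $P_0$-sandwich terms after multiplication by $\mathcal H$, and the tangency conditions for the population and bootstrap perturbations. I would first try the eigenprojector route, using the Davis--Kahan/Kato resolvent expansion for $\widehat P-P$, since it makes the quadratic remainder explicit.
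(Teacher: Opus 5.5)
Your proposal is correct and follows essentially the same route as the paper: a fixed-rank Fr\'echet expansion of the Moore--Penrose inverse is substituted into $\widehat H_{NT}'\widehat\Omega_{NT}^{\dagger}\widehat H_{NT}$, the three linear terms are collected with every remaining product $o_p(\ell_{NT})$, the continuous mapping theorem gives the limit, and the bootstrap repeats the expansion around $(\widehat H_{NT},\widehat\Omega_{NT})$ on the conditional same-rank event. Your extra steps refine that argument rather than depart from it: the exact product decomposition that yields the second display directly, the use of $\mathcal R(\mathcal H)\subseteq\mathcal R(\Omega)$ to annihilate the $P_0$-sandwiched terms, and the $O_p(\ell_{NT})$ bound on $\widehat H_{NT}'\widehat P_{0,NT}$.
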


\noindent\emph{Proof.} See Online Appendix~\ref{app:proof:thm-pim-delta}.

\paragraph{Interpretation.}
Uncertainty about projected information has two sources. The data may be uncertain about
how strongly economic outcomes respond to the surviving excluded variation, and they may
also be uncertain about how noisy that variation is under the dependence structure. The
expansion shows how these two forms of uncertainty jointly determine confidence about the
weakest economically informative direction.

\begin{corollary}[Directional projected-information inference]
\label{cor:pim-directional}
For any fixed $h\in\mathbb R^p$,
\[
 \ell_{NT}^{-1}
 \left\{
 h'\widehat{\mathcal I}_{P,NT}h-h'\mathcal I_Ph
 \right\}
 \Rightarrow h'\mathbb Z_{\mathcal I}h.
 \label{eq:pim-directional}
\]
If the limiting scalar law is continuous, bootstrap quantiles yield asymptotically valid
confidence intervals for the local noncentrality parameter $h'\mathcal I_Ph$.
\end{corollary}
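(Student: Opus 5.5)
The corollary is a continuous-mapping consequence of Theorem~\ref{thm:pim-delta}. For fixed $h$, define the linear functional $\phi_h:\mathbb S^p\to\mathbb R$ by $\phi_h(A)=h'Ah$. It is linear and Lipschitz with constant $\|h\|^2$ in spectral norm, since $|h'Ah|\le\|h\|^2\|A\|$. Applying $\phi_h$ to the convergence $\ell_{NT}^{-1}\{\widehat{\mathcal I}_{P,NT}-\mathcal I_P\}\Rightarrow\mathbb Z_{\mathcal I}$ and using linearity,
\[
 \ell_{NT}^{-1}\{h'\widehat{\mathcal I}_{P,NT}h-h'\mathcal I_Ph\}
 =\phi_h\bigl(\ell_{NT}^{-1}\{\widehat{\mathcal I}_{P,NT}-\mathcal I_P\}\bigr)
 \Rightarrow h'\mathbb Z_{\mathcal I}h.
\]
This is the first display.

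For the bootstrap, I would transfer the bounded-Lipschitz statement of Theorem~\ref{thm:pim-delta} through $\phi_h$. Take $f$ with $\|f\|_{BL}\le1$ on $\mathbb R$. Then $f\circ\phi_h$ is bounded by $1$ and Lipschitz with constant $\|h\|^2$, so its BL norm is at most $\max\{1,\|h\|^2\}$. Hence
\[
 d_{BL}\bigl[\mathcal L^*\{\ell_{NT}^{-1}(h'\widehat{\mathcal I}^*_{P,NT}h-h'\widehat{\mathcal I}_{P,NT}h)\},\mathcal L(h'\mathbb Z_{\mathcal I}h)\bigr]
 \le\max\{1,\|h\|^2\}\,d_{BL}[\cdots]\to_p0,
\]
where the bracket on the right is the matrix-level distance in Theorem~\ref{thm:pim-delta}.

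Coverage then follows the argument of Corollary~\ref{cor:singular-ci}. Let $F$ be the distribution function of $h'\mathbb Z_{\mathcal I}h$ and assume it is continuous. Bootstrap BL-consistency implies that the conditional distribution functions converge to $F$ uniformly in probability (a Pólya-type argument, valid because $F$ is continuous). Consequently the bootstrap quantiles $\widehat c^*_{\alpha/2}$ and $\widehat c^*_{1-\alpha/2}$ converge in probability to the true quantiles $c_{\alpha/2}$ and $c_{1-\alpha/2}$, at least when $F$ is strictly increasing there; otherwise one brackets the quantiles using continuity of $F$ at the relevant points. Define the interval
\[
 \bigl[h'\widehat{\mathcal I}_{P,NT}h-\ell_{NT}\widehat c^*_{1-\alpha/2},\;h'\widehat{\mathcal I}_{P,NT}h-\ell_{NT}\widehat c^*_{\alpha/2}\bigr].
\]
By the first display and Slutsky's lemma, the probability that it contains $h'\mathcal I_Ph$ tends to $F(c_{1-\alpha/2})-F(c_{\alpha/2})=1-\alpha$.

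The main obstacle is quantile convergence when the limit is degenerate. If $\mathcal H h=0$, then $\mathbb Z_{\mathcal I}$ restricted to direction $h$ includes $h'\mathbb Z_H'\Omega^\dagger\mathcal Hh=0$, and the covariance term vanishes too. The limit is then a point mass and continuity fails. This is exactly why the corollary restricts the conclusion to continuous limiting scalar laws, and the proof should state that this case is excluded rather than attempt to handle it.
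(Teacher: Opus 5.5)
Your proposal is correct and follows the paper's own route: apply the continuous linear functional $M\mapsto h'Mh$ to the delta-method limit and to the bootstrap bounded-Lipschitz statement of Theorem~\ref{thm:pim-delta}. Your additional details are sound and make explicit what the paper's two-line proof leaves implicit: the constant $\max\{1,\|h\|^2\}$ in the bounded-Lipschitz transfer, the uniform-cdf quantile bracketing, and the coverage computation.
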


\noindent\emph{Proof.} See Online Appendix~\ref{app:proof:cor-pim-directional}.

\begin{corollary}[Inference for a simple information eigenvalue]
\label{cor:pim-eigen}
Suppose $\lambda_j(\mathcal I_P)$ is simple and separated from the remaining eigenvalues.
Let $e_j$ be its unit eigenvector. Then
\[
 \ell_{NT}^{-1}
 \{\lambda_j(\widehat{\mathcal I}_{P,NT})
 -\lambda_j(\mathcal I_P)\}
 \Rightarrow
 e_j'\mathbb Z_{\mathcal I}e_j.
 \label{eq:pim-eigen}
\]
The bootstrap consistently estimates this law. When
$\lambda_{\min}(\mathcal I_P)$ is simple and fixed away from zero, a one-sided lower
confidence bound quantifies uncertainty about the weakest local direction. This pointwise
statement is not a uniform test at an information-rank boundary.
\end{corollary}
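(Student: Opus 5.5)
The plan is to treat the eigenvalue as a smooth functional of the symmetric matrix $\widehat{\mathcal I}_{P,NT}$ near $\mathcal I_P$ and to feed it the matrix limit already available from Theorem~\ref{thm:pim-delta}. Everything then reduces to first-order eigenvalue perturbation for symmetric matrices and the functional delta method, with the bootstrap handled by the same expansion.

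\textbf{Step 1: symmetry and rate.} Both $\widehat{\mathcal I}_{P,NT}=\widehat H_{NT}'\widehat\Omega_{NT}^\dagger\widehat H_{NT}$ and $\mathcal I_P$ are symmetric, because the Moore--Penrose inverse of a symmetric matrix is symmetric. Set $E_{NT}=\widehat{\mathcal I}_{P,NT}-\mathcal I_P$. Theorem~\ref{thm:pim-delta} gives $\ell_{NT}^{-1}E_{NT}\Rightarrow\mathbb Z_{\mathcal I}$, hence $\|E_{NT}\|=O_p(\ell_{NT})=o_p(1)$. The limit $\mathbb Z_{\mathcal I}$ is symmetric, which can be read directly off its linear form; this uses the symmetry of $\mathbb Z_\Omega$ and of $\dot\Omega^\dagger[\cdot]$ at symmetric arguments. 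Let $g>0$ be the fixed gap separating $\lambda_j(\mathcal I_P)$ from the other eigenvalues. By Weyl's inequality, with probability approaching one $\lambda_j(\widehat{\mathcal I}_{P,NT})$ stays simple and within $g/2$ of $\lambda_j(\mathcal I_P)$, so it is well defined as the $j$th ordered eigenvalue.

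\textbf{Step 2: second-order expansion.} For a simple eigenvalue, the standard perturbation bound (for instance via the Riesz-projector representation, or the same dilation argument that underlies the expansion in Theorem~\ref{thm:singular-inference}) gives
\[
 \lambda_j(\mathcal I_P+E)=\lambda_j(\mathcal I_P)+e_j'Ee_j+O\!\left(\|E\|^2/g\right)
\]
uniformly over $\|E\|\le g/4$. Taking $E=E_{NT}$, the remainder is $O_p(\ell_{NT}^2)=o_p(\ell_{NT})$. Therefore
\[
 \ell_{NT}^{-1}\{\lambda_j(\widehat{\mathcal I}_{P,NT})-\lambda_j(\mathcal I_P)\}
 =e_j'(\ell_{NT}^{-1}E_{NT})e_j+o_p(1),
\]
and the continuous mapping theorem gives the limit $e_j'\mathbb Z_{\mathcal I}e_j$. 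Equivalently, $\lambda_j$ is Fr\'echet differentiable at $\mathcal I_P$ with derivative $E\mapsto e_j'Ee_j$, and the functional delta method applies directly.

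\textbf{Step 3: bootstrap.} The main obstacle is here, because the expansion must be centred at the random point $\widehat{\mathcal I}_{P,NT}$ rather than at $\mathcal I_P$. First I will apply Step 2 twice, once at $\mathcal I_P$ for $\widehat{\mathcal I}^*_{P,NT}$ and once for $\widehat{\mathcal I}_{P,NT}$, and take the difference. This gives
\[
 \lambda_j(\widehat{\mathcal I}^*_{P,NT})-\lambda_j(\widehat{\mathcal I}_{P,NT})
 =e_j'(\widehat{\mathcal I}^*_{P,NT}-\widehat{\mathcal I}_{P,NT})e_j+O(\|E^*_{NT}\|^2+\|E_{NT}\|^2),
\]
where $E^*_{NT}=\widehat{\mathcal I}^*_{P,NT}-\mathcal I_P$. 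The bootstrap part of Theorem~\ref{thm:pim-delta} shows that $\widehat{\mathcal I}^*_{P,NT}-\widehat{\mathcal I}_{P,NT}=O_{p^*}(\ell_{NT})$ in outer probability, so $\|E^*_{NT}\|=O_{p^*}(\ell_{NT})$ as well. The quadratic remainder is therefore $o_{p^*}(\ell_{NT})$. Because $M\mapsto e_j'Me_j$ is Lipschitz, bounded-Lipschitz consistency of the matrix law transfers to the scalar law. Adding an $o_{p^*}(1)$ perturbation preserves $d_{BL}$ convergence, which completes the bootstrap claim.

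\textbf{Step 4: the lower bound.} Applying the above with $\lambda_{\min}$ gives the one-sided interval $\lambda_{\min}(\widehat{\mathcal I}_{P,NT})-\ell_{NT}\widehat c^*_{1-\alpha}$. Its pointwise validity, given continuity of the limit at the quantile, follows exactly as in Corollary~\ref{cor:singular-ci}. Uniformity at the boundary is not claimed, because the gap $g$ would then vanish and the $O(\|E\|^2/g)$ remainder is no longer negligible.
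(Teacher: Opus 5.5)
Your proposal is correct and follows the paper's own proof. That proof combines the first-order expansion $\lambda_j(\widehat{\mathcal I}_{P,NT})-\lambda_j(\mathcal I_P)=e_j'(\widehat{\mathcal I}_{P,NT}-\mathcal I_P)e_j+o_p(\ell_{NT})$ for a simple separated eigenvalue with Theorem~\ref{thm:pim-delta}, and asserts that the bootstrap step is identical. Your device of expanding both $\widehat{\mathcal I}_{P,NT}^*$ and $\widehat{\mathcal I}_{P,NT}$ around the fixed matrix $\mathcal I_P$ is a clean way to make that assertion precise.

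One correction to your closing remark concerns the information-rank boundary. There the eigen-gap $g$ of $\mathcal I_P$ need not close, because the other eigenvalues can stay fixed as $\lambda_{\min}\to0$. The actual failure is that every term of $e_{\min}'\mathbb Z_{\mathcal I}e_{\min}$ contains $\mathcal He_{\min}$, and $\|(\Omega^\dagger)^{1/2}\mathcal He_{\min}\|^2=\lambda_{\min}$, so the linear term degenerates. As a result, the $O_p(\ell_{NT}^2)$ quadratic terms discarded in Theorem~\ref{thm:pim-delta}, such as $(\widehat H_{NT}-\mathcal H)'\Omega^\dagger(\widehat H_{NT}-\mathcal H)$, are no longer negligible along sequences with $\lambda_{\min}$ of order $\ell_{NT}^2$.
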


\noindent\emph{Proof.} See Online Appendix~\ref{app:proof:cor-pim-eigen}.

\begin{corollary}[Condition-number inference on positive-definite strata]
\label{cor:pim-condition}
Suppose $\mathcal I_P$ is positive definite and its largest and smallest eigenvalues are
simple and separated. Define
\[
 \kappa_P=\frac{\lambda_{\max}(\mathcal I_P)}
                 {\lambda_{\min}(\mathcal I_P)},
 \qquad
 \widehat\kappa_P
 =\frac{\lambda_{\max}(\widehat{\mathcal I}_{P,NT})}
        {\lambda_{\min}(\widehat{\mathcal I}_{P,NT})}.
\]
Then
\[
 \ell_{NT}^{-1}(\widehat\kappa_P-\kappa_P)
 \Rightarrow
 \frac{e_{\max}'\mathbb Z_{\mathcal I}e_{\max}}
      {\lambda_{\min}(\mathcal I_P)}
 -
 \frac{\lambda_{\max}(\mathcal I_P)}
      {\lambda_{\min}(\mathcal I_P)^2}
 e_{\min}'\mathbb Z_{\mathcal I}e_{\min}.
 \label{eq:pim-condition}
\]
The result is not used when $\lambda_{\min}(\mathcal I_P)$ approaches zero, because the
condition number is then nonregular.
\end{corollary}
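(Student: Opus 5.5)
The condition-number result follows from Corollary~\ref{cor:pim-eigen} and the delta method. The ratio map $g(a,b)=a/b$ is smooth at $(a,b)=(\lambda_{\max}(\mathcal I_P),\lambda_{\min}(\mathcal I_P))$ because $\lambda_{\min}(\mathcal I_P)>0$. It therefore suffices to establish joint convergence of the two scaled eigenvalue errors and then apply the continuous mapping theorem to a first-order expansion of $g$.

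\textbf{Step 1: joint eigenvalue expansion.} I will not invoke Corollary~\ref{cor:pim-eigen} separately for each eigenvalue, which would give only the two marginal limits. Instead I rederive the linearization jointly. By Theorem~\ref{thm:pim-delta}, $\widehat{\mathcal I}_{P,NT}-\mathcal I_P=O_p(\ell_{NT})$, with $\ell_{NT}^{-1}(\widehat{\mathcal I}_{P,NT}-\mathcal I_P)\Rightarrow\mathbb Z_{\mathcal I}$. For a simple eigenvalue separated by a fixed gap $g_j>0$, standard symmetric perturbation theory (Weyl's inequality plus a first-order Rayleigh--Kato expansion) gives, for $E=\widehat{\mathcal I}_{P,NT}-\mathcal I_P$,
\[
 \lambda_j(\mathcal I_P+E)-\lambda_j(\mathcal I_P)=e_j'Ee_j+O(\|E\|^2/g_j).
\]
This holds on the event $\|E\|<g_j/2$, which has probability approaching one. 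Because $\mathcal I_P$ is a fixed limit, the gaps are fixed and positive, so the remainder is $O_p(\ell_{NT}^2)=o_p(\ell_{NT})$. Applying this with $j=\max$ and $j=\min$ gives
\[
 \ell_{NT}^{-1}\bigl(\lambda_{\max}(\widehat{\mathcal I}_{P,NT})-\lambda_{\max}(\mathcal I_P),\ \lambda_{\min}(\widehat{\mathcal I}_{P,NT})-\lambda_{\min}(\mathcal I_P)\bigr)=\bigl(e_{\max}'M_{NT}e_{\max},\ e_{\min}'M_{NT}e_{\min}\bigr)+o_p(1),
\]
where $M_{NT}=\ell_{NT}^{-1}E$. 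The map $M\mapsto(e_{\max}'Me_{\max},e_{\min}'Me_{\min})$ is linear and continuous, so the continuous mapping theorem yields joint convergence to $(e_{\max}'\mathbb Z_{\mathcal I}e_{\max},e_{\min}'\mathbb Z_{\mathcal I}e_{\min})$. Sign conventions for the eigenvectors are irrelevant, since only quadratic forms appear.

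\textbf{Step 2: ratio expansion.} Write $a=\lambda_{\max}(\mathcal I_P)$, $b=\lambda_{\min}(\mathcal I_P)$, and let $\hat a,\hat b$ be the corresponding sample eigenvalues. The identity
\[
 \frac{\hat a}{\hat b}-\frac ab=\frac{\hat a-a}{b}-\frac{a(\hat b-b)}{b^2}+R_{NT}
\]
holds with $R_{NT}=(\hat b-b)\bigl[-(\hat a-a)b+a(\hat b-b)\bigr]/(b^2\hat b)$. Since $\hat b\to_p b>0$, the denominator is bounded away from zero with probability approaching one, and so $R_{NT}=O_p(\ell_{NT}^2)$. Scaling by $\ell_{NT}^{-1}$ and applying Slutsky's lemma then yields the displayed limit.

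\textbf{Main obstacle and scope.} The only delicate point is the joint, rather than marginal, convergence in Step 1. This comes for free from the matrix-level limit in Theorem~\ref{thm:pim-delta}. The same argument applied to bootstrap replicates, using \eqref{eq:pim-bootstrap-law} and the bounded-Lipschitz stability of smooth maps, would give bootstrap consistency, although the statement does not require it. The final remark in the statement reflects that the remainder bound depends on $1/\hat b$ and $1/b^2$, which blow up as $\lambda_{\min}(\mathcal I_P)\to0$, so nothing is claimed in that case.
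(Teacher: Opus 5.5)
Your proposal is correct and follows essentially the same route as the paper: the proof applies the simple-eigenvalue linearization of Corollary~\ref{cor:pim-eigen} to $\lambda_{\max}$ and $\lambda_{\min}$, then the ordinary delta method for $(a,b)\mapsto a/b$ at $b=\lambda_{\min}(\mathcal I_P)>0$. Your explicit joint linearization from the matrix-level limit in Theorem~\ref{thm:pim-delta} spells out the joint convergence of the two eigenvalue errors, which the paper's one-line proof uses without stating it.
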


\noindent\emph{Proof.} See Online Appendix~\ref{app:proof:cor-pim-condition}.

\paragraph{Purpose.}
Pointwise delta-method results do not by themselves provide simultaneous uncertainty for
the full information matrix. The next theorem constructs a bootstrap confidence region
for the entire positive-semidefinite matrix on a fixed-rank stratum.

\begin{theorem}[Bootstrap confidence regions for projected information]
\label{thm:pim-bootstrap-region}
Define the bootstrap information matrix
$\widehat{\mathcal I}_{P,NT}^*
=\widehat H_{NT}^{*\prime}\widehat\Omega_{NT}^{*\dagger}\widehat H_{NT}^*$.
Let $\mathbb S_+^p$ denote the cone of positive-semidefinite $p\times p$ matrices,
let $\|\cdot\|_{\mathsf F}$ denote the Frobenius norm. If the distribution of $\|\mathbb Z_{\mathcal I}\|_{\mathsf F}$ is continuous at its
$(1-\alpha)$ quantile, then the bootstrap region
\[
 \mathcal C_{1-\alpha}^{\mathcal I}
 =
 \left\{
 M\in\mathbb S_+^p:
 \|\widehat{\mathcal I}_{P,NT}-M\|_{\mathsf F}
 \le \ell_{NT}\widehat c_{1-\alpha}^{*,\mathcal I}
 \right\}
 \label{eq:pim-region}
\]
has asymptotic coverage $1-\alpha$, where
$\widehat c_{1-\alpha}^{*,\mathcal I}$ is the conditional bootstrap quantile of
$\ell_{NT}^{-1}
\|\widehat{\mathcal I}_{P,NT}^*
 -\widehat{\mathcal I}_{P,NT}\|_{\mathsf F}$.
\end{theorem}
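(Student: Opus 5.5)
The plan is to reduce coverage to the continuous mapping theorem applied to the Frobenius norm, using Theorem~\ref{thm:pim-delta}. Positive semidefiniteness of $\mathcal I_P=\mathcal H'\Omega^\dagger\mathcal H$ is immediate because $\Omega^\dagger\in\mathbb S_+^q$, so $\mathcal I_P\in\mathbb S_+^p$. The restriction of the region to $\mathbb S_+^p$ therefore never excludes the truth, and
\[
 \{\mathcal I_P\in\mathcal C_{1-\alpha}^{\mathcal I}\}
 =\bigl\{T_{NT}\le \widehat c_{1-\alpha}^{*,\mathcal I}\bigr\},
 \qquad
 T_{NT}=\ell_{NT}^{-1}\|\widehat{\mathcal I}_{P,NT}-\mathcal I_P\|_{\mathsf F}.
\]
By \eqref{eq:pim-delta} and continuity of $\|\cdot\|_{\mathsf F}$, $T_{NT}\Rightarrow\|\mathbb Z_{\mathcal I}\|_{\mathsf F}$.

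The second step transfers bootstrap consistency to the norm. The map $M\mapsto\|M\|_{\mathsf F}$ is $1$-Lipschitz. If $f$ is bounded Lipschitz on $\mathbb R$, then $f\circ\|\cdot\|_{\mathsf F}$ is bounded Lipschitz on $\mathbb S^p$, with the same sup bound and Lipschitz constant (Frobenius and $\operatorname{vech}$/$\operatorname{vec}$ metrics are equivalent, which costs at most a fixed constant). Hence the $d_{BL}$ statement \eqref{eq:pim-bootstrap-law} implies
\[
 d_{BL}\bigl[\mathcal L^*\{T_{NT}^*\},\mathcal L(\|\mathbb Z_{\mathcal I}\|_{\mathsf F})\bigr]\to_p0,
 \qquad
 T_{NT}^*=\ell_{NT}^{-1}\|\widehat{\mathcal I}_{P,NT}^*-\widehat{\mathcal I}_{P,NT}\|_{\mathsf F}.
\]

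The third step, which I expect to be the only delicate point, passes from weak convergence of conditional laws to convergence of quantiles. Let $c_{1-\alpha}$ be the $(1-\alpha)$ quantile of $\|\mathbb Z_{\mathcal I}\|_{\mathsf F}$, and let $G$ be its distribution function, which is continuous at $c_{1-\alpha}$ by hypothesis. I would use the standard argument (e.g.\ van der Vaart, Lemma 21.2, applied along subsequences converging almost surely in the $d_{BL}$ metric). Conditional weak convergence in probability gives $G^*_{NT}(x)\to_p G(x)$ at continuity points $x$ of $G$. Because $G$ is continuous at $c_{1-\alpha}$ and continuity points are dense, for each $\varepsilon>0$ one can choose continuity points $x_\pm$ with $x_-<c_{1-\alpha}<x_+$ and $G(x_\pm)$ on either side of $1-\alpha$. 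Strictly speaking this needs $G$ to be strictly increasing at $c_{1-\alpha}$. If it is not, I would instead argue directly with the continuity of $G$ at the possibly non-unique quantile, obtaining $G(\widehat c^*)\to_p 1-\alpha$. Either route yields $\Pr(T_{NT}\le\widehat c^{*,\mathcal I}_{1-\alpha})\to G(c_{1-\alpha})=1-\alpha$.

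The conclusion then follows by Slutsky's lemma. The pair $(T_{NT},\widehat c^*)$ converges jointly, since the second component converges in probability to a constant or is sandwiched between quantities that do. Combined with $\Pr\{\|\mathbb Z_{\mathcal I}\|_{\mathsf F}=c_{1-\alpha}\}=0$, this gives asymptotic coverage $1-\alpha$. Throughout, I would simply inherit the fixed-rank stratum, the rank-stabilization and the tangency requirements from Assumption~\ref{ass:pim-clt}, which Theorem~\ref{thm:pim-delta} has already used.
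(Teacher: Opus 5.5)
Your proposal follows the paper's proof essentially step for step: continuous mapping through $\|\cdot\|_{\mathsf F}$ applied to Theorem~\ref{thm:pim-delta}, the same transfer for the bootstrap law, quantile consistency, and coverage, with your observations that $\mathcal I_P\in\mathbb S_+^p$ and that the bounded-Lipschitz composition step goes through making explicit what the paper leaves implicit. One caution on your fallback when $G$ is not strictly increasing: $G(\widehat c^{*,\mathcal I}_{1-\alpha})\to_p1-\alpha$ needs $G$ to be continuous on the whole $(1-\alpha)$-quantile interval, in particular at its upper endpoint, not only at the lower quantile; the paper's one-line appeal to ``continuity at the target quantile'' glosses over the same point.
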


\noindent\emph{Proof.} See Online Appendix~\ref{app:proof:thm-pim-bootstrap-region}.

\paragraph{Reporting.}
The confidence region quantifies uncertainty about the complete local-information
geometry, not only one eigenvalue. It is most useful when substantive questions involve
several directions or when eigenvalues are close enough that individual eigenvectors are
not stable. Report $\widehat{\mathcal I}_{P,NT}$ together with confidence intervals for its smallest
and largest simple eigenvalues, confidence intervals for economically relevant
directional quantities $h'\mathcal I_Ph$, and a confidence region for the full matrix.
A large point estimate of $\lambda_{\min}$ is not persuasive if its lower confidence bound
is near zero. Conversely, a large condition number may reflect precisely estimated
anisotropy rather than weak overall information. Repeated eigenvalues should be handled
through eigenspace or cluster-level functionals, not individual eigenvectors.

\section{Weak-IV limits under asymmetric score dependence}\label{sec:limits}

The local experiment now permits a direct study of estimator behavior. The central
question is whether the projected first stage stabilizes faster than the structural
score. When it does not, the estimator divides one first-order random quantity by
another, and conventional Gaussian approximations cease to describe the economic
uncertainty.

For exposition, maintain one endogenous regressor and one instrument. Define
\[
 S_{u,NT}=\sum_{i,t}z_{it}^{o}u_{it}^{o},\qquad
 S_{v,NT}=\sum_{i,t}z_{it}^{o}v_{it}^{o}.
\]
Let $a_{NT}$ normalize the structural score. From \eqref{eq:decomp}, $a_{NT}$ may be of order $T\sqrt N$, $N\sqrt T$, or $\sqrt{NT}$, with intermediate orders under drifting projection variances.

\begin{assumption}[Joint local experiment]\label{ass:joint}
For deterministic $a_{NT},b_{NT}\to\infty$,
\[
 \left(a_{NT}^{-1}S_{u,NT},\ b_{NT}^{-1}S_{v,NT}\right)\Rightarrow(Z_u,Z_v),
\]
where the joint limit is nondegenerate and may be mixed Gaussian or contain Gaussian-product components. Also $\widehat Q_{zz,NT}\to_p Q_{zz}>0$.
\end{assumption}

The oracle IV estimator obeys
\[
 \widehat\beta_{IV}^{o}-\beta_0
 =\frac{S_{u,NT}}{NT\pi_{NT}\widehat Q_{zz,NT}+S_{v,NT}}.
\]

\paragraph{Purpose.}
The local experiment becomes nonregular when the random first-stage component remains in
the denominator. The next theorem makes this mechanism explicit in the scalar model and
shows how asymmetric score rates determine the estimator's limit.

\begin{theorem}[Weak-IV ratio limit]\label{thm:ratio}
Let $\mu_c$ be defined by \eqref{eq:mu-c}. Under
Proposition~\ref{prop:balance} and Assumption~\ref{ass:joint}, and if
$\Pr(\mu_c+Z_v=0)=0$,
\begin{equation}
 \frac{b_{NT}}{a_{NT}}\bigl(\widehat\beta_{IV}^{o}-\beta_0\bigr)
 \Rightarrow \frac{Z_u}{\mu_c+Z_v}. \label{eq:ratio}
\end{equation}
If $a_{NT}/b_{NT}\to\rho\in(0,\infty)$, the unscaled estimator has the ratio limit
$\rho Z_u/(\mu_c+Z_v)$. If $a_{NT}/b_{NT}\to0$, the estimator is consistent, although
its nondegenerate rate is $b_{NT}/a_{NT}$ rather than a conventional sample-size rate.
If $a_{NT}/b_{NT}\to\infty$ and the ratio limit is nonzero with positive probability,
the unscaled estimator is not tight.
\end{theorem}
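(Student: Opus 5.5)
The plan is to write the estimator as a ratio and apply the continuous mapping theorem to the joint limit. Multiplying numerator and denominator of the oracle representation by the appropriate scalings gives
\[
 \frac{b_{NT}}{a_{NT}}\bigl(\widehat\beta_{IV}^{o}-\beta_0\bigr)
 =\frac{a_{NT}^{-1}S_{u,NT}}{b_{NT}^{-1}\bigl(NT\pi_{NT}\widehat Q_{zz,NT}+S_{v,NT}\bigr)}.
\]
The denominator equals $b_{NT}^{-1}\sum_{i,t}z_{it}^{o}x_{it}^{o}$ in the scalar model.

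\textbf{Step 1: the deterministic first-stage term.} With $\pi_{NT}=c/\sqrt{NT^\delta}$ and $b_{NT}=T\sqrt{N\Omega_{v,NT}}$,
\[
 b_{NT}^{-1}NT\pi_{NT}
 =\frac{c}{\sqrt{T^\delta\Omega_{v,NT}}}\to\frac{c}{\sqrt{\Omega_v}} .
\]
Combined with $\widehat Q_{zz,NT}\to_pQ_{zz}$, this term converges in probability to the constant $\mu_c$. This is exactly the computation behind Proposition~\ref{prop:balance}, which I will invoke directly.

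\textbf{Step 2: joint convergence of numerator and denominator.} Proposition~\ref{prop:balance} on its own gives only the marginal limit of the denominator, so I need the joint version. By Assumption~\ref{ass:joint}, $(a_{NT}^{-1}S_{u,NT},b_{NT}^{-1}S_{v,NT})\Rightarrow(Z_u,Z_v)$ jointly. The added term $b_{NT}^{-1}NT\pi_{NT}\widehat Q_{zz,NT}$ converges in probability to a constant. Slutsky's lemma in its joint form then yields
\[
 \bigl(a_{NT}^{-1}S_{u,NT},\,b_{NT}^{-1}\textstyle\sum z^o x^o\bigr)\Rightarrow(Z_u,\mu_c+Z_v).
\]

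\textbf{Step 3: continuous mapping.} The map $(s,d)\mapsto s/d$ is continuous off the set $\{d=0\}$. That set has limit probability zero because $\Pr(\mu_c+Z_v=0)=0$, so the extended continuous mapping theorem gives \eqref{eq:ratio}. The step needing most care here is this discontinuity set. The assumption is precisely what licenses the mapping even if the denominator vanishes in finite samples, where one may define the estimator arbitrarily on that event since its probability is asymptotically irrelevant.

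\textbf{Step 4: the rate cases.} These follow by multiplying the result of Step 3 by $a_{NT}/b_{NT}$.
\begin{itemize}
\item If $a_{NT}/b_{NT}\to\rho\in(0,\infty)$, Slutsky's lemma gives the limit $\rho Z_u/(\mu_c+Z_v)$.
\item If $a_{NT}/b_{NT}\to0$, the product of a tight sequence and a null sequence is $o_p(1)$, which gives consistency with rate $b_{NT}/a_{NT}$.
\item If $a_{NT}/b_{NT}\to\infty$, let $W$ denote the ratio limit and pick $\epsilon>0$ with $\Pr(|W|>\epsilon)>0$. By the portmanteau theorem, $\liminf\Pr\{|(b/a)(\widehat\beta-\beta_0)|>\epsilon\}\ge\Pr(|W|>\epsilon)>0$. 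On that event $|\widehat\beta-\beta_0|>\epsilon\,a_{NT}/b_{NT}\to\infty$, so the unscaled estimator cannot be tight.
\end{itemize}

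I expect the only genuine obstacle to be the joint-convergence step. Because the deterministic drift is a probability-limit constant, Slutsky's lemma handles it without needing a rate for $\widehat Q_{zz,NT}$ beyond consistency.
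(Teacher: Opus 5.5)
Your proposal is correct and follows essentially the same route as the paper's proof. Both write the exact ratio identity, take the in-probability limit $\mu_c$ of the drift term from Proposition~\ref{prop:balance}, obtain joint convergence with Assumption~\ref{ass:joint} via Slutsky, apply the extended continuous mapping theorem under the no-atom condition, and derive the rate cases from $\widehat\beta_{IV}^{o}-\beta_0=(a_{NT}/b_{NT})R_{NT}$; your explicit portmanteau argument for non-tightness when $a_{NT}/b_{NT}\to\infty$ fills in a step the paper leaves implicit.
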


\noindent\emph{Proof.} See Online Appendix~\ref{app:proof:thm-ratio}.

\paragraph{Interpretation.}
Weak identification after projection means that the variation separating the structural
effect from the disturbance remains random at first order. The estimator therefore
compares two noisy pieces of post-projection variation rather than converging around a
stable denominator. Correcting only the standard error cannot recover information that
the design no longer supplies.

\begin{remark}[Interpretation]
The asymmetry matters through two separate channels. $b_{NT}$ is controlled by first-stage information accumulation. $a_{NT}$ is controlled by the unit, time, and interaction projections of the structural score. Two designs with the same nominal $N$ and $T$ can therefore have the same concentration parameter but different estimator rates and limit shapes.
\end{remark}

\subsection{Feasible interactive-effect removal and oracle equivalence}\label{subsec:factor}

The preceding limits use oracle projected variables. The practical question is whether
estimating the nuisance spaces changes the first-order experiment. The results below make
clear that projector consistency alone is insufficient: score and Jacobian remainders
must be negligible at the normalizations relevant for inference.

Throughout this subsection, let $\mathcal B\subset\mathbb R^p$ be a compact parameter set containing $\beta_0$ in its interior.

After partialling out observed controls, stack the outcome, endogenous regressors, and instruments into a finite collection of $N\times T$ matrices. Let $\mathcal L_0$ and $\mathcal F_0$ be the joint linear loading and factor spaces defined in Section~\ref{subsec:unifieddgp}; additive effects are included through $\mathbf 1_N$ and $\mathbf 1_T$. The oracle residualization is
\[
 \mathcal M_0(A)=M_{\mathcal L_0}AM_{\mathcal F_0},\qquad
 M_{\mathcal L_0}=I_N-P_{\mathcal L_0},\quad M_{\mathcal F_0}=I_T-P_{\mathcal F_0}.
\]
The feasible operator $\widehat{\mathcal M}$ uses estimated projectors onto the same joint spans. This stacked construction permits variable-specific preliminary factor estimates while ensuring that the final transformation is common across every term in each equation.

\begin{remark}[A general factor-space route]\label{rem:factorprimitive}
The dimensions of the joint loading and factor spaces are fixed and consistently selected. Their nonzero singular values are separated from the idiosyncratic spectral norm. For deterministic sequences $c_{\Lambda,NT},c_{F,NT}\to0$,
\[
 \|\widehat P_\Lambda-P_{\mathcal L_0}\|=O_p(c_{\Lambda,NT}),\qquad
 \|\widehat P_F-P_{\mathcal F_0}\|=O_p(c_{F,NT}).
\]
The spectral norms of the idiosyncratic outcome, regressor, and instrument matrices are $O_p(\sqrt N+\sqrt T)$, and their aggregate products satisfy the moment and stochastic-equicontinuity bounds used below. Standard strong-factor principal-components conditions are one sufficient route to these projector rates. These features are not maintained by the general transfer proposition unless they are used to verify Assumption~\ref{ass:factororth}; the proposition relies on the normalized remainder bounds themselves.
\end{remark}

\subsubsection{A primitive sample-split strong-factor benchmark}

The Online Appendix verifies the high-level transfer conditions for a sample-split
spectral estimator of fixed-dimensional strong-factor spaces. Let
$\eta_{NT}=N^{-1/2}+T^{-1/2}$. Under separated signal eigenvalues, idiosyncratic spectral
norms of order $\sqrt N+\sqrt T$, conditional mean-square remainder bounds, and
\[
 \frac{\sqrt{NT}\eta_{NT}}{a_{NT}}\to0,\qquad
 \frac{\sqrt{NT}\eta_{NT}}{b_{NT}}\to0,\qquad
 \frac{\eta_{NT}}{\varrho_{NT}}\to0,
\]
the following result holds.

\begin{theorem}[Sample-split spectral factor transfer]\label{thm:pcbenchmark}
The estimated loading and factor projectors converge at rate $O_p(\eta_{NT})$.
Aggregate structural- and first-stage-score errors are
$O_p(\sqrt{NT}\eta_{NT})$, while instrument-second-moment and Jacobian errors are
$O_p(\eta_{NT})$. Hence the feasible experiment inherits the oracle limits and every
spectral classification whose signal and separating gap dominate $\eta_{NT}$.
\end{theorem}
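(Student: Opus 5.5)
The plan has three steps: projector rates for the sample-split spectral estimator, a decomposition of the feasible-minus-oracle moments into terms linear in the projector errors, and a transfer to the oracle limits via Slutsky and Proposition~\ref{prop:stability}.

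\textbf{Step 1: projector rates.} Write the stacked $N\times T$ matrices as signal plus idiosyncratic noise, where the signal is the joint nuisance component with column span in $\mathcal L_0$ and row span in $\mathcal F_0$. On one half of the sample (split by time for loadings and by units for factors), take the leading left and right singular vectors of the stacked matrix. Apply Wedin's $\sin\Theta$ theorem, the same perturbation tool behind Proposition~\ref{prop:stability}:
\begin{itemize}
\item under separated signal eigenvalues the signal has order $\sqrt{NT}$;
\item the idiosyncratic spectral norms are $O_p(\sqrt N+\sqrt T)$.
\end{itemize}
Together these give
\[
 \|\widehat P_\Lambda-P_{\mathcal L_0}\|+\|\widehat P_F-P_{\mathcal F_0}\|=O_p\!\left(\frac{\sqrt N+\sqrt T}{\sqrt{NT}}\right)=O_p(\eta_{NT}).
\]
The purpose of sample splitting is that the estimated projector used on a fold is independent of, or at least conditionally independent given the states from, the idiosyncratic components entering that fold's moments.

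\textbf{Step 2: decomposition of the feasible error.} For matrices $A,B$, write $\widehat{\mathcal M}(A)=\widehat M_\Lambda A\widehat M_F$ and $\Delta_\Lambda=\widehat P_\Lambda-P_{\mathcal L_0}$, $\Delta_F=\widehat P_F-P_{\mathcal F_0}$. Expand
\[
 \langle\widehat{\mathcal M}A,\widehat{\mathcal M}B\rangle-\langle\mathcal M_0A,\mathcal M_0B\rangle
\]
into terms linear in $\Delta_\Lambda$ and $\Delta_F$ plus quadratic remainders. Nuisance signal components are annihilated by $\mathcal M_0$ (Lemma~\ref{lem:projectioncompat}), so $\Delta$ acting on signal leaves residuals of order $\|\Delta\|\cdot\|\text{signal}\|$. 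These are multiplied against the idiosyncratic matrix of the other argument.

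\textbf{Step 3: bounding the score terms (main obstacle).} The crude bound $\|\Delta\|\sqrt{NT}\cdot\|E\|$ is too large by a factor. Instead, I would use the conditional mean-square bounds. Conditionally on the fold used for estimation, $\Delta$ is fixed, so a term such as $\operatorname{tr}(\Delta_\Lambda S E')$ has conditional second moment of order $\|\Delta_\Lambda\|^2\cdot NT$. This yields aggregate structural- and first-stage-score errors of $O_p(\sqrt{NT}\eta_{NT})$, and the quadratic terms $\|\Delta\|^2\,NT$ are of the same order. Divided by $NT$, the same bounds give $O_p(\eta_{NT})$ errors for $\widehat Q_{zz,NT}$ and $\widehat J_{NT}$.

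\textbf{Step 4: transfer to the oracle limits.} The rate conditions $\sqrt{NT}\eta_{NT}/a_{NT}\to0$ and $\sqrt{NT}\eta_{NT}/b_{NT}\to0$ make the score errors $o_p(1)$ after normalization. Slutsky's theorem then transfers Assumption~\ref{ass:joint}, Proposition~\ref{prop:balance}, and Theorem~\ref{thm:ratio} to the feasible experiment. Since $\rho_{NT}=O_p(\eta_{NT})$, Proposition~\ref{prop:stability} combined with $\eta_{NT}/\varrho_{NT}\to0$ preserves every spectral classification whose signal and separating gap dominate $\eta_{NT}$.
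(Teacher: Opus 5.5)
Your Steps 1, 2 and 4 agree with the paper. The projector rate comes from Davis--Kahan/Wedin, with signal singular values of order $\sqrt{NT}$ against noise $O_p(\sqrt N+\sqrt T)$. The expansion is the exact identity $\widehat{\mathcal M}A-\mathcal M_0A=-\Delta_\Lambda AM_{\mathcal F_0}-M_{\mathcal L_0}A\Delta_F+\Delta_\Lambda A\Delta_F$. The transfer uses part (v) together with Proposition~\ref{prop:oracle}.

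The gap is in Step 3. Your claim that the quadratic terms of size $\|\Delta\|^2NT$ are ``of the same order'' as the target is false. Since $\eta_{NT}^2NT=(\sqrt N+\sqrt T)^2=(\sqrt{NT}\,\eta_{NT})^2$, these terms exceed $\sqrt{NT}\,\eta_{NT}$ by the diverging factor $\sqrt N+\sqrt T$. Even against $a_{NT}=\sqrt{NT}$ the ratio is $\sqrt{N/T}+\sqrt{T/N}+2\ge4$. A typical example is the inner product of the feasible residuals of the instrument nuisance and of the outcome-equation nuisance under your one-sided bookkeeping. Such terms need not be conditionally centered, so your conditional-variance device does not shrink them. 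Taken at face value, your own accounting would make the feasible score non-equivalent to the oracle score.

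A genuinely primitive derivation would have to do four further things:
\begin{enumerate}
\item[(a)] Use the two-sided annihilation $\widehat{\mathcal M}N=\Delta_\Lambda N\Delta_F$, which is valid because the benchmark signals have column span in $\mathcal L_0$ and row span in $\mathcal F_0$. This must be combined with rank-$2r$ (nuclear-norm) bounds on $\Delta$.
\item[(b)] Control the noise-by-noise terms linear in $\Delta$, which then become the leading terms. Their crude bound $\eta_{NT}(\sqrt N+\sqrt T)^2$ is too large. Their conditional means also need not vanish, because the paper imposes only population-average orthogonality, not cellwise orthogonality.
\item[(c)] Account for the fact that your time/unit split leaves the folds sharing the states $A_i$ or $D_t$. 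Independence then holds only for the cell innovations, and your variance calculation for $\operatorname{tr}(\Delta_\Lambda SE')$, which presumes cellwise independent centered $E$, fails under two-way dependence.
\item[(d)] Handle the supremum over $\beta$. Since $R_{g,NT}(\beta)=R_{g,NT}(\beta_0)-NT(\widehat J_{NT}-\widetilde J_{NT})(\beta-\beta_0)$, a Jacobian error of order $\eta_{NT}$ contributes $NT\eta_{NT}$, not $\sqrt{NT}\,\eta_{NT}$.
\end{enumerate}
Even after all this, additional relative-rate conditions between $N$ and $T$ appear.

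The paper does not attempt this derivation. Assumption~\ref{ass:pcbenchmark}(iv) posits auxiliary matrices independent of the evaluation fold. It imposes, for the entire remainder, $\sup_{\beta}\|\E[R_{g,NT}^{\mathrm{tr}}(\beta)\mid\mathcal T_{NT}]\|=o_p(\sqrt{NT}\,\eta_{NT})$ and a conditional second moment of the centered supremum bounded by $CNT\eta_{NT}^2$, with the analogous bounds for $R_{v,NT}^{\mathrm{tr}}$. It also imposes the $O_p(\eta_{NT})$ instrument-second-moment and Jacobian rates outright. The score rates then follow from conditional Markov's inequality and iterated expectations; only the projector rates are actually derived. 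If you replace your term-by-term Step 3 with an appeal to those hypotheses, your argument becomes the paper's proof.
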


\noindent\emph{Estimator, complete conditions, and proof.} See the Online Appendix.

\begin{assumption}[High-level factor-transfer remainders]\label{ass:factororth}
Let
\[
 R_{g,NT}(\beta)=\sum_{i,t}\{\widehat g_{it}(\beta)-g_{it}(\beta)\},
\]
\[
 R_{v,NT}=\sum_{i,t}\left[
 \operatorname{vec}(\widehat z_{it}^{o}\widehat v_{it}^{o\prime})
 -\operatorname{vec}(z_{it}^{o}v_{it}^{o\prime})\right],
\]
and
\[
 R_{zz,NT}=(NT)^{-1}\sum_{i,t}
 \{\widehat z_{it}^{o}\widehat z_{it}^{o\prime}
   -z_{it}^{o}z_{it}^{o\prime}\}.
\]
For the parameter set and local-relevance sequences used in the relevant theorem,
\[
 \sup_{\beta\in\mathcal B}a_{NT}^{-1}\|R_{g,NT}(\beta)\|=o_p(1),
 \qquad b_{NT}^{-1}\|R_{v,NT}\|=o_p(1),
 \qquad \|R_{zz,NT}\|=o_p(1).
\]
These are transfer conditions. Projector consistency by itself does not imply them.
\end{assumption}

\paragraph{Purpose.}
The primitive factor theorem covers one benchmark estimator. The next proposition states
the general normalized remainder conditions under which any feasible nuisance estimator
inherits the oracle limits and separated spectral classifications.

\begin{proposition}[Oracle-to-feasible transfer under normalized remainder bounds]\label{prop:oracle}
Under Assumptions~\ref{ass:moments} and \ref{ass:factororth},
the feasible and oracle structural scores, first-stage scores, and instrument second
moments are asymptotically equivalent at the normalizations appearing in
Assumption~\ref{ass:factororth}. Hence any finite-dimensional oracle weak limit based
continuously on those objects transfers to its feasible counterpart. For a claimed
spectral classification, additionally suppose that, for a deterministic relevant signal
or gap scale $\varrho_{NT}$,
\[
 \left\|(NT)^{-1}\sum_{i,t}
 \{\widehat z_{it}^{o}\widehat x_{it}^{o\prime}
   -z_{it}^{o}x_{it}^{o\prime}\}\right\|
 =o_p(\varrho_{NT}).
\]
Then Proposition~\ref{prop:stability} transfers every singular-value cluster and
identified subspace whose signal and separating gap dominate the combined sampling and
nuisance-estimation error. No classification claim is made at a boundary where the
signal or gap is of the same order as that error.
\end{proposition}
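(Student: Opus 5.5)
The proof has three steps. First, express each feasible normalized object as its oracle counterpart plus one of the remainders in Assumption~\ref{ass:factororth}. Second, invoke Slutsky's lemma and the continuous mapping theorem for the weak limits. Third, treat the spectral classifications with the deterministic perturbation bounds of Proposition~\ref{prop:stability}.

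For the structural score, $a_{NT}^{-1}\sum_{i,t}\widehat g_{it}(\beta)=a_{NT}^{-1}\sum_{i,t}g_{it}(\beta)+a_{NT}^{-1}R_{g,NT}(\beta)$. The second term is $o_p(1)$ uniformly over the compact set $\mathcal B$, and therefore also along any local path $\beta_{NT}(h)$, which eventually lies in $\mathcal B$ because $\beta_0$ is interior. In the same way, $b_{NT}^{-1}\sum\operatorname{vec}(\widehat z^o\widehat v^{o\prime})$ differs from the oracle first-stage score by $b_{NT}^{-1}R_{v,NT}=o_p(1)$. The second moment satisfies $\widehat Q^{\mathrm{feas}}_{zz,NT}=\widehat Q_{zz,NT}+R_{zz,NT}$, so by Assumption~\ref{ass:moments} it still converges in probability to $Q_{zz}$, whose eigenvalues are bounded away from zero and infinity. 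Stacking these three differences, the feasible vector minus the oracle vector is $o_p(1)$ jointly. Joint convergence of the oracle vector, for instance under Assumption~\ref{ass:joint} or Assumption~\ref{ass:primitiveclt}, therefore transfers to the feasible vector by Slutsky's lemma.

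Any oracle weak limit that is a continuous function of these objects then transfers by the continuous mapping theorem. For the ratio in Theorem~\ref{thm:ratio}, the function is continuous off the set $\{\mu_c+Z_v=0\}$, which has probability zero by hypothesis. The one point needing care is the feasible first-stage numerator. With a single joint projection, the feasible equation $\widehat x^o=\Pi_{NT}'\widehat z^o+\widehat v^o$ holds by linearity, and $NT\pi_{NT}R_{zz,NT}/b_{NT}=O(\sqrt{N}T^{1-\delta/2}\cdot o_p(1)/b_{NT})$, which is $o_p(1)$ since $b_{NT}\asymp\sqrt N\,T^{1-\delta/2}$. I expect this bookkeeping step to be the main obstacle. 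The feasible residuals $\widehat v^o,\widehat u^o$ must be defined through the same estimated operator $\widehat{\mathcal M}$, so that Lemma~\ref{lem:projectioncompat}-type linearity carries the drift term. Otherwise the deterministic signal $NT\pi_{NT}Q_{zz}$, multiplied by an $o_p(1)$ second-moment error, could fail to be negligible at scale $b_{NT}$. Only the $o_p(1)$ bound on $R_{zz,NT}$ combined with the local scaling of $\pi_{NT}$ controls it, and I would state this explicitly.

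For the spectral part, write $\rho_{NT}=\|\widehat J_{NT}-\widetilde J_{NT}\|=o_p(\varrho_{NT})$ by hypothesis, and assume the oracle sampling error $\tau_{NT}$ is dominated by the same signal or gap. Then $e_{NT}=o_p(\varrho_{NT})$. Weyl's inequality in Proposition~\ref{prop:stability} gives $|\sigma_j(\widehat J_{NT})-\sigma_j(J_{NT})|\le e_{NT}$. On the event $\{e_{NT}<\gamma_{NT}/2\}$, whose probability tends to one when $\gamma_{NT}\gtrsim\varrho_{NT}$, the Wedin bound gives $\|\widehat P_R-P_R\|\le 2e_{NT}/\gamma_{NT}=o_p(1)$. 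Thresholding and rank recovery follow from parts 1 and 3 of that proposition. Nothing is claimed when the signal or gap is of the same order as $e_{NT}$, since part 2 of Proposition~\ref{prop:stability} shows the bound is then uninformative.
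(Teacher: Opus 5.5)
Your proposal is correct and takes essentially the same route as the paper. The weak limits transfer through the remainders of Assumption~\ref{ass:factororth} together with joint Slutsky and continuous mapping, and the spectral classifications transfer by adding the $o_p(\varrho_{NT})$ Jacobian remainder to the oracle sampling error and applying Proposition~\ref{prop:stability}. Your check that $NT\pi_{NT}R_{zz,NT}/b_{NT}=o_p(1)$, because $NT\pi_{NT}/b_{NT}\to c/\sqrt{\Omega_v}$, spells out what the paper leaves implicit in ``based continuously on those objects.'' One small correction: local paths $\beta_{NT}(h)$ eventually lie in $\mathcal B$ by interiority of $\beta_0$ only when $\|D_{NT}^{-1}\|\to0$. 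Under a non-shrinking weak-IV normalization you should instead take $\mathcal B$ large enough to contain them.
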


\noindent\emph{Proof.} See Online Appendix~\ref{app:proof:prop-oracle}.

The proposition is a transfer result, not an automatic consequence of consistent factor estimation. Under weak identification, a projection error that is negligible for prediction can still be first order relative to the excluded signal. To verify the proposition for a concrete estimator, a replication appendix should report: (i) the estimator of the joint loading and factor spaces; (ii) its projector rates; (iii) the bound for each linear and quadratic score remainder; (iv) the comparison of those bounds with $a_{NT}$, $b_{NT}$, and the smallest spectral gap used for classification; and (v) whether sample splitting or cross-fitting is used to control dependence between the estimated projection and the score. Orthogonality and the theorem-specific comparison between nuisance error, signal, and spectral gap are therefore essential.

\subsection{Multivariate weak-IV experiment}\label{subsec:multi}

To avoid confusion with the normalized Jacobian $H_{NT}$ used in the local shift
experiment, $\mathsf G_{NT}$ below denotes the \emph{unnormalized} sample Jacobian sum.
The symbol $\mathsf W_{NT}$ denotes a GMM weight matrix and is unrelated to the observed
control vector $w_{it}$ or the bootstrap interaction residual
$\widehat W_{it}(\beta)$.

Let $x_{it}\in\mathbb R^p$ and $z_{it}\in\mathbb R^q$, $q\ge p$. Define the unnormalized sample Jacobian and structural score by
\[
 \mathsf G_{NT}=\sum_{i,t}z_{it}^{o}x_{it}^{o\prime},\qquad
 U_{NT}=\sum_{i,t}z_{it}^{o}u_{it}^{o},\qquad
 \widehat Q_{zz,NT}=(NT)^{-1}\sum_{i,t}z_{it}^{o}z_{it}^{o\prime}.
\]
A scalar normalization for $\mathsf G_{NT}$ is generally incompatible with heterogeneous information accumulation across structural directions. Let $C_{NT}=\operatorname{diag}(c_{1,NT},\ldots,c_{p,NT})$ be a deterministic diagonal matrix with positive entries, defined after a fixed or consistently estimated rotation of the parameter coordinates. Assume
\[
 \left(a_{NT}^{-1}U_{NT},\ \mathsf G_{NT}C_{NT}^{-1}\right)
 \Rightarrow (Z_u,\mathcal H),
\]
where $Z_u\in\mathbb R^q$ and $\mathcal H\in\mathbb R^{q\times p}$. The columns of $C_{NT}$ encode direction-specific first-stage scales. Statements about individual directions are invariant only for simple separated singular values; repeated singular values are interpreted through their associated spectral projectors.

The symbol $h$ in the next definition denotes a fixed unit direction in parameter
space; it is not a sample index. This use agrees with the local-direction notation in the
preceding shift experiment.

\begin{definition}[Directional multivariate concentration]\label{def:multikappa}
For a unit structural direction $h\in\mathbb R^p$, define
\[
 m_{i,NT}(h)=T^{-1}\sum_t z_{it}^{o}v_{it}^{o\prime}h,\qquad
 \Omega_{v,NT}(h)=\operatorname{Var}\!\left(N^{-1/2}\sum_i m_{i,NT}(h)\right).
\]
The deterministic projected first-stage signal in direction $h$ is
$\mu_{NT}(h)=Q_{zz,NT}^{E}\Pi_{NT}h$. Define
\[
 \kappa_{NT}(h)=
 \begin{cases}
 N\,\mu_{NT}(h)'\Omega_{v,NT}(h)^{\dagger}\mu_{NT}(h),
 &\mu_{NT}(h)\in\mathcal R\{\Omega_{v,NT}(h)\},\\[3pt]
 +\infty,
 &\mu_{NT}(h)\notin\mathcal R\{\Omega_{v,NT}(h)\}.
 \end{cases}
\]
The second case records a signal component with zero asymptotic noise rather than
silently discarding it through the generalized inverse. The quantity is a
direction-specific signal-to-noise index. Minimizing it over a unit sphere is a useful diagnostic only after fixing the covariance-rank stratum and verifying continuity of $h\mapsto\Omega_{v,NT}(h)^{\dagger}$; it is not asserted to be a canonical concentration parameter without those additional conditions. This definition reduces to \eqref{eq:kappa} when $p=q=1$.
\end{definition}

\paragraph{Purpose.}
Weak identification is directional in multivariate models. The next theorem allows
different structural directions to accumulate information at different rates and gives
the resulting local GMM limit on a full-rank stratum.

\begin{theorem}[Multivariate weak-IV limit on a full-rank local stratum]\label{thm:multiratio}
Suppose
\[
 \left(a_{NT}^{-1}U_{NT},\ \mathsf G_{NT}C_{NT}^{-1}\right)
 \Rightarrow(Z_u,\mathcal H),
\]
$\mathsf W_{NT}\to_p\mathsf W>0$ and
$\Pr\{\operatorname{rank}(\mathcal H)=p\}=1$. Define the unconstrained linear-GMM estimator on the event
$\operatorname{rank}(\mathsf G_{NT})=p$ by
\[
 \widehat\beta-\beta_0=(\mathsf G_{NT}'\mathsf W_{NT}\mathsf G_{NT})^{-1}
 \mathsf G_{NT}'\mathsf W_{NT}U_{NT},
\]
and define it arbitrarily on the complementary event. Under the stated full-rank limit,
the complementary event has probability tending to zero. Then the oracle estimator
satisfies
\[
 a_{NT}^{-1}C_{NT}(\widehat\beta-\beta_0)
 \Rightarrow
 (\mathcal H'\mathsf W\mathcal H)^{-1}\mathcal H'\mathsf W Z_u.
\]
If the limiting first-stage matrix is rank deficient with positive probability, the theorem does not define a unique limit for the full parameter. In that case only functionals lying in the identified row space admit point-identification statements without an explicit regularization, anchoring, or parameter restriction; Anderson--Rubin inference remains well defined for the joint null. A feasible estimator built from factor-adjusted moments has the same limit only when Proposition~\ref{prop:oracle} applies to all entries of $(\mathsf G_{NT},U_{NT},\mathsf W_{NT})$ at the stated normalizations.
\end{theorem}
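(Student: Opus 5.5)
The argument is the continuous mapping theorem applied to a rescaled version of the GMM normal equations. On the event $E_{NT}=\{\operatorname{rank}(\mathsf G_{NT})=p\}$, I would rewrite the estimator error in normalized form. With $\widetilde{\mathsf G}_{NT}=\mathsf G_{NT}C_{NT}^{-1}$, we have $\mathsf G_{NT}=\widetilde{\mathsf G}_{NT}C_{NT}$, so
\[
 (\mathsf G_{NT}'\mathsf W_{NT}\mathsf G_{NT})^{-1}\mathsf G_{NT}'\mathsf W_{NT}U_{NT}
 =C_{NT}^{-1}(\widetilde{\mathsf G}_{NT}'\mathsf W_{NT}\widetilde{\mathsf G}_{NT})^{-1}\widetilde{\mathsf G}_{NT}'\mathsf W_{NT}U_{NT}.
\]
Multiplying by $a_{NT}^{-1}C_{NT}$ then gives
\[
 a_{NT}^{-1}C_{NT}(\widehat\beta-\beta_0)=\Phi\bigl(\widetilde{\mathsf G}_{NT},\mathsf W_{NT},a_{NT}^{-1}U_{NT}\bigr),
 \qquad \Phi(G,W,z)=(G'WG)^{-1}G'Wz .
\]
Since $C_{NT}$ is deterministic, diagonal and positive, it is invertible, and $\operatorname{rank}(\widetilde{\mathsf G}_{NT})=\operatorname{rank}(\mathsf G_{NT})$. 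The event $E_{NT}$ is therefore the same whether defined through $\mathsf G_{NT}$ or through $\widetilde{\mathsf G}_{NT}$.

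The first step is joint convergence. Because $\mathsf W_{NT}\to_p\mathsf W$ and $\mathsf W$ is a constant, Slutsky's lemma upgrades the assumed joint limit to
\[
 \bigl(a_{NT}^{-1}U_{NT},\widetilde{\mathsf G}_{NT},\mathsf W_{NT}\bigr)\Rightarrow(Z_u,\mathcal H,\mathsf W).
\]

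The second step, and the main point needing care, is the behavior of $\Phi$ and of the rank event. $\Phi$ is continuous on the set $D=\{(G,W,z):G'WG\text{ nonsingular}\}$, because matrix inversion is continuous on nonsingular matrices. For $W>0$, nonsingularity of $G'WG$ is equivalent to $\operatorname{rank}(G)=p$.

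The set $\{G:\operatorname{rank}(G)=p\}$ is open: it is defined by $\sigma_p(G)>0$, and $\sigma_p$ is continuous (Weyl). The set of positive definite $W$ is also open. So $D_0=\{\operatorname{rank}(G)=p,\ W>0\}$ is open and $\Phi$ is continuous on it. By hypothesis the limit $(\mathcal H,\mathsf W,Z_u)$ lies in $D_0$ with probability one.

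The portmanteau theorem applied to the open set $D_0$ then gives
\[
 \liminf\Pr(\widetilde{\mathsf G}_{NT}\text{ has rank }p,\ \mathsf W_{NT}>0)\ge\Pr(D_0)=1 .
\]
This shows that the complementary event $E_{NT}^c$ has vanishing probability. Consequently the arbitrary definition of $\widehat\beta$ off $E_{NT}$ does not affect the weak limit: two sequences that differ only on events of vanishing probability share the same limit.

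The third step applies the continuous mapping theorem. I would extend $\Phi$ arbitrarily (measurably) outside $D_0$. Its discontinuity set is then contained in $D_0^c$, which has limit probability zero. Hence
\[
 \Phi\bigl(\widetilde{\mathsf G}_{NT},\mathsf W_{NT},a_{NT}^{-1}U_{NT}\bigr)\Rightarrow(\mathcal H'\mathsf W\mathcal H)^{-1}\mathcal H'\mathsf WZ_u .
\]
The limit is jointly random in $(\mathcal H,Z_u)$ and possibly non-Gaussian, and the continuous mapping theorem requires no independence between the two.

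The remaining statements of the theorem are qualifications rather than claims needing proof. If $\mathcal H$ is rank deficient with positive probability, $\Phi$ is undefined on a non-null set, so no limit is asserted. For the feasible estimator, Proposition~\ref{prop:oracle} shows that the feasible versions of $a_{NT}^{-1}U_{NT}$, $\mathsf G_{NT}C_{NT}^{-1}$ and $\mathsf W_{NT}$ differ from their oracle counterparts by $o_p(1)$ at these normalizations. Slutsky's lemma then transfers the same argument to the feasible estimator.
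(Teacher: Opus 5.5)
Your proof is correct and follows essentially the same route as the paper. You write $\mathsf G_{NT}=(\mathsf G_{NT}C_{NT}^{-1})C_{NT}$ to reduce $a_{NT}^{-1}C_{NT}(\widehat\beta-\beta_0)$ to $(A'WA)^{-1}A'Wu$ at the normalized triple, append $\mathsf W_{NT}$ to the joint limit by Slutsky, and apply the continuous mapping theorem on the full-rank domain; your portmanteau argument on the open set $D_0$ just makes explicit the paper's claim that the sample Gram matrix is nonsingular with probability tending to one, which rests on $\lambda_{\min}(\mathcal H'\mathsf W\mathcal H)\ge\lambda_{\min}(\mathsf W)\sigma_p(\mathcal H)^2>0$ almost surely. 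One small correction: the feasible clause is, as in the paper, a conditional qualification requiring remainders negligible at the $a_{NT}$ and column-specific $C_{NT}$ scales, so you should state it as a hypothesis rather than as something Proposition~\ref{prop:oracle} delivers as stated.
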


\noindent\emph{Proof.} See Online Appendix~\ref{app:proof:thm-multiratio}.

\paragraph{Interpretation.}
Different linear combinations of the structural parameter can converge at different
rates. The full-rank condition guarantees a unique local inverse, while rank loss would
leave unrestricted null-space components unidentified and require a different inferential
object.

\begin{corollary}[Multivariate Anderson--Rubin reduction]\label{cor:multiAR}
For testing $H_0:\beta=\beta_0$ jointly, the factor-adjusted score $U_{NT}(\beta_0)$ is free of $\Pi_{NT}$ in its null centering. Hence weak first-stage drift does not enter the null moment equation. If, in addition, the studentizing covariance estimator and critical-value
approximation satisfy the uniform conditions of Theorem~\ref{thm:uniform-ar} over the
maintained concentration-parameter set and fixed covariance-rank stratum, the resulting
quadratic AR test has asymptotic rejection probability uniformly no greater than its
nominal level on that set. Its inversion yields a joint confidence region for $\beta$; projection onto a subset of coordinates is valid but generally conservative.
\end{corollary}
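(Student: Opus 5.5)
The corollary has two parts: an exact algebraic claim about null centering, and a uniform size claim that is borrowed from Theorem~\ref{thm:uniform-ar}. I would handle them in that order and then turn to inversion and subvector projection.

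\emph{Null centering.} Under the common final projection, the oracle equation \eqref{eq:oracle} gives $y_{it}^{o}-x_{it}^{o\prime}\beta_0=u_{it}^{o}$. Hence
\[
 U_{NT}(\beta_0)=\sum_{i,t}z_{it}^{o}u_{it}^{o}
\]
is an identity. The first-stage coefficient $\Pi_{NT}$ enters only through $x_{it}^{o}$, and $x_{it}^{o}$ is multiplied by $\beta_0-\beta_0=0$. So neither the sample score nor its population mean depends on $\Pi_{NT}$. By Assumption~\ref{ass:score}, $(NT)^{-1}\E\,U_{NT}(\beta_0)=0$ for every value of $\Pi_{NT}$. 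This is exactly the affine identity $\mathcal Q_{NT}(\beta)=-J_{NT}(\beta-\beta_0)$ from Theorem~\ref{thm:identification-map}, evaluated at $\beta=\beta_0$.

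For the factor-adjusted (feasible) score, I would write $\widehat U_{NT}(\beta_0)=U_{NT}(\beta_0)+R_{g,NT}(\beta_0)$. By Assumption~\ref{ass:factororth}, $a_{NT}^{-1}\|R_{g,NT}(\beta_0)\|=o_p(1)$, so Proposition~\ref{prop:oracle} transfers the $\Pi_{NT}$-free oracle law. The subtle point is that this transfer must hold uniformly over the concentration set. The score remainder should be made free of $\Pi_{NT}$ as well, since the common projection is applied to $y-x'\beta_0$, and that quantity equals the projected $u$ plus nuisance terms.

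\emph{Uniform size.} Let $\mathcal P$ be the maintained set of laws indexed by concentration parameters on the fixed covariance-rank stratum. Write
\[
 AR_{NT}(\beta_0)=\{a_{NT}^{-1}U_{NT}(\beta_0)\}'\widehat\Omega_{NT}^{\dagger}\{a_{NT}^{-1}U_{NT}(\beta_0)\}.
\]
The first step showed that the law of $(U_{NT}(\beta_0),\widehat\Omega_{NT})$ does not depend on the nuisance drift once the studentizer is computed from null residuals. The hypotheses of Theorem~\ref{thm:uniform-ar} then give
\[
 \limsup_{N,T}\sup_{P\in\mathcal P}\Pr_P\{AR_{NT}(\beta_0)>\widehat c_{NT,1-\alpha}\}\le\alpha .
\]
To make this concrete, I would argue along subsequences in the style of Andrews--Cheng--Guggenberger. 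Pick a sequence $P_{NT}$ that nearly attains the supremum. Extract a subsequence on which the score law converges in the uniform sense of Theorem~\ref{thm:uniform-ar}. On it, the statistic and the critical value converge jointly, with $\widehat\Omega_{NT}^{\dagger}$ continuous on the fixed-rank stratum as in the asymptotic conventions. The rejection probability then converges to at most $\alpha$, using continuity of the limit law at the critical value, which is supplied by the theorem.

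\emph{Inversion and projection.} Joint coverage follows from duality: $\beta_0\in\{\beta:AR_{NT}(\beta)\le\widehat c\}$ exactly when the test does not reject. For a subvector, the image of this set under coordinate projection contains the true subvector whenever the joint set contains $\beta_0$. So coverage is at least the joint coverage, and it can be strictly larger, which is the source of conservativeness.

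\emph{Main obstacle.} The hard step is the uniformity of the feasible transfer. Assumption~\ref{ass:factororth} is stated for given sequences, but here the remainder bound has to hold uniformly over $\mathcal P$. I would impose, or verify, that condition along every converging subsequence, which the subsequence argument needs. I would also check that the estimated projection does not depend on the candidate $\beta$ in a way that reintroduces $\Pi_{NT}$ into the null centering.
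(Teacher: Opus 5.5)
Your proposal is correct and follows the paper's route. Under $H_0$ the projected equation gives $y_{it}^{o}-x_{it}^{o\prime}\beta_0=u_{it}^{o}$, so the null score is $\sum_{i,t}z_{it}^{o}u_{it}^{o}$ with no $\Pi_{NT}$; uniform size is then a direct application of Theorem~\ref{thm:uniform-ar} under its assumed uniform conditions; and joint coverage and conservative subvector projection follow by test--set duality.

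You can drop the Andrews--Cheng--Guggenberger subsequence detour and the claim that the law of $(U_{NT}(\beta_0),\widehat\Omega_{NT})$ is drift-free. Neither is needed. The latter also goes beyond what your first step shows: only the centering is $\Pi_{NT}$-free, and a feasible $\widehat{\mathcal M}$ estimated from stacked matrices that include $X$ can depend on $\Pi_{NT}$.
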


\section{Covariance-optimal weighting under regular identification}\label{sec:efficiency}

Weak-identification methods are essential near rank boundaries, but they should reproduce
standard efficient GMM when projected information is regular. This section establishes
that benchmark and clarifies its scope: optimal weighting improves precision within a
well-identified model; it cannot replace information removed by projection.

Under a Gaussian regular score limit, the optimally weighted covariance matrix below is
the inverse of the projected information matrix in \eqref{eq:projected-information},
after matching the regular score and parameter normalizations. The result is therefore
the regular full-rank specialization of the local projected-moment experiment.

The identification analysis determines when regular estimation is possible. Efficiency is a separate question and is meaningful in the conventional local-asymptotic sense only on sequences for which the relevant structural directions remain regularly identified. This section therefore establishes an optimal-weighting result under strong identification and states explicitly why it does not extend mechanically to weak or rank-deficient sequences.

Let
\[
 \bar g_{NT}(\beta)=\frac1{NT}\sum_{i,t}z_{it}^{o}
 (y_{it}^{o}-x_{it}^{o\prime}\beta),
\]
and define the projected GMM estimator
\begin{equation}
 \widehat\beta_W\in\arg\min_{\beta\in\mathcal B}
 \bar g_{NT}(\beta)'\mathsf W_{NT}\bar g_{NT}(\beta),                  \label{eq:gmm}
\end{equation}
where $\mathsf W_{NT}$ is symmetric positive definite with probability approaching one. Let $a_{NT}$ be the structural-score normalization and suppose
\[
 a_{NT}^{-1}\sum_{i,t}g_{it}(\beta_0)\Rightarrow Z_g,
 \qquad \Var(Z_g)=\Omega_g.
\]
Because the score may contain unit, time, interaction, and cell projections, $a_{NT}$ need not equal $\sqrt{NT}$. The regular estimator rate is consequently $NT/a_{NT}$ rather than automatically $\sqrt{NT}$.
Define the regular sample Jacobian
\[
 \widehat J_{NT}^{\,s}
 =\frac1{NT}\sum_{i,t}z_{it}^{o}x_{it}^{o\prime}.
\]
The superscript $s$ distinguishes this sample-average Jacobian from the population
Jacobian $J_{NT}$ and from the unnormalized weak-IV matrix $\mathsf G_{NT}$.

\begin{assumption}[Regular identification and stable score covariance]\label{ass:regular-eff}
(i) $J_{NT}\to J_0$, where $J_0$ has full column rank.
(ii) $\mathsf W_{NT}\to_p \mathsf W$, where $\mathsf W$ is positive definite.
(iii) $\Omega_g$ is positive definite, and a consistent estimator
$\widehat\Omega_g$ is available under the maintained two-way dependence.
(iv) $\widehat J_{NT}^{\,s}\to_pJ_0$.
If the criterion is minimized over a restricted parameter set $\mathcal B$, additionally
assume that the unconstrained minimizer belongs to the interior of $\mathcal B$ with
probability approaching one; this implementation condition is not needed when
$\mathcal B=\mathbb R^p$.
\end{assumption}

\paragraph{Roadmap.}
When every projected direction remains well identified, the random-denominator problem
disappears. The next theorem recovers the familiar linear-GMM limit and shows what
optimal weighting can accomplish in that regular regime. It is a benchmark, not an
argument for using Wald inference near rank boundaries.

\begin{theorem}[Regular projected-GMM limit and covariance-optimal weighting]\label{thm:efficient-gmm}
Suppose
$a_{NT}^{-1}\sum_{i,t}g_{it}(\beta_0)\Rightarrow Z_g$
with $\Var(Z_g)=\Omega_g$, and Assumption~\ref{ass:regular-eff} holds. Then
\begin{equation}
 \frac{NT}{a_{NT}}(\widehat\beta_W-\beta_0)
 \Rightarrow
 (J_0'\mathsf WJ_0)^{-1}J_0'\mathsf WZ_g.                                  \label{eq:gmm-limit}
\end{equation}
Its asymptotic covariance is
\begin{equation}
 V(\mathsf W)=(J_0'\mathsf WJ_0)^{-1}J_0'\mathsf W\Omega_g \mathsf WJ_0(J_0'\mathsf WJ_0)^{-1}.       \label{eq:gmm-var}
\end{equation}
Within the class of regular linear-GMM estimators generated by \eqref{eq:gmm}, the choice $\mathsf W=\Omega_g^{-1}$ minimizes the covariance matrix of the linear limit in the Loewner order and yields
\[
 V_{\mathrm{opt}}=(J_0'\Omega_g^{-1}J_0)^{-1}.
\]
\end{theorem}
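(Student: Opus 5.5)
The plan is to exploit the fact that the GMM criterion in \eqref{eq:gmm} is quadratic in $\beta$, so the minimizer has a closed form and no consistency-then-expansion argument is needed. Write $\bar g_{NT}(\beta)=\bar g_{NT}(\beta_0)-\widehat J_{NT}^{\,s}(\beta-\beta_0)$; this is exact because the projected structural equation \eqref{eq:oracle} gives $y_{it}^{o}-x_{it}^{o\prime}\beta=u_{it}^{o}-x_{it}^{o\prime}(\beta-\beta_0)$.

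The first-order condition then gives
\[
\widehat\beta_W-\beta_0=(\widehat J_{NT}^{\,s\prime}\mathsf W_{NT}\widehat J_{NT}^{\,s})^{-1}\widehat J_{NT}^{\,s\prime}\mathsf W_{NT}\bar g_{NT}(\beta_0).
\]
This holds on the event that $\widehat J_{NT}^{\,s\prime}\mathsf W_{NT}\widehat J_{NT}^{\,s}$ is nonsingular, that $\mathsf W_{NT}$ is positive definite, and, when $\mathcal B\ne\mathbb R^p$, that the unconstrained minimizer is interior. By Assumption~\ref{ass:regular-eff}(i),(ii),(iv), $\widehat J_{NT}^{\,s\prime}\mathsf W_{NT}\widehat J_{NT}^{\,s}\to_p J_0'\mathsf WJ_0$, which is positive definite because $J_0$ has full column rank and $\mathsf W>0$. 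Combined with the interiority condition in Assumption~\ref{ass:regular-eff}, the event therefore has probability tending to one. Off this event the estimator may be defined arbitrarily without affecting weak limits.

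Next, multiply the closed form by $NT/a_{NT}$ and use $(NT/a_{NT})\bar g_{NT}(\beta_0)=a_{NT}^{-1}\sum_{i,t}g_{it}(\beta_0)\Rightarrow Z_g$. Slutsky's lemma and the continuous mapping theorem then give \eqref{eq:gmm-limit}, since the matrix factor converges in probability to the constant $(J_0'\mathsf WJ_0)^{-1}J_0'\mathsf W$. Because that matrix is deterministic, the covariance of the linear limit is $A\Omega_gA'$ with $A=(J_0'\mathsf WJ_0)^{-1}J_0'\mathsf W$, which is \eqref{eq:gmm-var}. This step uses only $\Var(Z_g)=\Omega_g$, not Gaussianity.

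For optimality, set $V_{\mathrm{opt}}=(J_0'\Omega_g^{-1}J_0)^{-1}$, which is well defined by (iii) and full column rank. Take any $A$ with $AJ_0=I_p$; this holds for every weighted $A$ above. Let $A_0=V_{\mathrm{opt}}J_0'\Omega_g^{-1}$. Then $(A-A_0)\Omega_gA_0'=(A-A_0)J_0V_{\mathrm{opt}}=(I-I)V_{\mathrm{opt}}=0$. Consequently
\[
A\Omega_gA'=A_0\Omega_gA_0'+(A-A_0)\Omega_g(A-A_0)'\succeq V_{\mathrm{opt}},
\]
with equality at $\mathsf W=\Omega_g^{-1}$, where $A=A_0$. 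If $\mathsf W$ is replaced by $\widehat\Omega_g^{-1}$, consistency in (iii) makes the feasible choice attain the same limit.

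No step is a serious obstacle. The only care needed is the event bookkeeping: one must justify that the constrained minimizer coincides with the unconstrained closed form with probability approaching one, and that the possibly noninvertible $\widehat J^{\,s\prime}\mathsf W_{NT}\widehat J^{\,s}$ is handled on an asymptotically negligible set.
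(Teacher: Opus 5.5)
Your proposal is correct and follows the paper's proof essentially step for step. The exact affine representation of $\bar g_{NT}$ gives the closed-form minimizer, Slutsky's lemma delivers \eqref{eq:gmm-limit}, and the Loewner comparison rests on the same vanishing cross term $(A-A_0)\Omega_g A_0'=(A-A_0)J_0V_{\mathrm{opt}}=0$. Your bookkeeping of the events (nonsingularity of the sample Gram matrix and interiority of the unconstrained minimizer) is slightly more explicit than the paper's.
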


\noindent\emph{Proof.} See Online Appendix~\ref{app:proof:thm-efficient-gmm}.

\paragraph{Interpretation.}
Optimal weighting is a regular-model conclusion, not a remedy for weak identification.
When projected information is well conditioned, it minimizes covariance within the
stated linear-GMM class; when information is weak, identification-robust procedures
remain necessary.

\paragraph{Scope.}
This is a regular-model covariance comparison within the stated linear-GMM class, not a
semiparametric efficiency bound and not a remedy for weak identification. Feasible
two-step GMM inherits the oracle limit when Proposition~\ref{prop:oracle} and covariance
consistency hold. The complete corollary and qualifications are in the Online Appendix.

\section{Inference}\label{sec:inference}

The preceding results separate regular from weakly identified regimes. This section turns
that distinction into an inferential strategy. Wald procedures estimate through the
possibly unstable first stage, whereas null-restricted score procedures remain centered
without dividing by it. The geometry of the resulting confidence set is itself evidence
about how much the final specification can distinguish.

The local experiment clarifies the division between identification and dependence. In
the Gaussian fixed-rank case, local power is summarized by $h'\mathcal I_Ph$. Outside
that case, null-restricted inference remains valid only if critical values approximate
the actual score law rather than a Gaussian covariance surrogate.

\subsection{Why conventional Wald inference fails}

Wald inference summarizes uncertainty with an estimate and a standard error. After
information-reducing projection, however, the estimate can inherit a random first-stage
denominator. The next result identifies this failure rather than treating it as a
covariance-estimation problem.

A cluster-robust covariance estimator can consistently estimate the variance of a regular, asymptotically linear estimator. Under weak identification, 2SLS is not regular because its denominator remains random at first order. Studentizing only the numerator therefore does not recover a pivotal normal or chi-square limit.

Let $\widehat V_{2W}$ be any conventional two-way cluster variance estimator obtained by linearizing 2SLS around a nonrandom first-stage Jacobian, and define
\[
 W_{2SLS}(\beta_0)=\frac{(\widehat\beta_{IV}^{o}-\beta_0)^2}{\widehat V_{2W}}.
\]

\begin{proposition}[Failure of conventional Wald pivotality]\label{prop:wald}
Suppose Theorem~\ref{thm:ratio} holds and
\[
 \frac{a_{NT}}{b_{NT}}\to\rho\in(0,\infty),
 \qquad
 \frac{\widehat V_{2W}}{(a_{NT}/b_{NT})^2}\to_p v_0\in(0,\infty).
\]
Then
\[
 W_{2SLS}(\beta_0)\Rightarrow
 \frac{Z_u^2}{v_0(\mu_c+Z_v)^2}.
\]
This limit is $\chi_1^2$ only under additional restrictions making the displayed ratio
equal in law to a standard normal square. Such restrictions are not implied by the
maintained local experiment.
\end{proposition}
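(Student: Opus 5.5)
The plan is a continuous-mapping argument built on Theorem~\ref{thm:ratio}. First rewrite the statistic in terms of the two normalized objects whose joint limits are already available:
\[
 W_{2SLS}(\beta_0)
 =\frac{\bigl\{(b_{NT}/a_{NT})(\widehat\beta_{IV}^{o}-\beta_0)\bigr\}^2}
        {\widehat V_{2W}\,(b_{NT}/a_{NT})^2}.
\]
The numerator is the square of the scaled estimator in \eqref{eq:ratio}. The denominator is exactly the ratio $\widehat V_{2W}/(a_{NT}/b_{NT})^2$, which is assumed to converge in probability to the constant $v_0>0$.

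The key steps, in order, are as follows.

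\begin{enumerate}
 \item By Theorem~\ref{thm:ratio}, under Proposition~\ref{prop:balance}, Assumption~\ref{ass:joint} and $\Pr(\mu_c+Z_v=0)=0$,
 \[
 (b_{NT}/a_{NT})(\widehat\beta_{IV}^{o}-\beta_0)\Rightarrow Z_u/(\mu_c+Z_v).
 \]
 \item Because the denominator converges in probability to a constant, Slutsky's lemma gives joint convergence of the pair
 \[
 \Bigl((b_{NT}/a_{NT})(\widehat\beta_{IV}^{o}-\beta_0),\;\widehat V_{2W}/(a_{NT}/b_{NT})^2\Bigr)
 \Rightarrow \bigl(Z_u/(\mu_c+Z_v),\,v_0\bigr).
 \]
 \item The map $(x,v)\mapsto x^2/v$ is continuous on $\mathbb R\times(0,\infty)$, and the limit places all its mass there since $v_0>0$. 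The continuous mapping theorem then yields
 \[
 W_{2SLS}(\beta_0)\Rightarrow Z_u^2/\{v_0(\mu_c+Z_v)^2\}.
 \]
 \item The condition $a_{NT}/b_{NT}\to\rho\in(0,\infty)$ is needed only to interpret the normalization. It ensures that $\widehat V_{2W}$ itself has the nondegenerate limit $\rho^2v_0$, and that the unscaled estimator has the nondegenerate ratio limit stated in Theorem~\ref{thm:ratio}. One could equivalently rescale by $\rho$ throughout.
\end{enumerate}

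For the non-pivotality claim, I would argue by exhibiting one admissible limit inside the maintained local experiment whose law is not $\chi^2_1$. Take $(Z_u,Z_v)$ jointly Gaussian with unit variances, correlation $\varrho$, and $\mu_c$ arbitrary. A convenient case is $\varrho=0$ with $\mu_c=0$: the ratio $Z_u/Z_v$ is then standard Cauchy, so $Z_u^2/\{v_0Z_v^2\}$ has infinite mean and cannot be a scaled $\chi^2_1$ for any $v_0$.

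More generally, the limit depends on $(\mu_c,\varrho)$, whereas $\chi^2_1$ does not. To show this I would use that $\mu_c\to\infty$ yields $Z_u^2/(v_0\mu_c^2)\to0$ in probability. Hence no single fixed law, and in particular not $\chi^2_1$, is obtained across the maintained parameter set. The limit equals $\chi^2_1$ only under restrictions such as a degenerate denominator with $v_0$ matching $\Var(Z_u)/(\mu_c+Z_v)^2$, which the experiment does not impose.

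The main obstacle is purely expository: the convergence part is routine Slutsky plus continuous mapping. The only care needed is to confirm that the continuous mapping theorem applies, i.e.\ that the limit avoids the discontinuity set $\{v=0\}$ (guaranteed by $v_0>0$) and that the numerator limit is well defined almost surely (guaranteed by $\Pr(\mu_c+Z_v=0)=0$). For the second part, the difficulty is only in stating precisely what "not implied" means, which the Cauchy counterexample settles.
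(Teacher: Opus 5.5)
Your argument is correct and is essentially the paper's. The paper also combines Theorem~\ref{thm:ratio} with the assumed convergence of $\widehat V_{2W}/(a_{NT}/b_{NT})^2$ via Slutsky and lets the factor $(a_{NT}/b_{NT})^2$ cancel; it passes through the unscaled limit $\rho Z_u/(\mu_c+Z_v)$, whereas you work with the scaled estimator, which shows $\rho$ is inessential, and your Cauchy example with $\mu_c=0$ and independent Gaussian $(Z_u,Z_v)$ usefully backs the non-pivotality claim that the paper only asserts.
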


\noindent\emph{Proof.} See Online Appendix~\ref{app:proof:prop-wald}.

The proposition does not imply universal over-rejection. The direction and magnitude of distortion depend on the covariance and tail behavior of $(Z_u,Z_v)$ and on the variance estimator. Claims of ``severe over-rejection'' should therefore be stated as theoretical possibilities and demonstrated for explicit data-generating processes or simulations, rather than asserted as a universal theorem.

\subsection{Anderson--Rubin inference}

Null-restricted score inference avoids the unstable denominator. Its strength is not
that it always has a chi-square law, but that its centering does not depend on projected
first-stage strength. The reference distribution must still match the dependence and
covariance-rank regime.

For a candidate $\beta$, define
\[
 \bar g_{NT}(\beta)=(NT)^{-1}\sum_{i,t}g_{it}(\beta).
\]
Because the variance order of the aggregate score depends on its unit, time, interaction, and cell projections, no universal $\sqrt{NT}$ normalization is imposed. Define the quadratic form directly using a consistent estimator of the covariance matrix of the unnormalized score sum:
\begin{equation}
 AR_{NT}(\beta)
 =\left(\sum_{i,t}\widehat g_{it}(\beta)\right)'
 \widehat\Sigma_{g,NT}(\beta)^{\dagger}
 \left(\sum_{i,t}\widehat g_{it}(\beta)\right), \label{eq:ar}
\end{equation}
where $\dagger$ is the Moore--Penrose inverse and
$\widehat\Sigma_{g,NT}(\beta)$ estimates the covariance of the unnormalized score sum,
not the covariance of the sample average. If it is nonsingular with probability
approaching one, the ordinary inverse may be used.

\paragraph{Roadmap.}
The next proposition isolates the algebraic reason for weak-identification robustness:
under the structural null, the score is centered without estimating through the projected
first stage. Distributional validity still requires a law-appropriate covariance or
bootstrap approximation.

\begin{proposition}[Identification robustness of the null score]\label{prop:ar}
Under $H_0:\beta=\beta_0$, $g_{it}(\beta_0)=z_{it}^{o}u_{it}^{o}$ and its centering does not involve $\Pi_{NT}$. Thus identification strength affects the alternatives and power but not the null centering. Size robustness over a concentration-parameter class follows only when the covariance estimator and critical-value approximation are themselves uniformly valid over that class and over the maintained covariance-rank stratum.
\end{proposition}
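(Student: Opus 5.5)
The argument is algebraic and has three steps: substitute the projected structural equation into the null score, read off the null centering, and then separate what this does and does not imply for size.

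\emph{Step 1 (null score).} Take the oracle equation \eqref{eq:oracle}, $y_{it}^{o}=x_{it}^{o\prime}\beta_0+u_{it}^{o}$, which holds exactly by Lemma~\ref{lem:projectioncompat}. Substituting it into $g_{it}(\beta_0)=z_{it}^{o}(y_{it}^{o}-x_{it}^{o\prime}\beta_0)$ gives $g_{it}(\beta_0)=z_{it}^{o}u_{it}^{o}$ identically, cell by cell, before any expectation is taken. This is the key point. The first-stage equation $x_{it}^{o}=\Pi_{NT}'z_{it}^{o}+v_{it}^{o}$ never enters, because $x_{it}^{o}$ is cancelled exactly at $\beta=\beta_0$. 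By contrast, $g_{it}(\beta)=z_{it}^{o}u_{it}^{o}-z_{it}^{o}x_{it}^{o\prime}(\beta-\beta_0)$, so $\Pi_{NT}$ enters only through the displacement $\beta-\beta_0$.

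\emph{Step 2 (centering).} By the projected structural exclusion in Assumption~\ref{ass:score}, $(NT)^{-1}\sum_{i,t}\E[z_{it}^{o}u_{it}^{o}]=0$. This restriction is stated without reference to $\Pi_{NT}$. I would also note that, in the local family of Section~\ref{sec:localexperiment}, the joint law of $(x^{o},z^{o},u^{o})$ is held fixed as $h$ varies. Hence the null score sum is $\sum_{i,t}z_{it}^{o}u_{it}^{o}$, whose mean and law do not depend on the first-stage drift. Its distribution is governed only by the decomposition \eqref{eq:decomp} of $z^{o}u^{o}$. Under an alternative $\beta_{NT}(h)$, Theorem~\ref{thm:localmoment} shows that the shift is $\mathcal Hh$. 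Identification strength, through $\mathcal H$ (equivalently $J_{NT}=Q^{E}_{zz,NT}\Pi_{NT}$), therefore appears only in the alternatives, and through Corollary~\ref{cor:ar-local-power} only in power.

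\emph{Step 3 (scope of size robustness).} The statistic \eqref{eq:ar} also involves $\widehat\Sigma_{g,NT}(\beta_0)$ and a critical value. I would state the size claim as an implication: if $\widehat\Sigma_{g,NT}$ is consistent for the covariance of the null score sum, and the critical-value approximation is accurate, uniformly over the concentration class and on the fixed covariance-rank stratum, then the limiting rejection probability is at most the nominal level uniformly. In the Gaussian fixed-rank case this uses the continuity of the Moore--Penrose inverse on that stratum. The main obstacle is the feasible version rather than the oracle algebra. There, $\widehat g_{it}(\beta_0)$ differs from $z^{o}u^{o}$ by factor-estimation remainders, and I would control these through Assumption~\ref{ass:factororth}, uniformly over $\beta$, at scale $a_{NT}$. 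I would also point out that a covariance estimator built from residuals that are not null-restricted could reintroduce $\Pi_{NT}$, and this is why the uniformity condition is imposed rather than derived.
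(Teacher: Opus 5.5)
Your proposal is correct and follows the paper's own argument. Substituting the projected structural equation makes $g_{it}(\beta_0)=z_{it}^{o}u_{it}^{o}$ an exact identity from which $\Pi_{NT}$ cancels, and uniform size is then treated as a separate implication that requires uniformly valid covariance estimation, fixed-rank pseudoinversion, and critical values.

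One refinement to your aside on the feasible version. Assumption~\ref{ass:factororth} is only a pointwise condition, and the estimated projector is built from stacked matrices that include $X$, so $\Pi_{NT}$ can re-enter through nuisance estimation. Uniform size therefore needs the transfer remainder to be negligible uniformly over the law class, as in condition~(U1) of Lemma~\ref{lem:uniform-transfer}, not merely uniformly in $\beta$.
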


\noindent\emph{Proof.} See Online Appendix~\ref{app:proof:prop-ar}.

\paragraph{Interpretation.}
The null score remains centered without dividing by the projected first stage. This is
the precise sense in which Anderson--Rubin inference is robust to weak projected
identification.

\begin{assumption}[Uniform null-score approximation over a maintained class]
\label{ass:uniform-ar}
Let $\mathfrak P_{NT}$ be a class of triangular-array laws satisfying the structural null,
and let $T_{NT}$ denote the feasible null-score statistic. For each
$P\in\mathfrak P_{NT}$, let $F_{NT,P}$ be a reference cdf and define its generalized
$(1-\alpha)$ quantile by
\[
 c_{NT,P}(1-\alpha)
 =\inf\{x\in\mathbb R:F_{NT,P}(x)\ge 1-\alpha\}.
\]
Assume the uniform distributional approximation
\[
 \sup_{P\in\mathfrak P_{NT}}\sup_{x\in\mathbb R}
 \left|P\{T_{NT}\le x\}-F_{NT,P}(x)\right|\to0.
\]
Assume also the uniform anti-concentration condition
\[
 \lim_{\delta\downarrow0}\;
 \limsup_{N,T\to\infty}\;
 \sup_{P\in\mathfrak P_{NT}}
 \left[
 F_{NT,P}\{c_{NT,P}(1-\alpha)+\delta\}
 -
 F_{NT,P}\{c_{NT,P}(1-\alpha)-\delta\}
 \right]
 =0.
 \label{eq:uniform-anticoncentration}
\]
The feasible critical value $\widehat c_{NT}(1-\alpha)$ is computed by the same rule for
every law in the class and satisfies, for every $\varepsilon>0$,
\[
 \sup_{P\in\mathfrak P_{NT}}
 P\left\{
 \left|\widehat c_{NT}(1-\alpha)-c_{NT,P}(1-\alpha)\right|>\varepsilon
 \right\}
 \to0.
\]
The class is restricted to a fixed covariance-rank stratum. Feasible null-score,
covariance, and projector errors are uniformly negligible at their theorem-specific
normalizations, while nuisance ranks and separating eigengaps remain fixed uniformly;
see Online Appendix Lemma~\ref{lem:uniform-transfer}.
\end{assumption}

\begin{theorem}[Uniform size of the projected null-score test]
\label{thm:uniform-ar}
Under Assumption~\ref{ass:uniform-ar}, the test that rejects when
$T_{NT}>\widehat c_{NT}(1-\alpha)$ satisfies
\[
 \limsup_{N,T\to\infty}\sup_{P\in\mathfrak P_{NT}}
 P\{T_{NT}>\widehat c_{NT}(1-\alpha)\}\le\alpha.
\]
Equivalently, inversion yields
\[
 \liminf_{N,T\to\infty}\inf_{P\in\mathfrak P_{NT}}
 P\{\beta_0\in\widehat{\mathcal C}_{1-\alpha}\}\ge1-\alpha.
\]
The conclusion covers strong, local-to-zero, and exactly zero projected first-stage
sequences whenever those sequences belong to the maintained class
$\mathfrak P_{NT}$. Uniformity is not asserted across covariance-rank changes or outside
the stated factor-transfer and critical-value class.
\end{theorem}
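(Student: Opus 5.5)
The plan is a direct uniform sandwich argument built from the three ingredients of Assumption~\ref{ass:uniform-ar}: uniform Kolmogorov approximation, uniform anti-concentration at the reference quantile, and uniform consistency of the feasible critical value. Fix $\alpha$ and an arbitrary $\delta>0$, and abbreviate $c_P=c_{NT,P}(1-\alpha)$, $\widehat c=\widehat c_{NT}(1-\alpha)$, $F_P=F_{NT,P}$. For every law $P\in\mathfrak P_{NT}$ I will decompose the rejection event:
\[
 P\{T_{NT}>\widehat c\}\le P\{T_{NT}>c_P-\delta\}+P\{|\widehat c-c_P|>\delta\}.
\]
The second term tends to zero uniformly by the critical-value condition with $\varepsilon=\delta$. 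For the first term I will write $P\{T_{NT}>c_P-\delta\}=1-P\{T_{NT}\le c_P-\delta\}$ and replace $P\{T_{NT}\le x\}$ by $F_P(x)$ at $x=c_P-\delta$. The error of this replacement is bounded by $\sup_P\sup_x|P\{T_{NT}\le x\}-F_P(x)|\to0$. The point evaluation is legitimate because the Kolmogorov bound holds at every $x$, including $P$-dependent points.

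Next I will relate $1-F_P(c_P-\delta)$ to $\alpha$. By the definition of the generalized quantile and right-continuity of a cdf, $F_P(c_P)\ge1-\alpha$. Hence
\[
 1-F_P(c_P-\delta)\le \alpha+\{F_P(c_P)-F_P(c_P-\delta)\}\le\alpha+\{F_P(c_P+\delta)-F_P(c_P-\delta)\}.
\]
Taking $\sup_P$ and $\limsup_{N,T}$ gives
\[
 \limsup\sup_P P\{T_{NT}>\widehat c\}\le\alpha+\limsup\sup_P[F_P(c_P+\delta)-F_P(c_P-\delta)].
\]
Letting $\delta\downarrow0$ and invoking the anti-concentration display \eqref{eq:uniform-anticoncentration} yields the size bound. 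The order matters here: $\limsup$ is taken first and then $\delta\to0$, which matches exactly the order of limits in the assumption.

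Coverage follows by inversion. The region is defined as $\widehat{\mathcal C}_{1-\alpha}=\{\beta:T_{NT}(\beta)\le\widehat c_{NT}(\beta,1-\alpha)\}$, so under $P\in\mathfrak P_{NT}$ we have $P\{\beta_0\in\widehat{\mathcal C}\}=1-P\{T_{NT}>\widehat c\}$. Taking $\inf_P$ and $\liminf$ converts the first bound into the second.

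The claim about strong, local-to-zero and zero first-stage sequences requires no extra argument, since those sequences are simply members of $\mathfrak P_{NT}$. I will add one remark: by Proposition~\ref{prop:ar} the null centering does not involve $\Pi_{NT}$, which is why such a class is nonvacuous. The step I expect to need the most care is the quantile step, namely ensuring the generalized quantile inequality $F_P(c_P)\ge1-\alpha$ and the use of anti-concentration remain valid when $F_P$ has atoms or flat regions. I will handle this by using only right-continuity and the two-sided window $[c_P-\delta,c_P+\delta]$. I will not use any continuity of $F_P$ at $c_P$, so the argument covers the non-Gaussian, chaos-type reference laws of Theorem~\ref{thm:primitive-benchmark} on the fixed covariance-rank stratum.
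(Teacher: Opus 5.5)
Your proposal is correct and matches the paper's proof step for step. Both arguments bound the rejection event by $\{T_{NT}>c_{NT,P}-\delta\}$ plus the critical-value error, both use $F_{NT,P}(c_{NT,P})\ge 1-\alpha$ together with the uniform Kolmogorov bound, and both close with anti-concentration. The only cosmetic difference is quantifier order: the paper fixes $\varepsilon$ and chooses $\delta$ from the anti-concentration condition, whereas you fix $\delta$, take the $\limsup$, and then let $\delta\downarrow 0$. The two orders are equivalent.
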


\noindent\emph{Proof.} See Online Appendix~\ref{app:proof:thm-uniform-ar}.

\paragraph{Interpretation.}
Uniform weak-identification robustness is a property of the complete null-score
approximation, not of centering alone. The theorem therefore states exactly what must be
uniform: the score law, rank stabilization, nuisance transfer, and critical values.

\paragraph{Concentration-index sequences.}
For the scalar benchmark, the maintained law class may be indexed by
$\kappa_{N\mid T}\to c$ with $c\in[0,\infty)$. Uniform statements are made over fixed
compact concentration sets $c\in[0,\bar c]$, including $c=0$, together with a fixed
covariance-rank stratum and uniform nuisance-transfer conditions. Strong-identification
limits are handled separately by the regular theory; no single approximation is claimed
to be uniformly sharp at the compactified point $c=\infty$.

\begin{remark}[Uniformity and parameter-space boundaries]
Theorem~\ref{thm:uniform-ar} is an implication theorem: pointwise weak convergence does
not establish its assumptions. The appropriate conclusion is the one-sided size bound
$\limsup\sup_{P\in\mathfrak P_{NT}}P\{\text{reject}\}\le\alpha$.
Exact equality need not hold when the reference cdf is continuous but locally flat at
its target quantile or when the limiting test is otherwise conservative. The stated
anti-concentration condition excludes a nonvanishing atom at the limiting critical
value. Primitive applications must verify score approximation, anti-concentration, rank
stability, critical values, and nuisance transfer uniformly, as in
\citet{andrewschengguggenberger2020}. Exactly zero first stages create no directional
$0/0$ because the null-score statistic does not divide by a first-stage estimate.
Intersecting the joint confidence set with a fixed closed economic restriction set
preserves coverage, although subvector projection may be conservative or disconnected.
\end{remark}

\paragraph{Purpose.}
The preceding result gives identification robustness but not a reference distribution.
The next proposition states the additional Gaussian and fixed-rank conditions under
which covariance studentization produces a chi-square limit.

\begin{proposition}[Gaussian fixed-rank limit of the quadratic AR statistic]\label{prop:ar-limit}
Under $H_0$, suppose there is a deterministic scalar normalization $c_{NT}>0$ such that
\[
 c_{NT}^{-1}\sum_{i,t}g_{it}(\beta_0)\Rightarrow Z_g,\qquad
 Z_g\sim N(0,\Omega_0),
\]
where $\Omega_0$ has fixed rank $r_g$. Suppose the feasible factor transfer holds at this scale and
\[
 \left\|c_{NT}^{-2}\widehat\Sigma_{g,NT}(\beta_0)-\Omega_0\right\|\to_p0,
\]
where $\widehat\Sigma_{g,NT}$ is rank-stabilized so that its estimated rank equals $r_g$ with probability approaching one. Then
\[
 AR_{NT}(\beta_0)\Rightarrow\chi^2_{r_g}.
\]
If the normalized null score has a non-Gaussian interaction limit, covariance consistency alone does not imply a chi-square law; critical values must instead approximate that non-Gaussian limit, for example through the exact PWB-H statistic under its own assumptions.
\end{proposition}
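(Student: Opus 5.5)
The plan is to rewrite the statistic as a continuous function of normalized objects and then invoke the continuous mapping theorem, with the rank restriction making the pseudoinverse continuous. Put $S_{NT}=\sum_{i,t}\widehat g_{it}(\beta_0)$ and $\widehat\Omega_{NT}=c_{NT}^{-2}\widehat\Sigma_{g,NT}(\beta_0)$. The Moore--Penrose inverse is homogeneous of degree $-1$, $(cA)^\dagger=c^{-1}A^\dagger$ for $c>0$, so
\[
 AR_{NT}(\beta_0)=\bigl(c_{NT}^{-1}S_{NT}\bigr)'\,\widehat\Omega_{NT}^{\dagger}\,\bigl(c_{NT}^{-1}S_{NT}\bigr).
\]
The proof then has three steps: pass from feasible to oracle scores, make the pseudoinverse converge, and identify the limiting quadratic form.

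\emph{Score transfer.} The feasible factor transfer hypothesis at scale $c_{NT}$ means $c_{NT}^{-1}\{S_{NT}-\sum_{i,t}g_{it}(\beta_0)\}=o_p(1)$, as in Assumption~\ref{ass:factororth} and Proposition~\ref{prop:oracle} with $a_{NT}$ replaced by $c_{NT}$. Slutsky's lemma then gives $c_{NT}^{-1}S_{NT}\Rightarrow Z_g\sim N(0,\Omega_0)$.

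\emph{Pseudoinverse continuity.} This is the one genuinely delicate step. The map $A\mapsto A^\dagger$ is discontinuous at matrices where the rank drops. It is, however, continuous on the set of matrices of a fixed rank (Stewart's theorem; see also the fixed-rank derivative used in Theorem~\ref{thm:pim-delta}). Let $E_{NT}=\{\rank(\widehat\Omega_{NT})=r_g\}$, which has probability tending to one. On $E_{NT}$, Stewart's bound gives
\[
 \|\widehat\Omega_{NT}^\dagger-\Omega_0^\dagger\|\le C\max\{\|\widehat\Omega_{NT}^\dagger\|^2,\|\Omega_0^\dagger\|^2\}\,\|\widehat\Omega_{NT}-\Omega_0\|.
\]
By Weyl's inequality, the $r_g$ nonzero eigenvalues of $\widehat\Omega_{NT}$ converge to the positive eigenvalues of $\Omega_0$, so $\|\widehat\Omega_{NT}^\dagger\|=O_p(1)$ on $E_{NT}$. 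Combined with $\|\widehat\Omega_{NT}-\Omega_0\|\to_p0$, this yields $\widehat\Omega_{NT}^\dagger\to_p\Omega_0^\dagger$. Off $E_{NT}$ nothing needs to be controlled, since that event has vanishing probability. Without rank stabilization the step fails: an untruncated estimator has small spurious eigenvalues whose inverses blow up, which is why the hypothesis is essential.

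\emph{Limit identification.} Joint convergence of $(c_{NT}^{-1}S_{NT},\widehat\Omega_{NT}^\dagger)$ to $(Z_g,\Omega_0^\dagger)$ holds because the second component has a constant limit. The continuous mapping theorem then gives $AR_{NT}(\beta_0)\Rightarrow Z_g'\Omega_0^\dagger Z_g$. To identify this law, write the spectral decomposition $\Omega_0=U\Lambda U'$ with $U\in\mathbb R^{q\times r_g}$ orthonormal and $\Lambda$ positive diagonal. Since $Z_g\in\mathcal R(\Omega_0)$ almost surely, $Z_g=U\Lambda^{1/2}\xi$ with $\xi\sim N(0,I_{r_g})$. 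Hence $Z_g'\Omega_0^\dagger Z_g=\xi'\Lambda^{1/2}U'U\Lambda^{-1}U'U\Lambda^{1/2}\xi=\xi'\xi\sim\chi^2_{r_g}$.

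\emph{Non-Gaussian caveat.} For the final remark of the proposition it suffices to note that the limit is always $Z_g'\Omega_0^\dagger Z_g$, and this depends on the law of $Z_g$, not only on its covariance. For example, take a Gaussian-chaos component of the form $P_0\mathfrak h(Z_\phi,Z_\psi)$ from Theorem~\ref{thm:primitive-benchmark}, scaled to have covariance $\Omega_0$. The resulting quadratic form has the same mean $r_g$ as $\chi^2_{r_g}$ but generally a different law, for instance different higher moments. Covariance consistency therefore does not deliver chi-square critical values in that case.
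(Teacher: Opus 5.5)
Your proposal is correct and follows essentially the same route as the paper's proof: normalize the score and covariance, use fixed-rank continuity of the Moore--Penrose inverse to obtain $AR_{NT}(\beta_0)\Rightarrow Z_g'\Omega_0^\dagger Z_g$ by continuous mapping, and identify $\chi^2_{r_g}$ from the spectral decomposition of $\Omega_0$. The homogeneity rewrite, the Slutsky step from feasible to oracle scores, the quantitative Stewart--Wedin pseudoinverse bound, and the concrete chaos example for the non-Gaussian caveat make explicit what the paper's shorter argument leaves implicit.
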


\noindent\emph{Proof.} See Online Appendix~\ref{app:proof:prop-ar-limit}.

\paragraph{Interpretation.}
Chi-square critical values are justified only when the normalized score is Gaussian and
the covariance rank is stable. If a bilinear interaction survives, the correct reference
law must approximate that non-Gaussian component rather than ignore it.

A confidence set is obtained by inversion:
\[
 \mathcal C_{1-\alpha}=\{\beta: AR_{NT}(\beta)\le c_{1-\alpha}(\beta)\}.
\]
As in classical weak-IV analysis, the set may be wide, disconnected, or unbounded.
Unboundedness is not a defect of inversion: near nonidentification, valid confidence sets
for parameters with unbounded range must be unbounded with positive probability
\citep{dufour1997}. Applied work should report the inversion grid, whether accepted
components touch either grid boundary, and the connected components of the accepted set.
A boundary-touching component should be described as unbounded relative to the explored
parameter range; it should not be summarized by a midpoint or replaced by a Wald interval.
These outcomes are informative manifestations of weak identification, not programming
failures.

\section{Factor-robust bootstrap inference}\label{sec:bootstrap}

A covariance matrix is not enough when the projected score retains a non-Gaussian
unit--time interaction. The purpose of this section is to approximate the full null law,
including persistent unit variation, common time shocks, their interaction, and cell
noise. We first give a self-contained bootstrap for the primitive benchmark and then
state when the broader PWB-H procedure transfers to feasible factor-adjusted scores.

The bootstrap is applied to the null-restricted score, not to the 2SLS estimator. A
standard nonparametric or residual bootstrap of 2SLS cannot repair the nonuniformity
created by a nearly singular first stage. The role of bootstrap inference is narrower:
approximate the potentially non-Gaussian null law generated by unit, time, interaction,
and cell components. The paper gives two routes. The first is a self-contained plug-in
bootstrap for the primitive finite-range benchmark. The second transfers the exact PWB-H
procedure to broader dependence regimes.

\subsection{A self-contained plug-in bootstrap for the primitive benchmark}
\label{subsec:primitive-bootstrap}

The primitive score theorem implies a limit of the form
\[
 L
 =G_AZ_a+G_DZ_d+G_0\{\mathfrak h(Z_\phi,Z_\psi)+Z_\varepsilon\}.
 \label{eq:primitive-limit-bootstrap}
\]
A covariance-only Gaussian approximation is generally invalid when the bilinear term
survives. The following bootstrap estimates each Gaussian building block and reconstructs
the nonlinear interaction explicitly.

Let
\[
 \widehat\Sigma_A,\quad
 \widehat\Sigma_D,\quad
 \widehat\Sigma_\varepsilon,\quad
 \widehat H_1,\ldots,\widehat H_d,\quad
 \widehat G_A,\widehat G_D,\widehat G_0
\]
be feasible estimators of the objects in Theorem~\ref{thm:primitive-benchmark}. Conditional
on the data, draw independently
\[
 \begin{pmatrix}Z_a^*\\ Z_\phi^*\end{pmatrix}
 \sim N(0,\widehat\Sigma_A),\qquad
 \begin{pmatrix}Z_d^*\\ Z_\psi^*\end{pmatrix}
 \sim N(0,\widehat\Sigma_D),\qquad
 Z_\varepsilon^*\sim N(0,\widehat\Sigma_\varepsilon).
\]
Define
\[
 \widehat{\mathfrak h}(z_\phi,z_\psi)
 =\bigl(z_\phi'\widehat H_1z_\psi,\ldots,
        z_\phi'\widehat H_dz_\psi\bigr)'
\]
and
\begin{equation}
 L_{NT}^*
 =\widehat G_AZ_a^*
 +\widehat G_DZ_d^*
 +\widehat G_0\{
   \widehat{\mathfrak h}(Z_\phi^*,Z_\psi^*)+Z_\varepsilon^*\}.
 \label{eq:primitive-bootstrap}
\end{equation}

\begin{assumption}[Feasible primitive-bootstrap inputs]\label{ass:primitiveboot}
The dimensions in Assumption~\ref{ass:primitivebenchmark} are fixed. The estimators
$\widehat\Sigma_A$, $\widehat\Sigma_D$, and
$\widehat\Sigma_\varepsilon$ are symmetric positive semidefinite with probability
approaching one, and:
\begin{enumerate}[label=(\roman*),leftmargin=2.6em]
\item
\[
 \|\widehat\Sigma_A-\Sigma_A\|
 +\|\widehat\Sigma_D-\Sigma_D\|
 +\|\widehat\Sigma_\varepsilon-\Sigma_\varepsilon\|
 \to_p0;
\]
\item $\max_{k\le d}\|\widehat H_k-H_k\|\to_p0$;
\item
\[
 \|\widehat G_A-G_A\|
 +\|\widehat G_D-G_D\|
 +\|\widehat G_0-G_0\|\to_p0.
\]
\end{enumerate}
\end{assumption}

The displayed consistency conditions are the complete inputs for the simulated limit
law. When these objects are constructed after factor estimation, their consistency must
be verified from the relevant factor-transfer bounds rather than inferred from projector
consistency alone.

\paragraph{Roadmap.}
When the interaction component survives, matching only the covariance discards part of
the first-order uncertainty. The next theorem reconstructs every component of the
primitive projected-score limit and therefore provides critical values for continuous
functionals of the full law, not merely for a Gaussian surrogate.

Let $\widehat\xi_{it,NT}$ denote the feasible centered stacked structural and
first-stage score obtained by replacing the oracle projection and population centering in
$\xi_{it,NT}$ with their feasible counterparts.

\begin{theorem}[Self-contained feasible bootstrap for the primitive score law]
\label{thm:primitive-bootstrap}
Under Assumption~\ref{ass:primitiveboot},
\[
 d_{BL}\{\mathcal L^*(L_{NT}^*),\mathcal L(L)\}\to_p0,
 \label{eq:primitive-bootstrap-bl}
\]
where $L$ is defined in
\eqref{eq:primitive-limit-bootstrap}. Consequently, if
\[
 K_{NT}^{-1}\sum_{i,t}\widehat\xi_{it,NT}\Rightarrow L
\]
under the null and $\varphi:\mathbb R^d\to\mathbb R$ is continuous, then
\[
 d_{BL}\!\left[
 \mathcal L^*\{\varphi(L_{NT}^*)\},
 \mathcal L\{\varphi(L)\}
 \right]\to_p0.
\]
If the distribution function of $\varphi(L)$ is continuous and strictly increasing at its
$(1-\alpha)$ quantile, the conditional bootstrap critical value yields asymptotic size
$\alpha$.
\end{theorem}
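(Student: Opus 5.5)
The plan is to view $L_{NT}^*$ and $L$ as images of a single fixed continuous map evaluated at estimated versus true parameters, driven by standard Gaussian inputs. Write $\theta=(\Sigma_A,\Sigma_D,\Sigma_\varepsilon,H_1,\ldots,H_d,G_A,G_D,G_0)$ and $\widehat\theta$ for its estimator. Take $\xi=(\xi_A,\xi_D,\xi_\varepsilon)$ to be independent standard normal vectors of fixed dimensions, and let $\Sigma^{1/2}$ denote the positive-semidefinite square root. The $\xi$-law of
\[
 \Phi(\theta,\xi)=G_A(\Sigma_A^{1/2}\xi_A)_a+G_D(\Sigma_D^{1/2}\xi_D)_d
 +G_0\bigl\{\mathfrak h_H\bigl((\Sigma_A^{1/2}\xi_A)_\phi,(\Sigma_D^{1/2}\xi_D)_\psi\bigr)+\Sigma_\varepsilon^{1/2}\xi_\varepsilon\bigr\}
\]
is exactly $\mathcal L(L)$ at $\theta$ and exactly $\mathcal L^*(L_{NT}^*)$ at $\widehat\theta$. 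This holds on the event, of probability approaching one, that the $\widehat\Sigma$'s are positive semidefinite; off that event the bootstrap can be defined arbitrarily.

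The first step uses the fact that the PSD square root is continuous on $\mathbb S_+^m$ and, in fact, $\|A^{1/2}-B^{1/2}\|\le\|A-B\|^{1/2}$. The map $\theta\mapsto\Phi(\theta,\xi)$ is therefore continuous for each fixed $\xi$, because it is built from products, sums and a bilinear form. With a coupling that uses the same $\xi$ for both, we get
\[
 d_{BL}\{\mathcal L^*(L_{NT}^*),\mathcal L(L)\}\le\E_\xi\min\{2,\|\Phi(\widehat\theta,\xi)-\Phi(\theta,\xi)\|\}.
\]
A direct bound gives
\[
 \|\Phi(\widehat\theta,\xi)-\Phi(\theta,\xi)\|\le C(\theta)\,\omega(\|\widehat\theta-\theta\|)(1+\|\xi\|^2)
\]
for a modulus $\omega(\delta)=\delta+\delta^{1/2}$, once $\|\widehat\theta-\theta\|\le1$. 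The quadratic factor comes only from the bilinear term, and $\E\|\xi\|^2<\infty$. Assumption~\ref{ass:primitiveboot} gives $\widehat\theta\to_p\theta$, so the right side is $o_p(1)$, which proves the first claim.

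The second step handles $\varphi$. The continuous mapping theorem applied inside the coupling gives $\varphi(\Phi(\widehat\theta,\xi))\to\varphi(\Phi(\theta,\xi))$ in $\xi$-probability along any deterministic sequence $\widehat\theta\to\theta$. Hence $d_{BL}\to0$ along such sequences. A subsequence argument (every subsequence has a further almost-sure subsequence) then gives $d_{BL}\{\mathcal L^*\varphi(L_{NT}^*),\mathcal L\varphi(L)\}\to_p0$. Note that $\varphi$ need not be Lipschitz for this, since only continuity is used.

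The third step concerns size. Bounded-Lipschitz convergence in probability implies that the conditional cdf converges to $F_{\varphi(L)}$ at continuity points. Since $F_{\varphi(L)}$ is continuous and strictly increasing at $c_{1-\alpha}$, the bootstrap quantile satisfies $\widehat c^*_{1-\alpha}\to_p c_{1-\alpha}$. The null convergence $K_{NT}^{-1}\sum\widehat\xi_{it,NT}\Rightarrow L$ and the continuous mapping theorem give $\varphi(\cdot)\Rightarrow\varphi(L)$. Slutsky's lemma and continuity at $c_{1-\alpha}$ then yield rejection probability tending to $\alpha$.

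The main obstacle I expect is the possible degeneracy of the covariance blocks. If $\Sigma_A$ or $\Sigma_D$ is singular, the square root is only Hölder-continuous, which is why I use $\|A^{1/2}-B^{1/2}\|\le\|A-B\|^{1/2}$ rather than differentiability. A second, smaller issue is that the bilinear term is not globally Lipschitz in $\xi$; this is controlled by the second-moment bound above.
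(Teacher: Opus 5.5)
Your proposal is correct and follows essentially the same route as the paper. Both arguments couple $L_{NT}^*$ with a copy of $L$ through common standard Gaussian inputs, using estimated versus population square roots. Both control the bilinear term by Gaussian second moments, bound the bounded-Lipschitz distance by $\E^*\{\|L_{NT}^*-L^*\|\wedge 2\}$, and finish with continuous mapping and bootstrap-quantile consistency. Your explicit bound $\|A^{1/2}-B^{1/2}\|\le\|A-B\|^{1/2}$ and your subsequence argument for a merely continuous $\varphi$ fill in steps that the paper only asserts.
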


\noindent\emph{Proof.} See Online Appendix~\ref{app:proof:thm-primitive-bootstrap}.

\paragraph{Interpretation.}
The bootstrap reproduces the uncertainty generated by the empirical design after
nuisance projection: persistent unit variation, common shocks, their interaction, and
cell innovations. It therefore preserves the sources of uncertainty relevant for the
economic conclusion rather than replacing them with a Gaussian approximation having the
same covariance.

\begin{corollary}[Primitive-bootstrap critical values for a quadratic score statistic]
\label{cor:primitive-bootstrap-quadratic}
Suppose Assumption~\ref{ass:primitiveboot} holds, let
$\widehat\Omega_{NT}\to_p\Omega_L$ and
$\rank(\widehat\Omega_{NT})=\rank(\Omega_L)$ with probability approaching one, and
define
\[
 q_{NT}(x)=x'\widehat\Omega_{NT}^{\dagger}x,\qquad
 q(x)=x'\Omega_L^\dagger x.
\]
If the distribution of $q(L)$ is continuous at its $(1-\alpha)$ quantile, the conditional
$(1-\alpha)$ quantile of $q_{NT}(L_{NT}^*)$ yields asymptotic rejection probability
$\alpha$ for the corresponding null-restricted quadratic score statistic. No Gaussian or chi-square
approximation is required.
\end{corollary}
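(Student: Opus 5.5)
The plan is to combine the bootstrap consistency of Theorem~\ref{thm:primitive-bootstrap} with a continuous-mapping argument for a quadratic form whose weighting matrix is itself random. The first step records the two available facts. By that theorem, $d_{BL}\{\mathcal L^*(L_{NT}^*),\mathcal L(L)\}\to_p0$. Under the null, the feasible normalized score $S_{NT}=K_{NT}^{-1}\sum_{i,t}\widehat\xi_{it,NT}$ (restricted to its structural block) satisfies $S_{NT}\Rightarrow L$, exactly as in that theorem. The statistic is $q_{NT}(S_{NT})$, and the critical value is the conditional quantile of $q_{NT}(L_{NT}^*)$.

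The second step, which carries the key technical point, treats the pseudoinverse. Because $\widehat\Omega_{NT}\to_p\Omega_L$ and the ranks agree with probability approaching one, the Moore--Penrose inverse is continuous along fixed-rank sequences. This is the standard Stewart/Wedin fact, already invoked in the paper's "asymptotic conventions." Hence $\widehat\Omega_{NT}^{\dagger}\to_p\Omega_L^\dagger$.

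The main obstacle is that an untruncated estimator with a rank jump would break this step. Here, however, rank equality is assumed, so I would simply work on the event $E_{NT}=\{\rank(\widehat\Omega_{NT})=\rank(\Omega_L)\}$, whose probability tends to one. On $E_{NT}$, the following holds uniformly over $x$ in compact sets:
\[
|q_{NT}(x)-q(x)|\le\|x\|^2\,\|\widehat\Omega_{NT}^\dagger-\Omega_L^\dagger\|.
\]
For the original statistic, Slutsky's lemma with tightness of $S_{NT}$ gives $q_{NT}(S_{NT})\Rightarrow q(L)$.

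For the bootstrap statistic, $q_{NT}(L_{NT}^*)-q(L_{NT}^*)=o_{p^*}(1)$ in outer probability. This follows because $L_{NT}^*$ is conditionally tight, which is itself implied by the $d_{BL}$ convergence to the tight law of $L$. Since $q$ is continuous, bounded-Lipschitz consistency transfers through $\varphi=q$ by Theorem~\ref{thm:primitive-bootstrap}. To absorb the $o_{p^*}(1)$ perturbation, I would use the fact that $d_{BL}$ is stable under perturbations that are small in conditional probability. Together these give
\[
d_{BL}[\mathcal L^*\{q_{NT}(L_{NT}^*)\},\mathcal L\{q(L)\}]\to_p0.
\]

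The third step converts law convergence into convergence of quantiles. Let $c_{1-\alpha}$ be the $(1-\alpha)$ quantile of $q(L)$. Weak convergence in probability of the conditional cdf, combined with continuity of the cdf of $q(L)$ at $c_{1-\alpha}$, yields $\widehat c^*_{1-\alpha}\to_p c_{1-\alpha}$. This uses the usual P\'olya-type argument on a neighbourhood of $c_{1-\alpha}$. The continuity is only assumed at that quantile, so I would argue directly with the cdf evaluated at $c\pm\varepsilon$ rather than invoking uniform cdf convergence.

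Finally, Slutsky's lemma gives
\[
\Pr\{q_{NT}(S_{NT})>\widehat c^*_{1-\alpha}\}\to\Pr\{q(L)>c_{1-\alpha}\}=\alpha.
\]
The convergence uses continuity at $c_{1-\alpha}$, and the equality holds because that continuity makes the quantile exact. No Gaussianity of $L$ enters at any point.
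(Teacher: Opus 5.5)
Your proposal is correct and follows the paper's proof step for step. Like the paper, it uses fixed-rank continuity of the Moore--Penrose inverse, then continuous mapping and Slutsky through Theorem~\ref{thm:primitive-bootstrap} for both the original and bootstrap quadratic forms, then quantile consistency to get size $\alpha$; you spell out the $o_{p^*}(1)$ perturbation and quantile steps that the paper compresses into one line.

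One small point, which the paper's proof shares: the step $\widehat c^*_{1-\alpha}\to_p c_{1-\alpha}$ requires $F(c_{1-\alpha}+\varepsilon)>1-\alpha$ for every $\varepsilon>0$, where $F$ is the cdf of $q(L)$. That is strict increase at $c_{1-\alpha}$, as Theorem~\ref{thm:primitive-bootstrap} assumes, and continuity alone does not give it in general. Here it holds automatically: $q(L)$ is a continuous function of a Gaussian vector supported on a linear subspace, so its law has interval support, and continuity at $c_{1-\alpha}$ forces that quantile into the interior of the support.
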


\noindent\emph{Proof.} See Online Appendix~\ref{app:proof:cor-primitive-bootstrap-quadratic}.

\paragraph{Implementation.}
For the fixed-range benchmark, estimate $\Sigma_A$ and $\Sigma_D$ from the finitely many
admissible unit and time covariance lags, estimate $\Sigma_\varepsilon$ from the
conditionally centered cell residual, and estimate the finite-rank interaction matrices
from the singular decomposition of the interaction projection. Positive-semidefinite
covariance estimates should be obtained by eigenvalue truncation if necessary. The
bootstrap then requires only independent Gaussian draws and the bilinear reconstruction
in \eqref{eq:primitive-bootstrap}. Every covariance block, interaction matrix, Gaussian
draw, and bootstrap critical value should be saved in the replication files.

\subsection{Wild bootstrap for broader dependence}

For each candidate $\beta$, form feasible scores $\widehat g_{it}(\beta)$ and
their feasible grand mean
\[
 \widehat{\bar g}_{NT}(\beta)=\frac1{NT}\sum_{i,t}\widehat g_{it}(\beta).
\]
Define the empirical projections
\begin{align}
 \widehat A_i(\beta)&=T^{-1}\sum_t\widehat g_{it}(\beta)-\widehat{\bar g}_{NT}(\beta),\\
 \widehat D_t(\beta)&=N^{-1}\sum_i\widehat g_{it}(\beta)-\widehat{\bar g}_{NT}(\beta),\\
 \widehat W_{it}(\beta)&=\widehat g_{it}(\beta)-T^{-1}\sum_s\widehat g_{is}(\beta)
 -N^{-1}\sum_j\widehat g_{jt}(\beta)+\widehat{\bar g}_{NT}(\beta). \label{eq:proj}
\end{align}
The last component contains both the second-order common-effect interaction and the conditionally centered cell innovation. Separating those pieces is generally infeasible and is not required by the projection-based construction.

Let $\eta_i^*$ and $\xi_t^*$ be mean-zero, variance-one multipliers that are independent across dimensions. Within dimensions, their covariance kernels mimic the maintained spatial and serial dependence. In the serial dimension, a Markov wild multiplier can be used; in the spatial dimension, correlated multipliers are generated from a positive-semidefinite kernel based on inter-unit distances. Following \citet{hounyolin2026}, define
\begin{equation}
 g_{it}^*(\beta)
 =\widehat m_{A,i}\widehat A_i(\beta)\eta_i^*
 +\widehat m_{D,t}\widehat D_t(\beta)\xi_t^*
 +\widehat W_{it}(\beta)\eta_i^*\xi_t^*, \label{eq:pwb}
\end{equation}
where $\widehat m_{A,i}$ and $\widehat m_{D,t}$ are the variance corrections prescribed by the selected PWB procedure.

\begin{assumption}[Bootstrap transfer conditions]\label{ass:boot}
(i) The factor-adjusted score belongs to the class of two-way score arrays covered by \citet{hounyolin2026}. (ii) The empirical projections in \eqref{eq:proj} are mean-square consistent at the rates required in the selected regime. (iii) Spatial and temporal bandwidths satisfy the paper's rate conditions. (iv) The PWB-H regime classifier and variance corrections are implemented as defined there. (v) Proposition~\ref{prop:oracle} has a conditional bootstrap analogue in bootstrap probability.
\end{assumption}

\begin{proposition}[Null-score reduction for PWB-H]\label{prop:boot}
Under $H_0$ and Assumption~\ref{ass:boot}, the oracle score equals $z_{it}^{o}u_{it}^{o}$ and contains no first-stage drift.
Consequently, whenever the oracle score array and the feasible factor-transfer remainder satisfy
the exact assumptions of the selected PWB-H theorem, bootstrap validity is unaffected by bounded
local first-stage coefficients except through any dependence of nuisance estimation on the reduced
form. Uniformity therefore requires the factor-transfer conditions to hold uniformly over that local
parameter set.
\end{proposition}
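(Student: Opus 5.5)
The argument has two parts: an algebraic identity for the null score, and a transfer step that applies the PWB-H theorem directly.

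\emph{Step 1: the null score.} Under $H_0:\beta=\beta_0$, the oracle equations \eqref{eq:oracle} from Lemma~\ref{lem:projectioncompat} give $y_{it}^{o}-x_{it}^{o\prime}\beta_0=u_{it}^{o}$. Hence $g_{it}(\beta_0)=z_{it}^{o}u_{it}^{o}$ identically.

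The reduced-form coefficient $\Pi_{NT}$ enters only through $x_{it}^{o}$, and $x_{it}^{o}$ cancels once $\beta=\beta_0$ is imposed. So the score array $\{g_{it}(\beta_0)\}$ is a measurable function of $(z^{o},u^{o})$ alone. The same holds for its exact unit, time, interaction and cell components in \eqref{eq:decomp}. Those components are conditional expectations of $z^{o}_{it}u^{o}_{it}$ given $A_i$, $D_t$ or $(A_i,D_t)$, and under the maintained local family, as in Section~\ref{sec:localexperiment}, the joint law of $(z^{o},u^{o})$ does not depend on $\Pi_{NT}$.

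The empirical projections \eqref{eq:proj} are linear in the scores. Computed from oracle scores at $\beta_0$, they therefore also contain no first-stage drift. The same is true of the multiplier construction \eqref{eq:pwb}, because the multipliers are drawn independently of the data.

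\emph{Step 2: transfer to the feasible score.} Decompose the feasible score as $\widehat g_{it}(\beta_0)=g_{it}(\beta_0)+\{\widehat g_{it}(\beta_0)-g_{it}(\beta_0)\}$. Under Assumption~\ref{ass:boot}(i)--(iv), the oracle array satisfies the hypotheses of the selected PWB-H theorem of \citet{hounyolin2026}, so the bootstrap law of the oracle null statistic is consistent.

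The remainder is $R_{g,NT}(\beta_0)$ from Assumption~\ref{ass:factororth}. Proposition~\ref{prop:oracle} makes it $o_p(1)$ at scale $a_{NT}$. Assumption~\ref{ass:boot}(v) supplies the same negligibility in bootstrap probability for the projections \eqref{eq:proj} built from $\widehat g$. A standard triangle inequality in $d_{BL}$ then gives consistency of the feasible bootstrap law.

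The only route by which $\Pi_{NT}$ can reenter is the estimated nuisance projection. The projection $\widehat{\mathcal M}$ is estimated from stacked matrices that include the $X_j$, and the $X_j$ depend on $\Pi_{NT}$. That dependence is confined to $R_{g,NT}$ and its bootstrap analogue. This proves the ``except through nuisance estimation'' clause.

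\emph{Step 3: uniformity.} For uniformity over a bounded local set of first-stage coefficients, the oracle part of the argument has a law that is invariant in $\Pi_{NT}$. Its approximation error is therefore trivially uniform in $\Pi_{NT}$. Taking suprema over the local set, uniform bootstrap validity follows exactly when the factor-transfer bounds and their bootstrap analogues hold uniformly over that set.

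\emph{Main obstacle.} The hardest step is the bootstrap analogue of the transfer. Feasible residuals enter \eqref{eq:pwb} multiplied by $\eta_i^*$, $\xi_t^*$ and the product $\eta_i^*\xi_t^*$, so one must show that the remainder contributions remain negligible after multiplication. This requires conditional second-moment bounds of the form
\[
 \E^*\Bigl\|\textstyle\sum_{i,t}\{\widehat W_{it}-W_{it}\}\eta_i^*\xi_t^*\Bigr\|^2,
\]
normalized by $a_{NT}^2$. Assumption~\ref{ass:boot}(v) is stated as the maintained condition delivering exactly this. I would verify it through the mean-square consistency of the projections in Assumption~\ref{ass:boot}(ii) together with the kernel bounds on the multiplier covariances.
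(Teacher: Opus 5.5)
Your proposal follows essentially the same route as the paper's proof: the algebraic null identity $g_{it}(\beta_0)=z_{it}^{o}u_{it}^{o}$, application of the selected PWB-H theorem to the oracle null-score array, transfer to the feasible score through the conditional bootstrap analogue of Proposition~\ref{prop:oracle}, and uniformity tied to uniform transfer bounds. Your observation that $\Pi_{NT}$ can re-enter only through the stacked $X_j$ used to estimate the common projection is a useful concrete addition. One caution concerns Step~3, where you use that the law of $(z^{o},u^{o})$ does not depend on $\Pi_{NT}$. The local family of Section~\ref{sec:localexperiment} does not supply this: it perturbs $\beta$ while holding the law of $(x^{o},z^{o},u^{o})$ fixed. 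The invariance must therefore be stated as a convention on how the local first-stage class is parametrized, and the triangle-inequality argument then shows only that uniform transfer bounds are sufficient, not the ``exactly when'' you assert.
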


\noindent\emph{Proof.} See Online Appendix~\ref{app:proof:prop-boot}.

\begin{remark}[Division of labor]
Factor adjustment removes low-rank nuisance components, the Anderson--Rubin restriction handles weak identification, and PWB-H approximates the dependence-driven null law. None of the three substitutes for the others.
\end{remark}

\begin{remark}[Factor re-estimation]
Re-estimating factors in every bootstrap replication is not automatically ``safest.'' It changes the bootstrap map and requires a proof that the factor estimator reproduces its sampling error conditionally. Holding the factor space fixed is first-order valid under Proposition~\ref{prop:oracle}; re-estimation is valid only under a stronger bootstrap oracle-equivalence condition. The implementation should match the theorem actually proved.
\end{remark}

\subsection{Fixed-rank transfer under the exact PWB-H conditions}\label{subsec:exactpwb}

The next result deliberately incorporates the assumptions and the exact statistic of
\citet[Theorem~3.3]{hounyolin2026} by reference. This avoids replacing their
regime-specific representation, moment, spatial, serial, bandwidth, heterogeneity,
rescaling, and classifier conditions with a shorter but potentially nonequivalent list.

\paragraph{Purpose.}
The primitive plug-in bootstrap covers a transparent benchmark. The next theorem
provides the complementary transfer result for the broader PWB-H procedure under its
exact oracle statistic, scaling, classifier, and fixed-rank conditions.

\begin{theorem}[Fixed-rank feasible PWB-H transfer]\label{thm:exactpwb}
Suppose the oracle null-score array and the oracle PWB-H statistic satisfy all assumptions
of \citet[Theorem~3.3]{hounyolin2026}. Suppose further that, uniformly on the null set
$\mathcal B_0$, the feasible factor adjustment preserves (i) the normalized original score,
(ii) the normalized bootstrap score conditionally, (iii) every variance and discriminant
quantity entering PWB-H, and (iv) the rank stratum required by any generalized inverse used
in the present paper. Then, in bounded-Lipschitz distance conditional on the data, the feasible PWB-H
law and the oracle PWB-H law differ by $o_p(1)$. Consequently, the feasible statistic
inherits the oracle uniform approximation over the regimes covered by that theorem. No claim is made in the excluded
I\&N regime, and no claim of uniform conservativeness is made there.
\end{theorem}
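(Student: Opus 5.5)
The plan is to treat the oracle PWB-H result of \citet[Theorem~3.3]{hounyolin2026} as a black box and show that the feasible statistic is a uniformly small perturbation of the oracle one, both unconditionally and conditionally on the data. Let $S_{NT}(\beta)$ and $\widehat S_{NT}(\beta)$ denote the oracle and feasible normalized null scores. Let $S_{NT}^*(\beta)$ and $\widehat S_{NT}^*(\beta)$ denote their bootstrap counterparts built from \eqref{eq:pwb}. Let $\Theta_{NT}$ and $\widehat\Theta_{NT}$ collect every variance correction $\widehat m_{A,i},\widehat m_{D,t}$, regime-discriminant quantity, and studentizing matrix entering the PWB-H statistic. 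The statistic is then a fixed map $\Phi(S,\Theta)$, and the bootstrap statistic is $\Phi(S^*,\Theta^*)$. The first step is to write
\[
 \Phi(\widehat S^*,\widehat\Theta)-\Phi(S^*,\Theta)
 =\{\Phi(\widehat S^*,\widehat\Theta)-\Phi(S^*,\widehat\Theta)\}
 +\{\Phi(S^*,\widehat\Theta)-\Phi(S^*,\Theta)\},
\]
and the analogous split for the original statistic.

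The second step is to control each piece. Condition (ii) gives $\widehat S^*-S^*=o_{p^*}(1)$ in outer probability, uniformly on $\mathcal B_0$. Condition (iii) gives $\widehat\Theta-\Theta=o_p(1)$. Condition (iv) keeps the estimated covariance on the same rank stratum with probability approaching one. On that stratum the Moore--Penrose inverse is continuous, so $\Phi$ is continuous in $(S,\Theta)$. Tightness of $S^*$ and of the oracle quantities, which is supplied by the oracle theorem, then upgrades continuity to uniform continuity on compacts.

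For $f$ bounded Lipschitz, I then bound $|\E^*f(\widehat T^*)-\E^*f(T^*)|$ by $\varepsilon+2\Pr^*(|\widehat T^*-T^*|>\varepsilon)$. The second term is $o_p(1)$, uniformly over the null set. Taking the supremum over the bounded-Lipschitz unit ball gives the conditional $d_{BL}$ claim. The same argument applied to the original statistic, using conditions (i), (iii) and (iv), gives $\widehat T_{NT}-T_{NT}=o_p(1)$.

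The third step combines these perturbations with the oracle result. The oracle result supplies the uniform approximation of the law of $T_{NT}$ by the oracle bootstrap law over the covered regimes. Because both perturbations are uniform on $\mathcal B_0$, the feasible statistic inherits that approximation. To pass to critical values and size, I would invoke the anti-concentration of the oracle limit that is part of the hypotheses of \citet[Theorem~3.3]{hounyolin2026}, in the style of Assumption~\ref{ass:uniform-ar}. Anti-concentration turns $o_p(1)$ perturbations of the statistic and of the conditional quantile into $o(1)$ changes in rejection probability. Theorem~\ref{thm:uniform-ar} then applies directly.

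The I\&N regime is excluded simply because the oracle theorem makes no claim there; nothing is transferred.

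The main obstacle is the regime classifier. PWB-H selects variance corrections through a discontinuous discriminant, so continuity of $\Phi$ fails exactly where the classifier switches. To handle this, I would use condition (iii) on the discriminant quantities together with the oracle theorem's separation of regimes. I would first show that the feasible and oracle classifiers select the same regime with probability approaching one, uniformly on $\mathcal B_0$. On that event $\Phi$ is evaluated with a common branch, and the continuity argument above applies. If the oracle theorem tolerates misclassification near regime boundaries, the argument would instead borrow its own boundary-robustness step, applied to the feasible discriminant.
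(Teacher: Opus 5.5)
Your proposal follows essentially the same route as the paper. The paper also argues that all score, variance, discriminant, classifier, and fixed-rank inputs differ by $o_p(1)$, and by $o_{p^*}(1)$ conditionally; it invokes continuity of the PWB-H map on the maintained regime and rank stratum to obtain $d_{BL}\{\mathcal L^*(\widehat{\mathcal T}_{NT}^{*}),\mathcal L^*(\mathcal T_{NT}^{*o})\}=o_p(1)$ and $\widehat{\mathcal T}_{NT}-\mathcal T_{NT}^{o}=o_p(1)$, and then concludes by the triangle inequality with \citet[Theorem~3.3]{hounyolin2026}. Your explicit treatment of the classifier discontinuity (showing the same regime is selected with probability approaching one) and of anti-concentration when passing to quantiles is more careful than the paper, which simply folds both into the assumed continuity on the maintained regime stratum.
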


\noindent\emph{Proof.} See Online Appendix~\ref{app:proof:thm-exactpwb}.

\paragraph{Interpretation.}
The transfer result is deliberately procedure-specific. Feasible validity is inherited
only because the estimated projection preserves every input used by the oracle PWB-H
statistic and regime classifier at the required scale.

\section{Implementation, diagnostics, and numerical evidence}\label{sec:implementation}

The theory changes empirical practice only if it alters what researchers conclude from
real designs. This section organizes implementation around three questions: how much
excluded variation survives the final projection, which structural directions remain
visible, and whether the projected-score law permits conventional studentization. The
simulations isolate the mechanism; the applications show how it changes economically
meaningful conclusions.

Implementation has six steps: apply the same control residualization to all variables;
estimate joint nuisance spaces and use one final equation-compatible projection; construct
the projected score, Jacobian, covariance inputs, and their normalizations; report
singular-value uncertainty, spectral gaps, subspace stability, rank sensitivity, and the
Projected Information Matrix; compute a null-restricted Anderson--Rubin statistic; and
invert it using chi-square critical values only on Gaussian fixed-rank strata and the
appropriate bootstrap otherwise.

\paragraph{Comparison of reported procedures.}
Conventional Wald/2SLS answers what regular linearization would conclude; the projected
Jacobian identifies surviving directions; the PIM combines sensitivity and precision on
a Gaussian fixed-rank stratum; and Gaussian or bootstrap AR procedures evaluate the
null-score law under their respective assumptions. The Online Appendix gives the full
comparison table and limitations of each procedure.

The paper does not propose a localized conditional projection method and therefore does
not claim shorter confidence sets than classical AR or Kleibergen-type procedures. Its
contribution is diagnostic and inferential reconstruction after nuisance projection.
The relevant comparison is among pre-projection conclusions, post-projection sensitivity
and information, and identification-robust confidence-set geometry.

The empirical outputs answer three questions: how much excluded variation survives
projection, which structural directions remain visible, and whether the projected-score
law is regular enough for covariance studentization. Unbounded, disconnected, or highly
elongated robust confidence regions are evidence of limited projected information, not
computational failures. The Online Appendix gives the reproducibility checklist and
detailed simulation design.

\subsection{Monte Carlo validation}\label{sec:mc}

The simulation reproduces a common empirical mistake: a large raw first stage generated
by variation that is also confounded. The comparison asks whether the conventional
diagnostic supports the correct economic sign and whether projected diagnostics reveal
how much valid information survives.

The simulations are designed to distinguish raw relevance from information that survives
the final specification. The data contain a heterogeneous common factor that creates a
very strong raw association between the instrument and endogenous regressor but also
enters the structural disturbance. The valid idiosyncratic first stage is varied across
strong, weak, and near-boundary designs. We compare the raw model, two-way demeaning, and
a common rank-one projection applied identically to the outcome, endogenous regressor,
and instrument. The true structural coefficient is $-0.5$.

Table~\ref{tab:mcvalidation} reports a 5,000-replication implementation with
$(N,T)=(50,30)$. In the weak-surviving-information design, the raw median first-stage
$F$ is 501.93, yet the raw IV estimator is centered near $0.437$ and both the Wald and
AR procedures reject the true value essentially always. This does not represent a size failure of
identification-robust inference: the raw moment is invalid because the common factor
violates exclusion. After the common projection, the median $F$ is $21.57$, the median
projected-information measure is $0.0144$, and the median IV estimate is $-0.494$. The Gaussian fixed-rank AR reference value is mildly oversized in some transformed
designs. This is expected outside its maintained Gaussian score conditions and reinforces
the need for the regime-appropriate bootstrap developed in Section~\ref{sec:bootstrap}. The main economic conclusion is nevertheless sharp: a
large pre-projection first stage can support the wrong sign while the final projected
specification correctly reveals weak information.

\begin{table}[t]
\centering
\caption{Monte Carlo validation of post-projection information diagnostics}
\label{tab:mcvalidation}
\scriptsize
\setlength{\tabcolsep}{3.5pt}
\begin{tabular}{@{}llrrrrr@{}}
\toprule
Design & Specification & \shortstack{Median\\$\widehat\beta$} & \shortstack{Median\\$F$} & \shortstack{Median\\$\widehat{\mathcal I}_P$} & \shortstack{Wald\\coverage} & \shortstack{AR\\rejection} \\
\midrule
Strong surviving & Raw & 0.231 & 953.02 & 0.6353 & 0.000 & 1.000 \\
 & Two-way FE & -0.497 & 177.85 & 0.1186 & 0.779 & 0.227 \\
 & Common projection & -0.503 & 725.44 & 0.4836 & 0.940 & 0.063 \\
Weak surviving & Raw & 0.437 & 501.93 & 0.3346 & 0.000 & 1.000 \\
 & Two-way FE & -0.457 & 5.81 & 0.0039 & 0.917 & 0.226 \\
 & Common projection & -0.494 & 21.57 & 0.0144 & 0.939 & 0.066 \\
Near boundary & Raw & 0.477 & 457.460 & 0.30500 & 0.000 & 1.000 \\
 & Two-way FE & -0.434 & 2.337 & 0.00156 & 0.943 & 0.224 \\
 & Common projection & -0.342 & 2.338 & 0.00156 & 0.943 & 0.058 \\
\bottomrule
\end{tabular}
\begin{minipage}{0.95\textwidth}\footnotesize
Notes: True coefficient $\beta_0=-0.5$; 5,000 replications. ``Common projection'' applies
the same rank-one estimated loading and factor spaces to all three variables. AR rejection is evaluated at the true value using the Gaussian fixed-rank reference value.
The package includes an exploratory two-way multiplier routine, but formal bootstrap
claims remain tied to the exact primitive or PWB-H conditions in Section~\ref{sec:bootstrap}.
The simulation is intended to
validate the diagnostic mechanism, not to replace the broader designs in the Online
Appendix.
At the near-boundary design, the two transformed specifications lie at essentially the
same information floor. Their unrounded median $F$ values are $2.337171$ and $2.337882$,
and their unrounded median $\widehat{\mathcal I}_P$ values are $0.00155811$ and
$0.00155859$; the estimates and AR rejection frequencies remain materially different.
\end{minipage}
\end{table}

The projected first-stage $F$, projected Jacobian, and PIM need not move together. In the
flagship application, the PIM can rise while the Jacobian falls because score variability
declines faster than moment sensitivity. The projected $F$ should therefore not be read
as a conventional sufficient measure of identification strength: relevance, sensitivity,
and covariance-adjusted information are distinct objects.

\subsection{Flagship application: external U.S. monetary shocks and international transmission}

The economic question is whether U.S. monetary tightening produces a persistent foreign
output contraction. A conventional panel can answer using variation that largely
reflects global comovement. The projected analysis asks the more demanding question:
does the conclusion survive after removing the estimated common component while
retaining differential country exposure to the external policy shock?

\label{sec:application}

The flagship application combines Release~6 of the Jord\`a--Schularick--Taylor
Macrohistory Database \citep{jordaschularicktaylor2017} with the public high-frequency
U.S. monetary-policy surprises maintained by the Federal Reserve Bank of San Francisco
and developed in \citet{bauerswanson2023}. We aggregate the event-level orthogonalized
surprise \texttt{MPS\_ORTH} within each calendar year.

A common U.S. shock is absorbed by unrestricted year effects. Identification therefore
comes from predetermined cross-country exposure. The baseline exposure is the lagged JST
exchange-rate-peg indicator. For country $i$ and year $t$, the endogenous policy
interaction is
\[
 x_{it}=\Delta i_t^{US}e_{i,t-1},
\]
and the excluded instrument is
\[
 z_{it}=MPShock_t e_{i,t-1}.
\]
The outcome at horizon $h$ is annualized cumulative real GDP-per-capita growth from $t$
to $t+h$. Controls include lagged domestic GDP growth, inflation, credit growth, the
lagged exposure, country effects, and year effects. The external shock strengthens the
timing argument, but the interaction design still requires the lagged exchange-rate
regime to be predetermined and excluded from future domestic outcomes except through
differential exposure to U.S. policy.

For each horizon $h=1,\ldots,5$, we compare country/year fixed-effects IV with the same
model after the outcome, endogenous policy interaction, and excluded shock interaction
are first residualized on the same control matrix and then passed through one common
rank-one two-sided projection. The controls are partialled out once rather than projected
as additional outcome matrices. The full-sample projection is used as a transparent
empirical diagnostic; formal feasible
inference additionally requires the oracle-transfer and rank-separation conditions stated
in the theory. Figure~\ref{fig:externalinformation} reports the first-stage statistic,
absolute projected Jacobian, Projected Information Matrix, and IV coefficient.

\begin{figure}[t]
\centering
\includegraphics[width=0.94\textwidth]{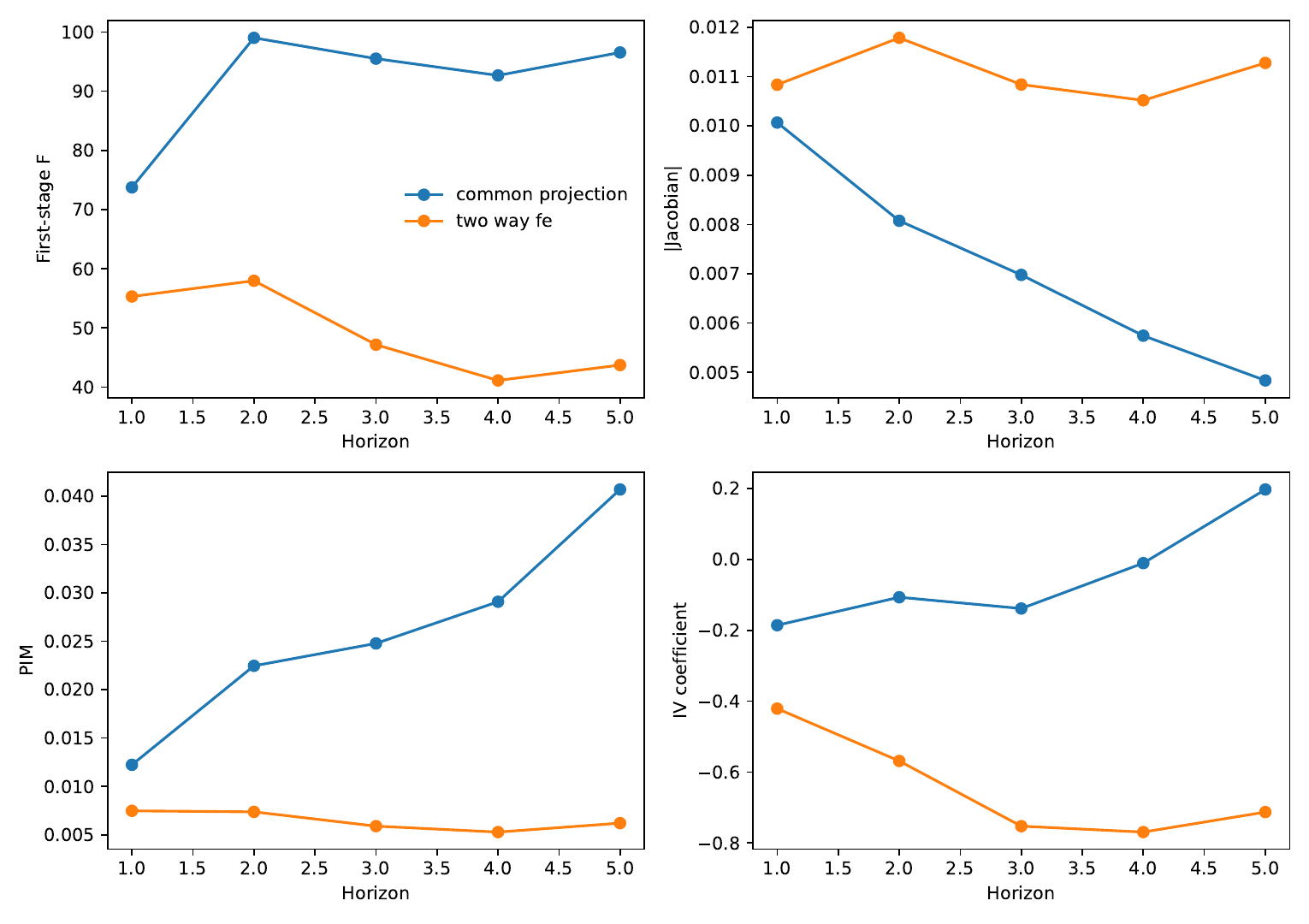}
\caption{External U.S. monetary shocks: information and coefficients by horizon}
\label{fig:externalinformation}
\begin{minipage}{0.94\textwidth}\footnotesize
Notes: The instrument is the annual sum of orthogonalized FOMC surprises interacted with
the lagged exchange-rate-peg indicator. The endogenous regressor is the annual change in
the U.S. short-term rate interacted with the same exposure. The outcome is annualized
cumulative real GDP-per-capita growth.
\end{minipage}
\end{figure}

The comparison materially changes the economic conclusion. Under two-way fixed effects,
the estimated coefficient is negative at every horizon, ranging from -0.77 to
-0.42. After common projection, the coefficient moves toward zero and becomes
positive at horizon five. At horizon three, the estimate changes from -0.75 to
-0.14. Thus, the apparent persistence is not robust to removing the estimated common
low-rank component. Under the maintained interaction-IV design, the country-specific
shock-exposure component retained by the final specification supports a much smaller and
less stable response.

The information diagnostics move in different directions. The projected Jacobian declines
from 0.0101 at horizon one to 0.0048 at horizon five. Yet the conventional partial first-stage $F$ and the PIM
rise after projection because score variability falls more sharply. At horizon three,
the first-stage statistic rises from 47.2 to 95.5, while the PIM rises from
0.0059 to 0.0248 despite a substantial decline in the Jacobian. This divergence is
the empirical point: relevance, moment sensitivity, and covariance-adjusted information
are distinct objects.

Identification-robust inference remains cautious. Over the reported $[-25,25]$ grid, the
two-way Gaussian-reference Anderson--Rubin confidence set touches both boundaries at every
horizon and under both specifications. Because these sets use the Gaussian fixed-rank
reference law, they are reported as transparent identification diagnostics rather than as
a universal two-way bootstrap result. Neither the conventional first-stage statistic nor
the projected PIM therefore supports a precise causal spillover coefficient under the
maintained interaction-IV design. The appropriate conclusion is qualitative: the persistent
negative response visible before common projection is not robust to removal of the global
low-rank component.

The baseline projected instrument retains about 55.3 percent of its residual variance
on average across horizons. Robustness exercises use raw rather than orthogonalized surprises,
strict-peg and trade-openness exposures, GDP and house-price outcomes, factor ranks zero
through two, and horizons one, three, and five. On the archived Linux environment, the complete flagship pipeline runs in 24.8 seconds
and the secondary credit pipeline in 5.3 seconds. These concurrent master-run timings include data construction, projection, diagnostics,
confidence-set inversion, and figure generation; they are recorded in the replication
report and will vary with hardware.

\section{Conclusion}\label{sec:conclusion}

Nuisance removal can alter the sign, persistence, and precision of structural conclusions
because it changes the variation left to distinguish economic mechanisms. Identification
must therefore be evaluated in the final projected moment problem. The projected
Jacobian records which directions remain visible, the projected-score law records their
precision, and, on Gaussian fixed-rank strata, the PIM combines the two. The theory
allows heterogeneous information rates, non-Gaussian two-way interaction limits,
feasible factor transfer, identification-robust tests, bootstrap procedures, and
uncertainty measures for the projected spectrum, rank, subspaces, and information matrix.

The simulations and applications show why the distinction matters. A raw first-stage
statistic above 500 can coexist with an invalid moment and an IV estimate of the wrong
sign; equation-compatible projection removes the confounding component and moves the
coefficient toward its structural value while revealing the surviving information. In the external U.S. monetary-shock application, conventional
estimates imply persistent foreign output contractions, whereas common projection moves
the response toward zero and reverses its sign at five years. The Jacobian falls while
the first-stage statistic and PIM rise, and Gaussian-reference Anderson--Rubin sets remain
unbounded. No
single diagnostic therefore summarizes the evidence.

The practical implication is concrete. An empirical paper that removes substantial
nuisance variation should report how much excluded variation survives, the projected
Jacobian spectrum, score uncertainty, the PIM, effective rank, and the geometry of an
identification-robust confidence set. These diagnostics should accompany rather than
replace the economic argument for instrument validity. Nuisance robustness and
informative identification are complementary requirements: a credible specification can
still leave the data unable to distinguish economically important mechanisms precisely.
The same diagnostic applies to classic IV designs, including compulsory-schooling
studies of returns to education \citep{angristkrueger1991} and historical-instrument
studies of institutions and growth \citep{acemoglujohnsonrobinson2001}: identifying
variation should be assessed after the exact final controls. The paper does not
re-estimate those studies.

The principle can extend to nonlinear GMM or simulated method of moments when nuisance
removal is represented by a common differentiable transformation of the model and data,
the simulated moment map admits a uniform local expansion, and simulation error is
asymptotically negligible at the score scale. The relevant objects would again be the
post-transformation Jacobian and score law. Non-smooth solution mappings, endogenous
simulation draws, or changing active constraints require separate theory and are not
covered here.

The broader lesson is the Projected Information Principle: projection changes the
experiment. Identification, information, and inference must therefore be reconstructed
from the moments and dependence structure that remain.

\bibliographystyle{apalike}

\newpage
%% Stage 50.1 rigorous audit
%\documentclass[12pt]{article}
%\usepackage{amsmath,amsthm,amssymb,amsfonts,bm,mathtools}
%\usepackage{geometry,setspace,natbib,graphicx,booktabs,enumitem,microtype}
%\usepackage{xcolor,hyperref}\usepackage{appendix}\usepackage{longtable}\usepackage{array}
%\usepackage{tikz}
%\usepackage{xr-hyper}
%\externaldocument{identification_information_nuisance_projection_main}
%\usetikzlibrary{arrows.meta,positioning,shapes.geometric}
%\geometry{margin=1in}
%\onehalfspacing
%\hypersetup{colorlinks=true,citecolor=blue,linkcolor=blue,urlcolor=blue,hypertexnames=false,pdfauthor={Ulrich Hounyo},pdftitle={Online Appendix to Identification and Information after Nuisance Projection}}
%
%\newtheorem{assumption}{Assumption}
%\newtheorem{theorem}{Theorem}
%\newtheorem{proposition}{Proposition}
%\newtheorem{corollary}{Corollary}
%\newtheorem{lemma}{Lemma}
%\newtheorem{remark}{Remark}
%\newtheorem{example}{Example}
%\newtheorem{definition}{Definition}
%\newcommand{\E}{\mathbb E}
%\newcommand{\V}{\mathbb V}
%\newcommand{\Cov}{\operatorname{Cov}}
%\newcommand{\R}{\mathbb R}
%\newcommand{\op}{o_p}
%\newcommand{\Op}{O_p}
%\newcommand{\toP}{\xrightarrow{p}}
%\newcommand{\toD}{\xrightarrow{d}}
%\newcommand{\cF}{\mathcal F}
%\newcommand{\cI}{\mathcal I}
%\newcommand{\cZ}{\mathcal Z}
%\newcommand{\vech}{\operatorname{vech}}
%\newcommand{\rank}{\operatorname{rank}}
%\newcommand{\diag}{\operatorname{diag}}
%\newcommand{\plim}{\operatorname{plim}}
%\newcommand{\Var}{\operatorname{Var}}

\paragraph{Online Appendix to ``Identification and Information after Nuisance Projection''}
%\author{
%Ulrich Hounyo\thanks{%
%Department of Economics, University at Albany -- State University of New
%York, Albany, NY 12222, USA. E-mail: \texttt{khounyo@albany.edu}. }\\
%Department of Economics\\
%University at Albany, SUNY
%}
%\date{August 2, 2026}
%
%\begin{document}
\maketitle
\noindent
This Online Appendix contains implementation and simulation details, all formal proofs,
primitive verification roadmaps, factor-transfer calculations, and bootstrap-transfer
arguments. Related literature and contribution positioning are consolidated in the main
paper to avoid duplication. Notation and theorem numbering follow the main paper.

\begin{figure}[ht]
\centering
\begin{tikzpicture}[node distance=0.9cm and 1.2cm,>=Latex,
  box/.style={draw,rounded corners,align=center,text width=4.6cm,
  minimum height=1.0cm,fill=gray!8,font=\small},
  arr/.style={->,thick}]
\node[box] (raw) {Raw outcome, endogenous regressors, and instruments};
\node[box,right=of raw] (proj) {One common equation-compatible nuisance projection};
\node[box,below=of raw] (objects) {Projected Jacobian and projected-score law};
\node[box,right=of objects] (infer) {PIM, effective rank, and identification-robust inference};
\draw[arr] (raw) -- (proj);
\draw[arr] (proj) -- (infer);
\draw[arr] (raw) -- (objects);
\draw[arr] (objects) -- (infer);
\end{tikzpicture}
\caption{The projected-information workflow}
\label{fig:conceptual-workflow}
\begin{minipage}{0.92\textwidth}\footnotesize
Notes: The same final transformation must be applied to every variable entering the
structural moment. Identification is determined by the projected Jacobian; precision by
the projected-score law. The PIM combines them only on Gaussian fixed-rank strata.
\end{minipage}
\end{figure}

\begin{figure}[ht]
\centering
\begin{tikzpicture}[
 node distance=6mm,
 box/.style={draw,rounded corners,align=center,text width=5.1cm,minimum height=8mm,inner sep=4pt},
 arrow/.style={-{Latex[length=2mm]},thick}
]
\node[box] (raw) {Raw economic variation and potential confounding};
\node[box,below=of raw] (proj) {Remove fixed effects, controls, and latent common components};
\node[box,below=of proj] (cred) {Less confounding, but possibly less identifying variation};
\node[box,below=of cred] (remain) {Which economic directions remain informative, and with what precision?};
\node[box,below=of remain] (pim) {Projected moment experiment and Projected Information Matrix};
\draw[arrow] (raw) -- (proj);
\draw[arrow] (proj) -- (cred);
\draw[arrow] (cred) -- (remain);
\draw[arrow] (remain) -- (pim);
\end{tikzpicture}
\caption{Nuisance removal can improve specification while reducing identifying information.}
\label{fig:economic-motivation}
\end{figure}

\section{Implementation, reproducibility, and special cases}

Projection compatibility, the identification map, and null-score centering are algebraic;
weak-IV limits require the stated score experiment; factor adjustment requires normalized
oracle-equivalence remainders; and bootstrap validity is restricted to the primitive
benchmark or the exact PWB-H conditions. An application should therefore report the
factor estimator and factor-number rule, score normalizations, covariance-rank stratum,
spectral separation, and the dependence regime used for critical values.

A reproducible implementation should: residualize every variable on the same observed
controls; estimate joint nuisance spaces and apply one common final projection; construct
the projected Jacobian, structural null score, first-stage score, and covariance inputs;
report singular values, confidence intervals, spectral gaps, subspace-stability ratios,
effective rank, and projected-information uncertainty; invert a null-restricted robust
test on an economically meaningful grid; and retain unbounded or disconnected confidence
sets. The replication package should save intermediate projected variables, factor
choices, normalizations, covariance estimates, bootstrap draws, seeds, and all routines
needed to reproduce tables and figures.

\subsection{Programmatic algorithm for the reported diagnostics}\label{app:algorithm}

The implementation uses residualization, singular-value decompositions,
Moore--Penrose inverses, covariance assembly, and score inversion. It does not profile
over a high-dimensional nuisance parameter and uses neither BFGS, Newton--Raphson, MCMC,
simulated annealing, penalty paths, nor barrier methods. Fixed effects and factor spaces
are removed by linear algebra before structural inference. For a dense $N\times T$
matrix, a full SVD costs $O\{\min(NT^2,N^2T)\}$; truncated methods can be used when the
maintained factor rank is small.

Where a numerical threshold is required, the archived code uses rank tolerance
$10^{-10}$. Structural score inversion uses a deterministic reported grid and therefore
has no optimizer starting value or convergence criterion. A dense tensor grid is suitable
only for one- or two-dimensional structural parameters. For larger $p$, the same
null-score statistic should be evaluated by adaptive set-search, directional slices, or
subvector projection of the joint confidence region; the paper does not claim that a
dense $p$-dimensional grid is computationally attractive. The Monte Carlo generator uses
NumPy's \texttt{default\_rng} with seed \texttt{20260802}. The master driver records
software versions, checksums, and elapsed times, and missing benchmark files cause a
failure unless initialization is requested explicitly.

\begin{table}[t]
\centering
\caption{What the reported procedures answer}
\label{tab:procedure-comparison}
\small
\begin{tabular}{@{}p{0.22\textwidth}p{0.30\textwidth}p{0.38\textwidth}@{}}
\toprule
Procedure or diagnostic & Question answered & Limitation \\
\midrule
Conventional Wald/2SLS & What would regular linearization conclude? & Can be nonpivotal when the projected first stage is random at first order. \\
Projected Jacobian spectrum & Which structural directions remain visible after nuisance removal? & Sensitivity alone does not measure score precision. \\
Projected Information Matrix & How much covariance-adjusted information remains on a Gaussian fixed-rank stratum? & Not full-data Fisher information and not smooth across rank changes. \\
Gaussian-reference AR & Does a null-restricted score reject under a Gaussian fixed-rank approximation? & Descriptive if Gaussianity or rank stability is doubtful. \\
Primitive/PWB-H bootstrap AR & What is the null-score reference law under the maintained multiway dependence regime? & Valid only under the exact primitive or transfer assumptions. \\
\bottomrule
\end{tabular}
\end{table}

The procedure below implements the objects reported in the simulations and applications.
It uses direct linear-algebra operations and null-score inversion for the maintained
projected-moment model.

\begin{enumerate}[leftmargin=2.5em,label=\textbf{Step \arabic*.}]
\item \textbf{Common observed-control residualization.}
Form one control matrix $W$ containing included controls and fixed effects. For each
analysis matrix $A$, compute $A^{\perp}=A-W(W'W)^{\dagger}W'A$ using the same observation
set and the same $W$.

\item \textbf{Estimate one common nuisance space.}
Stack the residual matrices used by the equation, estimate the selected left and right
low-rank spaces, and construct
$\widehat M_L=I_N-\widehat P_L$ and $\widehat M_F=I_T-\widehat P_F$.

\item \textbf{Apply the equation-compatible projection.}
For every residual matrix, set
$\widehat A^{o}=\widehat M_L A^{\perp}\widehat M_F$. The same projector pair is used for
the outcome, endogenous regressors, and instruments, as required by
Lemma~\ref{lem:projectioncompat}.

\item \textbf{Construct the Jacobian and spectral diagnostics.}
Compute
\[
 \widehat J_{NT}^{s}=(NT)^{-1}\sum_{i,t}\widehat z_{it}^{o}
 \widehat x_{it}^{o\prime}.
\]
Report singular values, effective rank, spectral gaps, and uncertainty-to-gap ratios.

\item \textbf{Estimate score uncertainty and projected information.}
At candidate $\beta$, form
$\widehat g_{it}(\beta)=\widehat z_{it}^{o}
(\widehat y_{it}^{o}-\widehat x_{it}^{o\prime}\beta)$ and estimate the covariance of
the correctly normalized aggregate score. On a stable Gaussian fixed-rank stratum,
calculate
$\widehat{\mathcal I}_{P,NT}=\widehat H_{NT}'
\widehat\Omega_{NT}^{\dagger}\widehat H_{NT}$.

\item \textbf{Invert the null-restricted score statistic.}
On a prespecified grid $\mathcal G$, compute
\[
 AR_{NT}(\beta)=\widehat G_{NT}(\beta)'
 \widehat\Sigma_{g,NT}(\beta)^{\dagger}\widehat G_{NT}(\beta),
 \qquad
 \widehat G_{NT}(\beta)=\sum_{i,t}\widehat g_{it}(\beta).
\]
Use Gaussian critical values only under Proposition~\ref{prop:ar-limit}; otherwise use
the primitive bootstrap or PWB-H under its exact assumptions. Preserve disconnected
components and report a set as unbounded on the grid when it touches both endpoints.

\item \textbf{Save verification objects.}
Save projected matrices, factor choices, singular values, covariance ranks, PIM inputs,
accepted grid points, seeds, software versions, checksums, and elapsed times. The master
driver compares reproduced CSV files with archived benchmarks.
\end{enumerate}

\subsection{Projection stability and rank boundaries}
\label{app:projection-stability}

The oracle residual operator
$\mathcal M_0(A)=M_{\mathcal L_0}AM_{\mathcal F_0}$ depends only on the maintained
nuisance spaces. Loss of structural rank in $J_{NT}$ does not make this oracle operator
singular. Feasible projection is a different issue: consistency of
$\widehat P_{\mathcal L}$ and $\widehat P_{\mathcal F}$ requires fixed nuisance ranks and
separating eigenvalue gaps. Davis--Kahan perturbation bounds then control projector error
by spectral-estimation error divided by the relevant gap. If a nuisance eigengap closes,
the projector may be discontinuous and no uniform feasible-transfer claim is made across
that boundary.

Tikhonov residualization is not a substitute for this condition. A ridge matrix is
generally neither idempotent nor an exact annihilator of the nuisance space and therefore
changes the transformed equation. It can be reported as a numerical sensitivity check
only if its induced bias is separately controlled; the paper's inferential results use
orthogonal projectors and fixed-rank Moore--Penrose inverses.

\subsection{Uniformity, boundaries, and constrained parameter spaces}

The maintained model is a linear projected-moment model. The paper makes three distinct claims. Structural-equation compatibility and null-score
centering are exact algebraic statements. Spectral and PIM delta methods are pointwise on
separated fixed-rank strata. Theorem~\ref{thm:uniform-ar} gives uniform size only over
classes satisfying the stated uniform score and critical-value approximation. Exactly
zero first stages create no $0/0$ in the null-score statistic because no directional
first-stage normalization is used. Arbitrary nonlinear parameter manifolds are not
covered; in the maintained linear model, intersection with a closed restriction set
preserves joint coverage, while subvector projection can be conservative and disconnected.

Useful descriptive checks include the share of instrument variation removed by projection
and the information curve
\[
 \widehat I(T_0)=N\,\widehat\Pi(T_0)'\widehat Q_{zz}(T_0)
 \widehat\Omega_v(T_0)^{\dagger}\widehat Q_{zz}(T_0)\widehat\Pi(T_0),
 \qquad T_0\le T.
\]
A flat curve indicates that additional dates add little first-stage information, but it
is not a pretest for switching to conventional inference.

The framework nests familiar cases. With no latent factors, the oracle-transfer conditions
disappear. Under stable full-rank identification and a regular score CLT, projected GMM
has conventional asymptotic normality. In scalar IV, the spectrum reduces to the projected
first-stage moment. If one score projection and the interaction component vanish, the
problem reduces to one-way dependence. When
$\Omega_{v,NT}\asymp T^{-1}$, the usual $\sqrt{NT}$ localization is recovered. The full
machinery is most useful when excluded variation, endogeneity, and score dependence load
on different panel dimensions.

\section{Monte Carlo and empirical validation design}

\subsection{Theorem-aligned Monte Carlo design}

The baseline validation creates a heterogeneous common factor that is simultaneously
relevant for the instrument, endogenous regressor, and structural disturbance. Raw IV is
therefore strongly relevant but invalid. A small idiosyncratic coefficient controls the
information that remains after the common projection. This design separates three
objects that conventional simulations often conflate: raw correlation, valid
post-projection relevance, and sampling precision.

For each replication, generate
\begin{align*}
 z_{it}&=\lambda_i^z f_t+e_{it}^z,\\
 x_{it}&=\pi e_{it}^z+\lambda_i^x f_t+v_{it},\\
 y_{it}&=\beta_0x_{it}+\lambda_i^y f_t+u_{it},
\end{align*}
where $(u_{it},v_{it})$ are correlated and all idiosyncratic innovations are independent
of $e_{it}^z$. Heterogeneous loadings prevent two-way additive effects from removing the
common component. The three values $\pi\in\{0.35,0.06,0.02\}$ produce strong, weak, and
near-boundary surviving information. The same rank-one joint projection is estimated
from the stacked outcome, regressor, and instrument matrices and applied to each matrix.

The reported core design uses 5,000 replications and reports: score-law approximation;
raw, two-way-FE, oracle, and feasible results; Wald coverage; AR size and power;
Gaussian-reference AR rejection; singular-value confidence-interval coverage; rank
recovery; subspace error; and projected-information coverage. Additional designs vary
$(N,T)$, factor strength, factor-number misspecification, unit and time score projections,
and finite-rank interaction components. A separate primitive design retains the
information-rate construction already stated in the theory, with
$\pi_{NT}=c/\sqrt{NT^\delta}$ and $\Omega_{v,NT}\asymp T^{-\delta}+T^{-1}$.

\subsection{External monetary-shock flagship application}

The flagship application combines JST Release~6 with the Federal Reserve Bank of San
Francisco workbook \texttt{monetary-policy-surprises-data.xlsx}. The analysis uses
\texttt{FOMC (update 2023)} and constructs annual shocks by summing event-level
\texttt{MPS} and \texttt{MPS\_ORTH}. The baseline uses \texttt{MPS\_ORTH}.

For horizon $h$,
\begin{align*}
 y_{it}(h)
 &=\frac{100}{h}
 \{\log RGDPpc_{i,t+h}-\log RGDPpc_{it}\},\\
 x_{it}
 &=\Delta i_t^{US}e_{i,t-1},\\
 z_{it}
 &=MPShock_t e_{i,t-1}.
\end{align*}
The baseline exposure is the lagged peg indicator. Controls are lagged domestic GDP
growth, inflation, real credit growth, and exposure, together with country and year
effects. House-price growth, strict-peg exposure, and lagged trade openness are
alternatives.

The code residualizes the outcome, endogenous interaction, and instrument on one common
control matrix; constructs the largest balanced common-support rectangle; estimates one
common left and one common right nuisance space from the three residualized matrices; and
applies the same projectors to those three matrices. The controls are not projected a
second time after Frisch--Waugh residualization. For horizons one through five it reports the coefficient, first-stage $F$,
projected Jacobian, PIM, effective rank, surviving instrument variance, and AR geometry.

Under two-way fixed effects the GDP coefficient is negative at every horizon. After
rank-one common projection it moves toward zero and is positive at horizon five. The
projected Jacobian declines with the horizon, while the PIM rises because projected score
variation falls. The Gaussian fixed-rank reference AR confidence sets touch both ends of the reported
$[-25,25]$ grid in every baseline specification and are recorded as unbounded on the
grid. They are descriptive identification diagnostics unless the maintained Gaussian
score and covariance-rank conditions hold.

The high-frequency shock improves the timing of the common monetary innovation, but the
interaction design also relies on the predeterminedness and exclusion of lagged
exchange-rate exposure. The raw and projected specifications are therefore compared
conditional on the same maintained exposure restriction; projection is not represented
as creating exogeneity. The application therefore demonstrates how common projection
changes the apparent persistence and information content of international spillovers; it
does not claim a precise causal elasticity.

\subsection{Secondary domestic-credit validation}

The secondary JST exercise uses current real bank-credit growth instrumented by its
second and third lags to explain future real house-price growth. Its detailed construction,
robustness analysis, and machine-readable results remain in the replication archive.

\begin{appendices}

\section{Proofs and technical lemmas}\label{app:proofs}

\begin{lemma}[Uniform feasible-to-oracle transfer]\label{lem:uniform-transfer}
Let $\mathfrak P_{NT}$ be a maintained law class. Assume, uniformly over
$P\in\mathfrak P_{NT}$:
\begin{enumerate}[label=\textup{(U\arabic*)},leftmargin=2.7em]
\item\label{cond:uniform-score}
For every $\varepsilon>0$,
\[
 \sup_{P\in\mathfrak P_{NT}}
 P\!\left\{
 c_{NT,P}^{-1}\|R_{g,NT,P}(\beta_0)\|>\varepsilon
 \right\}\to0,
 \tag{U1}\label{eq:uniform-factor-transfer}
\]
where $R_{g,NT,P}(\beta_0)$ is the feasible-minus-oracle null-score remainder and
$c_{NT,P}$ is the null-score normalization.
\item\label{cond:uniform-covariance}
For every $\varepsilon>0$,
\[
 \sup_{P\in\mathfrak P_{NT}}
 P\!\left\{
 \|\widehat\Omega_{g,NT,P}-\Omega_{g,NT,P}\|>\varepsilon
 \right\}\to0
\]
on a fixed covariance-rank stratum.
\item\label{cond:uniform-projector}
For every $\varepsilon>0$,
\[
 \sup_{P\in\mathfrak P_{NT}}
 P\!\left\{
 \max_{\bullet\in\{\mathcal L,\mathcal F\}}
 \|\widehat P_{\bullet,NT,P}-P_{\bullet,NT,P}\|>\varepsilon
 \right\}\to0,
\]
with fixed nuisance ranks and separating eigengaps bounded below uniformly.
\item\label{cond:uniform-equivalence}
For every $\varepsilon>0$,
\[
 \sup_{P\in\mathfrak P_{NT}}
 P\{|T_{NT}-T_{NT}^{o}|>\varepsilon\}\to0.
\]
\item\label{cond:uniform-anticoncentration}
The oracle reference cdfs satisfy the uniform anti-concentration condition
\eqref{eq:uniform-anticoncentration}.
\end{enumerate}
Then
\[
 \sup_{P\in\mathfrak P_{NT}}\sup_x
 |P\{T_{NT}\le x\}-P\{T_{NT}^{o}\le x\}|\to0.
\]
Conditions \ref{cond:uniform-score}--\ref{cond:uniform-projector} are primitive sufficient
conditions for \ref{cond:uniform-equivalence} whenever the statistic is locally Lipschitz
on the maintained fixed-rank stratum.
\end{lemma}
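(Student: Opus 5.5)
The argument is a uniform sandwich comparing the cdf of $T_{NT}$ with the oracle cdf, using \ref{cond:uniform-equivalence} and \ref{cond:uniform-anticoncentration}; conditions \ref{cond:uniform-score}--\ref{cond:uniform-projector} enter only through the final claim that they imply \ref{cond:uniform-equivalence}. Fix $\delta>0$ and write $F^o_{NT,P}(x)=P\{T^o_{NT}\le x\}$. For every $x$ and $P$,
\[
 F^o_{NT,P}(x-\delta)-P\{|T_{NT}-T^o_{NT}|>\delta\}
 \le P\{T_{NT}\le x\}
 \le F^o_{NT,P}(x+\delta)+P\{|T_{NT}-T^o_{NT}|>\delta\}.
\]
Hence
\[
 \sup_P\sup_x|P\{T_{NT}\le x\}-F^o_{NT,P}(x)|
 \le \sup_P P\{|T_{NT}-T^o_{NT}|>\delta\}
 +\sup_P\sup_x\{F^o_{NT,P}(x+\delta)-F^o_{NT,P}(x-\delta)\}.
\]
The first term vanishes for each fixed $\delta$ by \ref{cond:uniform-equivalence}.

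The second term is the delicate step, because \eqref{eq:uniform-anticoncentration} controls oscillation only near the target quantile, not over all $x$. I would try two routes in order.

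\begin{itemize}
\item \emph{Global modulus.} Read \ref{cond:uniform-anticoncentration} in the strengthened form that the uniform Lévy modulus of the oracle reference laws vanishes: $\lim_{\delta\downarrow0}\limsup\sup_P\sup_x\{F_{NT,P}(x+\delta)-F_{NT,P}(x-\delta)\}=0$.
\item \emph{Transfer to the oracle law.} Pass this modulus from the reference cdf $F_{NT,P}$ to $F^o_{NT,P}$ by combining it with the uniform Kolmogorov approximation of Assumption~\ref{ass:uniform-ar}, applied to the oracle statistic. The triangle inequality adds at most twice the vanishing Kolmogorov error.
\end{itemize}

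Sending $N,T\to\infty$ and then $\delta\downarrow0$ gives the conclusion. If only the quantile-local version is available, the same sandwich still yields the weaker but sufficient statement that $\sup_P|P\{T_{NT}\le c\}-P\{T^o_{NT}\le c\}|\to0$ at $c=c_{NT,P}(1-\alpha)$. That is exactly what Theorem~\ref{thm:uniform-ar} uses, and I would record it as a remark. This anticoncentration step is the main obstacle.

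For the final claim, write $T_{NT}=\phi(\widehat S_{NT},\widehat\Omega_{g,NT})$ and $T^o_{NT}=\phi(S^o_{NT},\Omega_{g,NT})$, where $S$ denotes the normalized null score and $\phi$ is locally Lipschitz on the fixed-rank stratum.

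\begin{enumerate}
\item By \ref{cond:uniform-score}, $\|\widehat S_{NT}-S^o_{NT}\|\le c_{NT,P}^{-1}\|R_{g,NT,P}(\beta_0)\|$ is uniformly $o_p(1)$.
\item By \ref{cond:uniform-covariance}, the covariance error is uniformly $o_p(1)$, and the rank is fixed. Since the positive eigenvalues are bounded below, the Moore--Penrose inverse is Lipschitz there.
\item Condition \ref{cond:uniform-projector}, with its uniform eigengaps, keeps the feasible projectors in a neighbourhood where the rank stratum and the Lipschitz constant are uniform.
\item Uniform tightness of $S^o_{NT}$ confines the arguments to a compact set with probability at least $1-\eta$ uniformly in $P$. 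This tightness is implied by the uniform cdf approximation plus tightness of the reference laws, and I would add it explicitly if it is not.
\item On that compact set, $|T_{NT}-T^o_{NT}|\le L(\|\widehat S_{NT}-S^o_{NT}\|+\|\widehat\Omega_{g,NT}-\Omega_{g,NT}\|)$. Taking suprema over $P$ gives \ref{cond:uniform-equivalence}.
\end{enumerate}
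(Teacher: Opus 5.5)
Your argument is correct, and your sandwich is exactly the paper's proof: the paper bounds $P\{T_{NT}\le x\}$ by $P\{T^{o}_{NT}\le x\pm\delta\}\pm P\{|T_{NT}-T^{o}_{NT}|>\delta\}$, removes the last term with \ref{cond:uniform-equivalence}, and then simply ``sends $\delta\downarrow0$ through'' \ref{cond:uniform-anticoncentration}. Your objection to that last step is well founded, because the cited condition is only quantile-local (an oracle atom away from $c_{NT,P}(1-\alpha)$ together with $T_{NT}=T^{o}_{NT}+1/N$ already defeats the $\sup_x$ conclusion), so the global L\'evy-modulus strengthening, passed to the law of $T^{o}_{NT}$ through an oracle Kolmogorov approximation as you propose, is genuinely needed rather than a convenience.

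The paper gives no argument for the closing Lipschitz claim, so your sketch there, including the added uniform-tightness input, goes beyond it. One adjustment is needed: the proof of Theorem~\ref{thm:uniform-ar} compares distribution functions at $c_{NT,P}-\delta$, not at $c_{NT,P}$, so you should state your quantile-local fallback uniformly over a shrinking window $|x-c_{NT,P}(1-\alpha)|\le\delta_1$; the same sandwich delivers this.
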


\subsection{Proof of Lemma~\ref{lem:uniform-transfer}}\label{app:proof:uniform-transfer}
\begin{proof}
For any $\delta>0$,
\[
 P\{T_{NT}\le x\}
 \le P\{T_{NT}^{o}\le x+\delta\}
 +P\{|T_{NT}-T_{NT}^{o}|>\delta\},
\]
and the reverse inequality follows with $x-\delta$. Taking the supremum over $x$ and
$P$, condition~\ref{cond:uniform-equivalence} makes the stochastic-equivalence term
vanish for fixed $\delta$. Sending $\delta\downarrow0$ through
condition~\ref{cond:uniform-anticoncentration} proves the result.
\end{proof}

\begin{proposition}[Strong-identification root and Hausdorff recovery]
\label{prop:strong-set-recovery}
Let
\[
 \widehat{\mathcal Q}_{NT}(\beta)
 =\widehat q_{NT}-\widehat J_{NT}\beta
\]
be an affine feasible sample moment map on a compact parameter set containing $\beta_0$
in its interior. Suppose
\[
 \sigma_p(J_{NT})\ge c>0,\qquad
 \|\widehat J_{NT}-J_{NT}\|=o_p(1),\qquad
 \|\widehat{\mathcal Q}_{NT}(\beta_0)\|=O_p(r_{NT}),
\]
for deterministic $r_{NT}\downarrow0$. Then, with probability approaching one,
$\widehat J_{NT}$ has full column rank and the least-squares root
\[
 \widehat\beta_{NT}
 =\arg\min_{\beta}\|\widehat{\mathcal Q}_{NT}(\beta)\|^2
\]
is unique and satisfies
\[
 \|\widehat\beta_{NT}-\beta_0\|=O_p(r_{NT}),\qquad
 d_H(\{\widehat\beta_{NT}\},\{\beta_0\})=O_p(r_{NT}).
\]
In particular, if $r_{NT}=(NT)^{-1/2}$, the singleton identified set is recovered at the
standard root-$NT$ rate.
\end{proposition}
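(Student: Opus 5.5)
The plan is to reduce everything to Weyl's inequality for singular values together with the explicit least-squares formula for an affine map.

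\emph{Full column rank.} By Weyl's inequality (Proposition~\ref{prop:stability} with $\rho_{NT}=0$ suffices), $\sigma_p(\widehat J_{NT})\ge\sigma_p(J_{NT})-\|\widehat J_{NT}-J_{NT}\|\ge c-o_p(1)$. Hence the event $E_{NT}=\{\sigma_p(\widehat J_{NT})\ge c/2\}$ has probability tending to one. On $E_{NT}$ the matrix $\widehat J_{NT}$ has full column rank, so $\widehat J_{NT}'\widehat J_{NT}$ is invertible with $\|(\widehat J_{NT}'\widehat J_{NT})^{-1}\widehat J_{NT}'\|=1/\sigma_p(\widehat J_{NT})\le 2/c$.

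\emph{Uniqueness and the error identity.} On $E_{NT}$ the criterion $\beta\mapsto\|\widehat q_{NT}-\widehat J_{NT}\beta\|^2$ is strictly convex. Its unique minimizer over $\mathbb R^p$ is $\widehat\beta_{NT}=(\widehat J_{NT}'\widehat J_{NT})^{-1}\widehat J_{NT}'\widehat q_{NT}$. Substituting $\widehat q_{NT}=\widehat{\mathcal Q}_{NT}(\beta_0)+\widehat J_{NT}\beta_0$ gives the exact identity
\[
\widehat\beta_{NT}-\beta_0=(\widehat J_{NT}'\widehat J_{NT})^{-1}\widehat J_{NT}'\widehat{\mathcal Q}_{NT}(\beta_0).
\]
It follows that $\|\widehat\beta_{NT}-\beta_0\|\le (2/c)\|\widehat{\mathcal Q}_{NT}(\beta_0)\|=O_p(r_{NT})$. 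The Hausdorff distance between two singletons is just the Euclidean distance, so the second rate follows immediately. The special case $r_{NT}=(NT)^{-1/2}$ is then a restatement.

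\emph{Main obstacle: the compact parameter set.} The only delicate point is that the argmin is taken over a compact set rather than over all of $\mathbb R^p$. Because $\beta_0$ is interior, some ball $B(\beta_0,\epsilon)$ lies in the set. The unconstrained minimizer lies in that ball with probability approaching one, since its distance to $\beta_0$ is $O_p(r_{NT})=o_p(1)$. On that event the constrained and unconstrained minimizers coincide, because a strictly convex function's global minimizer is also the unique minimizer over any set that contains it. This yields both the uniqueness and the rate claims for the constrained root.
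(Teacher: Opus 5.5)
Your proposal is correct and follows the paper's proof step by step. Both arguments use Weyl's inequality for full column rank, the exact least-squares identity with left-inverse norm $\sigma_p(\widehat J_{NT})^{-1}$, interiority of $\beta_0$ to equate the constrained and unconstrained minimizers, and the fact that the Hausdorff distance between singletons is the Euclidean distance; yours is slightly more explicit, fixing the event $\{\sigma_p(\widehat J_{NT})\ge c/2\}$ and invoking strict convexity for uniqueness.
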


\subsection{Proof of Proposition~\ref{prop:strong-set-recovery}}
\label{app:proof:strong-set-recovery}

\begin{proof}
Weyl's inequality gives
$\sigma_p(\widehat J_{NT})\ge c-o_p(1)$, so $\widehat J_{NT}$ has full column rank with
probability approaching one. The least-squares first-order condition yields
\[
 \widehat\beta_{NT}-\beta_0
 =(\widehat J_{NT}'\widehat J_{NT})^{-1}
 \widehat J_{NT}'\widehat{\mathcal Q}_{NT}(\beta_0).
\]
The operator norm of the left inverse is
$\sigma_p(\widehat J_{NT})^{-1}=O_p(1)$; hence the estimation error is
$O_p(r_{NT})$. Because $r_{NT}\downarrow0$, $\widehat\beta_{NT}-\beta_0=o_p(1)$; since
$\beta_0$ is interior to the compact parameter set, the unconstrained least-squares
minimizer is interior with probability approaching one and therefore coincides with the
constrained minimizer. Uniqueness follows from full column rank of $\widehat J_{NT}$.
The Hausdorff distance between singleton sets equals the Euclidean distance between their
elements.
\end{proof}

\paragraph{Index convention.}
Throughout the proofs, $i$ and $t$ index panel units and periods. Symbols such as $j$ and
$\ell$ are used only for instrument, parameter, singular-value, or block coordinates.
A change from $i$ to $j$ therefore never denotes an unannounced change in the panel
summation index. All double panel sums are written explicitly as
$\sum_{i=1}^{N}\sum_{t=1}^{T}$.

\begin{proposition}[Power collapse along nearly projected-out directions]
\label{prop:near-orthogonal-power}
Consider a Gaussian fixed-rank projected experiment and unit local directions $h_{NT}$
such that
\[
 h_{NT}'\mathcal I_{P,NT}h_{NT}\to0.
\]
For every level-$\alpha$ quadratic score test whose local power is governed by the
corresponding noncentral chi-square law, rejection probability under
$\beta_{NT}=\beta_0+D_{NT}^{-1}h_{NT}$ converges to $\alpha$.
\end{proposition}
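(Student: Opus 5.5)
The argument combines the noncentral chi-square local-power law of Corollary~\ref{cor:ar-local-power} with continuity of the noncentral chi-square distribution in its noncentrality parameter. The statement assumes that local power along $h_{NT}$ is governed by the noncentral chi-square law. So for each $(N,T)$ the rejection probability under $\beta_{NT}=\beta_0+D_{NT}^{-1}h_{NT}$ equals
\[
 \pi_{NT}=\Pr\{\chi^2_{r_\Omega}(\lambda_{NT})>c_{1-\alpha}\}+o(1),
 \qquad \lambda_{NT}=h_{NT}'\mathcal I_{P,NT}h_{NT}.
\]
Here $c_{1-\alpha}$ is the $(1-\alpha)$ quantile of the central $\chi^2_{r_\Omega}$ law, or of the fixed reference law of the level-$\alpha$ test. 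The first step is to make this representation precise along a drifting direction rather than a fixed $h$. Since the $h_{NT}$ are unit vectors, they lie in a compact set, and the plan uses a subsequence argument. Along any subsequence, extract a further subsequence with $h_{NT}\to h_\ast$. Theorem~\ref{thm:localmoment} gives joint convergence over compact local-index sets, so along that subsequence
\[
 \mathcal S^0_{NT}(h_{NT})\Rightarrow Z+\mathcal H h_\ast,
 \qquad Z\sim N(0,\Omega).
\]
Together with $\widehat\Omega_{NT}^\dagger\to_p\Omega^\dagger$ on the fixed-rank stratum, the continuous mapping theorem yields the limit law $\chi^2_{r_\Omega}(h_\ast'\mathcal I_Ph_\ast)$.

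The second step identifies $h_\ast'\mathcal I_P h_\ast=0$. Write $\mathcal I_{P,NT}=H_{NT}'\Omega_{NT}^\dagger H_{NT}$. Because $H_{NT}\to\mathcal H$ and the rank of $\Omega_{NT}$ is fixed with positive eigenvalues bounded away from zero, the pseudoinverse is continuous, so $\mathcal I_{P,NT}\to\mathcal I_P$. It follows that
\[
 h_\ast'\mathcal I_Ph_\ast=\lim h_{NT}'\mathcal I_{P,NT}h_{NT}=0.
\]
Hence $\mathcal H h_\ast\in\mathcal N(\Omega^\dagger)\cap\mathcal R(\Omega)=\{0\}$, using the range condition $\mathcal R(\mathcal H)\subseteq\mathcal R(\Omega)$. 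The limiting score is therefore exactly $Z$, its null law. This identification also follows directly from continuity of the noncentral chi-square law at noncentrality zero.

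The third step shows that the rejection probability converges to $\alpha$. The central $\chi^2_{r_\Omega}$ distribution is continuous at $c_{1-\alpha}$, so the portmanteau theorem gives rejection probability converging to $\Pr\{\chi^2_{r_\Omega}>c_{1-\alpha}\}=\alpha$ along the sub-subsequence. Every subsequence has a further subsequence with the same limit $\alpha$, so the full sequence converges to $\alpha$. For a general level-$\alpha$ quadratic score test with a possibly different continuous reference quantile, the same argument applies: the limiting law under $h_{NT}$ coincides with the null limit, so the asymptotic rejection probability equals the asymptotic size, which is $\alpha$.

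The main obstacle is the first step: passing from the pointwise-in-$h$ local laws of Corollary~\ref{cor:ar-local-power} to a drifting $h_{NT}$. This requires the convergence in Theorem~\ref{thm:localmoment} to hold uniformly over the compact set of unit directions, not merely over finite collections. The plan gets this from the affine structure. Since $\mathcal S^0_{NT}(h)=A_{NT}^{-1}\sum_{i,t} z^o_{it}u^o_{it}+\widehat H_{NT}h$ exactly, and $\|\widehat H_{NT}-\mathcal H\|=o_p(1)$, the shift term converges uniformly in $h$ on compacts. Uniformity in the direction therefore comes for free, and no equicontinuity argument is needed.
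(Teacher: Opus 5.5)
Your argument is correct, and the point you relegate to a side remark in your second step is the paper's entire proof: $\lambda_{NT}=h_{NT}'\mathcal I_{P,NT}h_{NT}\to0$, plus continuity of the noncentral $\chi^2_{r_\Omega}(\lambda)$ law in $\lambda$, already finishes the job. The paper reads the hypothesis that local power is ``governed by the noncentral chi-square law'' as $\Pr\{\text{reject}\}=\Pr\{\chi^2_{r_\Omega}(\lambda_{NT})>c_{1-\alpha}\}+o(1)$ and stops there.

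Your main line instead derives that representation from the affine structure $\mathcal S^0_{NT}(h)=A_{NT}^{-1}\sum_{i,t}z_{it}^{o}u_{it}^{o}+\widehat H_{NT}h$ of Theorem~\ref{thm:localmoment}, using compactness of the unit sphere and a subsequence argument. This gives a precise meaning to ``power along a drifting direction.'' It also yields a real insight: in a fixed limit experiment, nearly projected-out directions are exactly those whose limit points lie in $\mathcal N(\mathcal H)$, so the limiting score under $h_{NT}$ is the null limit $Z$ itself.

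The price is two hypotheses the paper's route does not need.
\begin{enumerate}
\item You must posit $\mathcal I_{P,NT}=H_{NT}'\Omega_{NT}^\dagger H_{NT}\to\mathcal I_P$, which requires $\Omega_{NT}\to\Omega$ with eventually equal rank. The paper never defines $\mathcal I_{P,NT}$, and its continuity argument works even when $\mathcal I_{P,NT}$ has no limit or the $H_{NT}$ are unbounded.
\item You need $\widehat\Omega_{NT}^\dagger\to_p\Omega^\dagger$ under the drifting laws indexed by $h_{NT}$. Corollary~\ref{cor:ar-local-power} assumes covariance consistency only under each fixed $h$, so this does not follow from it.
\end{enumerate}
Nor does the second point come ``for free'' from the affine structure. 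A null-restricted covariance estimator is built from $z_{it}^{o}u_{it}^{o}+z_{it}^{o}x_{it}^{o\prime}D_{NT}^{-1}h_{NT}$, so your uniformity remark covers the score but not the studentization. That requirement should be stated explicitly as part of the maintained hypothesis.
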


\subsection{Proof of Proposition~\ref{prop:near-orthogonal-power}}
\label{app:proof:near-orthogonal-power}

\begin{proof}
Under the maintained Gaussian fixed-rank experiment, the quadratic score statistic in
direction $h_{NT}$ converges under the local alternative to a noncentral chi-square law
with noncentrality
$\lambda_{NT}=h_{NT}'\mathcal I_{P,NT}h_{NT}$. By assumption,
$\lambda_{NT}\to0$. Continuity of the noncentral chi-square distribution in its
noncentrality parameter implies convergence of rejection probability to the central
chi-square rejection probability, which equals $\alpha$ at a continuous critical value.
\end{proof}

\subsection{Guide to the proofs}\label{app:proof-concordance}

The table identifies every formal result by its printed theorem, proposition, lemma, or
corollary number and title. The second column gives the numbered proof section and page
in this Online Appendix. No source-code labels are displayed.

\begin{longtable}{>{\raggedright\arraybackslash}p{0.64\textwidth}
                  >{\raggedright\arraybackslash}p{0.27\textwidth}}
\toprule
Formal result in the main paper & Proof in this Online Appendix\\
\midrule
\endfirsthead
\toprule
Formal result in the main paper & Proof in this Online Appendix\\
\midrule
\endhead
Lemma~\ref{lem:projectioncompat}: \emph{Projection compatibility} & Section~\ref{app:proof:lem-projectioncompat}, p.~\pageref{app:proof:lem-projectioncompat} \\
Theorem~\ref{thm:primitive-benchmark}: \emph{Primitive joint projected-score limit} & Section~\ref{app:proof:thm-primitive-benchmark}, p.~\pageref{app:proof:thm-primitive-benchmark} \\
Theorem~\ref{thm:identification-map}: \emph{Projected identification map} & Section~\ref{app:proof:thm-identification-map}, p.~\pageref{app:proof:thm-identification-map} \\
Proposition~\ref{prop:stability}: \emph{Singular-value and subspace stability} & Section~\ref{app:proof:prop-stability}, p.~\pageref{app:proof:prop-stability} \\
Theorem~\ref{thm:singular-inference}: \emph{First-order inference for separated singular values} & Section~\ref{app:proof:thm-singular-inference}, p.~\pageref{app:proof:thm-singular-inference} \\
Corollary~\ref{cor:singular-ci}: \emph{Confidence intervals for a simple singular value} & Section~\ref{app:proof:cor-singular-ci}, p.~\pageref{app:proof:cor-singular-ci} \\
Theorem~\ref{thm:rank-consistency}: \emph{Rank consistency on separated strata} & Section~\ref{app:proof:thm-rank-consistency}, p.~\pageref{app:proof:thm-rank-consistency} \\
Theorem~\ref{thm:subspace-reliability}: \emph{Subspace stability and inferential reliability} & Section~\ref{app:proof:thm-subspace-reliability}, p.~\pageref{app:proof:thm-subspace-reliability} \\
Proposition~\ref{prop:balance}: \emph{Weak-first-stage balance} & Section~\ref{app:proof:prop-balance}, p.~\pageref{app:proof:prop-balance} \\
Theorem~\ref{thm:localmoment}: \emph{Local projected-moment shift representation} & Section~\ref{app:proof:thm-localmoment}, p.~\pageref{app:proof:thm-localmoment} \\
Corollary~\ref{cor:ar-local-power}: \emph{Gaussian local power of the quadratic AR statistic} & Section~\ref{app:proof:cor-ar-local-power}, p.~\pageref{app:proof:cor-ar-local-power} \\
Corollary~\ref{cor:gaussian-blue}: \emph{Optimal linear unbiased rule in the Gaussian moment shift family} & Section~\ref{app:proof:cor-gaussian-blue}, p.~\pageref{app:proof:cor-gaussian-blue} \\
Theorem~\ref{thm:pim-delta}: \emph{Delta method for the projected information matrix} & Section~\ref{app:proof:thm-pim-delta}, p.~\pageref{app:proof:thm-pim-delta} \\
Corollary~\ref{cor:pim-directional}: \emph{Directional projected-information inference} & Section~\ref{app:proof:cor-pim-directional}, p.~\pageref{app:proof:cor-pim-directional} \\
Corollary~\ref{cor:pim-eigen}: \emph{Inference for a simple information eigenvalue} & Section~\ref{app:proof:cor-pim-eigen}, p.~\pageref{app:proof:cor-pim-eigen} \\
Corollary~\ref{cor:pim-condition}: \emph{Condition-number inference on positive-definite strata} & Section~\ref{app:proof:cor-pim-condition}, p.~\pageref{app:proof:cor-pim-condition} \\
Theorem~\ref{thm:pim-bootstrap-region}: \emph{Bootstrap confidence regions for projected information} & Section~\ref{app:proof:thm-pim-bootstrap-region}, p.~\pageref{app:proof:thm-pim-bootstrap-region} \\
Theorem~\ref{thm:ratio}: \emph{Weak-IV ratio limit} & Section~\ref{app:proof:thm-ratio}, p.~\pageref{app:proof:thm-ratio} \\
Theorem~\ref{thm:pcbenchmark}: \emph{Sample-split spectral factor transfer} & Section~\ref{app:proof:thm-pcbenchmark}, p.~\pageref{app:proof:thm-pcbenchmark} \\
Proposition~\ref{prop:oracle}: \emph{Oracle-to-feasible transfer under normalized remainder bounds} & Section~\ref{app:proof:prop-oracle}, p.~\pageref{app:proof:prop-oracle} \\
Theorem~\ref{thm:multiratio}: \emph{Multivariate weak-IV limit on a full-rank local stratum} & Section~\ref{app:proof:thm-multiratio}, p.~\pageref{app:proof:thm-multiratio} \\
Corollary~\ref{cor:multiAR}: \emph{Multivariate Anderson--Rubin reduction} & Section~\ref{app:proof:cor-multiAR}, p.~\pageref{app:proof:cor-multiAR} \\
Theorem~\ref{thm:efficient-gmm}: \emph{Regular projected-GMM limit and covariance-optimal weighting} & Section~\ref{app:proof:thm-efficient-gmm}, p.~\pageref{app:proof:thm-efficient-gmm} \\
Proposition~\ref{prop:wald}: \emph{Failure of conventional Wald pivotality} & Section~\ref{app:proof:prop-wald}, p.~\pageref{app:proof:prop-wald} \\
Proposition~\ref{prop:ar}: \emph{Identification robustness of the null score} & Section~\ref{app:proof:prop-ar}, p.~\pageref{app:proof:prop-ar} \\
Theorem~\ref{thm:uniform-ar}: \emph{Uniform size of the projected null-score test} & Section~\ref{app:proof:thm-uniform-ar}, p.~\pageref{app:proof:thm-uniform-ar} \\
Proposition~\ref{prop:ar-limit}: \emph{Gaussian fixed-rank limit of the quadratic AR statistic} & Section~\ref{app:proof:prop-ar-limit}, p.~\pageref{app:proof:prop-ar-limit} \\
Theorem~\ref{thm:primitive-bootstrap}: \emph{Self-contained feasible bootstrap for the primitive score law} & Section~\ref{app:proof:thm-primitive-bootstrap}, p.~\pageref{app:proof:thm-primitive-bootstrap} \\
Corollary~\ref{cor:primitive-bootstrap-quadratic}: \emph{Primitive-bootstrap critical values for a quadratic score statistic} & Section~\ref{app:proof:cor-primitive-bootstrap-quadratic}, p.~\pageref{app:proof:cor-primitive-bootstrap-quadratic} \\
Proposition~\ref{prop:boot}: \emph{Null-score reduction for PWB-H} & Section~\ref{app:proof:prop-boot}, p.~\pageref{app:proof:prop-boot} \\
Theorem~\ref{thm:exactpwb}: \emph{Fixed-rank feasible PWB-H transfer} & Section~\ref{app:proof:thm-exactpwb}, p.~\pageref{app:proof:thm-exactpwb} \\
\bottomrule
\end{longtable}

The proofs below follow the order of the main paper. Each proof identifies the exact
assumptions used; assumptions that are not required are not invoked.

\section{Primitive finite-range benchmark: complete conditions}\label{app:primitive-benchmark}

The next result replaces Assumption~\ref{ass:primitiveclt} for one transparent class of
arrays. It is deliberately narrower than the full framework, but it shows exactly how
unit, time, interaction, and cell components combine and why the limit need not be
Gaussian.

Let $d=q+pq$ and define the centered stacked score
\[
 \xi_{it,NT}
 =
 \begin{pmatrix}
 z_{it}^{o}u_{it}^{o}-\E[z_{it}^{o}u_{it}^{o}]\\
 \operatorname{vec}(z_{it}^{o}v_{it}^{o\prime})
 -\E[\operatorname{vec}(z_{it}^{o}v_{it}^{o\prime})]
 \end{pmatrix}\in\mathbb R^d.
\]

\begin{assumption}[Primitive score benchmark]\label{ass:primitivebenchmark-details}
For fixed integers $m_A,m_D,r_A,r_D$, the centered score admits
\[
 \xi_{it,NT}
 =a_{i,NT}+d_{t,NT}
 +h_{NT}(\phi_{i,NT},\psi_{t,NT})+\varepsilon_{it,NT},
 \label{eq:primitive-score-decomp}
\]
where the $k$th coordinate of the interaction is
\[
 [h_{NT}(\phi,\psi)]_k=\phi'H_{k,NT}\psi,
 \qquad k=1,\ldots,d,
\]
and the following conditions hold.

\begin{enumerate}[label=(\roman*),leftmargin=2.6em]
\item Let
$R_{i,NT}^{A}=(a_{i,NT}',\phi_{i,NT}')'$.
The triangular sequence $\{R_{i,NT}^{A}\}_{i=1}^N$ is centered, strictly stationary
within each row, and $m_A$-dependent. Its $(2+\eta)$ moments are uniformly bounded for
some $\eta>0$. For $0\le h\le m_A$, define the row-stationary lag covariance
$\Gamma_{A,NT}(h)=\Cov(R_{i,NT}^{A},R_{i+h,NT}^{A})$ for any admissible interior index
$i$. Then
\[
 \Gamma_{A,NT}(0)+
 \sum_{h=1}^{m_A}\{\Gamma_{A,NT}(h)+\Gamma_{A,NT}(h)'\}
 \to\Sigma_A.
\]
\item With $R_{t,NT}^{D}=(d_{t,NT}',\psi_{t,NT}')'$, the triangular sequence
$\{R_{t,NT}^{D}\}_{t=1}^T$ satisfies the analogous conditions with dependence range
$m_D$ and limiting long-run covariance $\Sigma_D$.
\item The unit-state sequence, time-state sequence, and cell-innovation array are mutually
independent. The vectors $\{\varepsilon_{it,NT}\}_{i,t}$ are independent across cells,
centered, and satisfy
\[
 \frac1{NT}\sum_{i,t}\E[\varepsilon_{it,NT}\varepsilon_{it,NT}']\to\Sigma_\varepsilon,
\]
together with the Lindeberg condition
\[
 \frac1{NT}\sum_{i,t}
 \E\!\left[\|\varepsilon_{it,NT}\|^2
 \mathbf 1\{\|\varepsilon_{it,NT}\|>\epsilon\sqrt{NT}\}\right]\to0
\]
for every $\epsilon>0$.
A convenient sufficient condition for uniform applications is that, for some
$\eta>0$,
\[
 \sup_{P\in\mathfrak P_{NT}}
 (NT)^{-1-\eta/2}\sum_{i,t}
 \E_P\|\varepsilon_{it,NT}\|^{2+\eta}\to0.
 \label{eq:uniform-lyapunov-cell}
\]
Together with the uniformly bounded $(2+\eta)$ moments of the unit and time blocks, this
is the $L_{2+\eta}$ qualification used when the primitive CLTs are required uniformly.
\item $H_{k,NT}\to H_k$ for every $k$, and the dimensions $r_A,r_D$ do not grow.
\item \emph{(Block-separated normalization.)} The unit, time, and
interaction/cell components accumulate at the distinct orders $T\sqrt N$,
$N\sqrt T$, and $\sqrt{NT}$. Since
$T\sqrt N/\sqrt{NT}=\sqrt T\to\infty$ and
$N\sqrt T/\sqrt{NT}=\sqrt N\to\infty$, one scalar-equivalent normalizer cannot keep all
three components nondegenerate in the same score coordinate. We therefore impose the
coordinate separation required by the stated joint limit.

There is a fixed orthogonal decomposition
\[
 \mathbb R^d=V_A\oplus V_D\oplus V_0
\]
with orthogonal projectors $P_A,P_D,P_0$ such that, for every admissible argument,
\[
\begin{gathered}
 P_Da_{i,NT}=P_0a_{i,NT}=0,\qquad
 P_Ad_{t,NT}=P_0d_{t,NT}=0,\\
 P_Ah_{NT}(\phi,\psi)=P_Dh_{NT}(\phi,\psi)=0,\qquad
 P_A\varepsilon_{it,NT}=P_D\varepsilon_{it,NT}=0.
\end{gathered}
\]
Thus the unit main effect lies in $V_A$, the time main effect in $V_D$, and the
interaction and cell components in $V_0$. Define
\[
 K_{NT}
 =T\sqrt N\,P_A+N\sqrt T\,P_D+\sqrt{NT}\,P_0.
 \label{eq:block-normalizer}
\]
Then
\[
 T\sqrt N\,K_{NT}^{-1}P_A=P_A,\qquad
 N\sqrt T\,K_{NT}^{-1}P_D=P_D,\qquad
 \sqrt{NT}\,K_{NT}^{-1}P_0=P_0.
\]
Fixed nonsingular changes of coordinates within the three blocks are permitted, provided
their eigenvalues are uniformly bounded away from zero and infinity.
\end{enumerate}
\end{assumption}

\begin{remark}[Why block separation is necessary]\label{rem:block-necessity}
If $\sqrt{NT}\,K_{NT}^{-1}\to G_0\ne0$ under a single unrestricted normalizer, then
\[
 T\sqrt N\,K_{NT}^{-1}
 =\sqrt T\{\sqrt{NT}\,K_{NT}^{-1}\}
\]
cannot converge to a finite nonzero matrix on the range of $G_0$. The analogous
contradiction holds for the time component. Hence a nondegenerate main effect and a
nondegenerate interaction/cell component cannot coexist in the same score coordinate.
The direct-sum condition states the geometry needed for their joint coexistence.
\end{remark}

\begin{example}[A non-vacuous block design]\label{ex:blockdgp}
Let $d=2$. In the first score coordinate take a unit main effect only,
$\xi_{it,1}=a_{i,NT}$, where $\{a_{i,NT}\}_i$ is centered, $m_A$-dependent, and
\[
 \Var\!\left(N^{-1/2}\sum_i a_{i,NT}\right)\to\sigma_a^2>0.
\]
In the second coordinate take a pure rank-one interaction
$\xi_{it,2}=\phi_{i,NT}\psi_{t,NT}$, where the two collections are independent,
centered, i.i.d., and have unit variance. With
$V_A=\operatorname{span}(e_1)$, $V_0=\operatorname{span}(e_2)$, $V_D=\{0\}$, and
\[
 K_{NT}=\operatorname{diag}(T\sqrt N,\sqrt{NT}),
\]
we obtain
\[
 K_{NT}^{-1}\sum_{i,t}\xi_{it,NT}
 \Rightarrow
 (\sigma_a Z_1,\;Z_\phi Z_\psi)',
\]
where $Z_1,Z_\phi,Z_\psi$ are mutually independent standard normals. The first
coordinate is Gaussian; the second has excess kurtosis $6$ and is non-Gaussian.
The assumption is nonempty, and covariance information alone does not determine the
joint law of a statistic combining both coordinates.
\end{example}

For $z_\phi\in\mathbb R^{r_A}$ and $z_\psi\in\mathbb R^{r_D}$, write
\[
 \mathfrak h(z_\phi,z_\psi)
 =\bigl(z_\phi'H_1z_\psi,\ldots,z_\phi'H_dz_\psi\bigr)'.
\]

\paragraph{Purpose.}
The next theorem derives the stochastic building block of the paper for a transparent
finite-range benchmark. It shows how unit, time, interaction, and cell components combine
in the normalized projected score and explains why the limiting law can be non-Gaussian.

\paragraph{Restatement of the primitive limit.}
Under \eqref{eq:block-normalizer}, the normalizer limits are the block projectors:
$G_A:=P_A$, $G_D:=P_D$, and $G_0:=P_0$. We use $P_\bullet$ throughout below.

Under Assumption~\ref{ass:primitivebenchmark-details},
\[
 K_{NT}^{-1}\sum_{i=1}^N\sum_{t=1}^T\xi_{it,NT}
 \Rightarrow
 P_A Z_a+P_D Z_d+P_0\{\mathfrak h(Z_\phi,Z_\psi)+Z_\varepsilon\},
 \label{eq:primitive-limit}
\]
where
\[
 \begin{pmatrix}Z_a\\ Z_\phi\end{pmatrix}\sim N(0,\Sigma_A),\qquad
 \begin{pmatrix}Z_d\\ Z_\psi\end{pmatrix}\sim N(0,\Sigma_D),\qquad
 Z_\varepsilon\sim N(0,\Sigma_\varepsilon),
\]
and the three displayed Gaussian blocks are mutually independent. The limit is Gaussian
on $V_A\oplus V_D$ and contains the interaction and cell components only on $V_0$.
If the Gaussian inputs are nondegenerate on their maintained supports, a nonzero
bilinear interaction contributes a nonvanishing second Gaussian-chaos component on
$V_0$; hence the displayed limit is not Gaussian unless that component is degenerate.
No score coordinate simultaneously carries a nondegenerate main effect and a
nondegenerate interaction/cell component.

If $K_{NT}=\operatorname{diag}(A_{NT},B_{NT})$, partitioned conformably with the
structural and first-stage scores, and the corresponding partitioned limit is
nondegenerate, then \eqref{eq:primitive-limit} implies
Assumption~\ref{ass:primitiveclt} for this benchmark class.

\paragraph{Interpretation.}
The theorem shows that uncertainty after nuisance removal can originate from persistent
unit heterogeneity, common time shocks, their interaction, and cell-level innovations.
The main effects accumulate at $T\sqrt N$ and $N\sqrt T$, whereas the interaction and
cell components accumulate at $\sqrt{NT}$. These rates can coexist in one vector limit
only on separated score blocks. When a nonzero interaction survives on $V_0$, covariance
alone does not describe the relevant uncertainty, and Gaussian critical values may be
misleading even in a large panel.

\section{Sample-split strong-factor benchmark: complete conditions}\label{app:sample-split-factor}

The following benchmark derives the projector and score-transfer rates for a concrete
spectral estimator. It uses an auxiliary sample, or an independent block fold, to estimate
the nuisance spaces. This independence is stronger than necessary but makes the role of
orthogonality transparent.

Let $\{A_{\ell,NT}^{\mathrm{tr}}:\ell=1,\ldots,L\}$ be a fixed collection of auxiliary
$N\times T$ matrices with
\[
 A_{\ell,NT}^{\mathrm{tr}}=N_{\ell,NT}+E_{\ell,NT},
 \qquad
 N_{\ell,NT}=L_{\ell,NT}F_{\ell,NT}'.
\]
Define
\[
 S_{L,NT}^{\mathrm{tr}}
 =\frac1T\sum_{\ell=1}^L
 A_{\ell,NT}^{\mathrm{tr}}A_{\ell,NT}^{\mathrm{tr}\prime},
 \qquad
 S_{F,NT}^{\mathrm{tr}}
 =\frac1N\sum_{\ell=1}^L
 A_{\ell,NT}^{\mathrm{tr}\prime}A_{\ell,NT}^{\mathrm{tr}}.
\]
The estimators $\widehat P_\Lambda$ and $\widehat P_F$ are the spectral projectors
associated with the $r_L$ and $r_F$ largest eigenvalues of these matrices.

\begin{assumption}[Sample-split spectral factor benchmark]\label{ass:pcbenchmark}
Let $\eta_{NT}=N^{-1/2}+T^{-1/2}$.
\begin{enumerate}[label=(\roman*),leftmargin=2.6em]
\item The dimensions $L,r_L,r_F$ are fixed, and
\[
 \mathcal R\!\left(\frac1T\sum_\ell
 N_{\ell,NT}N_{\ell,NT}'\right)=\mathcal L_0,
 \qquad
 \mathcal R\!\left(\frac1N\sum_\ell
 N_{\ell,NT}'N_{\ell,NT}\right)=\mathcal F_0.
\]
\item The smallest nonzero eigenvalues of these two signal covariance matrices are at
least $cN$ and $cT$, respectively, for some $c>0$, while the next eigenvalues are zero.
\item Uniformly over $\ell$,
\[
 \|N_{\ell,NT}\|=O(\sqrt{NT}),
 \qquad
 \|E_{\ell,NT}\|=O_p(\sqrt N+\sqrt T).
\]
\item The auxiliary matrices are independent of the score-evaluation fold. Let
$R_{g,NT}^{\mathrm{tr}}(\beta)$ and $R_{v,NT}^{\mathrm{tr}}$ denote the aggregate
structural-score and first-stage-score remainders obtained by substituting the estimated
projectors. Conditional on the training sigma-field $\mathcal T_{NT}$,
\[
 \sup_{\beta\in\mathcal B}
 \left\|\E[R_{g,NT}^{\mathrm{tr}}(\beta)\mid\mathcal T_{NT}]\right\|
 =o_p(\sqrt{NT}\eta_{NT}),
\]
and
\[
 \E\!\left[
 \sup_{\beta\in\mathcal B}
 \left\|
 R_{g,NT}^{\mathrm{tr}}(\beta)
 -\E[R_{g,NT}^{\mathrm{tr}}(\beta)\mid\mathcal T_{NT}]
 \right\|^2
 \,\middle|\,\mathcal T_{NT}\right]
 \le C NT\eta_{NT}^2
\]
with probability approaching one. The analogous conditional mean and second-moment
bounds hold for $R_{v,NT}^{\mathrm{tr}}$ without the supremum over $\beta$. The
corresponding sample-average instrument-second-moment and Jacobian remainders have norm
$O_p(\eta_{NT})$.
\item The normalizations and relevant spectral scale satisfy
\[
 \frac{\sqrt{NT}\eta_{NT}}{a_{NT}}\to0,\qquad
 \frac{\sqrt{NT}\eta_{NT}}{b_{NT}}\to0,\qquad
 \frac{\eta_{NT}}{\varrho_{NT}}\to0.
\]
\end{enumerate}
\end{assumption}

\paragraph{Purpose.}
The oracle theory is useful empirically only if estimated nuisance spaces preserve the
score experiment and the projected Jacobian. The next theorem verifies this transfer for
a sample-split strong-factor spectral estimator under explicit conditional remainder bounds.

\paragraph{Restatement of the transfer result.}
Under Assumptions~\ref{ass:moments} and \ref{ass:pcbenchmark},
\[
 \|\widehat P_\Lambda-P_{\mathcal L_0}\|=O_p(\eta_{NT}),
 \qquad
 \|\widehat P_F-P_{\mathcal F_0}\|=O_p(\eta_{NT}).
\]
Moreover,
\[
 \sup_{\beta\in\mathcal B}
 \left\|\sum_{i,t}\{\widehat g_{it}(\beta)-g_{it}(\beta)\}\right\|
 =O_p(\sqrt{NT}\eta_{NT}),
\]
\[
 \left\|\sum_{i,t}\left[
 \operatorname{vec}(\widehat z_{it}^{o}\widehat v_{it}^{o\prime})
 -\operatorname{vec}(z_{it}^{o}v_{it}^{o\prime})\right]\right\|
 =O_p(\sqrt{NT}\eta_{NT}),
\]
and the feasible instrument second moment and projected Jacobian differ from their
oracle counterparts by $O_p(\eta_{NT})$. Hence the remainder conditions in
Assumption~\ref{ass:factororth} hold for this benchmark. In addition,
$\widehat Q_{zz,NT}$ remains positive definite with probability approaching one by
Assumption~\ref{ass:moments}, and the $O_p(\eta_{NT})$ Jacobian error transfers every
spectral classification whose signal and separating gap dominate $\eta_{NT}$.
Proposition~\ref{prop:oracle} therefore transfers the oracle score experiment and all
such separated classifications.

The same conclusion holds under block cross-fitting if each evaluation block is
asymptotically independent of its training blocks and the displayed conditional moment
bounds hold fold by fold. The theorem derives the spectral projector rates; the
score- and Jacobian-transfer bounds rely on the explicit conditional moment requirements
in Assumption~\ref{ass:pcbenchmark}(iv).

\paragraph{Interpretation.}
Strong factors deliver projector error of order $N^{-1/2}+T^{-1/2}$ in this benchmark.
That rate is adequate for weak-identification inference only when the normalized score
and the relevant singular-value gap dominate the induced error. Thus ``consistent
factors'' is not the operative condition; the relevant comparison is between projector
error, score scale, and identification scale.

\section{Finite-sample stress test with increasing nuisance rank}
\label{app:nuisance-rank-stress}

The maintained asymptotic theory fixes the nuisance ranks. To assess finite-sample
sensitivity outside that class, this experiment increases the common loading and factor
rank with panel size while estimating one common projector from the stacked outcome,
regressor, and instrument matrices. Every replication is retained; a numerical failure
is counted as rejection and reported explicitly.

\begin{table}[ht]
\centering
\caption{Finite-sample sensitivity to nuisance-space dimension}
\label{tab:nuisance-rank-stress}
\begin{tabular}{rrrrrrrr}
\toprule
$N$ & $T$ & $r$ & $r/\min(N,T)$ & Proj. error & AR size & Wald size & Failures\\
\midrule
40 & 20 & 1 & 0.050 & 0.228 & 0.093 & 0.080 & 0 \\
40 & 20 & 3 & 0.150 & 0.411 & 0.120 & 0.147 & 0 \\
40 & 20 & 6 & 0.300 & 0.579 & 0.193 & 0.227 & 0 \\
80 & 40 & 2 & 0.050 & 0.254 & 0.027 & 0.033 & 0 \\
80 & 40 & 5 & 0.125 & 0.369 & 0.100 & 0.120 & 0 \\
80 & 40 & 10 & 0.250 & 0.553 & 0.180 & 0.253 & 0 \\
\bottomrule
\end{tabular}
\begin{minipage}{0.95\textwidth}\footnotesize
Notes: 150 replications per design, seed 20260802. ``Proj. error'' is the median spectral
norm error of the estimated left/right nuisance projectors. The experiment is a stress
test, not an extension of the fixed-rank theorem. Code:
\texttt{Replication/code/run\_nuisance\_rank\_stress.py}.
\end{minipage}
\end{table}

The table documents the finite-sample deterioration, if any, in projector estimation and
test size as nuisance complexity rises. It does not support extrapolation to ranks
proportional to panel dimensions; such sequences require separate high-rank factor theory.

\section{Small-sample power with weak projected first stages}
\label{app:power-grid}

The following diagnostic grid studies first-stage statistics in the weak range.
Because $F$ does not uniquely index the projected experiment, the simulation calibrates
ten designs and reports the \emph{achieved} median projected $F$. The panel has
$(N,T)=(40,20)$, 750 replications per design, and the same common rank-one projection as
the main Monte Carlo. The null is $\beta=-0.5$, the fixed alternative is $\beta=-0.1$,
and the AR statistic is inverted over $[-3,2]$.

\begin{figure}[ht]
\centering
\includegraphics[width=0.72\textwidth]{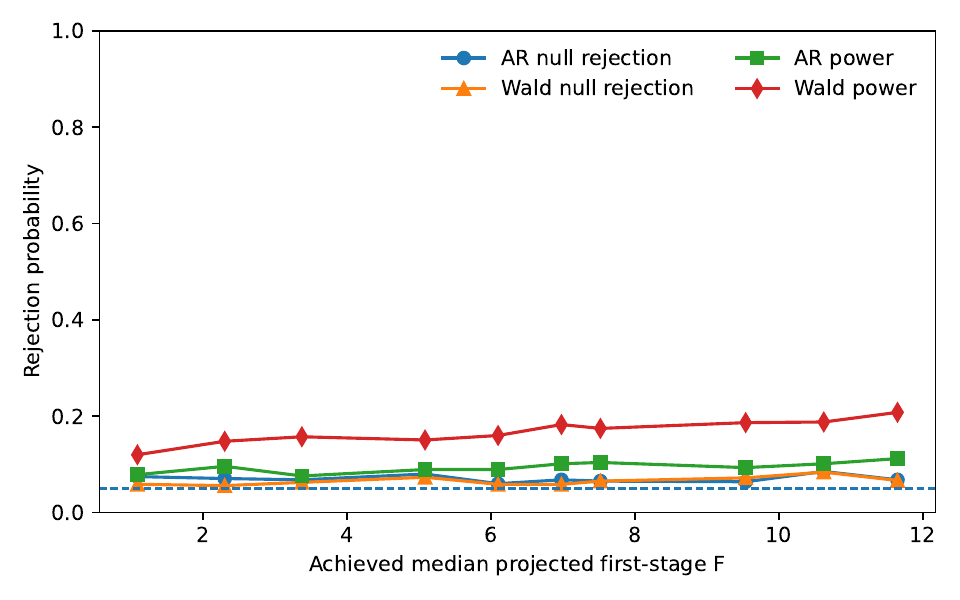}
\caption{Small-sample Wald and AR rejection across projected first-stage strength}
\label{fig:power-grid}
\begin{minipage}{0.90\textwidth}\footnotesize
Notes: The horizontal axis reports achieved median projected $F$, which is not a
sufficient statistic for the projected experiment. The dashed line is the five-percent
nominal level. Code: \texttt{Replication/code/run\_power\_grid.py}.
\end{minipage}
\end{figure}

The achieved median projected $F$ ranges from 1.14 to 13.02.
AR null rejection ranges from 0.057 to 0.090, whereas Wald null rejection ranges from 0.047 to 0.077. Against $\beta=-0.1$, AR power ranges from 0.070 to 0.150, and Wald power ranges from 0.120 to 0.213. The comparison is between maintained null-score inference and regular Wald inference; there is no separate analytical projection shortcut in this paper. Boundary contact remains common in the weakest designs. The experiment shows the size-power trade-off without treating $F$ as a sufficient statistic for the projected experiment.

\section{Secondary evidence, applied guide, and extensions}
\label{app:secondary-guide}

This section collects secondary evidence, reporting guidance, and extensions moved from
the submission manuscript for concision.

\subsection{Secondary validation: domestic credit and house prices}
\label{sec:secondarycredit}

The earlier JST application remains as a secondary validation exercise. It relates
three-year real house-price growth to current real bank-credit growth, instrumented by
its second and third lags. In the common-support sample, the two-way fixed-effects
coefficient is $-0.517$ with first-stage
$F=20.53$. After a common rank-one projection, the coefficient is
$-0.919$ and $F=10.25$. Roughly 68 percent of residual instrument
variance survives projection, but the Gaussian-reference Anderson--Rubin confidence set is unbounded on the
reported $[-5,5]$ grid.

\subsection*{What the evidence changes}

The simulations and applications do not simply produce different standard errors. They
change the economic statement supported by the design. A very large raw first stage can
support the wrong sign when its variation is confounded; a persistent international
output response can disappear after common projection; and a large projected $F$ can
coexist with an unbounded robust confidence set. Nuisance robustness, moment sensitivity,
and inferential precision must therefore be evaluated jointly.

The largest first-stage statistic need not belong to the most informative valid design
once nuisance variation has been removed.

\section{A guide for applied researchers}\label{sec:guide}

The framework suggests a short reporting protocol whenever identification is assessed
after fixed effects, latent factors, machine-learning residualization, or another
substantial nuisance transformation.

\begin{enumerate}[leftmargin=2.2em,label=\textbf{\arabic*.}]
\item \textbf{Describe the transformation.}
State which variation is removed, why it is regarded as nuisance, and whether the same
final transformation is applied to the outcome, endogenous regressors, and instruments.

\item \textbf{Document surviving excluded variation.}
Report the source of instrument variation before and after projection and the fraction
that survives. A large nominal sample does not imply many independent identifying
directions.

\item \textbf{Report the projected Jacobian spectrum.}
The singular values summarize moment sensitivity after nuisance removal. Report effective
rank, the weakest retained singular value, and uncertainty about the associated subspace
when economically relevant.

\item \textbf{Separate sensitivity from precision.}
Report the projected-score covariance or the relevant full score law together with the
PIM when the Gaussian fixed-rank representation is appropriate. A first-stage statistic,
Jacobian, and PIM need not move together.

\item \textbf{Use identification-robust inference.}
Report null-restricted confidence sets or tests whose calibration matches the dependence
structure. Unbounded or disconnected sets are substantive findings, not computational
failures; the former can be required for valid coverage near nonidentification
\citep{dufour1997}.

\item \textbf{Check nuisance-estimation transfer.}
Show that factor or machine-learning remainders are negligible at the signal, score, and
spectral-gap scales used by the claimed result.

\item \textbf{Keep validity and information distinct.}
Projected-information diagnostics do not establish the economic exclusion restriction.
They assess what the maintained design can distinguish after the final transformation.
\end{enumerate}

This checklist is not a pretest for selecting a preferred estimator. It is a transparent
description of the statistical experiment on which the economic conclusion rests.

\section{Broader implications and extensions}\label{sec:broader}

The formal results use linear panel IV because it makes the information consequences of
projection transparent. The organizing principle is broader: whenever structural
inference follows nuisance removal, identification and information should be defined for
the final transformed moments. Three ingredients are needed in a new setting: a
transformation that preserves the target moment restriction, a local experiment for the
transformed score, and feasible remainder bounds showing that estimated nuisance objects
reproduce the oracle geometry. The present results do not make those extensions
automatic, but they identify the questions each extension must answer.

\paragraph{Nonlinear and semiparametric moments.}
For nonlinear GMM, quantile restrictions, distributional IV, and other semiparametric
models, the projected Jacobian is replaced by the derivative of the transformed moment
map at the target parameter. Its range and null space continue to describe locally
visible and invisible directions. The corresponding information object combines that
derivative with the covariance or full law of the transformed score. Curvature and
nonsmoothness create model-specific complications, but the distinction between nuisance
robustness and surviving identification remains unchanged.

\paragraph{Orthogonal scores and machine learning.}
Double/debiased machine learning constructs scores that are locally insensitive to
first-stage nuisance errors. Orthogonality protects the target moment from small nuisance
estimation mistakes; it does not guarantee that the residualized treatment or instrument
contains substantial identifying variation. Projected-information diagnostics can
therefore complement orthogonal-score inference by measuring the sensitivity and rank of
the final moment problem after machine-learning residualization.

\paragraph{Synthetic controls and matrix completion.}
Synthetic-control and matrix-completion methods remove low-rank counterfactual structure
before estimating treatment effects. Their credibility often improves as the latent
component becomes richer, but the remaining treated-versus-counterfactual contrast may
become weak. A projected-information analysis would ask which treatment-effect contrasts
remain visible after fitting the latent structure and whether factor-estimation error is
small relative to those contrasts and their spectral gaps.

\paragraph{Local projections and dynamic treatment effects.}
Horizon-specific local projections routinely absorb unit effects, time effects, trends,
and state controls. The information remaining for a shock coefficient can change sharply
with the horizon, even when the same nominal sample is used. The flagship application
illustrates this point: the projected Jacobian, first-stage statistic, PIM, and coefficient
path move differently across horizons. The same logic applies to dynamic treatment
effects and event-study designs, where increasingly rich controls may remove the
variation that separates long-run responses.

\paragraph{Network and multiway moment models.}
Dyadic, network, matched-employer--employee, and other multiway data combine nuisance
projections with dependence generated by shared indices. In those settings, the relevant
score law may remain non-Gaussian after projection, so covariance-based information is
only part of the experiment. Extending the framework requires joint analysis of the
projected derivative and the full multiway score limit, together with a bootstrap that
reproduces both.

These examples point to a common research program rather than a mechanical recipe.
Projection should be treated as part of the statistical experiment, not as preprocessing
whose information consequences can be ignored. The central empirical question remains:
which economically meaningful directions survive the final nuisance transformation, and
with what precision?

\section{Scalar concentration and asymmetric normalization: details}\label{app:scalar-concentration}

For one endogenous regressor and one instrument, write
\[
 \zeta_{i,NT}=T^{-1}\sum_t z_{it}^{o}v_{it}^{o},\qquad
 \Omega_{v,NT}=\operatorname{Var}\!\left(N^{-1/2}\sum_i\zeta_{i,NT}\right).
\]
Suppose $T^{\delta}\Omega_{v,NT}\to\Omega_v\in(0,\infty)$. A scalar signal-to-noise concentration sequence is
\begin{equation}
 \kappa_{N\mid T}
 =N\pi_{NT}^{2}\frac{(Q_{zz,NT}^{E})^{2}}{\Omega_{v,NT}},
 \label{eq:kappa-details}
\end{equation}
which converges to the expression using $Q_{zz}$ under Assumption~\ref{ass:moments}.
A nondegenerate local experiment requires
\begin{equation}
 \pi_{NT}=\frac{c}{r_{NT}},\qquad r_{NT}=\sqrt{NT^{\delta}},       \label{eq:local}
\end{equation}
whereas the unnormalized first-stage score is normalized by
\begin{equation}
 b_{NT}=T\sqrt{N\Omega_{v,NT}}\asymp\sqrt N\,T^{1-\delta/2}.     \label{eq:bnt}
\end{equation}
These rates coincide only when $\delta=1$.

\paragraph{Purpose.}
The next proposition derives the local first-stage sequence from an explicit
signal-to-noise balance. It clarifies why coefficient localization and score
normalization are distinct and may depend differently on the two panel dimensions.

\paragraph{Restatement of the balance result.}
Suppose $T^{\delta}\Omega_{v,NT}\to\Omega_v\in(0,\infty)$,
$\widehat Q_{zz,NT}\to_pQ_{zz}>0$, and
\[
 b_{NT}^{-1}\sum_{i,t}z_{it}^{o}v_{it}^{o}\Rightarrow Z_v,
 \qquad b_{NT}=T\sqrt{N\Omega_{v,NT}}.
\]
If $\pi_{NT}=c/\sqrt{NT^\delta}$, then
\[
 b_{NT}^{-1}\pi_{NT}\sum_{i,t}(z_{it}^{o})^2
 \to_p\mu_c:=cQ_{zz}/\sqrt{\Omega_v},
\]
and consequently
\[
 b_{NT}^{-1}\sum_{i,t}z_{it}^{o}x_{it}^{o}\Rightarrow\mu_c+Z_v.
\]

\paragraph{Interpretation.}
The local coefficient sequence is determined by the amount of independent first-stage
variation, not mechanically by the number of panel cells. Additional time periods can
improve nuisance estimation without increasing first-order identifying information at
the same rate.

\begin{remark}[Identification and dependence accumulate separately]
The local coefficient rate $r_{NT}$ is obtained by balancing the deterministic projected first-stage drift against the stochastic scale of the first-stage score; it is therefore not a property of the Jacobian alone. The score normalization $b_{NT}$ is determined by the dependence of the first-stage score. The structural-score normalization used for inference may be different from both. This separation is the defining feature of asymmetric identification.
\end{remark}

\section{Regular GMM: feasible optimality and scope}\label{app:regular-gmm-scope}

\paragraph{Feasible oracle covariance optimality.}
Suppose Proposition~\ref{prop:oracle} holds at the structural-score scale, Assumption~\ref{ass:regular-eff} holds for the oracle moments, and the feasible covariance estimator satisfies $\widehat\Omega_g-\Omega_g=o_p(1)$. Then feasible two-step GMM with $\mathsf W_{NT}=\widehat\Omega_g^{-1}$ has the same first-order distribution and the same within-class covariance minimum $V_{\mathrm{opt}}$ as the oracle projected estimator.

\begin{remark}[Scope of the covariance-optimality statement]
Theorem~\ref{thm:efficient-gmm} is deliberately restricted to strong identification and does not establish a semiparametric efficiency bound. When singular values of $J_{NT}$ shrink at the stochastic score rate, the estimator is nonregular and the classical GMM variance comparison is not an efficiency bound. When the limiting Jacobian is rank deficient, no regular point estimator exists for its null-space directions. An identification-dependent random weighting rule can improve finite-sample behavior in particular designs, but it cannot create information in unidentified directions and may change the weak-limit experiment. Such a rule requires a separate decision-theoretic analysis and is not justified by the regular GMM theorem.
\end{remark}

\begin{remark}[Practical implication]
Optimal weighting and identification-robust inference solve different problems. Efficient two-step GMM is appropriate for well-identified directions. Uniformly valid confidence statements across strong, weak, and rank-deficient regimes should continue to be based on null-restricted statistics such as Anderson--Rubin, with the dependence law approximated by the proposed bootstrap.
\end{remark}

\subsection{Proof of Theorem~\ref{thm:localmoment}}\label{app:proof:thm-localmoment}

\begin{proof}
The proof separates algebra from probability. First, the linear structural equation
gives an exact decomposition of the null and candidate scores under a local alternative.
Second, normalized score convergence and Jacobian convergence deliver the shift family.
Third, on the Gaussian fixed-rank stratum, the singular Gaussian likelihood ratio yields
the PIM and its rank characterization.

Under the local law indexed by $h$,
\[
 y_{it}^{o}-x_{it}^{o\prime}\beta_0
 =u_{it}^{o}+x_{it}^{o\prime}D_{NT}^{-1}h.
\]
Therefore
\[
 \mathcal S_{NT}^{0}(h)
 =A_{NT}^{-1}\sum_{i,t}z_{it}^{o}u_{it}^{o}
 +\widehat H_{NT}h.
\]
Assumption~\ref{ass:localmoment} and Slutsky's theorem give
$\mathcal S_{NT}^{0}(h)\Rightarrow Z+\mathcal Hh$. Similarly,
\[
 \mathcal S_{NT}(k;h)
 =A_{NT}^{-1}\sum_{i,t}z_{it}^{o}u_{it}^{o}
 +\widehat H_{NT}(h-k),
\]
which yields the second limit. Joint convergence over every finite collection follows
because all coordinates are continuous functions of the common pair
$\{A_{NT}^{-1}\sum z_{it}^{o}u_{it}^{o},\widehat H_{NT}\}$.

Now suppose $Z\sim N(0,\Omega)$ and
$\mathcal R(\mathcal H)\subseteq\mathcal R(\Omega)$. Let
$\Omega=P_r\Lambda_rP_r'$ be the positive-eigenvalue decomposition. Write
$\widetilde{\mathcal H}=\Lambda_r^{-1/2}P_r'\mathcal H$. Then
\[
 \mathcal I_P
 =\mathcal H'P_r\Lambda_r^{-1}P_r'\mathcal H
 =\widetilde{\mathcal H}'\widetilde{\mathcal H}.
\]
Hence $\mathcal I_P$ is positive semidefinite,
$\rank(\mathcal I_P)=\rank(\mathcal H)$, and it is positive definite exactly when
$\mathcal H$ has full column rank.

The Gaussian shift $Y\sim N(\mathcal Hh,\Omega)$ is supported on
$\mathcal R(\Omega)$. In whitened coordinates
$\widetilde Y=\Lambda_r^{-1/2}P_r'Y$,
\[
 \widetilde Y\sim N(\widetilde{\mathcal H}h,I_r).
\]
The ordinary Gaussian density ratio relative to $h=0$ is
\[
 h'\widetilde{\mathcal H}'\widetilde Y
 -\frac12h'\widetilde{\mathcal H}'\widetilde{\mathcal H}h,
\]
which equals \eqref{eq:gaussian-shift-lr}.
\end{proof}

\subsection{Proof of Corollary~\ref{cor:ar-local-power}}\label{app:proof:cor-ar-local-power}

\begin{proof}
Theorem~\ref{thm:localmoment} and covariance consistency imply
\[
 \mathcal S_{NT}^{0}(h)'\widehat\Omega_{NT}^{\dagger}\mathcal S_{NT}^{0}(h)
 \Rightarrow
 (Z+\mathcal Hh)'\Omega^\dagger(Z+\mathcal Hh).
\]
In the whitened coordinates from the proof of Theorem~\ref{thm:localmoment}, the
limiting quadratic form is
\[
 \|\xi+\widetilde{\mathcal H}h\|^2,
 \qquad \xi\sim N(0,I_{r_\Omega}),
\]
which is noncentral chi-square with degrees of freedom $r_\Omega$ and noncentrality
\[
 \|\widetilde{\mathcal H}h\|^2
 =h'\mathcal H'\Omega^\dagger\mathcal Hh
 =h'\mathcal I_Ph.
\]
\end{proof}

\subsection{Proof of Corollary~\ref{cor:gaussian-blue}}\label{app:proof:cor-gaussian-blue}

\begin{proof}
Because $\mathcal I_P$ is positive definite,
\[
 \E[\widehat h_P]
 =\mathcal I_P^{-1}\mathcal H'\Omega^\dagger\mathcal Hh=h.
\]
The range condition implies $\Omega\Omega^\dagger\mathcal H=\mathcal H$, and therefore
\[
 \Var(\widehat h_P)
 =\mathcal I_P^{-1}\mathcal H'\Omega^\dagger
 \Omega\Omega^\dagger\mathcal H\mathcal I_P^{-1}
 =\mathcal I_P^{-1}.
\]

Let $KY$ be another linear unbiased estimator, so $K\mathcal H=I_p$, and define
$K_*=\mathcal I_P^{-1}\mathcal H'\Omega^\dagger$. Then
$(K-K_*)\mathcal H=0$ and
\[
 (K-K_*)\Omega K_*'
 =(K-K_*)\mathcal H\mathcal I_P^{-1}=0.
\]
Consequently,
\[
 K\Omega K'-K_*\Omega K_*'
 =(K-K_*)\Omega(K-K_*)'\succeq0.
\]
\end{proof}

\subsection{Proof of Theorem~\ref{thm:singular-inference}}\label{app:proof:thm-singular-inference}

\begin{proof}
Let $J=J_{NT}$, $\widehat J=J+\Delta$, and let $\sigma_j$ be simple. Consider the
symmetric dilation
\[
 \mathcal D(J)=
 \begin{pmatrix}0&J\\J'&0\end{pmatrix}.
\]
Its nonzero eigenvalues are $\{\pm\sigma_k(J)\}$. An eigenvector associated with
$\sigma_j(J)$ is
\[
 2^{-1/2}\begin{pmatrix}u_{j,NT}\\v_{j,NT}\end{pmatrix}.
\]
The perturbation is
\[
 \mathcal D(\Delta)=
 \begin{pmatrix}0&\Delta\\\Delta'&0\end{pmatrix},
 \qquad
 \|\mathcal D(\Delta)\|=\|\Delta\|.
\]
The first-order expansion for a simple eigenvalue of a symmetric matrix, with the
dilation gap $\delta_{j,NT}$ defined in the main paper, gives, on an event whose
probability approaches one,
\[
 \sigma_j(\widehat J)-\sigma_j(J)
 =u_{j,NT}'\Delta v_{j,NT}
 +O\!\left(\frac{\|\Delta\|^2}{\delta_{j,NT}}\right).
\]
Assumption~\ref{ass:jacobian-clt} gives $\|\Delta\|=O_p(s_{NT})$, so the remainder is
$O_p(s_{NT}^2/\delta_{j,NT})=o_p(s_{NT})$. The first limit follows from the continuous
mapping theorem.

Conditionally on the data, Weyl's inequality and
$s_{NT}/\delta_{j,NT}\to0$ imply that the sample singular value remains simple and
separated with probability approaching one. Apply the same expansion around
$\widehat J_{NT}$. Wedin's theorem gives convergence of the empirical singular vectors,
and the bootstrap perturbation is $O_{p^*}(s_{NT})$ in probability. The bootstrap
remainder is therefore $o_{p^*}(s_{NT})$. Assumption~\ref{ass:jacobian-clt}, convergence of the empirical singular vectors, and
conditional Slutsky yield \eqref{eq:singular-bootstrap}. No continuity assumption on the
limit law is needed for this bounded-Lipschitz conclusion.
\end{proof}

\subsection{Proof of Theorem~\ref{thm:rank-consistency}}\label{app:proof:thm-rank-consistency}

\begin{proof}
Assumption~\ref{ass:jacobian-clt} implies
$\|\Delta_{NT}\|=O_p(s_{NT})$. Hence Weyl's inequality gives
\[
 \max_{j\le p}
 |\sigma_j(\widehat J_{NT})-\sigma_j(J_{NT})|
 \le \|\Delta_{NT}\|=O_p(s_{NT}).
\]
Since $s_{NT}/c_{NT}\to0$, every zero population singular value is estimated below
$c_{NT}$ with probability approaching one. If $r\ge1$, the condition
$\sigma_r(J_{NT})/c_{NT}\to\infty$ implies that every nonzero singular value exceeds $c_{NT}$ by
more than the stochastic error with probability approaching one. If $r=0$, there are no
nonzero singular values to retain. Hence
$\Pr(\widehat r_{NT}=r)\to1$.
\end{proof}

\subsection{Proof of Theorem~\ref{thm:subspace-reliability}}\label{app:proof:thm-subspace-reliability}

\begin{proof}
Let $E_{NT}=\widehat J_{NT}-J_{NT}$. A global sin--theta bound for singular
subspaces gives, for a universal constant $C$,
\[
 \|\widehat P_R-P_R\|
 \le
 C\min\left\{1,\frac{\|E_{NT}\|}{\gamma_{NT}}\right\}.
\]
(The truncation by one reflects the fact that orthogonal-projector distances are
uniformly bounded.) Since $\|E_{NT}\|=\|\Delta_{NT}\|=O_p(s_{NT})$, monotonicity of
$x\mapsto\min\{1,x\}$ yields
\[
 \|\widehat P_R-P_R\|
 =
 O_p\!\left(\min\left\{1,\frac{s_{NT}}{\gamma_{NT}}\right\}\right).
\]
If $s_{NT}/\gamma_{NT}\to0$, the right-hand side is $o_p(1)$. Without that rate
separation, the bound does not imply consistency, which proves the final qualification.
\end{proof}

\subsection{Proof of Corollary~\ref{cor:pim-directional}}\label{app:proof:cor-pim-directional}

\begin{proof}
Apply the continuous linear functional $M\mapsto h'Mh$ to
\eqref{eq:pim-delta}. Bootstrap validity follows from the same continuous mapping.
\end{proof}

\subsection{Proof of Corollary~\ref{cor:pim-eigen}}\label{app:proof:cor-pim-eigen}

\begin{proof}
For a simple separated eigenvalue of a symmetric matrix, first-order perturbation gives
\[
 \lambda_j(\widehat{\mathcal I}_{P,NT})
 -\lambda_j(\mathcal I_P)
 =
 e_j'
 \{\widehat{\mathcal I}_{P,NT}-\mathcal I_P\}
 e_j
 +o_p(\ell_{NT}).
\]
Combine this expansion with Theorem~\ref{thm:pim-delta}. The conditional bootstrap
argument is identical.
\end{proof}

\subsection{Proof of Corollary~\ref{cor:pim-condition}}\label{app:proof:cor-pim-condition}

\begin{proof}
Apply Corollary~\ref{cor:pim-eigen} to the largest and smallest simple eigenvalues and
then apply the ordinary delta method to the smooth map $(a,b)\mapsto a/b$ at
$b=\lambda_{\min}(\mathcal I_P)>0$.
\end{proof}

\subsection{Proof of Theorem~\ref{thm:pim-delta}}\label{app:proof:thm-pim-delta}

\begin{proof}
The argument is a fixed-rank delta method. We first linearize the Moore--Penrose
inverse along the tangent space of the covariance-rank manifold. We then expand
$H'\Omega^\dagger H$ in its Jacobian and covariance arguments and collect the first-order
terms. The bootstrap follows from the same expansion on the conditional same-rank event.

Write
$\Delta_H=\widehat H_{NT}-\mathcal H$ and
$\Delta_\Omega=\widehat\Omega_{NT}-\Omega$. On the symmetric fixed-rank manifold, and along perturbations satisfying the tangent
condition in Assumption~\ref{ass:pim-clt}, the Moore--Penrose inverse is Fr\'echet
differentiable and
\[
 \widehat\Omega_{NT}^{\dagger}
 =
 \Omega^\dagger
 +\dot\Omega^\dagger[\Delta_\Omega]
 +o_p(\|\Delta_\Omega\|).
\]
Substitute this expansion and
$\widehat H_{NT}=\mathcal H+\Delta_H$ into
$\widehat{\mathcal I}_{P,NT}
=\widehat H_{NT}'
 \widehat\Omega_{NT}^{\dagger}
 \widehat H_{NT}$.
The three terms linear in $(\Delta_H,\Delta_\Omega)$ are
\[
 \Delta_H'\Omega^\dagger\mathcal H,\qquad
 \mathcal H'\Omega^\dagger\Delta_H,\qquad
 \mathcal H'\dot\Omega^\dagger[\Delta_\Omega]\mathcal H.
\]
Assumption~\ref{ass:pim-clt} implies
$\|\Delta_H\|+\|\Delta_\Omega\|=O_p(\ell_{NT})$.
Every remaining product contains at least two first-order perturbations, or one
$O_p(\ell_{NT})$ perturbation multiplied by the
$o_p(\ell_{NT})$ inverse-expansion remainder. It is therefore
$o_p(\ell_{NT})$. This proves \eqref{eq:pim-expansion}. Applying the joint
continuous mapping theorem to \eqref{eq:pim-joint-clt} gives
\eqref{eq:pim-delta}--\eqref{eq:pim-linear-map}.

For the bootstrap, Assumption~\ref{ass:pim-clt} implies that both
$\widehat\Omega_{NT}$ and $\widehat\Omega_{NT}^*$ have rank $r_\Omega$ with the required
unconditional and conditional probabilities. The bootstrap tangency condition controls
the scaled normal-normal secant block. On that event, apply the same fixed-rank expansion
around $(\widehat H_{NT},\widehat\Omega_{NT})$. Consistency of the derivative map and
the assumed joint bootstrap law imply conditional convergence of the linearized
bootstrap statistic to $\mathbb Z_{\mathcal I}$. The bootstrap remainder is
$o_{p^*}(1)$ after division by $\ell_{NT}$, proving
\eqref{eq:pim-bootstrap-law}.
\end{proof}

\subsection{Proof of Theorem~\ref{thm:pim-bootstrap-region}}\label{app:proof:thm-pim-bootstrap-region}

\begin{proof}
Work on the event that both $\widehat\Omega_{NT}$ and
$\widehat\Omega_{NT}^*$ have rank $r_\Omega$. Its unconditional and conditional
probabilities converge to one by Assumption~\ref{ass:pim-clt}; on this event the
Moore--Penrose map is continuous on the maintained fixed-rank manifold.

Theorem~\ref{thm:pim-delta} and the continuous mapping theorem imply
\[
 \ell_{NT}^{-1}
 \|\widehat{\mathcal I}_{P,NT}-\mathcal I_P\|_{\mathsf F}
 \Rightarrow
 \|\mathbb Z_{\mathcal I}\|_{\mathsf F}.
\]
The bootstrap version converges conditionally to the same law. Continuity at the target
quantile yields conditional quantile consistency. Therefore,
\[
 \Pr\!\left(
 \|\widehat{\mathcal I}_{P,NT}-\mathcal I_P\|_{\mathsf F}
 \le \ell_{NT}\widehat c_{1-\alpha}^{*,\mathcal I}
 \right)\to1-\alpha,
\]
which is equivalent to coverage of \eqref{eq:pim-region}.
\end{proof}

\subsection{Proof of Corollary~\ref{cor:singular-ci}}\label{app:proof:cor-singular-ci}

\begin{proof}
Theorem~\ref{thm:singular-inference} gives conditional weak convergence of the centered
bootstrap singular value to the same continuous limit as the centered sample singular
value. Continuity at the two target quantiles implies conditional quantile consistency.
The stated interval is obtained by inverting the centered bootstrap inequalities, and
its coverage therefore converges to $1-\alpha$.

If $\widehat s_{NT}/s_{NT}\to_p1$, replacing $s_{NT}$ by
$\widehat s_{NT}$ follows from Slutsky's theorem, both in the statistic and in the
interval endpoints.
\end{proof}

\subsection{Proof of Proposition~\ref{prop:oracle}}\label{app:proof:prop-oracle}

\begin{proof}
Assumption~\ref{ass:factororth} states exactly that the feasible structural score,
first-stage score, and instrument second moment differ from their oracle counterparts by
$o_p(1)$ after the normalizations used by the relevant oracle result. Joint Slutsky and
the continuous mapping theorem therefore transfer every finite-dimensional oracle limit
that is continuous in these inputs. Under the additional theorem-specific condition in
Proposition~\ref{prop:oracle}, the feasible Jacobian error is
$o_p(\varrho_{NT})$. Combining this error with the oracle sampling error and applying
Proposition~\ref{prop:stability} transfers every spectral cluster whose signal and
separating gap dominate the total perturbation. The same
argument fails at a boundary where the signal or gap is of the perturbation order, which
explains the stated qualification.
\end{proof}

\subsection{Proof of Corollary~\ref{cor:multiAR}}\label{app:proof:cor-multiAR}

\begin{proof}
Under $H_0:\beta=\beta_0$, the projected structural equation gives
$y_{it}^{o}-x_{it}^{o\prime}\beta_0=u_{it}^{o}$. Hence the multivariate null score is
$z_{it}^{o}u_{it}^{o}$ and contains no local first-stage matrix. The asserted reduction
and identification robustness follow directly. Any reference distribution or bootstrap
validity additionally requires the covariance-rank and score-law conditions stated in
the corresponding inference theorem.
\end{proof}

\subsection{Proof of feasible oracle covariance optimality}\label{app:proof:cor-oracle-eff}

\begin{proof}
Proposition~\ref{prop:oracle} makes the feasible score and Jacobian first-order equivalent
to their oracle counterparts at the regular score scale. Consistency of
$\widehat\Omega_g$ implies $\widehat\Omega_g^{-1}\to_p\Omega_g^{-1}$. Substitution into
the exact linear-GMM representation in the proof of Theorem~\ref{thm:efficient-gmm},
followed by Slutsky's theorem, yields the same limiting distribution and covariance
minimum as the oracle estimator.
\end{proof}

\subsection{Proof of Proposition~\ref{prop:boot}}\label{app:proof:prop-boot}

\begin{proof}
Under the null, Proposition~\ref{prop:ar} gives the oracle score
$g_{it}(\beta_0)=z_{it}^{o}u_{it}^{o}$, so the local reduced-form coefficient does not
enter its centering. The PWB-H theorem can therefore be applied to this null-score array
whenever its exact dependence, moment, bandwidth, scaling, and regime-classifier
conditions hold. Replacing the oracle score by the feasible factor-adjusted score is
valid when the conditional bootstrap analogue of Proposition~\ref{prop:oracle} makes all
inputs asymptotically equivalent at the PWB-H normalization. Uniformity over a local
first-stage class follows only if those transfer bounds hold uniformly over that class.
\end{proof}

\subsection{Proof of Theorem~\ref{thm:primitive-benchmark}}\label{app:proof:thm-primitive-benchmark}

\begin{proof}
The proof has three steps. We first obtain separate central limit theorems for the
unit, time, and cell components. We then combine the unit and time limits through the
bilinear mapping that generates the interaction term. Finally, independence of the
primitive blocks yields the stated joint limit.

Summing \eqref{eq:primitive-score-decomp} over the panel gives the exact identity
\[
 \sum_{i,t}\xi_{it,NT}
 =T\sum_i a_{i,NT}
 +N\sum_t d_{t,NT}
 +\mathfrak h_{NT}\!\left(\sum_i\phi_{i,NT},\sum_t\psi_{t,NT}\right)
 +\sum_{i,t}\varepsilon_{it,NT},
\]
where the $k$th coordinate of $\mathfrak h_{NT}(x,y)$ is $x'H_{k,NT}y$.
For the $k$th coordinate,
\[
\begin{aligned}
 \sum_{i,t}\phi_{i,NT}'H_{k,NT}\psi_{t,NT}
 &=
 \left(\sum_i\phi_{i,NT}\right)'H_{k,NT}
 \left(\sum_t\psi_{t,NT}\right)\\
 &=
 \sqrt{NT}
 \left(N^{-1/2}\sum_i\phi_{i,NT}\right)'H_{k,NT}
 \left(T^{-1/2}\sum_t\psi_{t,NT}\right).
\end{aligned}
\]
Hence the interaction accumulates at $\sqrt{NT}$, the same order as the cell term and
strictly slower than either main effect.

By the multivariate central limit theorem for fixed-range dependent triangular arrays,
the uniform $(2+\eta)$ moment bound, $m_A$-dependence, and convergence of
\[
 \Gamma_{A,NT}(0)+\sum_{h=1}^{m_A}
 \{\Gamma_{A,NT}(h)+\Gamma_{A,NT}(h)'\}
\]
imply
\[
 \frac1{\sqrt N}\sum_i
 \begin{pmatrix}a_{i,NT}\\ \phi_{i,NT}\end{pmatrix}
 \Rightarrow
 \begin{pmatrix}Z_a\\ Z_\phi\end{pmatrix},
 \qquad
 \frac1{\sqrt T}\sum_t
 \begin{pmatrix}d_{t,NT}\\ \psi_{t,NT}\end{pmatrix}
 \Rightarrow
 \begin{pmatrix}Z_d\\ Z_\psi\end{pmatrix}.
\]
For completeness, the result follows by applying the scalar $m$-dependent CLT to every
Cramér--Wold linear combination. The uniform $(2+\eta)$ moment bound gives the
Lindeberg condition, and the finite dependence ranges reduce the asymptotic variance to
the finite covariance sums in Assumption~\ref{ass:primitivebenchmark-details}. Independence of
the unit and time collections gives joint independence of the two Gaussian blocks.

Next,
\[
 \frac1{\sqrt{NT}}
 \mathfrak h_{NT}\!\left(\sum_i\phi_{i,NT},\sum_t\psi_{t,NT}\right)
 =
 \mathfrak h_{NT}\!\left(
 N^{-1/2}\sum_i\phi_{i,NT},
 T^{-1/2}\sum_t\psi_{t,NT}\right).
\]
Because $H_{k,NT}\to H_k$ for every $k$ and the dimensions are fixed, the bilinear
maps $\mathfrak h_{NT}$ converge to $\mathfrak h$ uniformly on compact sets. Indeed, for
any $(x_{NT},y_{NT})\to(x,y)$,
\[
\begin{aligned}
 &\|\mathfrak h_{NT}(x_{NT},y_{NT})-\mathfrak h(x,y)\|\\
 &\quad\le
 C\max_k\|H_{k,NT}-H_k\|\,\|x_{NT}\|\,\|y_{NT}\|\\
 &\qquad\quad+
 \|\mathfrak h(x_{NT},y_{NT})-\mathfrak h(x,y)\|
 \to0.
\end{aligned}
\]
where $C$ depends only on the fixed output dimension. The partial-sum arguments are tight,
so the extended continuous mapping theorem gives convergence of this term to
$\mathfrak h(Z_\phi,Z_\psi)$.

The cell innovations form an independent triangular array that is independent of the
unit and time collections. The covariance and Lindeberg conditions therefore imply, by
the multivariate Lindeberg--Feller theorem,
\[
 (NT)^{-1/2}\sum_{i,t}\varepsilon_{it,NT}
 \Rightarrow Z_\varepsilon\sim N(0,\Sigma_\varepsilon).
\]
The cell array is independent of the unit and time collections, so this convergence is
joint with the preceding partial-sum limits and $Z_\varepsilon$ is independent of the
other Gaussian blocks.

Finally, use
\[
 K_{NT}^{-1}
 =(T\sqrt N)^{-1}P_A+(N\sqrt T)^{-1}P_D+(\sqrt{NT})^{-1}P_0.
\]
The coordinate restrictions in Assumption~\ref{ass:primitivebenchmark-details}(v) place the unit,
time, and interaction/cell aggregates in $V_A$, $V_D$, and $V_0$, respectively. Each
aggregate is therefore multiplied only by its own rate. Orthogonality of the projectors
eliminates every cross-block term, so no divergent factor such as $\sqrt T$ or
$\sqrt N$ remains. The joint convergence established above and Slutsky's theorem yield
\[
 K_{NT}^{-1}\sum_{i,t}\xi_{it,NT}
 \Rightarrow
 G_A Z_a+G_D Z_d+G_0\{\mathfrak h(Z_\phi,Z_\psi)+Z_\varepsilon\}.
\]
On the maintained nondegenerate Gaussian supports, a nonzero bilinear term is a genuinely quadratic Gaussian polynomial and is therefore non-Gaussian; degeneracy includes the case in which the bilinear map vanishes on those supports. The final block-diagonal
statement follows by partitioning the left-hand side and $K_{NT}$ conformably.
\end{proof}

\subsection{Proof of Theorem~\ref{thm:pcbenchmark}}\label{app:proof:thm-pcbenchmark}

\begin{proof}
Write the auxiliary left covariance as
\[
 S_{L,NT}^{\mathrm{tr}}
 =S_{L,NT}^{0}+\Delta_{L,NT},
 \qquad
 S_{L,NT}^{0}=\frac1T\sum_{\ell=1}^L
 N_{\ell,NT}N_{\ell,NT}'.
\]
For each $\ell$,
\[
 A_\ell A_\ell'-N_\ell N_\ell'
 =N_\ell E_\ell'+E_\ell N_\ell'+E_\ell E_\ell'.
\]
Because $L$ is fixed,
\[
 \|\Delta_{L,NT}\|
 \le \frac1T\sum_{\ell=1}^L
 \{2\|N_{\ell,NT}\|\|E_{\ell,NT}\|+\|E_{\ell,NT}\|^2\}.
\]
Using the rates in Assumption~\ref{ass:pcbenchmark},
\[
 \frac{\|\Delta_{L,NT}\|}{N}
 =O_p(N^{-1/2}+T^{-1/2})=O_p(\eta_{NT}).
\]
The signal eigengap is at least $cN$. Davis--Kahan, equivalently Wedin applied to the
associated singular spaces, therefore gives
\[
 \|\widehat P_\Lambda-P_{\mathcal L_0}\|
 \le \frac{2\|\Delta_{L,NT}\|}{cN}
 =O_p(\eta_{NT}).
\]
The same argument applied to
$S_{F,NT}^{\mathrm{tr}}$ yields
$\|\widehat P_F-P_{\mathcal F_0}\|=O_p(\eta_{NT})$.

Let
$\Delta_\Lambda=\widehat P_\Lambda-P_{\mathcal L_0}$ and
$\Delta_F=\widehat P_F-P_{\mathcal F_0}$. For every evaluation-fold matrix $A$,
\[
 \widehat{\mathcal M}A-\mathcal M_0A
 =-\Delta_\Lambda AM_{\mathcal F_0}
  -M_{\mathcal L_0}A\Delta_F
  +\Delta_\Lambda A\Delta_F.
\]
Substitute this identity into each feasible structural-score product. Conditional on the
training sigma-field, the score difference is the aggregate remainder
$R_{g,NT}^{\mathrm{tr}}(\beta)$. Assumption~\ref{ass:pcbenchmark}(iv) controls both the
supremum of its conditional mean and the conditional second moment of the centered
supremum. Conditional Markov's inequality, followed by iterated expectation, therefore
gives
\[
 \sup_{\beta\in\mathcal B}
 \left\|\sum_{i,t}\{\widehat g_{it}(\beta)-g_{it}(\beta)\}\right\|
 =O_p(\sqrt{NT}\eta_{NT}).
\]
The same expansion and conditional bound give the displayed first-stage-score rate.
The sample-average second-moment and Jacobian rates are imposed in part (iv) and equal
$O_p(\eta_{NT})$.

Finally, part (v) implies that the two aggregate score remainders are $o_p(a_{NT})$ and
$o_p(b_{NT})$, while the Jacobian error is $o_p(\varrho_{NT})$. Thus the high-level
transfer and nonabsorption conditions hold. The block-cross-fitted version follows by
applying the same conditional argument on each fixed number of folds and summing the
fold-specific remainders.
\end{proof}

\subsection{Proof of Lemma~\ref{lem:projectioncompat} (Projection compatibility)}\label{app:proof:lem-projectioncompat}
\begin{proof}
Fix $r$. If $\operatorname{col}(L_r)\subseteq\mathcal L_0$, then $M_{\mathcal L_0}L_r=0$; if $\operatorname{col}(F_r)\subseteq\mathcal F_0$, then $F_r'M_{\mathcal F_0}=0$. In either case,
\[
 M_{\mathcal L_0}L_rF_r'M_{\mathcal F_0}=0.
\]
Summing over $r$ and using linearity proves $\mathcal M_0(N)=0$. Applying the same operator to every term of each maintained equation leaves the common coefficient matrices unchanged and removes the nuisance terms. The condition is sufficient, not necessary: cancellation may also occur for matrices outside these spans, but such cancellation is not stable enough to serve as a maintained model restriction.
\end{proof}

\subsection{Proof of Theorem~\ref{thm:identification-map} (Projected identification map)}\label{app:proof:thm-identification-map}
\begin{proof}
For any $\beta$, substitute $y_{it}^{o}=x_{it}^{o\prime}\beta_0+u_{it}^{o}$ into \eqref{eq:idmap}. Population-average exclusion gives
\[
 \mathcal Q_{NT}(\beta)
 =\frac1{NT}\sum_{i,t}\mathbb E[z_{it}^{o}u_{it}^{o}]
   -\left\{\frac1{NT}\sum_{i,t}\mathbb E[z_{it}^{o}x_{it}^{o\prime}]\right\}(\beta-\beta_0)
 =-J_{NT}(\beta-\beta_0).
\]
Thus $\beta$ and $\beta_0$ generate the same population moments exactly when $\beta-\beta_0\in\mathcal N(J_{NT})$. On an unrestricted affine parameter space the root is unique if and only if $\mathcal N(J_{NT})=\{0\}$, equivalently $\rank(J_{NT})=p$. On a restricted set $\mathcal B$, the root $\beta_0$ is unique exactly when
\[
 \mathcal N(J_{NT})\cap(\mathcal B-\beta_0)=\{0\}.
\]
The stronger condition
$\mathcal N(J_{NT})\cap(\mathcal B-\mathcal B)=\{0\}$
is equivalent to injectivity of the moment map over every pair of points in $\mathcal B$.
For the singular-space statement, write the thin SVD as $J_{NT}=L_{NT}S_{NT}R_{NT}'$. A perturbation $h$ changes the moment only through $S_{NT}R_{NT}'h$; hence the right singular subspaces are the structural directions seen by the moments, and their singular values are the corresponding deterministic gains. Finally, if $J_{NT}\to J_0$, the limiting zero set is $\{\beta:J_0(\beta-\beta_0)=0\}=\beta_0+\mathcal N(J_0)$. Since $\mathbb R^p=\mathcal R(J_0')\oplus\mathcal N(J_0)$, only the row-space projection of $\beta_0$ is pinned down by the limiting moments.
\end{proof}

\subsection{Proof of Proposition~\ref{prop:stability} (Singular-value and subspace stability)}\label{app:proof:prop-stability}
\begin{proof}
Let $E_{NT}=\widehat J_{NT}-J_{NT}$. By the triangle inequality,
$\|E_{NT}\|\le \tau_{NT}+\rho_{NT}=e_{NT}$. Weyl's singular-value inequality gives
\[
 |\sigma_j(\widehat J_{NT})-\sigma_j(J_{NT})|
 \le \|E_{NT}\|\le e_{NT}.
\]
For a cluster separated by $\gamma_{NT}$, Wedin's sin--theta theorem implies, on
$\{\|E_{NT}\|<\gamma_{NT}/2\}$,
\[
 \|\widehat P_R-P_R\|
 \le \frac{2\|E_{NT}\|}{\gamma_{NT}}
 \le \frac{2e_{NT}}{\gamma_{NT}}.
\]
This proves the eventwise bound. If $e_{NT}=O_p(r_{NT})$ and
$r_{NT}=o(\gamma_{NT})$, the event has probability approaching one and the stated
$O_p(r_{NT}/\gamma_{NT})$ rate follows.

For part (i), Weyl's inequality implies
$\sigma_p(\widehat J_{NT})\ge \sigma_p(J_{NT})-e_{NT}$.
The two rate restrictions make the right-hand side exceed $c_{NT}$ with probability
approaching one. Part (ii) is a statement about what the perturbation bound can certify:
if both the population singular value and estimation error are of order $d_{NT}$, then
the Weyl interval
$[\{\sigma_j(\widehat J_{NT})-e_{NT}\}_+,\sigma_j(\widehat J_{NT})+e_{NT}]$
has width of the same order as its distance from zero. No uniform impossibility claim is
deduced without specifying a statistical experiment. Part (iii) follows by applying the
same singular-value and projector bounds to every retained cluster.
\end{proof}

\subsection{Proof of Proposition~\ref{prop:balance} (Weak-first-stage balance)}\label{app:proof:prop-balance}
\begin{proof}
The reduced form gives
$\sum_{i,t}z_{it}^{o}x_{it}^{o}
=\pi_{NT}\sum_{i,t}(z_{it}^{o})^2+\sum_{i,t}z_{it}^{o}v_{it}^{o}$.
Moreover,
\[
 \frac{NT\pi_{NT}\widehat Q_{zz,NT}}{b_{NT}}
 =
 \frac{c\,\widehat Q_{zz,NT}}{\sqrt{T^\delta\Omega_{v,NT}}}
 \to_p\frac{cQ_{zz}}{\sqrt{\Omega_v}}.
\]
The conclusion follows from the assumed score convergence and Slutsky's theorem.
\end{proof}

\subsection{Proof of Theorem~\ref{thm:ratio} (Weak-IV ratio limit)}\label{app:proof:thm-ratio}
\begin{proof}
The proof makes the random denominator explicit. We first write the exact IV
estimation error as the structural score divided by the projected first stage. We then
insert the two normalizations and apply the joint weak limit. The final cases depend only
on the relative rate $a_{NT}/b_{NT}$.

Start from the exact identity in the preceding display and multiply by $b_{NT}/a_{NT}$:
\[
 \frac{b_{NT}}{a_{NT}}(\widehat\beta_{IV}^{o}-\beta_0)
 =\frac{a_{NT}^{-1}S_{u,NT}}
 {b_{NT}^{-1}NT\pi_{NT}\widehat Q_{zz,NT}+b_{NT}^{-1}S_{v,NT}}.
\]
Proposition~\ref{prop:balance} gives
$b_{NT}^{-1}NT\pi_{NT}\widehat Q_{zz,NT}\to_p\mu_c$, with
$\mu_c=cQ_{zz}/\sqrt{\Omega_v}$, while Assumption~\ref{ass:joint} gives joint convergence of the normalized numerator and stochastic denominator component. Hence the denominator converges jointly to $\mu_c+Z_v$. The map $(x,y)\mapsto x/y$ is continuous on $\mathbb R\times(\mathbb R\setminus\{0\})$; the no-atom condition $\Pr(\mu_c+Z_v=0)=0$ therefore permits the extended continuous mapping theorem and yields \eqref{eq:ratio}. The three rate statements follow by writing
$\widehat\beta_{IV}^{o}-\beta_0=(a_{NT}/b_{NT})R_{NT}$ with $R_{NT}\Rightarrow Z_u/(\mu_c+Z_v)$.
\end{proof}

\subsection{Proof of Theorem~\ref{thm:multiratio} (Multivariate weak-IV limit on a full-rank local stratum)}\label{app:proof:thm-multiratio}
\begin{proof}
Set $\mathsf A_{NT}=\mathsf G_{NT}C_{NT}^{-1}$ and
$\widetilde U_{NT}=a_{NT}^{-1}U_{NT}$. The assumed joint convergence gives
$(\widetilde U_{NT},\mathsf A_{NT},\mathsf W_{NT})
\Rightarrow(Z_u,\mathcal H,\mathsf W)$.
Since $\mathsf G_{NT}=\mathsf A_{NT}C_{NT}$, the linear-GMM first-order condition yields, whenever
$\mathsf G_{NT}$ has full column rank,
\[
 a_{NT}^{-1}C_{NT}(\widehat\beta-\beta_0)
 =(\mathsf A_{NT}'\mathsf W_{NT}\mathsf A_{NT})^{-1}\mathsf A_{NT}'\mathsf W_{NT}\widetilde U_{NT}.
\]
Because $\mathsf W$ is positive definite and $\mathcal H$ has full column rank almost surely,
\[
 \lambda_{\min}(\mathcal H'\mathsf W\mathcal H)
 \ge \lambda_{\min}(\mathsf W)\,\sigma_p(\mathcal H)^2>0
 \quad\text{almost surely}.
\]
Thus the limiting Gram matrix is nonsingular, and the probability that the sample Gram
matrix is nonsingular tends to one. The map
$(u,A,W)\mapsto(A'WA)^{-1}A'Wu$ is continuous on this full-rank domain. The continuous
mapping theorem gives the displayed limit.

This is an oracle statement. A feasible factor-adjusted estimator has the same limit only
if its score, first-stage matrix, and weight matrix differ from the oracle objects by
remainders negligible at the $a_{NT}$ and column-specific $C_{NT}$ scales. If
$\mathcal H$ loses rank, the inverse map is discontinuous. A Moore--Penrose or regularized
selection then depends on an explicit anchor and does not recover an unrestricted
null-space component of $\beta_0$.
\end{proof}

\subsection{Proof of Theorem~\ref{thm:efficient-gmm} (Regular projected-GMM limit and covariance-optimal weighting)}\label{app:proof:thm-efficient-gmm}
\begin{proof}
The proof has two parts. Exact linearity gives the estimator representation and hence
the regular limit. A projection argument then compares the covariance of an arbitrary
linear-GMM rule with the covariance under inverse-score weighting.

Use the sample-average Jacobian $\widehat J_{NT}^{\,s}$ defined in the main paper. In the linear model,
\[
 \bar g_{NT}(\beta)
 =\bar g_{NT}(\beta_0)-\widehat J_{NT}^{\,s}(\beta-\beta_0)
\]
exactly. On the event that $\widehat J_{NT}^{\,s\prime}\mathsf W_{NT}\widehat J_{NT}^{\,s}$ is nonsingular, the
unconstrained minimizer therefore satisfies
\[
 \widehat\beta_W-\beta_0
 =(\widehat J_{NT}^{\,s\prime}\mathsf W_{NT}\widehat J_{NT}^{\,s})^{-1}
 \widehat J_{NT}^{\,s\prime}\mathsf W_{NT}\bar g_{NT}(\beta_0).
\]
Assumption~\ref{ass:regular-eff} gives $\widehat J_{NT}^{\,s}\to_pJ_0$ and $\mathsf W_{NT}\to_p\mathsf W$.
Multiplying by $NT/a_{NT}$ and using
$a_{NT}^{-1}\sum_{i,t}g_{it}(\beta_0)\Rightarrow Z_g$ yields
\[
 \frac{NT}{a_{NT}}(\widehat\beta_W-\beta_0)
 \Rightarrow (J_0'\mathsf WJ_0)^{-1}J_0'\mathsf WZ_g.
\]
The covariance formula follows by direct calculation.

For covariance optimality, define
$K_{\mathsf W}=(J_0'\mathsf WJ_0)^{-1}J_0'\mathsf W$ and
$K_*=(J_0'\Omega_g^{-1}J_0)^{-1}J_0'\Omega_g^{-1}$.
Both satisfy $K_{\mathsf W}J_0=K_*J_0=I_p$. Hence
\[
 K_{\mathsf W}\Omega_gK_{\mathsf W}'-K_*\Omega_gK_*'
 =(K_{\mathsf W}-K_*)\Omega_g(K_{\mathsf W}-K_*)'
\]
because the cross terms vanish:
$(K_{\mathsf W}-K_*)\Omega_gK_*'
=(K_{\mathsf W}-K_*)J_0(J_0'\Omega_g^{-1}J_0)^{-1}=0$.
The right-hand side is positive semidefinite, proving Loewner-order covariance
optimality within the stated regular linear-GMM class.
\end{proof}

\subsection{Proof of Proposition~\ref{prop:wald} (Failure of conventional Wald pivotality)}\label{app:proof:prop-wald}
\begin{proof}
Theorem~\ref{thm:ratio} and $a_{NT}/b_{NT}\to\rho$ imply
$\widehat\beta_{IV}^{o}-\beta_0\Rightarrow\rho Z_u/(\mu_c+Z_v)$.
The stated variance convergence and Slutsky's theorem give the displayed limit; the common factor $(a_{NT}/b_{NT})^2$ cancels.
\end{proof}

\subsection{Proof of Proposition~\ref{prop:ar} (Identification robustness of the null score)}\label{app:proof:prop-ar}
\begin{proof}
Under the null, the maintained structural equation gives
$y_{it}^{o}-x_{it}^{o\prime}\beta_0=u_{it}^{o}$, and therefore
$g_{it}(\beta_0)=z_{it}^{o}u_{it}^{o}$. The reduced-form coefficient $\Pi_{NT}$ drops out of this identity. Uniform size is a separate probabilistic statement: it follows only if the covariance estimator, generalized-inverse operation on a fixed-rank stratum, and critical-value approximation are uniformly valid over the stated parameter class.
\end{proof}

\subsection{Proof of Theorem~\ref{thm:uniform-ar}}\label{app:proof:thm-uniform-ar}

The proof uses the same basic decomposition as generic uniform-size arguments: control
the distributional approximation, the estimated critical value, and the probability mass
near the reference quantile separately.
Every bound below is taken before the supremum over
$P\in\mathfrak P_{NT}$ is released; no pointwise limit is substituted into a uniform
statement. The statistic is evaluated directly, so the proof uses neither a subgradient
condition nor local convexity of an optimization criterion.

\begin{proof}
Write $c_{NT,P}=c_{NT,P}(1-\alpha)$ and
$\widehat c_{NT}=\widehat c_{NT}(1-\alpha)$. Fix $\varepsilon>0$.
By \eqref{eq:uniform-anticoncentration}, choose $\delta>0$ so that, for all sufficiently
large $(N,T)$,
\[
 \sup_{P\in\mathfrak P_{NT}}
 \{F_{NT,P}(c_{NT,P}+\delta)-F_{NT,P}(c_{NT,P}-\delta)\}
 \le\varepsilon.
\]
On the event $|\widehat c_{NT}-c_{NT,P}|\le\delta$,
\[
 \{T_{NT}>\widehat c_{NT}\}
 \subseteq
 \{T_{NT}>c_{NT,P}-\delta\}.
\]
Let
\[
 r_{NT}
 =
 \sup_{P\in\mathfrak P_{NT}}\sup_{x\in\mathbb R}
 \left|P\{T_{NT}\le x\}-F_{NT,P}(x)\right|.
\]
Because $F_{NT,P}(c_{NT,P})\ge1-\alpha$ by the generalized-quantile definition
(the reference test may be conservative when the cdf has an atom),
\[
\begin{aligned}
 P\{T_{NT}>c_{NT,P}-\delta\}
 &\le 1-F_{NT,P}(c_{NT,P}-\delta)+r_{NT}\\
 &\le \alpha+
 \{F_{NT,P}(c_{NT,P})-F_{NT,P}(c_{NT,P}-\delta)\}
 +r_{NT}\\
 &\le \alpha+\varepsilon+r_{NT}.
\end{aligned}
\]
Adding
$P\{|\widehat c_{NT}-c_{NT,P}|>\delta\}$, taking the supremum over
$P\in\mathfrak P_{NT}$ and the limsup over $(N,T)$, and then letting
$\varepsilon\downarrow0$ proves the size bound. The coverage statement follows from the
definition
$\widehat{\mathcal C}_{1-\alpha}
=\{\beta:T_{NT}(\beta)\le\widehat c_{NT,\beta}(1-\alpha)\}$,
with the same uniform approximation imposed at the true null value.
\end{proof}

\subsection{Proof of Proposition~\ref{prop:ar-limit} (Gaussian fixed-rank limit of the quadratic AR statistic)}\label{app:proof:prop-ar-limit}
\begin{proof}
Let
\[
 S_{NT}=c_{NT}^{-1}\sum_{i,t}g_{it}(\beta_0),
 \qquad
 \widehat\Omega_{NT}=c_{NT}^{-2}\widehat\Sigma_{g,NT}(\beta_0).
\]
The assumptions imply $(S_{NT},\widehat\Omega_{NT})\Rightarrow(Z_g,\Omega_0)$.
Because the rank is eventually $r_g$ and the positive eigenvalues of $\Omega_0$ are
bounded away from zero on the maintained stratum, the Moore--Penrose map is continuous
at $\Omega_0$. Hence
\[
 AR_{NT}(\beta_0)
 =S_{NT}'\widehat\Omega_{NT}^{\dagger}S_{NT}
 \Rightarrow Z_g'\Omega_0^{\dagger}Z_g.
\]
Write the spectral decomposition
$\Omega_0=P\operatorname{diag}(\lambda_1,\ldots,\lambda_{r_g},0,\ldots,0)P'$,
where each $\lambda_j>0$. Since $Z_g\sim N(0,\Omega_0)$, there is
$\xi\sim N(0,I_{r_g})$ such that
\[
 Z_g=P_{r_g}\operatorname{diag}(\lambda_1^{1/2},\ldots,\lambda_{r_g}^{1/2})\xi,
\]
with $P_{r_g}$ collecting the first $r_g$ eigenvectors. Therefore
\[
 Z_g'\Omega_0^{\dagger}Z_g
 =\xi'\xi\sim\chi^2_{r_g}.
\]
If $Z_g$ is non-Gaussian, the same continuous-mapping argument yields
$Z_g'\Omega_0^{\dagger}Z_g$, but that quadratic form is not generally chi-square.
\end{proof}

\subsection{Proof of Theorem~\ref{thm:primitive-bootstrap}}\label{app:proof:thm-primitive-bootstrap}

\begin{proof}
Because the dimensions are fixed, convergence of the covariance estimators implies
convergence of their symmetric positive-semidefinite square roots:
\[
 \|\widehat\Sigma_A^{1/2}-\Sigma_A^{1/2}\|
 +\|\widehat\Sigma_D^{1/2}-\Sigma_D^{1/2}\|
 +\|\widehat\Sigma_\varepsilon^{1/2}
      -\Sigma_\varepsilon^{1/2}\|\to_p0.
\]
On an enlarged probability space, let $U_A,U_D,U_\varepsilon$ be independent standard
Gaussian vectors of the appropriate dimensions, independent of the data. Construct the
bootstrap Gaussian blocks using the estimated square roots and construct target blocks
with the corresponding population square roots and the same standard Gaussian vectors.
The covariance-square-root, $H_k$, and $G$ consistency conditions imply
\[
 L_{NT}^*-L^*=o_{p^*}(1)
\]
in outer probability, where $L^*$ has the same law as $L$. For the bilinear term,
insert and subtract the versions that change one of
$Z_\phi^*$, $\widehat H_k$, and $Z_\psi^*$ at a time. Cauchy--Schwarz and fixed-dimensional
Gaussian moment bounds make every resulting term $o_{p^*}(1)$ in outer probability.

For every $f$ with bounded-Lipschitz norm at most one,
\[
 \left|\E^*f(L_{NT}^*)-\E f(L)\right|
 \le \E^*\{\|L_{NT}^*-L^*\|\wedge2\}+o_p(1)=o_p(1).
\]
Taking the supremum over the bounded-Lipschitz class proves
\eqref{eq:primitive-bootstrap-bl}. The continuous-mapping statement follows from the
same coupling. Conditional quantile consistency and the size conclusion follow from weak
convergence and continuity with strict increase at the target quantile.
\end{proof}

\subsection{Proof of Corollary~\ref{cor:primitive-bootstrap-quadratic}}\label{app:proof:cor-primitive-bootstrap-quadratic}

\begin{proof}
Because
$\widehat\Omega_{NT}\to_p\Omega_L$ and the estimated rank equals
$\rank(\Omega_L)$ with probability approaching one,
$\widehat\Omega_{NT}^{\dagger}\to_p\Omega_L^\dagger$. Theorem~\ref{thm:primitive-bootstrap},
the continuous mapping theorem, and Slutsky's theorem give conditional convergence of
\[
 L_{NT}^{*\prime}\widehat\Omega_{NT}^{\dagger}L_{NT}^*
\]
to $L'\Omega_L^\dagger L$. Continuity at the target quantile yields conditional quantile
consistency and asymptotic size.
\end{proof}

\subsection{Proof of Theorem~\ref{thm:exactpwb} (Fixed-rank feasible PWB-H transfer)}\label{app:proof:thm-exactpwb}
\begin{proof}
Let $\mathcal T_{NT}^{o}$ and $\mathcal T_{NT}^{*o}$ denote the oracle original-sample
and bootstrap PWB-H statistics, and let $\widehat{\mathcal T}_{NT}$ and
$\widehat{\mathcal T}_{NT}^{*}$ denote their feasible counterparts. By assumption, all
score, variance, discriminant, classifier, and fixed-rank inputs entering the statistic
differ by $o_p(1)$ in the original sample and by $o_{p^*}(1)$ conditionally in the bootstrap.
Continuity of the PWB-H map on the maintained regime and rank stratum therefore gives
\[
 d_{BL}\{\mathcal L^*(\widehat{\mathcal T}_{NT}^{*}),
          \mathcal L^*(\mathcal T_{NT}^{*o})\}=o_p(1),
 \qquad
 \widehat{\mathcal T}_{NT}-\mathcal T_{NT}^{o}=o_p(1),
\]
where $d_{BL}$ is bounded-Lipschitz distance. The oracle approximation in
\citet[Theorem~3.3]{hounyolin2026}, combined with these two displays and the triangle
inequality, yields the feasible approximation. The feasible regimes and the excluded
I\&N regime are inherited unchanged from the oracle theorem.
\end{proof}

The proofs are organized by logical status. Exact algebraic statements are proved directly. Distributional results are derived conditionally on the stated joint convergence assumptions. Oracle-to-feasible and bootstrap results are transfer theorems: their proofs verify continuity and negligible-remainder steps, while primitive rates for a particular factor estimator or bootstrap design must be established separately. This organization is deliberate and should be preserved in any revision or replication supplement.

\subsection{Sufficient-condition roadmap for the joint score limit}\label{app:primitive}
Theorem~\ref{thm:primitive-benchmark} provides a complete proof for a finite-range, finite-rank benchmark. Assumption~\ref{ass:primitiveclt} remains high level for broader spatially and serially dependent arrays because a generic mixing statement does not determine the normalization or limit law. A more general primitive verification proceeds component by component.

\begin{enumerate}[label=\textbf{P\arabic*:},leftmargin=3.5em]
\item Apply the exact conditional-projection decomposition to the stacked structural and first-stage score. Verify explicitly that the panel average of the centering terms is zero.
\item Compute the variance order of the unit projection, choose its normalization, and establish a spatial central limit theorem under stated mixing coefficients, moment exponents, and increasing-domain conditions.
\item Repeat the calculation for the time projection using a serial central limit theorem and the corresponding long-run variance.
\item Expand the interaction projection in a finite-rank orthonormal representation. Establish joint convergence of the finite collection of spatial and temporal partial sums and then map them into the finite-rank interaction limit.
\item Bound the infinite-rank remainder uniformly in $(N,T)$ using the stated square-summability and moment conditions.
\item Establish a conditional Lindeberg--Feller theorem for the cell innovation after conditioning on the unit and time states.
\item Derive all cross-covariances among the unit, time, interaction, and cell components. Independence of the primitive collections does not automatically make every projected component independent.
\item Combine the components by a converging-together theorem and verify that the resulting covariance rank is constant on the parameter stratum used for inference.
\end{enumerate}

This roadmap is not itself a proof for an unspecified array. For each application or Monte Carlo design, the supplement must state the exact mixing coefficients, variance orders, normalizing matrices, covariance-rank stratum, and tail bounds that imply Assumption~\ref{ass:primitiveclt}.

\subsection{Additional details for Theorems~\ref{thm:ratio} and \ref{thm:multiratio}}
The formal proofs appear above; this subsection records two additional details useful for replication. For the scalar statistic define
\[
 D_{NT}^{s}=b_{NT}^{-1}\{NT\pi_{NT}\widehat Q_{zz,NT}+S_{v,NT}\}.
\]
Then $(a_{NT}^{-1}S_{u,NT},D_{NT}^{s})\Rightarrow(Z_u,\mu_c+Z_v)$. The no-atom condition implies that, for every $\varepsilon>0$, one can choose $\eta>0$ such that
$\Pr(|\mu_c+Z_v|\le2\eta)<\varepsilon$. Portmanteau then gives
$\limsup\Pr(|D_{NT}^{s}|\le\eta)\le\varepsilon$. Thus the sample denominator is bounded away from zero with arbitrarily high limiting probability, which justifies taking the ratio without silently conditioning on a high-probability event.

For the multivariate estimator set $\mathsf A_{NT}=\mathsf G_{NT}C_{NT}^{-1}$ and
$\widetilde U_{NT}=a_{NT}^{-1}U_{NT}$. On the event that
$\mathsf A_{NT}'\mathsf W_{NT}\mathsf A_{NT}$ is nonsingular,
\[
 a_{NT}^{-1}C_{NT}(\widehat\beta-\beta_0)
 =(\mathsf A_{NT}'\mathsf W_{NT}\mathsf A_{NT})^{-1}\mathsf A_{NT}'\mathsf W_{NT}\widetilde U_{NT}.
\]
The smallest eigenvalue of the limiting Gram matrix is strictly positive almost surely. Hence, for any sequence $\eta_{NT}\downarrow0$ sufficiently slowly,
$\Pr\{\lambda_{\min}(\mathsf A_{NT}'\mathsf W_{NT}\mathsf A_{NT})>\eta_{NT}\}\to1$.
This makes explicit the high-probability domain on which inversion is valid. It also shows why the theorem is pointwise on a separated full-rank stratum and is not uniform as the limiting rank changes.

\subsection{Verification of factor-oracle transfer}\label{app:factorproof}

The algebraic identity
\[
 \widehat{\mathcal M}A-\mathcal M_0A
 =-\Delta_\Lambda AM_{F_0}-M_{\Lambda_0}A\Delta_F
   +\Delta_\Lambda A\Delta_F,
 \qquad
 \Delta_\Lambda=\widehat P_\Lambda-P_{\mathcal L_0},\quad
 \Delta_F=\widehat P_F-P_{\mathcal F_0},
\]
is exact. Expanding every feasible score and projected second moment with this identity
reduces the transfer claim to bounds on linear and quadratic projector-error terms.
For example, with Frobenius norms,
\[
 |\operatorname{tr}(Z'\Delta_\Lambda U\Delta_F)|
 \le \|Z\|_F\|U\|_F\|\Delta_\Lambda\|\|\Delta_F\|.
\]
For a generic feasible score term, insert
$\widehat{\mathcal M}A=\mathcal M_0A+R_A$, where
\[
 R_A=-\Delta_\Lambda AM_{\mathcal F_0}
     -M_{\mathcal L_0}A\Delta_F
     +\Delta_\Lambda A\Delta_F.
\]
At the matrix level, products entering a score difference split into two linear
remainders and one quadratic remainder:
$R_Z'U^o+Z^{o\prime}R_U+R_Z'R_U$ (with the appropriate vectorization and summation). Summing these terms and applying Cauchy--Schwarz,
the trace inequality above, and the projector-rate assumptions gives the normalized
bounds required in Assumption~\ref{ass:factororth}. The same expansion applies to
the first-stage score and the projected second moments. This verifies the three remainder bounds stated in Assumption~\ref{ass:factororth}.

This argument verifies how the general high-level remainder conditions transfer oracle
limits. Theorem~\ref{thm:pcbenchmark} supplies one primitive sample-split strong-factor
benchmark. Extending it to same-sample principal components under general two-way
dependence requires an explicit orthogonality expansion and sharper control of the
dependence between projector estimation and the evaluation score.

\subsection{Bootstrap transfer from oracle to feasible scores}\label{app:boottransfer}

Let $\mathcal T_{NT}^{*o}$ denote the oracle PWB-H statistic exactly as defined by
\citet{hounyolin2026}, and let $\widehat{\mathcal T}_{NT}^{*}$ be its feasible
factor-adjusted counterpart. Under the transfer conditions in
Theorem~\ref{thm:exactpwb}, the original and bootstrap score inputs, variance inputs,
and regime discriminants differ by $o_p(1)$ and $o_{p^*}(1)$ at the normalizations
used by PWB-H. On the maintained fixed-rank and fixed-regime strata, the map from
these inputs to the PWB-H statistic is continuous with probability approaching one.
Conditional bounded-Lipschitz equivalence and a triangle inequality then transfer
the oracle approximation to the feasible statistic. This argument does not establish
bootstrap validity for an arbitrary alternative studentization.

\end{appendices}

\section*{References}
References are listed in the main paper.
%\end{document}

\end{document}